\newtheorem{theorem}{Theorem}
\newtheorem{lemma}{Lemma}
\newtheorem{corollary}{Corollary}
\newtheorem{definition}{Definition}
\newtheorem{assumption}{Assumption}
\newtheorem{claim}{Claim}
\newcommand{\size}[1]{\ensuremath{|#1|}}
\newcommand{\ceil}[1]{\ensuremath{\lceil#1\rceil}}
\newcommand{\floor}[1]{\ensuremath{\lfloor#1\rfloor}}
\newcommand{\Ceil}[1]{\ensuremath{\left\lceil#1\right\rceil}}
\newcommand{\Floor}[1]{\ensuremath{\left\lfloor#1\right\rfloor}}
\newcommand{\lra}[1]{\ensuremath{(#1)}}
\newcommand{\lrc}[1]{\ensuremath{\{#1\}}}
\newcommand{\lrA}[1]{\ensuremath{\left(#1\right)}}
\newcommand{\lrC}[1]{\ensuremath{\left\{#1\right\}}}
\def\OPT{\mbox{OPT}}
\def\MST{\mbox{MST}}
\def\C{\mathcal{C}}
\def\M{\mathcal{M}}
\def\T{\mathcal{T}}
\def\OPT{\mbox{OPT}}
\def\MST{\mbox{MST}}
\def\C{\mathcal{C}}
\def\M{\mathcal{M}}
\def\ra{\chi}
\title{Improved Approximation Algorithms for Capacitated Vehicle Routing with Fixed Capacity}
\author
{
Jingyang Zhao\\
University of Electronic Science and Technology of China\\
\texttt{jingyangzhao1020@gmail.com}
\and
Mingyu Xiao\\
University of Electronic Science and Technology of China\\
\texttt{myxiao@uestc.edu.cn}
}
\date{}
\begin{document}

\maketitle

\begin{abstract}
The Capacitated Vehicle Routing Problem (CVRP) is one of the most extensively studied problems in combinatorial optimization. Based on customer demand, we distinguish three variants of CVRP: unit-demand, splittable, and unsplittable. In this paper, we consider $k$-CVRP in general metrics and on general graphs, where $k$ is the vehicle capacity. All three versions are APX-hard for any fixed $k\geq3$. Assume that the approximation ratio of metric TSP is $\frac{3}{2}$. We present a $(\frac{5}{2}-\Theta(\frac{1}{\sqrt{k}}))$-approximation algorithm for the splittable and unit-demand cases, and a $(\frac{5}{2}+\ln2-\Theta(\frac{1}{\sqrt{k}}))$-approximation algorithm for the unsplittable case. Our approximation ratio is better than the previous results when $k$ is less than a sufficiently large value, approximately $1.7\times10^7$.

For small values of $k$, we design independent and elegant algorithms with further improvements. For the splittable and unit-demand cases, we improve the approximation ratio from $1.792$ to $1.500$ for $k=3$, and from $1.750$ to $1.500$ for $k=4$. For the unsplittable case, we improve the approximation ratio from $1.792$ to $1.500$ for $k=3$, from $2.051$ to $1.750$ for $k=4$, and from $2.249$ to $2.157$ for $k=5$. The approximation ratio for $k=3$ surprisingly achieves the same value as in the splittable case. Our techniques, such as EX-ITP -- an extension of the classic ITP method, have the potential to improve algorithms for other routing problems as well.
\end{abstract}

\maketitle

\section{Introduction}
In the Capacitated Vehicle Routing Problem (CVRP), we are given an undirected complete graph $G=(V\cup\{v_0\}, E)$ with edge weights $w$ satisfying the symmetric and triangle inequality properties. The $n$ nodes in $V=\{v_1,\dots,v_n\}$ represent $n$ customers and each customer $v_i$ has a demand $d_i\in\mathbb{Z}_{\geq 1}$. A vehicle with a capacity of $k\in\mathbb{Z}_{\geq 1}$ is initially located at the depot $v_0$. A tour is a walk that begins and ends at the depot and the sum of deliveries to all customers in it is at most $k$. The distance of a tour is the sum of the weights of all edges in the tour. In CVRP, we wish to find a set of tours to satisfy the demand of every customer with minimum total distance. We use $k$-CVRP to denote the problem where the capacity $k$ is a fixed integer. In the \emph{unsplittable} version of the problem, each customer's demand can only be delivered by a single tour. In the \emph{splittable} version, each customer's demand can be delivered by more than one tour. If each customer's demand is 1, it is called the \emph{unit-demand} version.

Since CVRP was raised by Dantzig and Ramser~\cite{dantzig1959truck} in 1959, it has become a very famous problem with numerous applications in combinatorial optimization. It has been widely studied in both theory and application. Readers can refer to a survey~\cite{toth2014vehicle} for its applications and fast solvers in practice. In theory, it is a rather rich problem in approximation algorithms~\cite{BompadreDO06,blauth2022improving,cvrptree5,abs220205691,Cohen-AddadFKL20,abs-2203-15627}.

When $k=1$ or $k=2$, $k$-CVRP can be solved in polynomial time~\cite{AsanoKTT97+}. However, for each fixed $k\geq3$, the problem becomes APX-hard, even for the unit-demand case~\cite{AsanoKTT97}. A classic algorithm known as \emph{Iterated Tour Partitioning (ITP)} was introduced about 40 years ago~\cite{HaimovichK85}. ITP is not only efficient in practice but also achieves strong theoretical approximation guarantees. Assuming we are given an $\alpha$-approximation algorithm for metric TSP, for both splittable $k$-CVRP and unit-demand $k$-CVRP, ITP achieves an approximation ratio of $\alpha+1-\frac{\alpha}{k}$~\cite{HaimovichK85}. For unsplittable $k$-CVRP, a variant of ITP called UITP achieves an approximation ratio of $\alpha+2-\frac{2\alpha}{k}$ for even $k$ and $\alpha+2-\frac{\alpha}{k}$ for odd $k$~\cite{altinkemer1987heuristics}.

For metric TSP, there is a well-known $\frac{3}{2}$-approximation algorithm~\cite{christofides1976worst,serdyukov1978some}. 
Although a recent breakthrough by Karlin, Klein, and Oveis Gharan~\cite{KarlinKG21,DBLP:conf/ipco/KarlinKG23} has improved the approximation ratio to \(\frac{3}{2} - \varepsilon\), where \(\varepsilon \approx 10^{-36}\), this improvement is too small to significantly affect the analysis for \( k \)-CVRP. Thus, for ease of comparison, we may continue to assume the approximation ratio \(\alpha = \frac{3}{2}\) for metric TSP in our algorithms.

Due to its simplicity and versatility, ITP has been adapted for various other vehicle routing problems~\cite{Mathieu021}. However, for general metrics, there have been few improvements over ITP in approximating \( k \)-CVRP.

One interesting improvement was done by Bompadre \emph{et al.}~\cite{BompadreDO06} about 20 years ago. 
For any \(\alpha \geq 1\), they improved the approximation ratio by a term of \(\frac{1}{3k^3}\) for all three versions of \(k\)-CVRP. Specifically, for \(\alpha = \frac{3}{2}\), the improvement was \(\frac{1}{4k^2}\) for the splittable and unit-demand cases, and \(\frac{1}{3k^2}\) for the unsplittable case. These results are still the best-known approximation ratios for many small values of \(k\).
Recently, one significant progress was done by Blauth \emph{et al.}~\cite{blauth2022improving}. They improved the approximation ratio to $\alpha+1-\varepsilon$ for the splittable and unit-demand cases, and to $\alpha+2-2\varepsilon$ for the unsplittable case, where $\varepsilon$ is a small value related to $\alpha$, with $\varepsilon>\frac{1}{3000}$ when $\alpha=\frac{3}{2}$.
While this provides a slight improvement in the constant part of the approximation ratio, it does not outperform the approximation ratio in~\cite{BompadreDO06} for small values of \(k\).
Friggstad \emph{et al.}~\cite{uncvrp} proposed two further improvements for unsplittable $k$-CVRP. The first is an $(\alpha+1.75)$-approximation algorithm using a combinatorial method, while the second is an $(\alpha+\ln 2+\frac{1}{1-\delta})$-approximation algorithm based on LP rounding, with a running time of \(n^{O(\frac{1}{\delta})}\). They also showed that both approximation ratios can be further improved by a small constant \(\varepsilon'\) through the method from~\cite{blauth2022improving}.

For small $k$, $k$-CVRP has garnered independent interest~\cite{4cvrp,HassinR00,BompadreDO06}. For unit-demand 3-CVRP, ITP achieves an approximation ratio of $2$. Based on the $\frac{25}{33}$-approximation algorithm for MAX TSP on general graphs~\cite{HassinR00}, Bazgan \emph{et al.}~\cite{BazganHM05} proposed a $1.990$-approximation algorithm, marking the first known improvement. 
Leveraging the best-known $\frac{4}{5}$-approximation algorithm for MAX TSP~\cite{DudyczMPR17}, the approximation ratio of their algorithm can be further improved to $1.934$.
For unit-demand 4-CVRP, ITP achieves an approximation ratio of 2.125. Anily and Bramel~\cite{4cvrp} proposed an approximation algorithm for \emph{Capacitated TSP with Pickups and Deliveries}, demonstrating that their algorithm can be applied to unit-demand \(k\)-CVRP. Their approach achieves an improved approximation ratio only for the case \(k=4\), with an approximation ratio of 1.750, which remains the best-known result.

In addition to the aforementioned algorithms, another approach for solving \(k\)-CVRP with \(k = O(1)\) is to reduce it to the more general \emph{minimum weight \(k\)-set cover problem}~\cite{chvatal1979greedy,hassin2005better,gupta2023local}. By treating each feasible tour as a \(k\)-set, this reduction applies to both unit-demand \(k\)-CVRP and unsplittable \(k\)-CVRP. For splittable \(k\)-CVRP, it can be shown that the problem is equivalent to unit-demand \(k\)-CVRP when \(k = n^{O(1)}\). Since there are \(n^{O(k)}\) feasible tours, the reduction remains polynomial for \(k = O(1)\). 
Recently, Gupta \emph{et al.}~\cite{gupta2023local} improved the approximation ratio for the minimum weight \(k\)-set cover problem to \(\min\{H_k - \frac{1}{8k}, H_k - \sum_{i=1}^{k}\frac{\log i}{8ki}\}\), where \(H_k = \sum_{i=1}^k\frac{1}{i}\) is the \(k\)-th harmonic number. For \(k = 3\), this achieves an approximation ratio of 1.792, surpassing all previous results for both splittable 3-CVRP and unit-demand 3-CVRP. Furthermore, this algorithm yields the best-known approximation ratio for unsplittable \(k\)-CVRP when \(k \leq 9\).

In this paper, we focus on approximation algorithms for \(k\)-CVRP with bounded \(k\). As highlighted in~\cite{practicalcvrp,BompadreDO06}, many practical problems involve small values of \(k\). 
For example, a logistics company raised the need to transport newly produced cars using a truck with a capacity of at most six cars, where the capacity \(k = 6\). 
In benchmark datasets, the capacities of the instances are also typically in the range of hundreds or thousands~\cite{bench}. These examples demonstrate the practical relevance and significance of studying \(k\)-CVRP with bounded \(k\).

\medskip
\noindent\textbf{Other Related Work.} Although there are little improvements on the general metrics, a huge number of contributions have been made for special graph classes. We list some of them.

Consider unit-demand $k$-CVRP. In $\mathbb{R}^2$, a PTAS is known for $k=O(\log\log n)$~\cite{HaimovichK85}, $k=O(\frac{\log n}{\log\log n})$ or $k=\Omega(n)$~\cite{AsanoKTT97+}, and $k\leq 2^{\log^{f(\varepsilon)}n}$~\cite{AdamaszekCL10}. In $\mathbb{R}^l$, where $l$ is fixed, a PTAS is known for $k=O(\log^{\frac{1}{l}}n)$~\cite{KhachayD16}.
Das and Mathieu~\cite{DasM15} proposed a QPTAS in $\mathbb{R}^2$ for arbitrary $k$ with a running time of $n^{\log^{O(1/\varepsilon)}n}$. The running time was improved to $n^{O(\log^6 n/\varepsilon^5)}$ by Jayaprakash and Salavatipour~\cite{JayaprakashS22}, where they also showed a QPTAS for graphs of bounded treewidth, bounded doubling metrics, or bounded highway dimension with arbitrary $k$.

Consider unit-demand $k$-CVRP with $k=O(1)$. Becker \emph{et al.}~\cite{BeckerKS17,BeckerKS19} showed a QPTAS for planar and bounded genus graphs, a PTAS in bounded highway dimension, and an $O(n^{tw\cdot Q})$ time exact algorithm for graphs with treewidth $tw$.
A PTAS in planar graphs was given in~\cite{BeckerKS19}.
Cohen-Addad~\emph{et al.}~\cite{Cohen-AddadFKL20} showed an EPTAS for graphs of bounded-treewidth, bounded highway dimension, or bounded genus metrics, and the first QPTAS for minor-free metrics. 
Filtser and Le~\cite{abs-2203-15627} proposed a better EPTAS for planar graphs that runs in \emph{almost linear} time, and an EPTAS for minor-free metrics.

For the case of trees, splittable CVRP is NP-hard~\cite{LabbeLM91}, and unsplittable CVRP is NP-hard to approximate better than $\frac{3}{2}$~\cite{GoldenW81}. For splittable CVRP on trees, Hamaguchi and Katoh~\cite{cvrptree1} proposed a $\frac{3}{2}$-approximation algorithm. The approximation ratio was improved to $\frac{\sqrt{41}-4}{4}$ by Asano \emph{et al.}~\cite{cvrptree2} and to $\frac{4}{3}$ by Becker~\cite{cvrptree3}. Becker and Paul~\cite{cvrptree4} proposed a $(1,1+\varepsilon)$-bicriteria polynomial time scheme where the capacity of each tour is allowed to be violated by an $\varepsilon$ fraction. Finally, a PTAS was given by Mathieu and Zhou~\cite{cvrptree5}.
For the unsplittable CVRP on trees, Labb{'{e}} \emph{et al.}~\cite{LabbeLM91} proposed a 2-approximation algorithm, which was  improved to a tight approximation ratio of $\frac{3}{2}+\varepsilon$ by Mathieu and Zhou~\cite{abs220205691}.

Grandoni \emph{et al.}~\cite{abs-2209-05520} gave a $(2+\varepsilon)$-approximation algorithm for unplittable CVRP in $\mathbb{R}^2$. Dufay \emph{et al.}~\cite{abs-2210-03811} showed a $(1.692+\varepsilon)$-approximation algorithm for distance-constrained CVRP on trees.
M{\"o}mke and Zhou~\cite{momke2022capacitated} showed an 1.95-approximation algorithm for graphic CVRP.

\subsection{Our Contributions}
We present several approximation algorithms for $k$-CVRP, improving the best-known approximation ratios for any $k\leq 1.7\times 10^7$. 
This threshold is quite large, and to our knowledge, the capacity of all practical and artificial instances does not exceed this value.
We summarize our contributions to splittable (including unit-demand) $k$-CVRP and unsplittable $k$-CVRP separately below.

For splittable $k$-CVRP and unit-demand $k$-CVRP, we have the following contributions.

\begin{enumerate}
\item[1.] Based on a new concept of \emph{home-edges}, which are edges incident to the depot in an optimal solution, we obtain tighter lower bounds for $k$-CVRP based on the minimum weight spanning tree and several different kinds of cycle covers (Section~\ref{SEC3}). 
Our analysis shows that if the lower bound used for the connection part of ITP in~\cite{altinkemer1987heuristics,HaimovichK85} is tight, the optimal solution weight (denoted by \(\OPT\)) will be dominated by home-edges, enabling the computation of a Hamiltonian cycle with cost \(\OPT\) and a cycle cover with zero cost in polynomial time. 
Since the upper bound on the cost of the Hamiltonian cycle in~\cite{altinkemer1987heuristics,HaimovichK85} is $\frac{3}{2}\OPT$, we can obtain significant improvements for this case. Otherwise, since the lower bound for the connection part is not tight, we obtain improvements as well.
Note that our lower bounds are also suitable for the unsplittable case.

\item[2.] The classic ITP is based on a given Hamiltonian cycle of the graph. We extend ITP to an algorithm based on any cycle cover, called EX-ITP. One advantage is that an optimal cycle cover is polynomially computable~\cite{hartvigsen1984extensions,schrijver2003combinatorial}. Based on EX-ITP with our new lower bounds, we can quickly improve the approximation ratio from 1.792 to 1.500 for 3-CVRP and from 1.750 to 1.667 for 4-CVRP. Additionally, based on a good structural property related to perfect matchings, we can surprisingly improve the approximation ratio to $1.500$ for 4-CVRP (Section~\ref{SEC5}).
Then, to obtain improvements for larger $k$, we also consider mod-$k$-cycle covers (the length of each cycle in it is divisible by $k$). We show that given an $\alpha$-approximation algorithm for metric TSP, by making a trade-off among two Hamiltonian cycles and a cycle cover with ITP and EX-ITP, we can achieve an approximation ratio of $\alpha+1-\frac{\alpha}{k}-\Theta(\frac{1}{k})$, improving the previous approximation ratio of $\alpha+1-\frac{\alpha}{k}-\max\{\Omega(\frac{1}{k^3}),\varepsilon\}$~\cite{BompadreDO06,blauth2022improving} for small $k$ (Section~\ref{sec-split-initial})\footnote{With a more careful analysis, the previous results may be further improved slightly. 
However, the improvement is small and not mentioned in any published paper.}.
Our precise approximation ratio is presented in Theorem~\ref{res-intitial-split}, and some numerical values for specific $k$ under $\alpha=\frac{3}{2}$ are shown in Table~\ref{res-1} (the lines for Section~\ref{sec-split-initial}). Note that if $\alpha$ is smaller, we obtain better results.

\item[3.] The best approximation ratio for metric TSP is still about $\frac{3}{2}$. Under $\alpha=\frac{3}{2}$, the current best approximation ratio of $k$-CVRP is approximately $\frac{5}{2}-\frac{1.5}{k}-\max\{\Omega(\frac{1}{k^2}),\frac{1.005}{3000}\}$~\cite{BompadreDO06,blauth2022improving}.
By generalizing the concept of home-edges, we improve two previously-used lower bounds (previously used in Section~\ref{sec-split-initial}). 
Based on these bounds and a detailed analysis of ITP, we derive an approximation ratio of $\frac{5}{2}-\Theta(\frac{1}{\sqrt{k}})$, which improves the previous results for any $k\leq 1.7\times 10^7$ (Section~\ref{sec-split-final}). Our precise approximation ratio is presented in Theorem~\ref{res-split}, and some numerical values for specific $k$ are shown in Table~\ref{res-1} (the lines for Section~\ref{sec-split-final}).
\end{enumerate}

\begin{table}[ht]
\small
\centering
\resizebox{0.68\textheight}{!}{
\begin{tabular}{ccccccccc}
\hline
    $k$ & 3 & 4 & 5 & 6 & 7 & 8 & 9 &10\\
\hline
    Previous & {1.792} & {1.750} & {2.188} & {2.242} & {2.280} & {2.308} & {2.330} &{2.348}\\
    Results&~\cite{gupta2023local} &~\cite{4cvrp} &~\cite{BompadreDO06} &~\cite{BompadreDO06} &~\cite{BompadreDO06} &~\cite{BompadreDO06} &~\cite{BompadreDO06}&~\cite{BompadreDO06}\\
\hline
    Our Results & \multirow{2}*{$\boldsymbol{1.500}$} & \multirow{2}*{$\boldsymbol{1.500}$} & \multirow{2}*{-} & \multirow{2}*{-} & \multirow{2}*{-} & \multirow{2}*{-} & \multirow{2}*{-}&\multirow{2}*{-}\\
    in Section~\ref{SEC5}& \\
\hline
    Our Results  & \multirow{2}*{$1.556$} & \multirow{2}*{$1.709$} & \multirow{2}*{$\boldsymbol{1.800}$} & \multirow{2}*{$1.917$} & \multirow{2}*{$2.000$} & \multirow{2}*{$2.063$} & \multirow{2}*{$2.112$}& \multirow{2}*{$2.150$} \\
    in Section~\ref{sec-split-initial}  &\\
\hline
    Our Results & \multirow{2}*{$1.667$} & \multirow{2}*{$1.750$} & \multirow{2}*{$1.800$} & \multirow{2}*{$\boldsymbol{1.875}$} & \multirow{2}*{$\boldsymbol{1.929}$} & \multirow{2}*{$\boldsymbol{1.969}$} & \multirow{2}*{$\boldsymbol{2.000}$}& \multirow{2}*{$\boldsymbol{2.025}$} \\
     in Section~\ref{sec-split-final}&\\
\hline
\end{tabular}
}

\medskip

\resizebox{0.68\textheight}{!}{
\begin{tabular}{cccccccc}
\hline
    $k$ & 29 & $\dots$ & 5833 & 5834 & $\dots$ & $1.7\times 10^7$ & $1.8\times 10^7$\\
\hline
    Previous & {$2.44795$} & {$\dots$} & {$2.49941$} & {$2.49941$} & {$\dots$} & {$2.49967$} & {$\boldsymbol{2.49967}$}\\
    Results&~\cite{blauth2022improving}&$\dots$&\cite{blauth2022improving}&\cite{blauth2022improving}&$\dots$&\cite{blauth2022improving}&\cite{blauth2022improving}\\
\hline
    Our Results & \multirow{2}*{$2.37932$} & \multirow{2}*{$\dots$} & \multirow{2}*{$2.49940$} & \multirow{2}*{$2.49941$} & \multirow{2}*{$\dots$} & \multirow{2}*{$2.50000$} & \multirow{2}*{$2.50000$}\\
    in Section~\ref{sec-split-initial}&\\
\hline
    Our Results & \multirow{2}*{$\boldsymbol{2.22414}$} & \multirow{2}*{$\boldsymbol{\dots}$} & \multirow{2}*{$\boldsymbol{2.48140}$} & \multirow{2}*{$\boldsymbol{2.48141}$} & \multirow{2}*{$\boldsymbol{\dots}$} & \multirow{2}*{$\boldsymbol{2.49966}$} & \multirow{2}*{$2.49967$}\\
    in Section~\ref{sec-split-final}&\\
\hline
\end{tabular}
}
\caption{
Splittable $k$-CVRP and unit-demand $k$-CVRP: previous and our approximation ratios for different values of $k$ under $\alpha=\frac{3}{2}$, where the best results are marked in bold.
}
\label{res-1}
\end{table}

For unsplittable $k$-CVRP, we obtain similar results. However, we may use more techniques.

\begin{enumerate}
\item[1.] We first propose a refined UITP, improving its approximation ratio for fixed $k$. 
Building on this, we extend UITP to EX-UITP (Section~\ref{Sec-REUITP}), similar to the EX-ITP approach.
However, due to the unsplittable constraint, EX-UITP requires that each customer's demand cannot be too large. For cycles involving large-demand customers, we assign a single tour to each, then apply shortcutting to create a simplified cycle before applying EX-UITP. 
While shortcutting can significantly alter a cycle's structure, we carefully analyze these cycles' local properties and address them separately (Section~\ref{Sec-Unsplit-Property}).  
Using this approach, along with ideas from algorithms for splittable 3-CVRP and 4-CVRP, we improve the approximation ratios for unsplittable 3-CVRP to 1.500 and 4-CVRP to 1.750 (Section~\ref{Sec-Unsplit-34}).

\item[2.] To achieve improvements for larger $k$, we combine the refined UITP with the LP-based technique from~\cite{uncvrp} to obtain the LP-UITP algorithm.
By making a trade-off between two Hamiltonian cycles using LP-UITP\footnote{For the unsplittable case, we focus on the trade-off between two algorithms instead of three, as the third algorithm based on cycle cover is not applicable here.}, we obtain an approximation ratio of $\alpha+1+\ln2-\frac{2\alpha}{k}-\Theta(\frac{1}{k})$, improving the previous approximation ratio of $\alpha+1+\ln2-\frac{2\alpha}{k}-\varepsilon'$~\cite{uncvrp} for small $k$  (Section~\ref{Sec-Split-Initial}).
Under $\alpha=\frac{3}{2}$, the precise approximation ratio is presented in Theorem~\ref{res-intitial-unsplit}, and some numerical values for specific $k$ are shown in Table~\ref{res-2} (the lines for Section~\ref{Sec-Split-Initial}).
Note that $\alpha$ is smaller, we obtain better results.

\item[3.] Based on the refined two lower bounds (provided in Section~\ref{sec-split-final}) and using a deep analysis on LP-UITP, we achieve an approximation ratio of $\frac{5}{2}+\ln2-\Theta(\frac{1}{\sqrt{k}})$, improving the previous approximation ratio of $\frac{5}{2}+\ln2+\ln(1-\frac{1.005}{3000})-\frac{3}{k}$ in~\cite{uncvrp} for any $k\leq 1.7\times10^7$ (Section~\ref{Sec-Unsplit-Final}). Our precise approximation ratio is presented in Theorem~\ref{res-unsplit}, and some numerical values for specific $k$ are shown in Table~\ref{res-2} (the lines for Section~\ref{Sec-Unsplit-Final}).
Additionally, for unsplittable 5-CVRP, we further refine the analysis to improve the approximation ratio to 2.157 (Section~\ref{Sec-Unsplit-5}).
\end{enumerate}

\begin{table}[ht]
\small
\centering

\resizebox{0.68\textheight}{!}{
\begin{tabular}{ccccccccc}
\hline
    $k$ & 3 & 4 & 5 & 6 & 7 & 8 & 9&10 \\
\hline
    Previous & {1.792} & {2.051} & {2.249} & {2.416} & {2.558} & {2.684} & {2.795}& {2.893} \\
    Results&~\cite{gupta2023local}&\cite{gupta2023local}&\cite{gupta2023local}&\cite{gupta2023local}&\cite{gupta2023local}&\cite{gupta2023local}&\cite{gupta2023local}&\cite{uncvrp}\\
\hline
    Our Results in & \multirow{2}*{$\boldsymbol{1.500}$} & \multirow{2}*{$\boldsymbol{1.750}$} & \multirow{2}*{$\boldsymbol{2.157}$} & \multirow{2}*{-} & \multirow{2}*{-} & \multirow{2}*{-} & \multirow{2}*{-}& \multirow{2}*{-}\\
     Sections~\ref{Sec-Unsplit-34} and~\ref{Sec-Unsplit-5}&\\
\hline
    Our Results & \multirow{2}*{$1.906$} & \multirow{2}*{$1.955$} & \multirow{2}*{$2.178$} & \multirow{2}*{$\boldsymbol{2.163}$} & \multirow{2}*{$2.351$} & \multirow{2}*{$2.383$} & \multirow{2}*{$2.537$}& \multirow{2}*{$2.538$} \\
    in Section~\ref{Sec-Split-Initial} &\\
\hline
    Our Results & \multirow{2}*{$1.906$} & \multirow{2}*{$1.955$} & \multirow{2}*{$2.178$} & \multirow{2}*{$2.163$} & \multirow{2}*{$\boldsymbol{2.343}$} & \multirow{2}*{$\boldsymbol{2.337}$} & \multirow{2}*{$\boldsymbol{2.471}$}& \multirow{2}*{$\boldsymbol{2.448}$} \\
    in Section~\ref{Sec-Unsplit-Final}&\\
\hline
\end{tabular}
}

\medskip

\resizebox{0.68\textheight}{!}{
\begin{tabular}{cccccccc}
\hline
    $k$ & $11720$ & $11721$ & $11722$ & $\dots$ & $\dots$ & $1.7\times 10^7$ & $1.8\times 10^7$ \\
\hline
    Previous & {$3.19256$} & {$3.19269$} & {$3.19256$} & {$\dots$}  & {$\dots$} & {$3.19282$} & {$\boldsymbol{3.19282}$} \\
    Results&~\cite{uncvrp}&\cite{uncvrp}&\cite{uncvrp}&$\dots$&$\dots$&\cite{uncvrp}&\cite{uncvrp}\\
\hline
    Our Results & \multirow{2}*{$3.19255$} & \multirow{2}*{$3.19264$} & \multirow{2}*{$3.19256$} & \multirow{2}*{$\dots$} & \multirow{2}*{$\dots$} & \multirow{2}*{$3.19315$} & \multirow{2}*{$3.19315$}\\
    in Section~\ref{Sec-Split-Initial}&\\
\hline
    Our Results & \multirow{2}*{$\boldsymbol{3.17973}$} & \multirow{2}*{$\boldsymbol{3.17981}$} & \multirow{2}*{$\boldsymbol{3.17973}$} & \multirow{2}*{$\boldsymbol{\dots}$} & \multirow{2}*{$\boldsymbol{\dots}$} & \multirow{2}*{$\boldsymbol{3.19281}$} & \multirow{2}*{$3.19282$}\\
    in Section~\ref{Sec-Unsplit-Final}&\\
\hline
\end{tabular}
}
\caption{
Unsplittable $k$-CVRP: previous and our approximation ratios for different values of $k$ under $\alpha=\frac{3}{2}$, where the best results are marked in bold.
}
\label{res-2}
\end{table}

In Tables~\ref{res-1} and \ref{res-2}, the previous results are $\alpha+1-\frac{\alpha}{k}-\max\{\Omega(\frac{1}{k^3}),\varepsilon\}$
and $\alpha+1+\ln2-\frac{2\alpha}{k}-\varepsilon'$ for splittable and unsplittable $k$-CVRP, respectively, as well as $\min\{H_k-\frac{1}{8k},H_k-\sum_{i=1}^{k}\frac{\log i}{8ki}\}$ for both versions. The constant behind $\Omega$ is available in~\cite{BompadreDO06}.\footnote{See the calculation in Appendix~\ref{bomoadre}.}
The values $\varepsilon$ and $\varepsilon'$ depending on $\alpha$ and $k$ can be calculated using the results in~\cite{blauth2022improving,uncvrp}.
We adopt $\alpha=\frac{3}{2}$. For this case, the values $\varepsilon$ and $\varepsilon'$ are smaller than $\frac{1.005}{3000}$ and $-\ln(1-\frac{1.005}{3000})$, respectively.
Hence, the approximation ratio of splittable $k$-CVRP and unit-demand $k$-CVRP in~\cite{blauth2022improving} is at least $\frac{5}{2}-\frac{1.005}{3000}-\frac{1.5}{k}$
for $k\geq3$.\footnote
{
See the calculation in Appendix~\ref{best-splittable}.
Note that Blauth \emph{et al.}~\cite{blauth2022improving} proved an approximation ratio of $\frac{5}{2}-\frac{1}{3000}$, which works for any $k$. However, using their results, one may achieve slightly better approximation ratios for small $k$. Since we consider $k$ as a fixed integer, we also calculate the best approximation ratios that can be obtained by their results.
}
The approximation ratio of unsplittable $k$-CVRP in~\cite{uncvrp} is at least $\frac{5}{2}+\ln2+\ln(1-\frac{1.005}{3000})-\frac{3}{k}$
for even $k\geq3$.\footnote{See the calculation in Appendix~\ref{best-unsplittable}.
}
Note that for unsplittable $k$-CVRP with odd $k\geq 3$, we need to double the capacity and the demand, resulting in a slightly worse approximation ratio of at least $\frac{5}{2}+\ln2+\ln(1-\frac{1.005}{3000})-\frac{1.5}{k}$. 
The code used to calculate the approximation ratios in Tables~\ref{res-1} and \ref{res-2} is available at \url{https://github.com/JingyangZhao/CVRP}.

\section{Definitions, Assumptions, and Notations}\label{sec_pre}
In a graph, a \emph{walk} is a succession of edges, where an edge may appear more than once. We use a sequence of vertices to denote a walk:
$(v_1,v_2,v_3,\dots, v_l)$ represents a walk with the edges $(v_1,v_2)$, $(v_2,v_3)$, and so on.
A \emph{path} in a graph is a walk such that no vertex appears twice in the sequence, and a \emph{cycle} is a walk such that only the first and the last vertices are the same.
A cycle containing $l$ edges is called an \emph{$l$-cycle} and the \emph{length} of it is $l$.
Two subgraphs (or two sets of edges) are \emph{vertex-disjoint} if they do not have a common vertex.
Given an edge-weighted graph, where the number $n$ of vertices is a multiple of $k$, a \emph{minimum weight $k$-cycle cover} is a set of exactly $\frac{n}{k}$ vertex-disjoint $k$-cycles with the minimum total weight of edges in the $k$-cycles in the set.
A \emph{minimum weight mod-$k$-cycle cover} (resp., \emph{minimum weight mod-$k$-tree cover}) is a set of vertex-disjoint cycles (resp., trees) such that the length of each cycle (resp., the number of vertices on each tree) is divisible by $k$, each vertex of the graph appears in exactly one cycle, and the total weight of edges in the cycles in the set is minimized. 
A \emph{minimum weight cycle cover} is a set of vertex-disjoint cycles such that the length of each cycle is at least three, each vertex of the graph appears in exactly one cycle, and the total weight of edges in the cycles in the set is minimized.

\subsection{Problem Definitions}
We use $G=(V\cup\{v_0\}, E)$ to denote a complete graph, where the vertex $v_0$ represents the depot and vertices in $V$ represent customers. There is a non-negative weight function $w: E\to \mathbb{R}_{\geq0}$ on the edges in $E$, which denotes the distance between two endpoints of the edge.
The weight function $w$ is a metric function, i.e., it satisfies the symmetric and triangle inequality properties.
For any weight function $w: X\to \mathbb{R}_{\geq0}$, we will extend it to subsets of $X$ by defining $w(Y) = \sum_{x\in Y} w(x)$ for any $Y\subseteq X$.
A feasible solution of CVRP is also called an \emph{itinerary}, which is a walk starting and ending at vertex $v_0$.
It can be partitioned into several minimal itineraries containing $v_0$, each of which is called a \emph{tour}.

The Capacitated Vehicle Routing Problem (CVRP) can be described as follows.

\begin{definition}
An instance $(G=(V\cup \{v_0\},E),w,d,k)$ of CVRP consists of:
\begin{itemize}
\item a complete graph $G$, where $V=\{v_1,\dots,v_n\}$ represents the $n$ customers and $v_0$ represents the depot;
\item a metric weight function on edges $w$: $(V\cup\{v_0\})\times(V\cup\{v_0\})\rightarrow\mathbb{R}_{\geq 0}$, which represents the distances;
\item the demand of each customer $d=(d_1, \dots, d_n)$, where $d_i\in\mathbb{Z}_{\geq 1}$ is the demand required by customer $v_i\in V$;
\item the capacity $k\in\mathbb{Z}_{\geq 1}$ of the vehicle that initially stays at the depot $v_0$.
\end{itemize}
A feasible solution is an itinerary  such that
\begin{itemize}
\item each tour delivers at most $k$ of the demand to customers on the tour;
\item the union of tours meets the demand of every customer.
\end{itemize}
The goal is to find such an itinerary $I$, minimizing the total distances of the succession of edges in the walk, i.e., $w(I)\coloneqq \sum_{e\in I}w(e)$.
\end{definition}

According to the property of the demand, we can define three different versions of the problem. If each customer's demand should be delivered in one tour, we call it \emph{unsplittable CVRP}. If the demand of a customer can be split into several tours, we call it \emph{splittable CVRP}. If each customer's demand is 1, we call it \emph{unit-demand CVRP}.

In this paper, we mainly consider that the vehicle capacity $k$ is a fixed integer with $k\geq3$.

\subsection{Some Assumptions}
In our problems, we will also make some assumptions which can be guaranteed by some simple observations or polynomial-time reductions.

\begin{assumption}
In an optimal itinerary, each tour is a cycle.
\end{assumption}
If there exists a tour that is not a cycle, we can obtain a cycle by shortcutting, and the weight is non-increasing by the triangle inequality.

\begin{assumption}\label{ass1-add1}
If the vehicle travels to location $A$ and merely passes by location $B$ without making any delivery there, we do not consider it to have visited $B$. Therefore, in each tour, the vehicle visits only those customers to whom it delivers.
\end{assumption}

A tour is called \emph{trivial} if it only visits one customer and \emph{non-trivial} otherwise.
We show some properties of optimal itineraries to splittable CVRP.

\begin{lemma}[\cite{dror1990split}]~\label{notwo}
For splittable CVRP, there exists an optimal itinerary where no two tours visit two common customers.
\end{lemma}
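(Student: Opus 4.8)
The plan is to prove Lemma~\ref{notwo} by an exchange argument: starting from any optimal itinerary, we repeatedly modify it so that the number of pairs of tours sharing two common customers strictly decreases, without increasing the total cost, and so that the modified itinerary remains a feasible optimal solution to splittable CVRP. Since there are finitely many tours and customers, this process terminates, yielding an optimal itinerary with the claimed property. The key observation is that splittability gives us the freedom to reroute demand between tours.

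Concretely, suppose tours $T_1$ and $T_2$ both visit two common customers, say $u$ and $v$. Write $a_1, a_2$ for the amounts of $u$'s demand delivered by $T_1, T_2$ respectively, and $b_1, b_2$ for the amounts of $v$'s demand delivered by $T_1, T_2$. The idea is to shift delivery so that at most one of the two customers is shared. First I would try to move as much of $u$'s delivery as possible onto one tour: if the tour with spare capacity can absorb the other tour's share of $u$, do so and shortcut $u$ out of that other tour, eliminating $u$ as a shared customer; symmetrically try this for $v$. The only obstructing case is when neither consolidation fits because of the capacity constraint $k$ — but in that case both tours are already carrying close to $k$ units, and one can instead rebalance: decrease $a_1$ by $\delta$ and increase $a_2$ by $\delta$ while increasing $b_1$ by $\delta$ and decreasing $b_2$ by $\delta$ (a circulation that preserves each tour's total load), choosing $\delta$ maximal until some variable hits $0$; this zeroes out one of $a_1,a_2,b_1,b_2$, and if, say, $a_1=0$ we then shortcut $u$ out of $T_1$. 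Either way one shared customer is removed. Crucially, none of these operations increases cost: shortcutting a vertex out of a tour does not increase the tour's length by the triangle inequality, and the load-rebalancing step does not change which edges appear in either tour at all.

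The main subtlety — and the step I expect to require the most care — is arguing that the procedure genuinely makes progress globally rather than merely locally: shortcutting $u$ out of $T_1$ could in principle create a new shared pair between $T_1$ and some third tour $T_3$, or between $T_2$ and $T_3$. To handle this, the right potential function is not simply the number of bad pairs but something like the total number of (customer, tour) incidences, i.e. $\sum_j |\{\,i : \text{tour } T_i \text{ visits customer } v_j\,\}|$ (equivalently, summing over tours the number of customers they visit). Removing a customer from a tour via shortcutting strictly decreases this quantity, and no operation ever increases it; since it is a nonnegative integer, the process halts. When it halts, no two tours can share two customers, for otherwise the local move above would apply and decrease the potential further. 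One should also check that trivial tours and degenerate cases (a tour visiting only $u$ and $v$, a customer whose whole demand ends up on one tour, etc.) are handled, and that feasibility — each tour's load at most $k$, total demand met — is preserved throughout, which follows since every move either conserves loads or moves load into slack.

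A cleaner alternative, which I would present if the bookkeeping above gets unwieldy, is a one-shot construction: take an optimal itinerary and consider the bipartite "delivery graph" on customers and tours with an edge whenever a tour delivers a positive amount to a customer; as long as this graph contains a cycle, the cycle alternates between customers and tours and we can push a circulation of value $\pm\delta$ around it (adding $\delta$ on alternate edges, subtracting on the others) preserving every tour's total load and every customer's received total, choosing $\delta$ so that at least one edge of the cycle drops to $0$; delete that edge (shortcutting the corresponding customer out of the corresponding tour, which does not increase cost). Repeating until the delivery graph is a forest gives an itinerary in which the delivery graph has no cycles, and in particular no two customers are both incident to the same two tours — which is exactly the statement. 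This is essentially the argument of Dror and Trudeau~\cite{dror1990split}, and the only thing to verify carefully is again that each circulation step is cost-nonincreasing and feasibility-preserving.
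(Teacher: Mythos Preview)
The paper does not prove Lemma~\ref{notwo}; it merely cites~\cite{dror1990split} and moves on. Your second (``one-shot'') argument is precisely the classical Dror--Trudeau proof: form the bipartite delivery graph on customers and tours, push a circulation of maximal magnitude around any cycle to zero out an edge (this preserves every tour's total load and every customer's total received demand exactly, since each vertex on the cycle is incident to one $+\delta$ edge and one $-\delta$ edge), shortcut the corresponding customer out of the corresponding tour at no cost increase by the triangle inequality, and repeat until the graph is a forest --- at which point two tours sharing two common customers would yield a $4$-cycle, a contradiction. This is correct.

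Your first approach is also correct, and you have identified the right potential (total number of customer--tour incidences). But it is strictly dominated by the second: because the circulation step always preserves each tour's load exactly, the preliminary capacity-dependent consolidation attempt is unnecessary --- you can go straight to the rebalancing move in every case. Incidentally, the paper's own proof of the next result, Lemma~\ref{non-trivial-tour}, uses exactly this circulation idea (pushing $x_{1,1}$ units around a cycle in an auxiliary graph) after first invoking Lemma~\ref{notwo}.
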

To our knowledge, this property was first used in~\cite{dror1990split}. We explore some properties.

\begin{lemma}~\label{non-trivial-tour}
For splittable CVRP, there exists an optimal itinerary that contains at most $n-1$ non-trivial tours.
\end{lemma}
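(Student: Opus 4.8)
The plan is to start from the optimal itinerary guaranteed by Lemma~\ref{notwo}, in which no two tours visit two common customers, and then argue that excessive non-trivial tours can be merged or rerouted without increasing the total cost. Call a customer \emph{heavy} if some tour delivers its \emph{entire} demand, and observe that for a non-trivial tour, by Assumption~\ref{ass1-add1} every customer it visits receives a positive delivery; since the capacity is $k$ and each such delivery is a positive integer, a non-trivial tour visits between $2$ and $k$ customers. First I would set up a counting/charging scheme: associate to each non-trivial tour one ``witness'' customer, aiming to show that distinct non-trivial tours can be made to have distinct witnesses drawn from $V$, but with at least one customer of $V$ left unwitnessed, giving the bound $n-1$.

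The key steps, in order, are as follows. (1) Using Lemma~\ref{notwo}, fix an optimal itinerary $I^*$ with the split structure; consider its bipartite-like incidence between tours and customers. Because any two tours share at most one customer, the ``shared'' customers form a structure resembling a forest on the set of tours. (2) If the number of non-trivial tours is at least $n$, then a counting argument on this incidence forces a cycle among tours connected through shared customers, i.e., a sequence of tours $T_1, T_2, \dots, T_m = T_1$ where consecutive tours share a customer. Along such a cycle one can reroute the deliveries: shift a small amount of demand around the cycle so that one of the shared customers ends up served entirely by a single tour, strictly reducing the number of tour–customer incidences while not increasing the cost (rerouting along shared vertices only shortcuts edges, so triangle inequality gives $w(I) $ does not increase). (3) Repeating this, we reach an optimal itinerary in which the tour–customer incidence graph is a forest; a forest on $t$ non-trivial tours and the customers they touch has at most $(\text{number of vertices}) - 1$ edges, and a short accounting of how many customers each tour needs shows $t \le n-1$. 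Alternatively, and more cleanly, one can argue directly: each non-trivial tour, after the forest is achieved, can be charged to a ``leaf'' customer that it does not share with any other tour; since the depot is not a customer and at least one customer must be a leaf owned by some tour while the forest has strictly fewer edges than vertices, the number of non-trivial tours is at most $n-1$.

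I expect the main obstacle to be making the rerouting-along-a-cycle step rigorous while simultaneously (a) preserving feasibility, in particular the per-tour capacity constraint and the integrality of deliveries in the unit-demand case, (b) guaranteeing that the operation strictly decreases the chosen potential (number of incidences, or number of non-trivial tours) so the process terminates, and (c) ensuring the cost does not increase — here one must be careful that ``shifting demand around a cycle of tours'' is realized by concrete edge modifications (removing a visit to a shared customer from one tour and, if needed, shortcutting) whose cost change is controlled by the triangle inequality. A clean way to handle termination is to take, among all optimal itineraries satisfying Lemma~\ref{notwo}, one that first minimizes the number of non-trivial tours and then minimizes total incidences; the cycle-rerouting argument then yields a contradiction if there are $\ge n$ non-trivial tours. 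The remaining combinatorial bound — a forest on the tour–customer incidence has at most (\#vertices $-1$) edges, hence at most $n-1$ non-trivial tours once trivial tours and the depot are excluded — is then routine.
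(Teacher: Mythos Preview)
Your proposal is correct and essentially the same as the paper's proof: both start from Lemma~\ref{notwo}, locate a cycle in an auxiliary graph (you use the bipartite tour--customer incidence graph; the paper uses the graph $G_I$ on $V$ where two customers are adjacent if some tour visits both), and then shift delivery amounts around that cycle so that one tour drops a customer, contradicting a minimality assumption. The only cosmetic difference is that the paper takes an optimal itinerary minimizing $|E(G_I)|$ and derives the contradiction directly, whereas you first reduce to a forest and then count---but the core exchange argument is identical.
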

\begin{proof}
Given an optimal itinerary $I$, we construct an auxiliary graph $G_{I}$ on $V$, where two vertices in $V$ are adjacent in $G_{I}$ if and only if there is at least one tour in $I$ that visits both of them. By Lemma~\ref{notwo}, we assume that no two tours in $I$ visit two common customers. We further assume that $I$ is an itinerary to ensure that $G_{I}$ have the minimum number of edges.
Next, we assume to the contrary that there are more than $n-1$ non-trivial tours in $I$ and show a contradiction.

For two vertices visited in the same tour $C_i$, there is an edge between them and we color the edge with $i$.
By the assumption that no two tours in $I$ visit two common customers, we know that each edge is colored with exactly one color and the set of edges with the same color form a clique.
Since $I$ has more than $n-1$ non-trivial tours, each non-trivial tour will create a clique of size $\geq 2$ in $G_{I}$ with the same color for the edges in the clique, we know that the graph $G_I$ contains at least one cycle, denoted by $Cycle=(v_1,v_2,\dots, v_l,v_1)$, such that each edge has a different color.

Based on the cycle $\text{Cycle} = (v_1, v_2, \dots, v_l, v_1)$, we assume that tour $C_i$ visits both $v_i$ and $v_{i+1}$ for each $i \in \{1, \dots, l\}$, and that it delivers $x_{i,i}$ and $x_{i,i+1}$ to customers $v_i$ and $v_{i+1}$, respectively, where we define $v_{l+1} = v_1$ and $x_{l,l+1} = x_{l,1}$. We assume w.l.o.g.\ that $x_{1,1} = \min_i \{x_{i,i}\}$.
We construct another itinerary $I'$ by modifying the delivery amounts as follows: for each $i \in \{1, \dots, l\}$, tour $C_i$ delivers $x_{i,i} - x_{1,1}$ to customer $v_i$, and $x_{i,i+1} + x_{1,1}$ to customer $v_{i+1}$.
In $I'$, each customer's demand remains satisfied, and the total delivery on each tour is unchanged. Hence, $I'$ is still an optimal solution.
However, in $I'$, tour $C_1$ no longer needs to visit $v_1$, implying that the edge between $v_1$ and $v_2$ in $G_{I'}$ is removed. This contradicts the assumption that $G_I$ has the minimum number of edges.
\end{proof}

In an itinerary, if the vehicle always delivers an integer amount of demand to each customer in each tour, then we say that the itinerary satisfies the \emph{integer property}.
It is not hard to prove that there is an optimal itinerary satisfying the integer property by using the exchanging argument similar to that in the proof in Lemma~\ref{non-trivial-tour}. So, we also make the following assumption.
\begin{assumption}\label{ass4}
Each tour delivers an integer amount of demand to every customer it visits.
\end{assumption}

\begin{lemma}~\label{lessnk}
For splittable $k$-CVRP, if the demand $d_i$ of a customer $v_i$ is at least $(n-1)(k-1)+1$, then there is an optimal itinerary that contains a trivial tour visiting $v_i$.
\end{lemma}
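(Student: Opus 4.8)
The plan is to combine the structural facts already in hand --- Lemma~\ref{non-trivial-tour} and the integer property (Assumption~\ref{ass4}) --- with an elementary counting argument that limits how much of $v_i$'s demand can possibly be served by non-trivial tours.

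First I would fix an optimal itinerary $I$ that simultaneously (i) contains at most $n-1$ non-trivial tours and (ii) satisfies the integer property. Both can be arranged at once: start from an optimal itinerary satisfying Assumption~\ref{ass4}, and note that the exchange argument in the proof of Lemma~\ref{non-trivial-tour} transfers only the integral amount $x_{1,1}$ of demand between tours, so it keeps the itinerary integral while strictly decreasing the number of edges of the auxiliary graph $G_I$; iterating this until no cycle of distinct colors remains yields an optimal integral itinerary with at most $n-1$ non-trivial tours.

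Next I would bound the demand of $v_i$ that non-trivial tours can carry. By Assumption~\ref{ass1-add1}, a non-trivial tour visits at least two customers, and by integrality each of the visited customers receives at least one unit; since a tour delivers at most $k$ units in total, a non-trivial tour delivers at most $k-1$ units to $v_i$. As there are at most $n-1$ non-trivial tours in $I$, the total amount of $v_i$'s demand served by non-trivial tours is at most $(n-1)(k-1)$.

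Finally, since $d_i \ge (n-1)(k-1)+1 > (n-1)(k-1)$, the demand of $v_i$ cannot be met entirely by non-trivial tours, so $I$ must contain at least one trivial tour that delivers a positive amount toward $v_i$; as a trivial tour visits exactly one customer, that customer is $v_i$, which is the claim. The only delicate point --- and the one I would treat as the main (though minor) obstacle --- is the first step, namely verifying that one can take an optimal itinerary meeting the $n-1$ bound of Lemma~\ref{non-trivial-tour} and the integer property together; once that is settled, the proof reduces to the single inequality $(n-1)(k-1) < d_i$.
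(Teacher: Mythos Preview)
Your proposal is correct and follows essentially the same approach as the paper's proof: invoke Lemma~\ref{non-trivial-tour} to get an optimal itinerary with at most $n-1$ non-trivial tours, observe that each such tour delivers at most $k-1$ units to $v_i$, and conclude by the inequality $(n-1)(k-1)<d_i$. You are simply more explicit than the paper about why a non-trivial tour contributes at most $k-1$ to $v_i$ (via integrality and Assumption~\ref{ass1-add1}) and about the compatibility of Lemma~\ref{non-trivial-tour} with Assumption~\ref{ass4}, both of which the paper leaves implicit.
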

\begin{proof}
By Lemma~\ref{non-trivial-tour}, we consider an optimal itinerary in which there are at most $n - 1$ non-trivial tours. Note that each such tour can deliver at most $k - 1$ demand to customer $v_i$.
Thus, if the demand $d_i > (n - 1)(k - 1)$, it follows that at least one trivial tour must visit $v_i$.
\end{proof}

By Lemma~\ref{lessnk}, we may iteratively assign trivial tours to serve $v_i$ until its residual demand is at most $(n - 1)(k - 1)$. Therefore, we may assume the following.

\begin{assumption}\label{ass2}
For splittable $k$-CVRP, the demand $d_i$ of each customer $v_i\in V$ is at most $(n-1)(k-1)$.
\end{assumption}

\begin{lemma}\label{reduction}
If $k=n^{O(1)}$, there is a polynomial-time reduction from splittable $k$-CVRP to unit-demand $k$-CVRP.
\end{lemma}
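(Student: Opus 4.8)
The plan is to replace each customer by a bundle of co-located unit-demand customers. First I would invoke Assumption~\ref{ass2} and Assumption~\ref{ass4}: in the splittable instance we may assume that $d_i\le (n-1)(k-1)$ for every $i$ and that every tour delivers an integral amount to each customer it visits. Hence the total demand $N\coloneqq\sum_{i=1}^{n}d_i$ satisfies $N\le n(n-1)(k-1)$, which is $n^{O(1)}$ whenever $k=n^{O(1)}$. I then build a unit-demand instance $(G'=(V'\cup\{v_0\},E'),w',d',k)$ where $V'$ contains copies $v_i^{1},\dots,v_i^{d_i}$ of each $v_i$, the demand vector is $d'=(1,\dots,1)$, and the depot is unchanged. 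The weights are $w'(v_i^{a},v_i^{b})=0$ for two copies of the same customer, $w'(v_i^{a},v_j^{b})=w(v_i,v_j)$ for $i\ne j$, and $w'(v_0,v_i^{a})=w(v_0,v_i)$. A short check shows that $w'$ is symmetric and satisfies the triangle inequality (any sub-path passing through two copies of the same customer can be replaced by that single point at no extra cost), so $(G',w',d',k)$ is a legitimate instance of unit-demand $k$-CVRP with $N+1=n^{O(1)}$ vertices, and the construction runs in polynomial time.

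Next I would show that the two instances have the same optimum and that solutions translate efficiently in both directions. Given an optimal splittable solution of the original instance, which by Assumption~\ref{ass1-add1} and Assumption~\ref{ass4} we may assume has integral deliveries summing to exactly $d_i$ at $v_i$, consider any of its tours and suppose it delivers $x_i\ge 1$ to $v_i$ for $i$ in some subset $S$ of the customers it visits; I replace this tour by the tour of $G'$ that follows the same cyclic order but, at the location of each $v_i\in S$, visits $x_i$ distinct copies of $v_i$ consecutively, assigning those copies greedily so that across all tours every copy of $v_i$ is used exactly once (possible since the $x_i$ over all tours sum to $d_i$). The inserted edges between copies have weight $0$, so the cost is unchanged, each new tour still delivers at most $k$ units, and every unit-demand customer is served exactly once; hence $\OPT_{\mathrm{unit}}(G')\le\OPT_{\mathrm{split}}(G)$. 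Conversely, from any feasible unit-demand solution of $G'$, contract all copies of $v_i$ back to $v_i$: a tour that served some copies of $v_i$ now delivers that many units to $v_i$, which is at most $k$ in total, and every $v_i$ receives exactly $d_i$ units overall; using the triangle inequality in $w$, I reroute each tour so that it visits each distinct location at most once, which does not increase its cost. This yields a feasible splittable solution of $G$ of no greater cost, so $\OPT_{\mathrm{split}}(G)\le\OPT_{\mathrm{unit}}(G')$, and the two optima coincide. Since both translations are polynomial-time and non-cost-increasing, any $\rho$-approximate unit-demand solution maps to a $\rho$-approximate splittable solution.

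The only real obstacle is making the two cost-accounting steps airtight: on the forward side, checking that the copies can be consistently distributed among the tours so that each is used exactly once while respecting the per-tour capacity — this is precisely where the integrality guaranteed by Assumption~\ref{ass4} is needed — and on the backward side, carefully applying the triangle inequality so that collapsing repeated visits to a common location never lengthens a tour. Everything else — the metric check for $w'$ and the polynomial size bound $N=n^{O(1)}$, which rests on Assumption~\ref{ass2} together with $k=n^{O(1)}$ — is routine.
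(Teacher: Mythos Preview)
Your proposal is correct and follows the same approach as the paper: replace each customer $v_i$ by $d_i$ co-located unit-demand copies, and use Assumption~\ref{ass2} together with $k=n^{O(1)}$ to bound the instance size polynomially. The paper's own proof is much terser---it simply states the copying idea and cites Assumption~\ref{ass2}---whereas you spell out the metric verification, the two-way solution correspondence, and the role of integrality (Assumption~\ref{ass4}) in the forward direction; all of this is sound and only elaborates on what the paper leaves implicit.
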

\begin{proof}
For a customer with $d_i$ of the demand in splittable $k$-CVRP, we consider it as $d_i$ unit-demand customers. Thus, we obtain an instance of unit-demand $k$-CVRP.
By Assumption~\ref{ass2}, we can assume that each customer's demand is at most $(n-1)(k-1)$. Hence, the reduction uses polynomial time when $k=n^{O(1)}$.
\end{proof}

We note that this property was also observed in~\cite{JayaprakashS22}. In this paper, we will always consider $k$ as a fixed integer. Thus, splittable $k$-CVRP can be reduced to unit-demand $k$-CVRP in polynomial time.
We will design approximation algorithms for unit-demand $k$-CVRP, which also work for splittable $k$-CVRP.

\begin{assumption}\label{ass3}
For unsplittable $k$-CVRP, the demand $d_i$ of each customer $v_i\in V$ is less than $k$.
\end{assumption}

In an optimal itinerary, if the vehicle always delivers $k$ of the demand in each tour, then we say that the itinerary satisfies the \emph{saturated property}.
\begin{assumption}\label{ass1}
For $k$-CVRP, there exists an optimal saturated itinerary where every tour delivers exactly $k$ of the demand.
\end{assumption}

Assumption~\ref{ass1} can be ensured by adding some dummy customers at the depot.

First, it is easy to see that after adding some unit-demand customers with at the position of $v_0$, the new instance is equivalent to the old one. Moreover, if there exists an optimal saturated itinerary, then after adding $n'$ with $n'\bmod k=0$ unit-demand customers at the position of $v_0$, there still exists an optimal saturated itinerary in the new instance since we can use $\frac{n'}{k}$ tours with zero weight to satisfy the new $n'$ unit-demand customers and each such tour delivers $k$ of the demand.

Consider unit-demand CVRP.
Suppose there exists an optimal itinerary where there are $m_i$ tours delivering $i$ of the demand for each $i\in\{1,2,\dots,k\}$, i.e., $n=\sum_{i=1}^{k}im_i$. 
If we add $\sum_{i=1}^{k}(k-i)m_i$ unit-demand customers at the position of $v_0$, it is easy to see that there exists an optimal saturated itinerary in the new instance.
Moreover, we have $\sum_{i=1}^{k}(k-i)m_i+n=\sum_{i=1}^{k}m_ik$ which is divisible by $k$.
But, we do not know the precise value of $\sum_{i=1}^{k}(k-i)m_i$.\footnote{An alternative proof, noted by an anonymous reviewer, could be that we simply guess the value of $\sum_{i=1}^{k}(k-i)m_i$ since it is polynomially bounded.}
We may further add $n'$ unit-demand customers at the position of $v_0$ such that the total number of added customers is $\sum_{i=1}^{k}(k-i)m_i+n'=k^2n+k-(n\bmod k)$, which is a known number. 
It is easy to see that $n'\bmod k=0$.
Since $m_i\leq\ceil{\frac{n}{i}}\leq n$, we know that $\sum_{i=1}^{k}(k-i)m_i\leq\sum_{i=1}^{k-1}kn\leq k^2n$ and then $n'\geq k-(n\bmod k)\geq 0$.
Therefore, if we add $k^2n+k-(n\bmod k)$ unit-demand customers in total at the position of $v_0$, there will exist an optimal saturated itinerary in the new instance.

For unsplittable CVRP, if we add $\sum_{i=1}^{n}k^2d_i+k-((\sum_{i=1}^{n}d_i)\bmod k)$ unit-demand customers at the position of $v_0$, by a similar argument, there will exist an optimal saturated itinerary in the new instance.

Assumption~\ref{ass1} guarantees the existence of an optimal solution with a good structure. We need this assumption to simplify some arguments and make some presentations neat. For unit-demand $k$-CVRP, such an itinerary consists of a set of $(k+1)$-cycles intersecting only at the depot. 

\subsection{Some Important Notations}
The following notations are illustrated with the unit-demand case. Most of them will be used to establish some lower bounds for our problems.
\begin{itemize}
\item $I^*$: an optimal solution to our problem;
\item $\Delta$: the sum of the weights of the edges from the depot $v_0$ to customer, i.e., $\sum_{v_i\in V}w(v_0, v_i)$;
\item $H^*$: a minimum weight Hamiltonian cycle on $V\cup\{v_0\}$;
\item $H_{CS}$: the Hamiltonian cycle on $V\cup\{v_0\}$ obtained by the Christofides-Serdyukov algorithm~\cite{christofides1976worst,serdyukov1978some};
\item $\M^*$: a minimum weight perfect matching in $G[V]$;
\item $\MST$: the total weight of the edges in a minimum weight spanning tree in $G$;
\item $\mathcal{C}^{*}$: a minimum weight cycle cover in $G[V]$;
\item $\mathcal{C}_{k}^{*}$: a minimum weight $k$-cycle cover in $G[V]$;
\item $\mathcal{C}_{\bmod k}^{*}$: a minimum weight mod-$k$-cycle cover in $G[V]$.
\end{itemize}

For an instance $G=(V\cup\{v_0\},E)$ of splittable or unsplittable CVRP, we construct a corresponding unit-demand instance $G'=(V'\cup\{v_0\},E')$ by replacing each customer with demand $d_i$ with $d_i$ unit-demand customers. 
By Assumption~\ref{ass2}, this reduction can be done in polynomial time when $k=O(1)$. It is easy to observe that the optimal value in $G'$ is at most the optimal value in $G$. 
Therefore, for an instance $G$ of splittable or unsplittable CVRP, the lower bound computed in the unit-demand instance $G'$ also constitutes a valid lower bound for $G$.

Although the above notations are defined for the unit-demand case, they can also be applied to the splittable and unsplittable cases.
For an instance $G=(V\cup\{v_0\},E)$ of splittable or unsplittable CVRP, the above notations, including $\Delta$, $H^{*}$, $H_{CS}$, $\M^*$, $\MST$, $\C^{*}$, $\C_{k}^{*}$, and $\C_{\bmod k}^{*}$, are defined based on the instance $G'$. 

Thus, in the splittable and unsplittable cases, \(\Delta\) becomes the sum of each customer's demand multiplied by the weight of the edge from \(v_0\) to the customer, i.e., \(\sum_{v_i \in V} d_i w(v_0, v_i)\), \(\mathcal{C}^*_k\) represents a minimum weight \(k\)-cycle cover in \(G'[V']\), and so on. Note that, for \(H^*\), \(H_{CS}\), and \(\MST\), there is no difference between \(G\) and \(G'\), so we do not distinguish between them.

We also mention the following.
A minimum weight perfect matching in a complete graph can be found in $O(n^3)$ time using classical algorithms~\cite{gabow1974implementation,lawler1976combinatorial}, although faster and simpler algorithms exist for other graphs; see, for example, Schrijver’s book~\cite{schrijver2003combinatorial}.
A minimum weight spanning tree in a complete graph can be computed in $O(n^2)$ time using Prim’s algorithm~\cite{prim1957shortest}, or in optimal time using the algorithm of Pettie and Ramachandran~\cite{pettie2002optimal}.
It is NP-hard to compute a minimum weight $k$-cycle cover for any $k \geq 3$~\cite{KirkpatrickH78}.
There exists an $O(n^2 \log n)$-time 2-approximation algorithm for the minimum weight mod-$k$-cycle cover problem based on the primal-dual method~\cite{GoemansW95}.
Additionally, a minimum weight cycle cover can be computed via a reduction to the minimum weight perfect matching problem, for which more efficient algorithms are also known~\cite{hartvigsen1984extensions,schrijver2003combinatorial}.
These results will be used in our algorithms.

For an optimal itinerary $I^*$, the edges incident to $v_0$ in $I^{*}$ are called \emph{home-edges}, and the set of home-edges of $I^*$ is denoted by $h(I^*)$.
We define $\ra$ as the proportion of the weights of all home-edges in $I^{*}$, i.e., 
$\ra=\frac{w(h(I^*))}{w(I^*)}$. 
Note that we assume $w(I^*)\neq 0$ to exclude the trivial case.

\section{Lower Bounds on Unit-Demand CVRP}\label{SEC3}
In this section, we study lower bounds related to $\Delta$, $H^{*}$, $H_{CS}$, $\M^*$, $\MST$, $\C_{k}^{*}$, $\C_{\bmod k}^{*}$, and $\C^{*}$.
As previously mentioned, the optimal value for $G'$ is at most the optimal value for $G$ for an instance $G$ of splittable or unsplittable CVRP. 
Therefore, the lower bounds we establish for the unit-demand case in this section will hold for all three versions of CVRP.
We now assume that the problem under consideration is unit-demand CVRP and proceed to prove some lower bounds.

We will use the concept of $\ra$ to derive some refined lower bounds, which were not considered in previous work. The first lower bound, related to the minimum weight Hamiltonian cycle $H^{*}$ on $V\cup\{v_0\}$, has been used in most previous papers.

\begin{lemma}[\cite{HaimovichK85,altinkemer1987heuristics}]\label{lb-tsp}
$w(I^{*}) \geq w(H^{*})$.
\end{lemma}

\begin{lemma}\label{lb-delta}
$\frac{k}{2}w(I^*)\geq(\ra+\frac{k-2}{2}) w(I^{*})\geq\Delta$.
\end{lemma}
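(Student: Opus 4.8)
The inequality $\frac{k}{2}w(I^*) \geq (\ra + \frac{k-2}{2})w(I^*)$ is just the statement $\frac{k}{2} \geq \ra + \frac{k-2}{2}$, i.e. $\ra \leq 1$, which is immediate from the definition $\ra = w(h(I^*))/w(I^*)$ since the home-edges form a subset of the edges of $I^*$ and all weights are non-negative. So the real content is the second inequality, $(\ra + \frac{k-2}{2})w(I^*) \geq \Delta$. First I would recall that by Assumption~\ref{ass1} we may work with an optimal saturated itinerary $I^*$ consisting of tours (simple cycles through $v_0$) each delivering exactly $k$ units, and under the unit-demand reduction each such tour is a $(k+1)$-cycle $v_0, u_1, \dots, u_k, v_0$. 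The plan is to bound $\Delta = \sum_{v_i \in V} d_i w(v_0, v_i) = \sum_{v_i \in V} w(v_0, v_i)$ tour by tour: fix one tour $C = (v_0, u_1, \dots, u_k, v_0)$ and show that the sum $\sum_{j=1}^k w(v_0, u_j)$ is at most $(\ra_C + \frac{k-2}{2}) w(C)$ for an appropriate per-tour quantity, then sum over all tours.

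The key step is the per-tour estimate. For each customer $u_j$ on the tour $C$, the triangle inequality applied along the cycle gives two bounds on $w(v_0, u_j)$: walking forward from $v_0$ along $C$ to $u_j$, and walking backward. The two home-edges of $C$ are $(v_0, u_1)$ and $(v_0, u_k)$, with total weight $w(h(C)) := w(v_0,u_1) + w(v_0,u_k)$. For the endpoints $u_1$ and $u_k$ the home-edge itself already bounds the distance, contributing $w(h(C))$. For each interior customer $u_j$ with $2 \leq j \leq k-1$, I would use the bound $w(v_0, u_j) \leq \min\{w(v_0,u_1) + \sum_{i<j} w(u_i,u_{i+1}),\; w(v_0,u_k) + \sum_{i \geq j} w(u_i, u_{i+1})\}$ and average the two sides, or more cleanly add the forward bound for $u_j$ to the backward bound and note that each internal edge $w(u_i, u_{i+1})$ of the tour gets counted at most once when summed appropriately. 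Carefully bookkeeping, summing $w(v_0,u_j)$ over all $j$ should yield $\sum_j w(v_0,u_j) \leq w(h(C)) + \frac{k-2}{2}\bigl(w(C) - w(h(C))\bigr) + (\text{lower order})$, and since $w(h(C)) \le w(C)$ this is at most $\frac{w(h(C))}{w(C)} w(C) + \frac{k-2}{2} w(C)$; writing $\ra_C = w(h(C))/w(C)$ gives the per-tour bound. Summing over tours, using that $\sum_C w(h(C)) = w(h(I^*)) = \ra\, w(I^*)$ and $\sum_C w(C) = w(I^*)$, delivers $\Delta \le (\ra + \frac{k-2}{2})w(I^*)$.

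The main obstacle is getting the combinatorial accounting in the per-tour bound exactly right: when I sum the averaged forward/backward triangle-inequality bounds over all $k$ customers on a tour, I must verify that the coefficient on each interior tour-edge really is at most $\frac{1}{2}$ per traversal and that the home-edge contributions combine to exactly $w(h(C))$ rather than something larger, so that the factor $\frac{k-2}{2}$ multiplies only the non-home-edge weight $w(C) - w(h(C))$ (this is precisely where the refinement over the trivial bound $\frac{k}{2}w(I^*) \ge \Delta$ comes from). I would handle this by indexing the customers $u_1, \dots, u_k$ in tour order, assigning each $u_j$ the cheaper of its two "arc distances" to $v_0$, and observing that an edge $(u_i, u_{i+1})$ lies on the chosen arc for at most the customers on one side of it, so over a full tour each internal edge is used by at most $\frac{k-2}{2}$ customers on average — the worst case being when the cheaper arcs split evenly. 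A clean way to make this rigorous is to sum the forward-arc bound over $u_1,\dots,u_k$ and the backward-arc bound over $u_1,\dots,u_k$ separately, add the two totals, and divide by two; each internal edge then appears with total coefficient exactly $k-2$ across the combined sum (it is "to the right" of $j-1$ or so vertices and "to the left" of the rest), giving coefficient $\frac{k-2}{2}$ after halving, while each home-edge appears with a larger coefficient that is nonetheless absorbed into the $\ra_C w(C)$ term. Once the per-tour inequality is pinned down the summation over tours is routine.
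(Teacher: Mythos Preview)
Your overall strategy matches the paper's: work tour by tour, treat the two endpoint customers $u_1,u_k$ via their home-edges, bound the $k-2$ interior customers using the triangle inequality around the cycle, then sum over tours. The first inequality ($\ra\le 1$) is handled correctly.

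However, the per-tour bookkeeping you describe is both more complicated than needed and contains errors. Two concrete issues:

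\begin{itemize}
\item The intermediate target $\sum_{j=1}^k w(v_0,u_j)\le w(h(C))+\frac{k-2}{2}\bigl(w(C)-w(h(C))\bigr)$ is false in general. Take $k=4$ and a tour with $w(v_0,u_1)=w(v_0,u_4)=1$ and all interior edges $0$: then $w(C)=w(h(C))=2$, the left side is $4$, and your right side is $2$.
\item In your ``clean'' accounting you sum the forward-arc bound and the backward-arc bound over all $j=1,\dots,k$ and claim each internal edge $(u_i,u_{i+1})$ picks up total coefficient $k-2$. In fact it is used by the forward arc of every $u_j$ with $j\ge i+1$ (coefficient $k-i$) and by the backward arc of every $u_j$ with $j\le i$ (coefficient $i$), so the total coefficient is $k$, not $k-2$. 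The home-edges also get coefficient $k$, and after halving you recover only the trivial bound $\sum_j w(v_0,u_j)\le \frac{k}{2}w(C)$, losing exactly the refinement you are after.
\end{itemize}

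The paper's argument is much shorter and is essentially what you mention first before trying to be ``clean'': for each interior $j\in\{2,\dots,k-1\}$, the forward arc and backward arc from $v_0$ to $u_j$ together traverse the whole cycle, so $2w(v_0,u_j)\le w(C)$, i.e.\ $w(v_0,u_j)\le\frac{1}{2}w(C)$. Summing this over the $k-2$ interior vertices and adding $w(v_0,u_1)+w(v_0,u_k)=w(h(C))$ for the endpoints gives
\[
\sum_{j=1}^k w(v_0,u_j)\ \le\ w(h(C))+\frac{k-2}{2}\,w(C),
\]
which is already the per-tour bound you need (note the factor $\frac{k-2}{2}$ multiplies $w(C)$, not $w(C)-w(h(C))$). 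Summing over tours then yields $\Delta\le \ra\,w(I^*)+\frac{k-2}{2}w(I^*)$ directly. There is no need for any finer edge-by-edge coefficient tracking.
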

\begin{proof}
Since $0\leq \ra\leq 1$, we have
$\frac{k}{2}= (1+\frac{k-2}{2})\geq (\ra+\frac{k-2}{2})$, which proves the first inequality.

Now, we show the second inequality.
By Assumption~\ref{ass1}, $I^*$ consists of a set of $(k+1)$-cycles. We consider an arbitrary $(k+1)$-cycle $C=(v_0,v_1,\dots, v_k,v_0)$ in $I^*$.
Since $k\geq3$, the triangle inequality implies that $w(C)\geq 2 w(v_0,v_{i})$ for each $i\in \{2,3,\dots, k-1\}$.
Thus, we have
$$
\sum_{i=1}^kw(v_0,v_i)=w(h(C))+\sum_{i=2}^{k-1}w(v_0,v_i)\leq w(h(C))+\frac{k-2}{2}w(C).
$$

Summing the above inequality over all cycles in $I^*$, we obtain
\begin{align*}
\Delta=\sum_{v_i\in V} w(v_0,v_i)&\leq\sum_{C\in I^*} \lrA{w(h(C))+ \frac{k-2}{2}w(C)}\\
&=\ra w(I^{*}) + \frac{k-2}{2}w(I^{*})=\lrA{\ra+\frac{k-2}{2}}w(I^*),
\end{align*}
as desired.
\end{proof}

\begin{lemma}\label{lb-tree}
$(1-\frac{1}{2}\ra)w(I^{*})\geq \mbox{MST}$.
\end{lemma}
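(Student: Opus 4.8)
The plan is to exhibit a spanning tree of $G$ whose weight is at most $(1-\tfrac12\ra)w(I^{*})$; since every spanning tree of $G$ has weight at least $\MST$, this immediately gives the claimed inequality. By Assumption~\ref{ass1} (in the unit-demand setting), the optimal itinerary $I^{*}$ decomposes into $n/k$ cycles $C_1,\dots,C_{n/k}$, each of the form $C_j=(v_0,v^j_1,\dots,v^j_k,v_0)$, that pairwise intersect only at the depot $v_0$. Each $C_j$ contains exactly two home-edges, $e'_j=(v_0,v^j_1)$ and $e''_j=(v^j_k,v_0)$, and $h(I^{*})=\bigcup_j\{e'_j,e''_j\}$, so $w(h(I^{*}))=\sum_j\bigl(w(e'_j)+w(e''_j)\bigr)=\ra\,w(I^{*})$.

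First I would, for each $j$, delete from $C_j$ whichever of $e'_j,e''_j$ is heavier; what remains of $C_j$ is a path spanning its $k+1$ vertices. Let $T$ be the union of these $n/k$ paths. I then check that $T$ is a spanning tree of $G$: it spans $V\cup\{v_0\}$ since every customer lies on some $C_j$; it is connected since every one of these paths passes through $v_0$; and it has exactly $\sum_j k=n$ edges on $n+1$ vertices, so connectivity forces it to be a tree. (Equivalently, $T$ is acyclic because each of the $n/k$ paths is acyclic and any two of them share only the single vertex $v_0$, so no cycle can be formed.) Hence $w(T)\ge\MST$.

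Finally I bound $w(T)$ from above:
\[
w(T)=w(I^{*})-\sum_{j}\max\{w(e'_j),w(e''_j)\}\le w(I^{*})-\frac12\sum_{j}\bigl(w(e'_j)+w(e''_j)\bigr)=w(I^{*})-\frac12 w(h(I^{*}))=\Bigl(1-\frac12\ra\Bigr)w(I^{*}),
\]
and combining with $w(T)\ge\MST$ proves the lemma. The only place that needs care is the verification that deleting one home-edge per cycle really yields a spanning tree — in particular that the resulting edge set is acyclic. This rests entirely on the structural fact from Assumption~\ref{ass1} that the cycles of $I^{*}$ meet only at $v_0$; if two cycles could share a customer, their union after edge deletions need not be a forest and the counting argument would fail.
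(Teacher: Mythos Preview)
Your proof is correct and follows essentially the same approach as the paper: delete the heavier of the two home-edges from each tour of $I^{*}$, observe that the remaining edge set is a spanning tree of $G$, and bound its weight by $(1-\tfrac12\ra)w(I^{*})$. Your verification that the resulting graph is actually a spanning tree is more careful than the paper's, which simply asserts this fact.
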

\begin{proof}
For each tour in $I^*$, there are exactly two home-edges. We can obtain a spanning tree of the graph from $I^*$ by deleting the longer home-edge in each tour. Since $w(h(I^*))=\ra w(I^*)$ by definition, the weight of this spanning tree is at least $w(I^*)-\frac{1}{2}\ra w(I^*)=(1-\frac{1}{2}\ra)w(I^{*})$, which is at least the weight of the minimum weight spanning tree.
\end{proof}

Recall that $H_{CS}$ is obtained by the Christofides-Serdyukov algorithm.
\begin{lemma}[\cite{christofides1976worst,serdyukov1978some}]\label{chris}
$\mbox{MST}+\frac{1}{2}w(H^*) \geq w(H_{CS})$.
\end{lemma}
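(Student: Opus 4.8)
The plan is to run the standard Christofides--Serdyukov analysis, making explicit the two ingredients it rests on: a spanning-tree bound and a matching bound. Recall that the algorithm producing $H_{CS}$ first takes a minimum spanning tree $T$ of $G$ (so $w(T)=\MST$), then looks at the set $O$ of vertices of odd degree in $T$, which has even cardinality, computes a minimum-weight perfect matching $M$ on the complete graph induced by $O$, adds $M$ to $T$ to obtain a connected Eulerian multigraph, takes an Euler tour of it, and shortcuts repeated vertices to obtain a Hamiltonian cycle $H_{CS}$ on $V\cup\{v_0\}$. The first thing I would record is that, because $w$ satisfies the triangle inequality, shortcutting never increases weight, so $w(H_{CS})\le w(T)+w(M)=\MST+w(M)$. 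It therefore suffices to prove $w(M)\le \tfrac12 w(H^*)$.

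For the matching bound I would argue as follows. Take the minimum Hamiltonian cycle $H^*$ on $V\cup\{v_0\}$ and shortcut it past every vertex not in $O$; this yields a Hamiltonian cycle $H_O$ on the vertex set $O$ with $w(H_O)\le w(H^*)$, again by the triangle inequality. Since $\lvert O\rvert$ is even, the cycle $H_O$ has an even number of edges, and coloring its edges alternately partitions them into two perfect matchings $M_1,M_2$ on $O$ with $w(M_1)+w(M_2)=w(H_O)\le w(H^*)$. Hence $\min\{w(M_1),w(M_2)\}\le \tfrac12 w(H^*)$. Because $M$ is a \emph{minimum}-weight perfect matching on $G[O]$, we get $w(M)\le\min\{w(M_1),w(M_2)\}\le\tfrac12 w(H^*)$.

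Combining the two displays gives $w(H_{CS})\le \MST+w(M)\le \MST+\tfrac12 w(H^*)$, which is the claimed inequality. The step I expect to be the only one requiring a genuine idea (rather than a direct appeal to the triangle inequality) is the decomposition of the even cycle $H_O$ into two alternating perfect matchings and the observation that the cheaper of the two dominates $w(M)$; everything else is bookkeeping with shortcutting. One small point I would be careful to state is that $O$ has even size --- this is the standard handshake fact that in $T$ the number of odd-degree vertices is even --- so that the perfect matchings $M$, $M_1$, $M_2$ all exist.
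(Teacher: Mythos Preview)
Your proposal is correct and is precisely the standard Christofides--Serdyukov argument. The paper itself does not supply a proof for this lemma at all; it simply states the inequality with a citation to \cite{christofides1976worst,serdyukov1978some}, so your written-out argument fills in exactly what those references contain.
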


By Lemmas~\ref{lb-tsp}, \ref{lb-tree}, and \ref{chris}, we have the following result.
\begin{lemma}\label{lb-chris}
$\frac{3-\ra}{2}w(I^*)\geq w(H_{CS})$.
\end{lemma}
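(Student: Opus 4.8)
The plan is to combine the three already-established facts. The target statement is Lemma~\ref{lb-chris}, namely $\frac{3-\ra}{2}\,w(I^*)\ge w(H_{CS})$. Since all the hard work — the Christofides--Serdyukov guarantee and the spanning-tree lower bound involving $\ra$ — has already been done in Lemmas~\ref{lb-tsp}, \ref{lb-tree}, and \ref{chris}, the proof here is just a short chain of inequalities.

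First I would invoke Lemma~\ref{chris}, which says $\mathrm{MST}+\frac12 w(H^*)\ge w(H_{CS})$. The idea is to bound each of the two summands on the left by a multiple of $w(I^*)$. For the first summand, Lemma~\ref{lb-tree} gives $\mathrm{MST}\le (1-\frac12\ra)\,w(I^*)$. For the second summand, Lemma~\ref{lb-tsp} gives $w(H^*)\le w(I^*)$, hence $\frac12 w(H^*)\le \frac12 w(I^*)$. Adding these two bounds yields
\[
\mathrm{MST}+\tfrac12 w(H^*)\le \Bigl(1-\tfrac12\ra\Bigr)w(I^*)+\tfrac12 w(I^*)=\Bigl(\tfrac32-\tfrac12\ra\Bigr)w(I^*)=\frac{3-\ra}{2}\,w(I^*).
\]
Chaining this with Lemma~\ref{chris} gives $w(H_{CS})\le \frac{3-\ra}{2}\,w(I^*)$, which is exactly the claim.

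There is essentially no obstacle here: this is a routine consequence of the preceding lemmas, and the only thing to be careful about is that the two upper bounds being added are both valid for the \emph{same} optimal solution $I^*$ (so that $\ra$ refers consistently to $w(h(I^*))/w(I^*)$), which is clear since both Lemma~\ref{lb-tsp} and Lemma~\ref{lb-tree} are stated with respect to that fixed $I^*$. One could also note for context that this bound is the natural ``$\ra$-refinement'' of the classical $\frac32 w(I^*)\ge w(H_{CS})$ bound used in ITP-type analyses: when $\ra$ is bounded away from $0$, the Hamiltonian cycle produced by Christofides--Serdyukov is cheaper than the worst-case $\frac32\,\mathrm{OPT}$, which is exactly the leverage the later algorithmic sections exploit.
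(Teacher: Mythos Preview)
Your proof is correct and is exactly the approach the paper takes: the paper simply states that the lemma follows from Lemmas~\ref{lb-tsp}, \ref{lb-tree}, and \ref{chris}, and you have written out that straightforward chain of inequalities.
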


\begin{lemma}\label{lb-cyclepacking}\label{lb-constrained-cyclepacking}
$\min\{2(1-\ra),1\} w(I^{*})\geq w(\mathcal{C}_{k}^{*})\geq w(\mathcal{C}_{\bmod k}^{*})\geq w(\mathcal{C}^{*})$.
\end{lemma}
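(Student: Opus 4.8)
The chain contains four quantities, and the three rightmost inequalities $w(\C_{k}^{*})\geq w(\C_{\bmod k}^{*})\geq w(\C^{*})$ are essentially definitional: every $k$-cycle packing is a valid mod-$k$-cycle packing (each cycle has length $k$, which is divisible by $k$), and every mod-$k$-cycle packing is a valid cycle packing provided each cycle has length $\geq 3$. Here one must be slightly careful: a mod-$k$-cycle packing could in principle contain a ``cycle'' of length $k$ only, but with $k\geq 3$ this is fine; the only subtlety is that mod-$k$ packings are defined on a graph whose vertex count need not be a multiple of $k$, whereas a $k$-cycle packing requires $n$ divisible by $k$ — so the first inequality is really only asserted when both objects are defined, i.e.\ when $k\mid n$, which is guaranteed under Assumption~\ref{ass1}. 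I would state this and then note minimality of $\C^{*}$ over the larger feasible set gives the two inequalities, and similarly $w(\C_k^*)\ge w(\C_{\bmod k}^*)$ since $\C_k^*$ is feasible for the mod-$k$ problem.

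The substantive part is the leftmost inequality $\min\{2(1-\ra),\,1\}\,w(I^{*})\geq w(\C_{k}^{*})$, which splits into showing both $w(I^*)\ge w(\C_k^*)$ and $2(1-\ra)w(I^*)\ge w(\C_k^*)$. For the bound $w(I^*)\ge w(\C_k^*)$: under Assumption~\ref{ass1}, $I^*$ is a disjoint union of $(k+1)$-cycles $C=(v_0,v_1,\dots,v_k,v_0)$ sharing only $v_0$; from each such cycle, delete $v_0$ (i.e.\ replace the path $v_1,v_0,v_k$ by the direct edge $(v_k,v_1)$) to obtain a $k$-cycle on $\{v_1,\dots,v_k\}$, and by the triangle inequality $w(v_k,v_1)\le w(v_k,v_0)+w(v_0,v_1)$, so the resulting $k$-cycle has weight at most $w(C)$. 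These $k$-cycles are vertex-disjoint and cover all of $V$, hence form a feasible $k$-cycle packing of $G[V]$ of total weight $\le w(I^*)$; minimality gives $w(\C_k^*)\le w(I^*)$.

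For the bound $2(1-\ra)w(I^*)\ge w(\C_k^*)$: I would build a $k$-cycle packing whose weight is charged only against the non-home edges of $I^*$, doubled. Take a cycle $C=(v_0,v_1,\dots,v_k,v_0)$ of $I^*$; its non-home edges form the path $P=(v_1,v_2,\dots,v_k)$ of weight $w(C)-w(h(C))$. Traverse $P$ forward and then backward to get a closed walk on $\{v_1,\dots,v_k\}$ of weight $2(w(C)-w(h(C)))$ visiting every $v_i$, then shortcut it to a $k$-cycle on these vertices; by the triangle inequality this $k$-cycle has weight at most $2(w(C)-w(h(C)))$. Doing this for every cycle of $I^*$ and taking the disjoint union yields a feasible $k$-cycle packing of weight at most $\sum_{C}2(w(C)-w(h(C))) = 2(w(I^*)-w(h(I^*))) = 2(1-\ra)w(I^*)$, so $w(\C_k^*)\le 2(1-\ra)w(I^*)$. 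Combining the two bounds gives $w(\C_k^*)\le\min\{1,2(1-\ra)\}w(I^*)$, completing the proof.

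The main obstacle — really the only place requiring care rather than routine shortcutting arguments — is making sure the feasibility definitions line up: confirming that the objects $\C_k^*,\C_{\bmod k}^*,\C^*$ are all well-defined on $G[V]$ under the standing assumptions (in particular that $k\mid n$ for the $k$-cycle packing), and that each constructed family is genuinely vertex-disjoint and spanning so that minimality of the respective optimum applies. Everything else follows from the triangle inequality and the structure of $I^*$ guaranteed by Assumption~\ref{ass1}.
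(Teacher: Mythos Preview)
Your proposal is correct and essentially matches the paper's argument. The only presentational difference is that the paper handles both parts of the $\min$ with a single construction: it shortcuts $v_0$ to get $C'=(v_1,\dots,v_k,v_1)$ and bounds the new edge $w(v_1,v_k)$ two ways simultaneously, namely $w(v_1,v_k)\le w(v_0,v_1)+w(v_0,v_k)$ and $w(v_1,v_k)\le w(C)-w(v_0,v_1)-w(v_0,v_k)$, whereas you obtain the second bound via path-doubling and shortcutting (which amounts to the same inequality).
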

\begin{proof}
We first show that there exists a $k$-cycle cover in $G[V]$ whose weight is at most $\min\{2(1-\ra),1\}w(I^{*})$.

By Assumption~\ref{ass1}, $I^*$ consists of a set of $(k+1)$-cycles. 
Consider an arbitrary $(k+1)$-cycle $C=(v_0,v_1,\dots, v_{k},v_0)$ in $I^*$. Let $C'=(v_1,\dots, v_{k},v_1)$ be the $k$-cycle obtained by shortcutting $v_0$ from $C$.
By the triangle inequality, we have $w(v_1,v_k)\leq \min\{w(v_0,v_1)+w(v_0,v_k),w(C)-w(v_0,v_1)-w(v_0,v_k)\}$. Thus, we have $w(C')=w(C)-w(v_0,v_1)-w(v_0,v_k)+w(v_1,v_k)\leq \min\{w(C),2w(C)-2w(v_0,v_1)-2w(v_0,v_k)\}$.
Note that the edges $(v_0,v_1)$ and $(v_0,v_k)$ are home-edges of the cycle $C$.
By summing the above inequality over all cycles in $I^*$, we obtain the desired result.

Since $\C_{k}^*$ is the minimum weight $k$-cycle cover in $G[V]$, we have
\[
w(\C_{k}^*)\leq\min\{2(1-\ra),1\} w(I^{*}).
\]

Since any $k$-cycle cover is a mod-$k$-cycle cover, and any mod-$k$-cycle cover is a cycle cover, we have the relationship $w(\C^{*})\leq w(\C_{\bmod k}^{*})\leq w(\C_{k}^{*})$.
\end{proof}

Since $H_{CS}$ is a $\frac{3}{2}$-approximate Hamiltonian cycle~\cite{christofides1976worst,serdyukov1978some}, some papers~\cite{altinkemer1987heuristics,HaimovichK85} used the inequalities $\Delta\leq \frac{k}{2}w(I^*)$ and $w(H_{CS})\leq \frac{3}{2}w(I^*)$ in the analysis of ITP and UITP. However, if the upper bound of $\Delta$ is tight, i.e., $\Delta=\frac{k}{2}w(I^*)$, we have $\ra=1$ by Lemma~\ref{lb-delta}. In this case, by Lemma~\ref{lb-chris}, we also have $w(H_{CS})\leq w(I^*)$. Therefore, our new lower bounds show that these two bounds cannot be tight simultaneously, indicating the potential for better approximation ratios. Besides, if $\ra=1$, by Lemma~\ref{lb-constrained-cyclepacking}, we have $w(\C^*_{k})=w(\mathcal{C}_{\bmod k}^{*})=w(\C^*)=0$. So, any $O(1)$-approximate cycle covers may outperform the optimal Hamiltonian cycle. 

Before showing how these insights can lead to improved approximation ratios, we first review the ITP algorithms~\cite{altinkemer1987heuristics,HaimovichK85} that operate with a Hamiltonian cycle, and then we will propose our EX-ITP algorithm, which works for any cycle cover. 

\section{The ITP Algorithms}
ITP (Iterated Tour Partitioning) is a frequently used technique for (unit-demand) CVRP. The main idea of ITP is to construct feasible solutions for CVRP based on given Hamiltonian cycles: first split the Hamiltonian cycle into several connected pieces of length at most $k$ and then construct a tour for each piece. The algorithm will consider several different ways to split the Hamiltonian cycle and choose the best one.

According to the used Hamiltonian cycle containing the depot $v_0$ or not, there are two versions of ITP.
We briefly introduce them below and then introduce an extension. Now, we consider unit-demand $k$-CVRP.

\subsection{The AG-ITP Algorithm}
We first review the ITP algorithm introduced by Altinkemer and Gavish~\cite{altinkemer1987heuristics}, where the Hamiltonian cycle needs to go through the depot $v_0$.
Assume that there is a Hamiltonian cycle $H=(v_0,v_1,\dots, v_n,v_0)$ on $V\cup\{v_0\}$ as part of the input. AG-ITP will select the best solution from $k$ possible solutions.
For each $1\leq i\leq k$, the $i$-th solution consists of the following $(\ceil{\frac{n-i}{k}}+1)$ tours: $(v_0,v_1,\dots, v_i, v_0)$, $(v_0,v_{i+1},\dots, v_{i+k},v_0)$, $(v_0,v_{i+k+1},\dots, v_{i+2k},v_0)$, $\dots$, $(v_0,v_{i+(\ceil{\frac{n-i}{k}}-1)k+1},\dots, v_n,v_0)$. Except for the first and last tours, each tour consists of $k$ customers. Note that AG-ITP can be carried out in $O(nk)$ time.

\begin{lemma}[\cite{altinkemer1987heuristics}]~\label{AGITP}
Given a Hamiltonian cycle $H$ on $V\cup\{v_0\}$ as part of the input, for unit-demand $k$-CVRP with any $k\geq 3$, AG-ITP in $O(nk)$ time outputs a solution with weight at most $\frac{2}{k}\Delta+(1-\frac{1}{k})w(H)$.
\end{lemma}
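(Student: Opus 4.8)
The plan is to analyze the $k$ candidate solutions produced by AG-ITP and show that their average weight (hence the minimum) is bounded by $(2/k)\Delta + (1-1/k)w(H)$. First I would set up notation: for $1 \le i \le k$, let $S_i$ denote the $i$-th solution, which consists of the ``wrap-around'' first tour $(v_0, v_1, \dots, v_i, v_0)$ together with the blocks of $k$ consecutive customers $(v_0, v_{i+1}, \dots, v_{i+k}, v_0)$, and so on. The key observation is to split the weight of each tour into two parts: the \emph{path part}, consisting of edges of $H$ between consecutive customers on the tour, and the \emph{connection part}, consisting of the two home-edges $(v_0, \cdot)$ at the two ends of each tour. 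Summing the path parts over all tours in $S_i$, we use up every edge of $H$ except the $k$ edges that get ``cut'' in the $i$-th split, so this contributes at most $w(H)$ (in fact slightly less, but $w(H)$ suffices).

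Next I would bound the total connection weight over all $k$ solutions simultaneously. Here is the crux: consider a fixed customer $v_j$. Across the $k$ different values of $i$, the vertex $v_j$ serves as the \emph{left endpoint} of a tour (i.e., the first customer after the depot in its block) for exactly one choice of $i$, and as the \emph{right endpoint} of a tour for exactly one choice of $i$; for the remaining $k-2$ values of $i$ it sits strictly inside a block and incurs no home-edge. Wait — I should be careful about the short first/last tours, but by the triangle inequality any home-edge $(v_0, v_j)$ appearing in a truncated boundary tour is still just $w(v_0, v_j)$, so the counting argument goes through. Hence each edge $(v_0, v_j)$ is counted at most twice across all $k$ solutions, giving
\[
\sum_{i=1}^{k} \big(\text{connection weight of } S_i\big) \;\le\; 2\sum_{v_j \in V} w(v_0, v_j) \;=\; 2\Delta.
\]

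Putting the two bounds together, $\sum_{i=1}^{k} w(S_i) \le k\cdot w(H) + 2\Delta - (\text{the $k$ cut edges, one set per solution})$. More precisely, since in solution $S_i$ the path part omits a set of $k$ edges whose weights, summed over $i$, again amount to at most $w(H)$, one gets $\sum_{i=1}^k w(S_i) \le (k-1)w(H) + 2\Delta$. Dividing by $k$, the best of the $k$ solutions has weight at most $(1 - 1/k)w(H) + (2/k)\Delta$, as claimed. The running time $O(nk)$ is immediate: each of the $k$ solutions is assembled in $O(n)$ time. The main obstacle is getting the boundary bookkeeping exactly right — tracking precisely which edges of $H$ are reused as path edges versus discarded, and confirming that the two short tours at the seam of each split do not cause any home-edge to be double-counted beyond the allotted factor of $2$; the triangle inequality is what rescues the argument whenever a ``home-edge'' is really a shortcut of a longer path in $H$.
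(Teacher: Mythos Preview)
The paper does not give its own proof of this lemma; it is cited from \cite{altinkemer1990heuristics}. Your averaging strategy is the standard one and the target identity $\sum_{i=1}^k w(S_i)=(k-1)w(H)+2\Delta$ is correct, but your derivation of it has a real gap in the endpoint counting.

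The claim that each customer $v_j$ appears as a tour endpoint exactly twice across the $k$ solutions is false for $v_1$ and $v_n$. Every solution's first tour begins $(v_0,v_1,\dots)$ and its last tour ends $(\dots,v_n,v_0)$, so $v_1$ is a left endpoint in \emph{all} $k$ solutions (and additionally a right endpoint once, when $i=1$), and symmetrically $v_n$ is a right endpoint $k$ times. Hence
\[
\sum_{i=1}^k(\text{connection weight of }S_i)=2\Delta+(k-1)\bigl(w(v_0,v_1)+w(v_0,v_n)\bigr),
\]
not $2\Delta$. The triangle inequality you invoke does nothing here; the overcount is genuine.

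The reason the final bound still comes out right is that your path-part bound is loose by exactly the compensating amount: the two $H$-edges $(v_0,v_1)$ and $(v_n,v_0)$ are never ``path edges'' (they touch $v_0$) and are never cut edges either, so $\sum_i(\text{path weight of }S_i)=(k-1)\bigl(w(H)-w(v_0,v_1)-w(v_0,v_n)\bigr)$, not $(k-1)w(H)$. Your two errors cancel, but the argument as written does not establish either intermediate bound you actually use. (Incidentally, the number of cut edges in a single solution $S_i$ is about $n/k$, not $k$.) A clean fix is to skip the path/connection split and write
\[
w(S_i)=w(H)+\sum_{\text{cuts }j}\bigl[w(v_0,v_j)+w(v_0,v_{j+1})-w(v_j,v_{j+1})\bigr];
\]
summing over $i$ and using that each $j\in\{1,\dots,n-1\}$ is a cut for exactly one $i$ gives the identity directly.
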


Using an $\alpha$-approximate Hamiltonian cycle $H$ on $V\cup\{v_0\}$, by Lemmas~\ref{lb-tsp}, \ref{lb-delta} and \ref{AGITP}, AG-ITP computes a solution with weight at most
$
\frac{2}{k}\Delta+(1-\frac{1}{k})w(H)\leq w(I^*)+(1-\frac{1}{k})\alpha w(I^*)=(\alpha+1-\frac{\alpha}{k})w(I^*).
$
Therefore, AG-ITP achieves an approximation ratio of $\alpha+1-\frac{\alpha}{k}$.

\subsection{The HR-ITP Algorithm}
Now, we review the ITP algorithm introduced by Haimovich and Rinnooy Kan~\cite{HaimovichK85}, where the used Hamiltonian cycle $H$ does not go through the depot $v_0$. Assume that $H=(v_1,v_2,\dots, v_n,v_1)$. HR-ITP will select the best solution from $n$ possible solutions.
For each $1\leq i\leq n$, the $i$-th solution consists of the following $\ceil{\frac{n}{k}}$ tours: $(v_0,v_i,\dots, v_{i+k-1}, v_0)$, $(v_0,v_{i+k},\dots, v_{i+2k-1},v_0)$, $(v_0,v_{i+2k},\dots, v_{i+3k-1},v_0)$, $\dots$, $(v_0,v_{i+(\ceil{\frac{n}{k}}-1)k},\dots, v_{i+n-1},v_0)$ where we let $v_{n+i'}=v_{i'}$ for each $1\leq i'\leq n$. Except possibly for the last tour, each tour contains exactly $k$ customers.
The running time is $O(n^2)$.

\begin{lemma}[\cite{HaimovichK85}]~\label{HRITP}
Given a Hamiltonian cycle $H$ on $V$ as part of the input, for unit-demand $k$-CVRP with any $k\geq 3$, HR-ITP in $O(n^2)$ time outputs a solution with weight at most $\frac{2\ceil{n/k}}{n}\Delta+(1-\frac{\ceil{n/k}}{n})w(H)$.
\end{lemma}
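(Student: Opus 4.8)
The plan is to adapt the classical iterated–tour–partitioning averaging argument to this rotation-based splitting. Fix the Hamiltonian cycle $H=(v_1,\dots,v_n,v_1)$ on $V$ and, for each rotation $i\in\{1,\dots,n\}$, let $S_i$ be the $i$-th candidate solution. First I would observe that the cost of $S_i$ decomposes \emph{exactly} (no triangle inequality needed) into two parts: (a) the weight of the edges of $H$ that are \emph{retained}, i.e.\ all $n$ edges of $H$ except the $\ceil{n/k}$ edges sitting between two consecutive groups in the cyclic order starting at $v_i$; and (b) for each of the $\ceil{n/k}$ groups, the two ``connection'' edges $w(v_0,\cdot)$ joining the depot to the first and to the last customer of that group. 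Thus $w(S_i) = \big(w(H) - (\text{weight of the cut edges for } i)\big) + (\text{weight of the connection edges for } i)$, and this is an identity, not merely an inequality.

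Next I would do the counting over all $n$ rotations. For the connection edges: as $i$ ranges over $\{1,\dots,n\}$, the set of group-start vertices for rotation $i$ is $\{v_{i},v_{i+k},v_{i+2k},\dots\}$ (indices mod $n$, with $\ceil{n/k}$ of them), and for each offset $j\in\{0,\dots,\ceil{n/k}-1\}$ there is exactly one $i$ with $v_m=v_{i+jk}$; hence every customer $v_m$ is a group-start for exactly $\ceil{n/k}$ of the rotations, and symmetrically a group-end for exactly $\ceil{n/k}$ of them, even after accounting for the shorter last group and the wrap-around. Since $\Delta=\sum_{m}w(v_0,v_m)$ in the unit-demand case, summing the connection part over all $i$ gives exactly $2\ceil{n/k}\,\Delta$. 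For the retained edges: by the same argument each edge $(v_m,v_{m+1})$ of $H$ is a cut edge for exactly $\ceil{n/k}$ rotations, so the total cut weight over all $i$ is $\ceil{n/k}\,w(H)$, and hence the total retained weight over all $i$ is $(n-\ceil{n/k})\,w(H)$.

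Combining the two counts, $\sum_{i=1}^{n} w(S_i) = 2\ceil{n/k}\,\Delta + (n-\ceil{n/k})\,w(H)$, so the average over the $n$ candidates, and therefore the minimum-cost candidate that the algorithm returns, is at most $\tfrac{2\ceil{n/k}}{n}\,\Delta + \big(1-\tfrac{\ceil{n/k}}{n}\big)w(H)$. For the running time, each $S_i$ together with its cost can be computed in $O(n)$ time from $H$ and the depot distances, and there are $n$ rotations, giving $O(n^2)$ overall; note $k\ge 3$ is not actually needed for this lemma, only the standing assumption that $k$ is a constant. The only genuine care-point — hence the ``hard'' part of an otherwise routine proof — is the bookkeeping of the ragged last group and the mod-$n$ wrap-around in the counting step: one must check that ``exactly $\ceil{n/k}$ times each'' is truly an equality rather than an upper bound, and that the decomposition of $w(S_i)$ is an identity (unlike AG-ITP, no loss is incurred from shortcutting here).
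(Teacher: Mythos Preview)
The paper does not supply its own proof of this lemma; it is stated with a citation to~\cite{HaimovichK85}. Your argument is the standard averaging proof from that source and is correct; the one care-point you flag (that each vertex and each edge of $H$ is hit exactly $\ceil{n/k}$ times across the $n$ rotations) is readily verified, since for $0<j_2-j_1\le\ceil{n/k}-1$ one has $0<(j_2-j_1)k<n$, so the $\ceil{n/k}$ rotations $i\equiv m-jk\pmod n$ are pairwise distinct.
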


When $n$ is divisible by $k$, HR-ITP also computes a solution with weight at most $\frac{2}{k}\Delta+(1-\frac{1}{k})w(H)$. Otherwise, the weight may exceed $\frac{2}{k}\Delta+(1-\frac{1}{k})w(H)$.

\subsection{An Extension of ITP}
The above two ITP algorithms are based on given Hamiltonian cycles. In fact, the requirement of Hamiltonian cycles is not necessary. We can replace the Hamiltonian cycle in the algorithms with a cycle cover to construct feasible solutions in a similar way.

Given a cycle cover $\C$ in the graph $G[V]$ or $G[V\cup\{v_0\}]$ (either containing the depot $v_0$ or not), for each cycle $C\in\C$, we call HR-ITP on it if $C$ does not contain the depot $v_0$, and call AG-ITP on it if $C$ contains the depot $v_0$. By putting all together, we can obtain a feasible solution. The quality of the solution is related to the cycle cover. We refer to this algorithm as the EX-ITP algorithm. Although EX-ITP is still simple, it will play an important role in our algorithms.

\begin{lemma}~\label{EXITP}
Given a cycle cover $\C$ in the graph $G[V]$ or $G[V\cup\{v_0\}]$ as part of the input, for unit-demand $k$-CVRP with any $k\geq 3$, EX-ITP in $O(n^2)$ time outputs a feasible solution with weight at most $2g\Delta+(1-g)w(\C)$, where $g=\max_{C\in\C}\frac{\ceil{\size{C}/k}}{\size{C}}$.
\end{lemma}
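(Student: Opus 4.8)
The plan is to reduce the statement for a general cycle packing to the two known single-cycle cases (Lemmas~\ref{AGITP} and~\ref{HRITP}) by arguing cycle by cycle and then reassembling. First I would decompose $\C$ into its vertex-disjoint cycles $C_1,\dots,C_t$ (each of length at least three). Since $\C$ is a $2$-factor of $G[V]$ or of $G[V\cup\{v_0\}]$ and $v_0$ has degree two, at most one cycle, say $C_1$, passes through the depot. For each cycle $C_j$ let $S_j\subseteq V$ be the set of customers it visits and $\Delta_j=\sum_{v_i\in S_j}d_iw(v_0,v_i)$ its contribution to $\Delta$ (so $\Delta_j=\sum_{v_i\in S_j}w(v_0,v_i)$ in the unit-demand setting). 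The sets $S_j$ partition $V$, hence $\sum_j\Delta_j=\Delta$ and $\sum_jw(C_j)=w(\C)$.

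Next I would bound the weight of the sub-solution EX-ITP produces on each cycle. For a cycle $C_j$ avoiding $v_0$, HR-ITP (Lemma~\ref{HRITP}) outputs a feasible sub-solution of weight at most $2g_j\Delta_j+(1-g_j)w(C_j)$ with $g_j=\ceil{\size{C_j}/k}/\size{C_j}$; for $C_1$ (if present), AG-ITP (Lemma~\ref{AGITP}) outputs a feasible sub-solution of weight at most $(2/k)\Delta_1+(1-1/k)w(C_1)$, which has the same form with $g_1=1/k$. In both cases $g_j\le g$, since $g=\max_{C\in\C}\ceil{\size{C}/k}/\size{C}$ and $\ceil{x}/x\ge 1/k$ for $x=\size{C}/k>0$. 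Taking the union of all sub-solutions yields a feasible solution for $k$-CVRP: each tour serves at most $k$ customers, each $S_j$ is handled feasibly, and the $S_j$ cover all of $V$. Its total weight is at most $\sum_j\big(2g_j\Delta_j+(1-g_j)w(C_j)\big)$.

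The last step is to replace every per-cycle factor $g_j$ by the common value $g$. The key elementary fact is the triangle-inequality estimate $w(C_j)\le 2\Delta_j$: traversing $C_j$ and rerouting each of its edges $(u,u')$ through the depot as $u\to v_0\to u'$ counts each customer of $S_j$ at most twice, so $w(C_j)\le 2\sum_{v_i\in S_j}w(v_0,v_i)\le 2\Delta_j$ (using $d_i\ge 1$). Consequently, for each $j$,
\[
\big(2g_j\Delta_j+(1-g_j)w(C_j)\big)-\big(2g\Delta_j+(1-g)w(C_j)\big)=(g_j-g)\big(2\Delta_j-w(C_j)\big)\le 0,
\]
since $g_j-g\le 0$ and $2\Delta_j-w(C_j)\ge 0$. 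Summing over $j$ gives the claimed bound $2g\Delta+(1-g)w(\C)$. The running time is immediate: HR-ITP costs $O(\size{C_j}^2)$ and AG-ITP costs $O(\size{C_j}k)$ on a cycle $C_j$, and since $\sum_j\size{C_j}=O(n)$ and $k$ is a constant, the total is $O(n^2)$.

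I do not expect a serious obstacle here; the two points that need care are (i) observing that $v_0$ lies on at most one cycle, so AG-ITP is invoked at most once, and (ii) recognizing that the per-cycle factors $g_j$ can be dominated uniformly by $g$ only after the inequality $w(C_j)\le 2\Delta_j$ is available — so that triangle-inequality estimate, rather than the bookkeeping, is the real content of the proof.
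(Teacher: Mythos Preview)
Your proposal is correct and follows essentially the same approach as the paper's proof: decompose into cycles, apply AG-ITP or HR-ITP on each cycle to get per-cycle bounds with factors $g_j$, use the triangle inequality $w(C_j)\le 2\Delta_j$ to monotonically replace each $g_j$ by the common maximum $g$, and sum. The only cosmetic glitch is the phrasing ``$\ceil{x}/x\ge 1/k$ for $x=\size{C}/k$''; what you mean (and what the paper uses) is simply $\ceil{\size{C}/k}/\size{C}\ge (\size{C}/k)/\size{C}=1/k$.
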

\begin{proof}
Define $\Delta_C=\sum_{v_i\in C}w(v_0,v_i)$. Since $\mathcal{C}$ is a cycle cover, it contains at most one cycle that includes the depot. 
For the possible cycle $C\in\C$ with $v_0\in C$, AG-ITP can compute an itinerary on $C$ with weight at most $\frac{2}{k}\Delta_C+(1-\frac{1}{k})w(C)$.
For each cycle $C\in\C$ with $v_0\notin C$, HR-ITP can compute an itinerary on $C$ with weight at most $\frac{2\ceil{\size{C}/k}}{\size{C}}\Delta_C+(1-\frac{\ceil{\size{C}/k}}{\size{C}})w(C)$.

By the triangle inequality, we have $w(C)\leq 2\Delta_C$. 
Then, for any $0\leq x\leq y$, we have $2x\Delta_C+(1-x)w(C)\leq 2y\Delta_C+(1-y)w(C)$. 
Since $\frac{1}{k}\leq \frac{\ceil{\size{C}/k}}{\size{C}}\leq g$ by the definition of $g$, we have 
\begin{align*}
\frac{2}{k}\Delta_C+\lrA{1-\frac{1}{k}}w(C)&\leq \frac{2\ceil{\size{C}/k}}{\size{C}}\Delta_C+\lrA{1-\frac{\ceil{\size{C}/k}}{\size{C}}}w(C)\\
&\leq 2g\Delta_C+(1-g)w(C).
\end{align*}
Therefore, the itinerary on $C\in \C$ has a weight of at most $2g\Delta_C+(1-g)w(C)$.

Since $\sum_{C\in\C}\Delta_C=\Delta$, EX-ITP can output a solution with weight at most 
\[
\sum_{C\in\C}(2g\Delta_C+(1-g)w(C))=2g\Delta+(1-g)w(\C),
\]
as desired.
\end{proof}

Intuitively, the value $\frac{1}{g}$ in Lemma~\ref{EXITP} represents the maximum, over all cycles (excluding the depot), of the average number of customers per tour used to serve that cycle. Thus, a larger value of $g$ indicates that each tour serves fewer customers on average. 

In our algorithm, we frequently use a special case that the input cycle cover is a mod-$k$-cycle cover $\C_{\bmod k}$ in $G[V]$. We have a good approximation ratio for this special case and a good algorithm to find a mod-$k$-cycle cover.
By Lemma~\ref{EXITP} and the previous analysis, we have the following reuslt.

\begin{corollary}\label{ITPpac}
Given a mod-$k$-cycle cover $\C_{\bmod k}$ in $G[V]$ as part of the input, for unit-demand $k$-CVRP with any $k\geq 3$, EX-ITP in $O(n^2)$ time outputs a feasible solution with weight at most $\frac{2}{k}\Delta+(1-\frac{1}{k})w(\C_{\bmod k})$.
\end{corollary}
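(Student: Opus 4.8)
The plan is to derive this corollary directly from Lemma~\ref{EXITP}, which already handles an arbitrary cycle packing $\C$ and gives the bound $2g\Delta+(1-g)w(\C)$ with $g=\max_{C\in\C}\ceil{\size{C}/k}/\size{C}$. The only thing specific about a mod-$k$-cycle packing that I need to exploit is the divisibility condition: every cycle $C\in\C_{\bmod k}$ has $\size{C}$ divisible by $k$, so $\ceil{\size{C}/k}=\size{C}/k$ and hence $\ceil{\size{C}/k}/\size{C}=1/k$ for every such cycle.

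First I would note that this makes the quantity $g$ in Lemma~\ref{EXITP} evaluate to exactly $1/k$, independent of which cycle attains the maximum. Substituting $g=1/k$ into the guarantee of Lemma~\ref{EXITP} immediately yields a feasible solution of weight at most $(2/k)\Delta+(1-1/k)w(\C_{\bmod k})$. The feasibility of the output and the $O(n^2)$ running time are inherited verbatim from Lemma~\ref{EXITP}, since $\C_{\bmod k}$ is in particular a cycle packing of $G[V]$ (each cycle has length divisible by $k\geq 3$, so length at least $3$) and no extra work is performed.

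I do not anticipate any real obstacle here: the corollary is essentially the specialization of Lemma~\ref{EXITP} to inputs where $\ceil{\size{C}/k}=\size{C}/k$. If anything, the only point worth a sentence is to confirm that a mod-$k$-cycle packing of $G[V]$ is a legitimate input to EX-ITP in the sense required by Lemma~\ref{EXITP} (a $2$-factor of $G[V]$ whose components are cycles), which holds by definition. Hence the proof will be three or four lines: state the divisibility consequence $g=1/k$, invoke Lemma~\ref{EXITP}, and read off the claimed weight bound and running time.
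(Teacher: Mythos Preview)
Your proposal is correct and matches the paper's approach exactly: the paper also derives Corollary~\ref{ITPpac} directly from Lemma~\ref{EXITP} by noting that for a mod-$k$-cycle packing each cycle has $\size{C}$ divisible by $k$, so $g=1/k$.
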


Moreover, if the cycle cover $\C$ used in Lemma~\ref{EXITP} satisfies $w(\C)\leq \min\{2(1-\ra),1\} w(I^{*})$, as in Lemma~\ref{lb-cyclepacking}, and if $g\leq \frac{1}{2}$, we have the following corollary.

\begin{corollary}\label{EXITPp}
Given a cycle cover $\mathcal{C}$ in $G[V]$ or $G[V\cup\{v_0\}]$ as part of the input, where $w(\C)\leq \min\{2(1-\ra),1\} w(I^{*})$ and $g\leq \frac{1}{2}$,  for unit-demand $k$-CVRP with any $k\geq 3$, EX-ITP achieves an approximation ratio of $g(k-2)+1$. Moreover, in the worst case, we have $\chi=\frac{1}{2}$. (If $g=\frac{1}{2}$, the approximation ratio remains the same for any $\frac{1}{2}\leq \chi\leq 1$.)
\end{corollary}
\begin{proof}
By Lemma~\ref{EXITP}, EX-ITP computes a solution with weight at most 
\begin{align*}
&2g\Delta+(1-g)w(\C)\\
&\leq 2g\lrA{\chi+\frac{k-2}{2}}\cdot\OPT+(1-g)\min\{2(1-\ra),1\}\cdot \OPT\\
&\leq 2g\lrA{\chi+\frac{k-2}{2}}\cdot\OPT+(1-2g)\cdot \OPT+g\cdot2(1-\ra)\cdot\OPT\\
&=2g\cdot\frac{k-2}{2}\cdot\OPT+\OPT=(g(k-2)+1)\cdot\OPT,
\end{align*}
where the first inequality follows from $w(\C)\leq \min\{2(1-\ra),1\} w(I^{*})$ by definition and Lemma~\ref{lb-delta}, and the second from $g\leq\frac{1}{2}$. It can be verified that we have $\chi=\frac{1}{2}$ in the worst case. Moreover, if $g=\frac{1}{2}$, the approximation ratio remains the same for any $\frac{1}{2}\leq \chi\leq 1$.
\end{proof}

\section{Applications of the EX-ITP Algorithm}\label{SEC5}
In this section, we demonstrate that EX-ITP can be used to design improved approximation algorithms for splittable $k$-CVRP and unit-demand $k$-CVRP. We show as examples, the approximation ratio can be significantly improved from $1.792$~\cite{gupta2023local} to $\frac{3}{2}=1.500$ for 3-CVRP, and from $1.750$~\cite{4cvrp} to $\frac{5}{3}<1.667$ for 4-CVRP. At last, we show that the approximation ratio of 4-CVRP can be further improved to $\frac{3}{2}=1.500$ based on a unique structural property.

\subsection{Splittable 3-CVRP and Unit-demand 3-CVRP}\label{split-3-cvrp}
There are little improvements for 3-CVRP in the last 20 years. Only recently an improved result for the general weighted $k$-set cover problem~\cite{gupta2023local} leads to the current best result for 3-CVRP. We show that we can easily improve the approximation ratio by using EX-ITP. Our algorithm computes a minimum weight cycle cover $\mathcal{C}^*$ in $G[V]$, and then calls EX-ITP on $\mathcal{C}^*$.

\begin{theorem}\label{2_second_approach}
For splittable 3-CVRP and unit-demand 3-CVRP, there is a $\frac{3}{2}$-approximation algorithm.
\end{theorem}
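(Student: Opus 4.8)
The plan is to assemble the proof from three pieces that have already been laid out in the excerpt: the algorithmic upper bound coming from \textsc{EX-ITP} run on a minimum cycle packing of $G[V]$, and the two lower bounds on $w(I^*)$ from Lemma~\ref{lb-delta} and Lemma~\ref{lb-cyclepacking}. Concretely, for $k=3$ the previous lemma already gives a polynomial-time algorithm producing a solution of weight at most $\Delta+\tfrac12 w(\C^*)$, where $\C^*$ is a minimum cycle packing of $G[V]$; so the only remaining task is to show $\Delta+\tfrac12 w(\C^*)\le \tfrac32 w(I^*)$.

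First I would substitute the two lower bounds. Lemma~\ref{lb-delta} with $k=3$ gives $\Delta\le(\ra+\tfrac12)w(I^*)$, and Lemma~\ref{lb-cyclepacking} gives $w(\C^*)\le w(\C_k^*)\le\min\{2(1-\ra),1\}w(I^*)$. Plugging these in yields
\[
\Delta+\tfrac12 w(\C^*)\le\lrA{\ra+\tfrac12}w(I^*)+\tfrac12\min\{2(1-\ra),1\}w(I^*).
\]
Then I would do the elementary case analysis on $\ra\in[0,1]$: when $\ra\le\tfrac12$ the right-hand coefficient is $\ra+\tfrac12+\tfrac12\cdot 2(1-\ra)=\tfrac32$ exactly; when $\ra\ge\tfrac12$ it is $\ra+\tfrac12+\tfrac12=\ra+1\le\tfrac32$. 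Either way the bound is at most $\tfrac32 w(I^*)$, with equality attained at $\ra=\tfrac12$. This shows the algorithm is a $\tfrac32$-approximation for unit-demand $3$-CVRP. The conclusion for splittable $3$-CVRP then follows immediately because, by Lemma~\ref{reduction} (and the discussion that splittable $k$-CVRP with constant $k$ reduces in polynomial time to unit-demand $k$-CVRP with the same optimum), any approximation guarantee for the unit-demand case transfers to the splittable case.

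There is no real obstacle here — the theorem is essentially a corollary of the displayed inequality preceding its statement, which in turn is a two-line consequence of Lemmas~\ref{lb-delta} and~\ref{lb-cyclepacking} together with the bound $g\le\tfrac12$ for cycle packings whose cycles all have length at least $3$. The only place that needs a word of care is the trade-off: one must observe that the two lower bounds pull in opposite directions in $\ra$ (a large home-edge proportion makes $\Delta$'s bound weak but makes the cycle-packing essentially free, and vice versa), so that neither can be tight simultaneously, and the worst case is exactly the balance point $\ra=\tfrac12$. I would state this explicitly and then simply note that polynomial running time is inherited from the polynomial-time computation of a minimum cycle packing (Hartvigsen's $O(n^3)$ algorithm cited earlier) and the $O(n^2)$ running time of \textsc{EX-ITP} from Lemma~\ref{EXITP}.
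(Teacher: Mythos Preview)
Your approach is exactly the one in the paper, but you have the two branches of the $\min$ swapped in the case analysis. When $\ra\le\tfrac12$ we have $2(1-\ra)\ge 1$, so $\min\{2(1-\ra),1\}=1$ and the coefficient is $\ra+\tfrac12+\tfrac12\cdot 1=\ra+1\le\tfrac32$; when $\ra\ge\tfrac12$ the min is $2(1-\ra)$ and the coefficient is $\ra+\tfrac12+(1-\ra)=\tfrac32$ exactly. As written, your second case asserts $\ra+1\le\tfrac32$ for $\ra\ge\tfrac12$, which is false for $\ra>\tfrac12$. Once the cases are interchanged the argument is identical to the paper's and the worst case is indeed $\ra=\tfrac12$ (in fact any $\ra\ge\tfrac12$ attains equality).
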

\begin{proof}
Our algorithm first computes a minimum weight cycle cover $\C^*$ in $G[V]$ in polynomial time, and then calls EX-ITP on $\C^*$.

Next, we analyze the quality of the solution. Since $\size{C}\geq3$ for each cycle $C\in\C^*$, we have
\[
g=\max_{C\in\C^*}\frac{\ceil{\size{C}/3}}{\size{C}}\leq\max_{\size{C}\geq3}\frac{\ceil{\size{C}/3}}{\size{C}}=\frac{\ceil{4/3}}{4}=\frac{1}{2}.
\]

Note that $w(\C^*)\leq \min\{2(1-\ra),1\} w(I^{*})$ by Lemma~\ref{lb-cyclepacking}. Then, by Corollary~\ref{EXITPp}, EX-ITP achieves an approximation ratio of at most $g(k-2)+1\leq\frac{3}{2}$, where we have $\ra\geq\frac{1}{2}$ in the worst case.
\end{proof}

\subsection{Splittable 4-CVRP and Unit-demand 4-CVRP}\label{split-4-cvrp}
The idea of the algorithm is similar.
However, we will first construct a good mod-2-cycle cover $\C_{\bmod2}$ on $V$, instead of using a minimum weight cycle cover on $V$, and then call EX-ITP.

We compute the mod-2-cycle cover $\C_{\bmod2}$ in this way: first find a minimum weight perfect matching $\M^*$ in the graph $G[V]$, then find a minimum weight perfect matching $\M^{**}$ in the graph $G[V]\setminus \M^*$, and then let
$\C_{\bmod2}=\M^*\cup\M^{**}$.
It is easy to see that $\C_{\bmod2}$ is a
mod-2-cycle cover without 2-cycles since $\M^*$
and $\M^{**}$ are two edge-disjoint perfect matchings of the graph. Recall that we use $\C_{4}^{*}$ to denote a minimum weight 4-cycle cover in $G[V]$. We have the following lemma.

\begin{lemma}\label{lb-mod2}
$w(\C_{\bmod2})\leq w(\C_{4}^{*})$.
\end{lemma}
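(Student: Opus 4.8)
The plan is to show that the minimum $4$-cycle packing $\C_4^*$ can be decomposed into two perfect matchings of $G[V]$, and then use the optimality (and the nesting structure) of $\M^*$ and $\M^{**}$ to bound $w(\C_{\bmod2})$ from above. First I would observe that $\C_4^*$ is a vertex-disjoint collection of $4$-cycles covering $V$; each $4$-cycle $(a,b,c,d,a)$ can be $2$-colored as $\{(a,b),(c,d)\}$ and $\{(b,c),(d,a)\}$, so by collecting one color class from each $4$-cycle we obtain two edge-disjoint perfect matchings $N_1,N_2$ of $G[V]$ with $w(N_1)+w(N_2)=w(\C_4^*)$. Without loss of generality assume $w(N_1)\le w(N_2)$, so $w(N_1)\le \tfrac12 w(\C_4^*)$; but that crude bound alone is not enough, so the point is to use both $N_1$ and $N_2$ together against $\M^*\cup\M^{**}$.

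The key step: since $\M^*$ is a \emph{minimum} perfect matching of $G[V]$, we have $w(\M^*)\le w(N_1)$ and also $w(\M^*)\le w(N_2)$. Now consider $N_2$ (or whichever of $N_1,N_2$ is edge-disjoint-compatible) as a candidate perfect matching in the graph $G[V]\setminus\M^*$. The subtlety is that $N_1$ and $N_2$ need not avoid $\M^*$. To handle this, I would instead argue as follows: among $N_1,N_2$, consider the one, say $N_2$, and note $\M^*\cup N_2$ is a subgraph of maximum degree $2$, i.e.\ a disjoint union of paths and even cycles on $V$; its edge set can be re-$2$-colored into two perfect matchings $\M^*{}',\M^*{}''$ of $G[V]$ with $w(\M^*{}')+w(\M^*{}'')=w(\M^*)+w(N_2)$. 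One of these, say $\M^*{}'$, is a perfect matching of $G[V]\setminus\M^*$ (choosing it to contain no edge of... ) — more carefully, since $\M^*$ is minimum, the cheaper of the two alternating re-colorings is a perfect matching disjoint from $\M^*$ of weight at most... Actually the cleanest route: $w(\M^*)+w(\M^{**})\le w(\M^*)+\big(\text{min p.m.\ of }G[V]\setminus\M^*\big)$, and I claim $N_1$ or $N_2$ can be converted to a perfect matching of $G[V]\setminus\M^*$ of weight $\le w(N_1)+w(N_2)-w(\M^*)=w(\C_4^*)-w(\M^*)$ by swapping along alternating paths/cycles of $\M^*\triangle N_i$ and using minimality of $\M^*$ at each swap. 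Combining, $w(\C_{\bmod2})=w(\M^*)+w(\M^{**})\le w(\M^*)+\big(w(\C_4^*)-w(\M^*)\big)=w(\C_4^*)$.

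I expect the main obstacle to be exactly the disjointness issue in the second matching step: $N_1$ and $N_2$ extracted from $\C_4^*$ may share edges with the chosen minimum matching $\M^*$, so one cannot directly feed them as candidates into the second matching computation on $G[V]\setminus\M^*$. The resolution is the standard alternating-cycle/path exchange argument — in $\M^*\triangle N_i$ every component is an alternating path or even cycle, and replacing $\M^*$-edges by $N_i$-edges along any component yields another perfect matching whose weight change is controlled by minimality of $\M^*$; one then picks, component by component, the side that keeps the matching disjoint from $\M^*$ while charging its weight to $w(N_1)+w(N_2)-w(\M^*)$. Once this exchange lemma is set up the inequality follows by summation, as above. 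A cleaner alternative worth checking is whether one can directly $2$-color $\C_4^*$ so that \emph{both} resulting matchings are automatically usable, which would avoid the exchange argument entirely.
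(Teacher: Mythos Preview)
Your main line---the alternating-path/cycle exchange argument---has a real gap. The problematic edges are those in $\M^*\cap N_i$: they lie in \emph{neither} component of $\M^*\triangle N_i$, so no amount of swapping along the symmetric difference touches them. After all such swaps you still have $N_i$, which still shares those edges with $\M^*$; and you cannot simply drop those shared edges, because then the vertices they cover are unmatched and there are no other edges of $\M^*\cup N_i$ incident to them. Your attempt to ``re-$2$-color $\M^*\cup N_2$ into two perfect matchings'' runs into the same issue: as a simple graph $\M^*\cup N_2$ has strictly fewer than $|V|$ edges whenever $\M^*\cap N_2\neq\emptyset$, so it cannot be split into two perfect matchings (and if you pass to multisets, the only $2$-colorings are recombinations of $\M^*$ and $N_2$ component by component, which again do not avoid $\M^*$). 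So the bound $w(\M^{**})\le w(\C_4^*)-w(\M^*)$ is not established by this route.

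Your ``cleaner alternative'' is in fact the correct argument and is exactly what the paper does. The crucial observation you are missing is that the $2$-coloring of $\C_4^*$ can be chosen \emph{independently on each $4$-cycle}. Since $\M^*$ is a matching, it meets any $4$-cycle $C=(a,b,c,d,a)$ in at most two edges, and if in two then they must be non-adjacent, i.e.\ they coincide with one of the two opposite pairs $\{ab,cd\}$ or $\{bc,da\}$. Hence for each $4$-cycle one can pick the opposite pair that avoids $\M^*$ to be the blue pair. Collecting blue edges over all $4$-cycles yields a perfect matching $\M_b$ with $\M_b\cap\M^*=\emptyset$; the red edges form the complementary perfect matching $\M_r$ with $\M_b\cup\M_r=\C_4^*$. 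Now $\M_b$ is a legitimate candidate for the second matching, so $w(\M^{**})\le w(\M_b)$, and by minimality of $\M^*$ we have $w(\M^*)\le w(\M_r)$. Adding these gives $w(\C_{\bmod2})=w(\M^*)+w(\M^{**})\le w(\M_r)+w(\M_b)=w(\C_4^*)$, which is the desired inequality. (Only one of the two colored matchings needs to be ``usable'', not both.)
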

\begin{proof}
We first prove the following property.

\begin{claim}
Given a minimum weight 4-cycle cover $\C^*_4$ and a minimum weight perfect matching $\M^*$, there is a way to color edges in $\C^*_4$ red and blue such that
\begin{enumerate}
\item[(1)]  the blues (resp., red) edges form a perfect matching $\M_b$ (resp., $\M_r$);
\item[(2)] $\C^*_4=\M_b\cup\M_r$;
\item[(3)] $\M_b\cup\M^*$ is a mod-2-cycle cover without 2-cycles.
\end{enumerate}
\end{claim}
\begin{proof}[Claim Proof]
For each 4-cycle $C=(v_1,v_2,v_3,v_4,v_1)$ in $\C^*_4$, we color the four edges in it by considering the following three cases.

\textbf{Case~1: $\size{C\cap\M^*}=0$.} We color the edges $(v_1,v_2)$ and $(v_3,v_4)$ blue, and the edges $(v_1,v_4)$ and $(v_2,v_3)$ red. Now, the blue edges (resp., red edges) are vertex-disjoint. Alternatively, we could color the edges $(v_1,v_2)$ and $(v_3,v_4)$ red, and the edges $(v_1,v_4)$ and $(v_2,v_3)$ blue.

\textbf{Case~2: $\size{C\cap\M^*}=1$.} Assume w.l.o.g.\ that $C\cap\M^*=\{(v_1,v_2)\}$. Then, we color the edges $(v_1,v_2)$ and $(v_3,v_4)$ red, and the edges $(v_1,v_4)$ and $(v_2,v_3)$ blue.

\textbf{Case~3: $\size{C\cap\M^*}=2$.} Assume w.l.o.g.\ that $C\cap\M^*=\{(v_1,v_2),(v_3,v_4)\}$. Then, we color the edges $(v_1,v_2)$ and $(v_3,v_4)$ red, and the edges $(v_1,v_4)$ and $(v_2,v_3)$ blue.

An illustration of the above three cases can be found in Figure~\ref{example0}.

\begin{figure}[!t]
    \centering
    \begin{subfigure}[t]{0.66\textwidth}  
        \centering
        \includegraphics[scale=0.67]{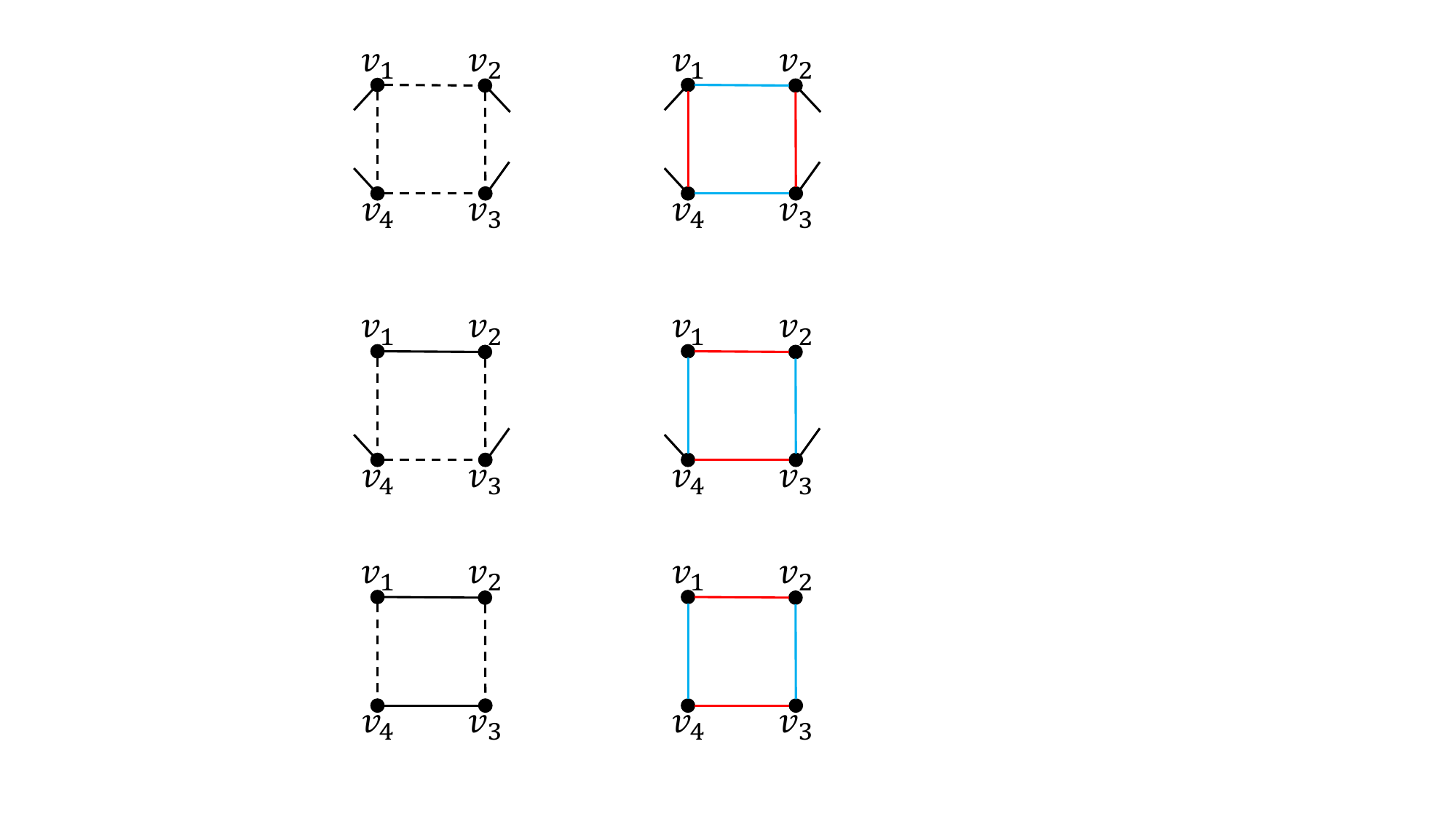}
        \caption{The first case: ${C\cap\M^*}=\emptyset$.}
    \end{subfigure}
    
    \vspace{0.2cm}  
    
    \begin{subfigure}[t]{0.66\textwidth}  
        \centering
        \includegraphics[scale=0.67]{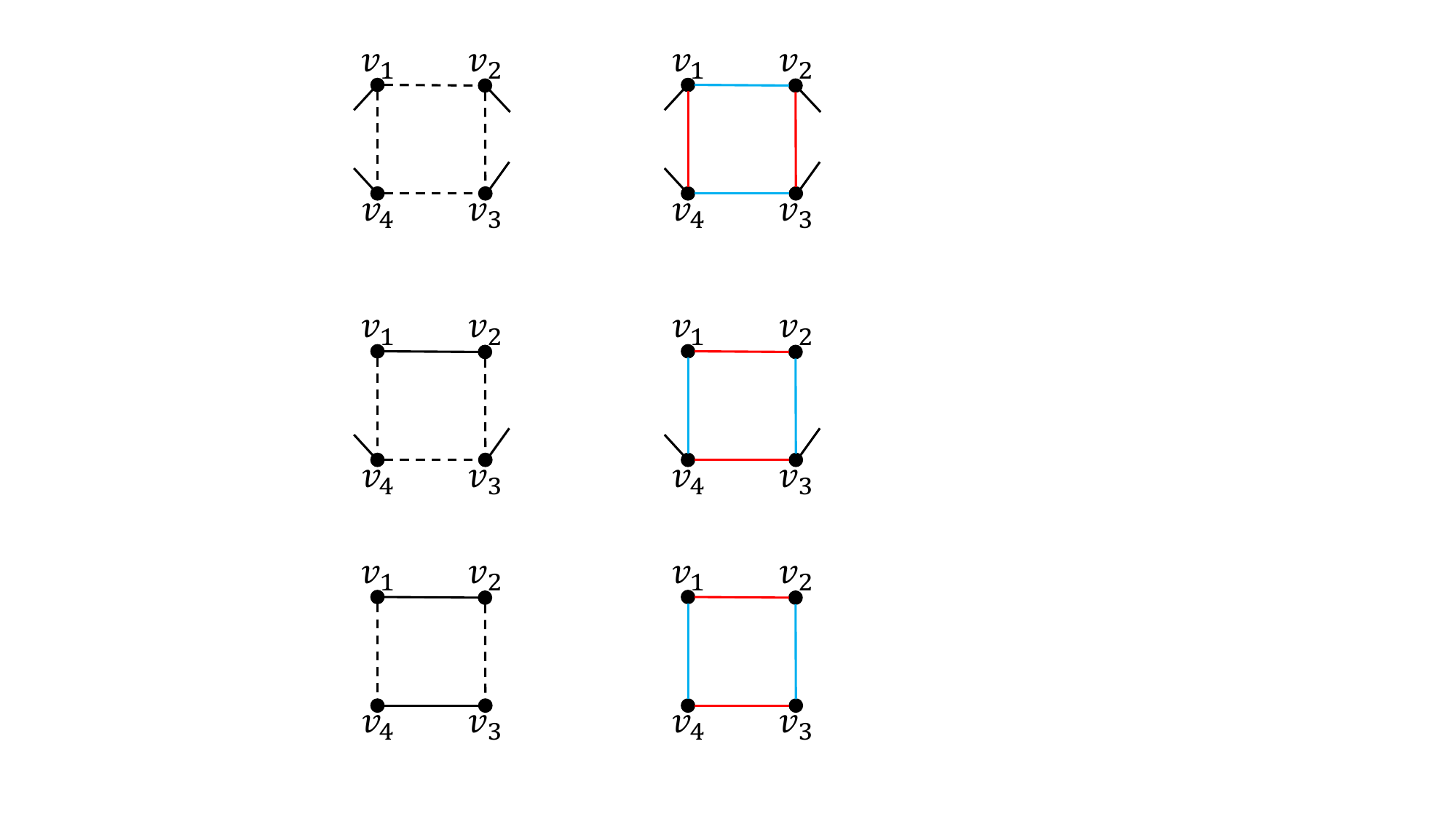}
        \caption{The second case: ${C\cap\M^*}=\{(v_1,v_2)\}$.}
    \end{subfigure}

    \vspace{0.2cm}  
    
    \begin{subfigure}[t]{0.66\textwidth}  
        \centering
        \includegraphics[scale=0.67]{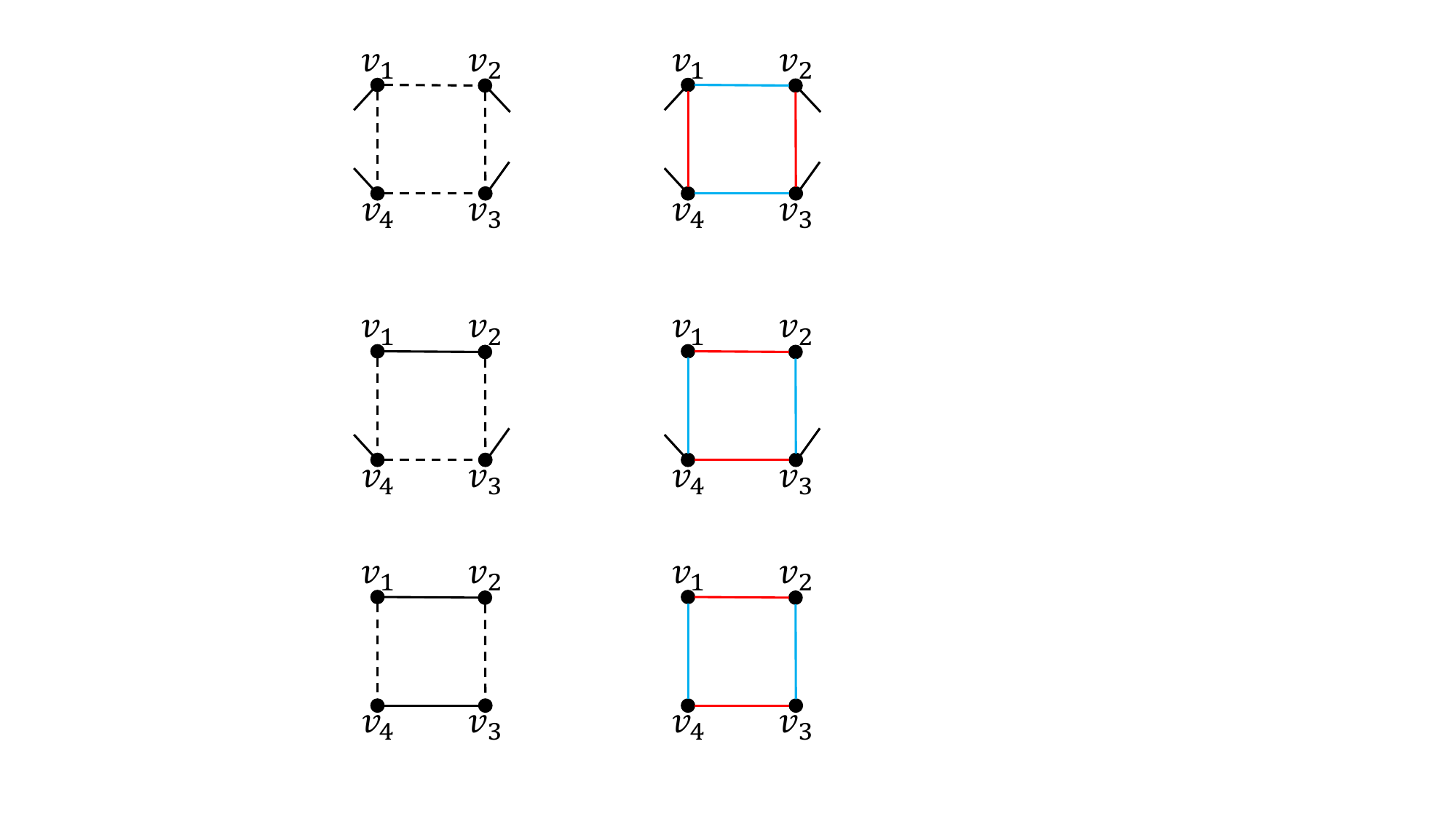}
        \caption{The third case: ${C\cap\M^*}=\{(v_1,v_2),(v_3,v_4)\}$.}
    \end{subfigure}
    \caption{An illustration of the three cases, where the edges in $\M^*$ are represented by black edges before coloring.}
    \label{example0}
\end{figure}

It is easy to see that the set of blues edges $\M_b$ and the set of red edges $\M_r$ are two perfect matchings in $G[V]$, and $\M_b\cup\M_r=\C^*_4$. Moreover, we know that $\M_b\cap\M^*=\emptyset$, and hence $\M_b\cup\M^*$ is a mod-2-cycle cover without 2-cycles. Thus, the claim holds.
\end{proof}

By the claim, we obtain
\[
w(\M_b\cup\M^*)\leq w(\M_b\cup\M_r)=w(\C_{4}^{*}).
\]

Since the mod-2-cycle cover $\C_{\bmod2}=\M^*\cup\M^{**}$ is a minimum weight mod-2-cycle cover containing the perfect matching $\M^*$, we have
\[
w(\C_{\bmod2})=w(\M^*\cup\M^{**})\leq w(\M_b\cup\M^*).
\]
Thus, we have $w(\C_{\bmod2})\leq w(\C_{4}^{*})$, and then the lemma holds.
\end{proof}

\begin{theorem}
For splittable 4-CVRP and unit-demand 4-CVRP, there is a $\frac{5}{3}$-approximation algorithm.
\end{theorem}
\begin{proof}
We call EX-ITP on the mod-2-cycle cover $\C_{\bmod2}$ in Lemma~\ref{lb-mod2}. Since $\size{C}$ is even and $\size{C}\geq4$ for each cycle $C\in\C_{\bmod2}$, we have
\[
g=\max_{C\in\C_{\bmod2}}\frac{\ceil{\size{C}/4}}{\size{C}}\leq\max_{\size{C}\in\{4,6,\dots\}}\frac{\ceil{\size{C}/4}}{\size{C}}=\frac{\ceil{6/4}}{6}=\frac{1}{3}.
\]

Note that $w(\C_{\bmod2})\leq w(\C^*_4)\leq \min\{2(1-\ra),1\} w(I^{*})$ by Lemmas~\ref{lb-mod2} and \ref{lb-cyclepacking}. Then, by Corollary~\ref{EXITPp}, EX-ITP achieves an approximation ratio of at most $g(k-2)+1\leq\frac{5}{3}$, where we have $\ra=\frac{1}{2}$ in the worst case.
\end{proof}

For even $k>4$, we have not been able to find similar good cycle covers based on a minimum weight perfect matching. 

\subsection{A Further Improvement on Splittable 4-CVRP and Unit-demand 4-CVRP}
Note that we can obtain a solution with weight at most $\frac{1}{2}\Delta+\frac{3}{4}w(\C_{4}^{*})$ for unit-demand 4-CVRP by calling EX-ITP on a minimum weight 4-cycle cover $\C^*_4$. However, it is NP-hard to compute $\C^*_4$ as mentioned before. Surprisingly, in this subsection, we show how to obtain a solution with the same upper bound in polynomial time without using $\C^*_4$, and hence obtain an improved $\frac{3}{2}$-approximation algorithm, which even matches the approximation ratio for unit-demand 3-CVRP.

The claim in Lemma~\ref{lb-mod2} can be strengthened based on a similar odd cycle elimination technique in \cite{DBLP:journals/ipl/HassinR97}.

\begin{lemma}\label{lb-mod4}
Given a minimum weight 4-cycle cover $\C^*_4$ and a minimum weight perfect matching $\M^*$, there is a way to color edges in $\C^*_4$ red and blue such that
\begin{enumerate}
\item[(1)]  the blues (resp., red) edges form a perfect matching $\M_b$ (resp., $\M_r$);
\item[(2)] $\C^*_4=\M_b\cup\M_r$;
\item[(3)] $\M_b\cup\M^*$ is a mod-4-cycle cover.
\end{enumerate}
\end{lemma}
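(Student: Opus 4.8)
\textbf{Proof proposal for Lemma~\ref{lb-mod4}.}

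The plan is to start from the coloring already produced in the proof of Lemma~\ref{lb-mod2} and patch it so that the mod-$2$-cycle packing $\M_b\cup\M^*$ becomes a mod-$4$-cycle packing. After the Case~1/2/3 coloring, $\M_b$ and $\M_r$ are perfect matchings with $\M_b\cup\M_r=\C^*_4$ and $\M_b\cap\M^*=\emptyset$, so $H\coloneqq\M_b\cup\M^*$ decomposes into a disjoint union of even cycles (alternating between $\M_b$- and $\M^*$-edges). The only obstruction to ``mod-$4$'' is that some of these alternating cycles may have length $\equiv 2\pmod 4$. So the real task is: by re-choosing, for each $4$-cycle $C$ of $\C^*_4$, \emph{which} of the two valid blue/red colorings we use (both colorings of a given $C$ are still consistent with all three conclusions locally, except we must be careful in Case~2/3 where the requirement $\M_b\cap\M^*=\emptyset$ forces the choice), we want to make every alternating cycle of $H$ have length divisible by $4$.

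The key observation is the parity-flip mechanism, analogous to the odd-cycle elimination of Hassin and Rinnooy Kan~\cite{DBLP:journals/ipl/HassinR97}. In Case~1, where $C\cap\M^*=\emptyset$, flipping the colors of $C$ (swap which pair is blue and which is red) changes $\M_b$ on $C$ from one perfect matching of the $4$-cycle to the other, while keeping $\M_b\cap\M^*=\emptyset$. Consider the $4$-cycle $C=(v_1,v_2,v_3,v_4,v_1)$: one blue choice puts $(v_1,v_2),(v_3,v_4)$ in $\M_b$, the other puts $(v_2,v_3),(v_1,v_4)$ in $\M_b$. In the graph $H$, locally replacing the first blue matching by the second either merges two alternating cycles passing through $C$ into one, or splits one into two, and in all cases it changes the length of the relevant alternating cycle(s) by exactly $\pm 2$ (the two edges removed, the two added). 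Hence each Case-1 cycle is a ``switch'' that toggles a residue mod~$4$ along the alternating cycle it sits on. I would then argue: walk along each bad alternating cycle (length $\equiv 2 \bmod 4$) of $H$; it must contain at least one $4$-cycle $C$ of $\C^*_4$ that is of Case~1 type — because a cycle of $\C^*_4$ entirely of Case~2/3 type contributes its two $\M_b$-edges which are the ``other'' matching than the one(s) in $\M^*$, and one checks that if \emph{every} $\C^*_4$-cycle met by an alternating cycle were of Case~2/3 type then that alternating cycle would automatically have length $\equiv 0 \bmod 4$ (each such $C$ contributes a block of four vertices whose $\M_b$- and $\M^*$-edges together form a $4$-cycle). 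Flipping one Case-1 cycle on each bad alternating cycle fixes its length modulo~$4$; since flips on disjoint $4$-cycles affect disjoint portions of $H$, we can do all the needed flips independently and terminate with every alternating cycle having length $\equiv 0 \bmod 4$.

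The main obstacle I anticipate is making precise the claim that a bad alternating cycle always contains a Case-1 $\C^*_4$-cycle available to flip (and that flipping it does not re-create a new bad cycle elsewhere). I would handle the first point by the counting argument sketched above — an alternating cycle all of whose $\C^*_4$-cycles are Case~2/3 is a concatenation of $4$-vertex gadgets each contributing length~$4$, hence has length $\equiv 0$; and I would handle the ``no new bad cycle'' point by noting that a flip changes the multiset of alternating-cycle lengths only by splitting one length $\ell$ into $\ell_1+\ell_2=\ell$ or merging $\ell_1,\ell_2$ into $\ell_1+\ell_2$, with the residues controlled so that we can always process bad cycles one at a time (e.g., repeatedly pick a bad cycle, flip a Case-1 $\C^*_4$-cycle on it chosen so that both resulting pieces, if it splits, are good — this is possible because we may slide the flip location, or alternatively pair up bad cycles two at a time and merge-then-they-become-good since $2+2\equiv 0$). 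Once the combinatorics is pinned down, conclusions (1) and (2) are immediate (the final $\M_b,\M_r$ are still the two color classes of $\C^*_4$, each a perfect matching), (3) is exactly what the flips achieved, and the weight consequence $w(\C_{\bmod 4})\le w(\M_b\cup\M^*)\le w(\M_b\cup\M_r)=w(\C^*_4)$ follows as in Lemma~\ref{lb-mod2}.
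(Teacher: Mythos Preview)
Your approach is the same as the paper's---start from the Lemma~\ref{lb-mod2} coloring and recolor selected $4$-cycles of $\C^*_4$ to eliminate alternating cycles of length $\equiv 2\pmod 4$---but the argument you sketch has two real gaps.

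First, you only argue that a bad alternating cycle $P$ meets \emph{some} Case-1 $4$-cycle of $\C^*_4$, not one that is useful to flip. What you actually need is a Case-1 cycle whose two blue edges lie on \emph{different} alternating cycles, so that flipping performs a merge. (Your parenthetical justification is also wrong for Case~2: the four vertices do not form a $4$-cycle in $H$; they form a $4$-vertex path $v_4v_1v_2v_3$. The conclusion that they contribute four consecutive vertices to $P$ is still correct, but the same is true of a Case-1 cycle with both blue edges on $P$, so ``all met are Case-2/3'' is not the right dichotomy.) The paper's fix is cleaner: since $|P|\equiv 2\pmod 4$, the number of blue edges on $P$ is odd; as blue edges of a single $\C^*_4$-cycle lying on $P$ come in pairs, some blue edge $e_1$ on $P$ has its partner $e_3$ on a different alternating cycle $C_j$. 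That $\C^*_4$-cycle is automatically Case~1 (in Cases~2 and~3 both blue edges are forced onto the same alternating cycle), and swapping its blue/red pairs merges $P$ and $C_j$ into one cycle of length $|P|+|C_j|$.

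Second, your termination sketch does not work. Flips are not independent (each one rewires $H$ globally); ``both resulting pieces good'' is impossible when splitting a cycle of length $\equiv 2\pmod4$ into two even pieces (exactly one of them is $\equiv 2$); and ``pair up bad cycles and merge'' presumes a Case-1 $4$-cycle straddling two \emph{prescribed} bad cycles, which need not exist---the parity argument only gives you a straddle between $P$ and \emph{some} other cycle. The paper simply iterates the merge above: each step reduces the total number of alternating cycles by one, so the process terminates, and since a merge is available whenever a bad cycle remains, there are none at termination.
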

\begin{proof}
Previously, we have shown that (1) and (2) holds, and $\M_b\cup\M^*$ is a mod-2-cycle cover without 2-cycles.

Then, we modify the cycle cover $\M_b\cup\M^*$ so that (1), (2), and (3) hold. By the previous coloring, we know that each cycle of $\C^*_4$ contains two blue edges and two red edges. Two blue/red edges are called \emph{matched} if they fall on the same cycle of $\C^*_4$. 

Consider a cycle $C_i\in\M_b\cup\M^*$ with a minimum length not divisible by 4. Since $\M_b\cup\M^*$ is a cycle cover without 2-cycles and the length of every cycle is divisible by 2, the number of blue edges on $C_i$ is odd. Hence, there must be a blue edge $e_1$ such that its matched blue edge $e_3$ falls on a different cycle $C_j\in\M_b\cup\M^*$. Moreover, there are two red edges $e_2$ and $e_3$ sharing common vertices with them. See Figure~\ref{fig00} for an illustration.

\begin{figure}[ht]
\centering
\begin{tikzpicture}
\filldraw [black]
(-1/2,1/2) circle [radius=2pt]
(-1/2-0.866,1/2+0.5) circle [radius=2pt]
(-1/2,-1/2) circle [radius=2pt]
(-1/2-0.866,-1/2-0.5) circle [radius=2pt]

(1/2,1/2) circle [radius=2pt]
(1/2+0.866,1/2+0.5) circle [radius=2pt]
(1/2,-1/2) circle [radius=2pt]
(1/2+0.866,-1/2-0.5) circle [radius=2pt];

\node (left) at (-0.5-0.25,0) {\small $e_1$};
\node (right) at (0.5+0.25,0) {\small $e_3$};

\node at (-2,0) {\small $C_i$};
\node at (2,0) {\small $C_j$};

\node (up) at (0,0.25+0.5) {\small $e_2$};
\node (down) at (0,-0.25-0.5) {\small $e_4$};

\draw[very thick,blue] (-1/2,1/2) to (-1/2,-1/2);
\draw[very thick] (-1/2,1/2) to (-1/2-0.866,1/2+0.5);
\draw[very thick] (-1/2,-1/2) to (-1/2-0.866,-1/2-0.5);
\draw[very thick,red] (-1/2,1/2) to (1/2,1/2);
\draw[very thick,red] (-1/2,-1/2) to (1/2,-1/2);
\draw[very thick,blue] (1/2,1/2) to (1/2,-1/2);
\draw[very thick] (1/2,1/2) to (1/2+0.866,1/2+0.5);
\draw[very thick] (1/2,-1/2) to (1/2+0.866,-1/2-0.5);

\draw[very thick,dotted] (-1/2-0.866,1/2+0.5) ..controls (-3,1) and (-3,-1).. (-1/2-0.866,-1/2-0.5);
\draw[very thick,dotted] (1/2+0.866,1/2+0.5) ..controls (3,1) and (3,-1).. (1/2+0.866,-1/2-0.5);
\end{tikzpicture}
\caption{An illustration of the cycles $C_i,C_j\in\M_b\cup\M^*$, the blue edges $\{e_1,e_3\}$, and the red edges $\{e_2,e_4\}$, where $e_1\in C_i$, $e_3\in C_j$, and each blue edge is adjacent to exactly two (black) edges in $\M^*$.}
\label{fig00}
\end{figure}

We can modify the colors of $\{e_1,e_2,e_3,e_4\}$ such that the new color of $\{e_1,e_3\}$ is red, and the new color of $\{e_2,e_4\}$ is blue. We can see that (1) and (2) still holds. Moreover, by repeating this, we obtain a cycle cover $\M_b\cup\M^*$ such that the length of every cycle is divisible by 4. So, (3) also holds.
\end{proof}

Consider a multi-graph $G'=G[V]/\M^*$, i.e., $G'$ is obtained by contracting edges of $\M^*$ in $G[V]$. There are $\frac{n}{2}$ super-vertices in $G'$ and each of them corresponds to an edge of $\M^*$. For any two super-vertices $e_i,e_j\in\M^*$, there are four parallel edges between them. Assume that $e_i=(u,u')$ and $e_j=(v,v')$. The augmented weights $\widetilde{w}$ on the four edges are 
\[
w(v_0,u')+w(e_i)+w(u,v)+w(e_j)+w(v',v_0), w(v_0,u')+w(e_i)+w(u,v')+w(e_j)+w(v,v_0),
\]
\[
w(v_0,u)+w(e_i)+w(u',v)+w(e_j)+w(v',v_0), w(v_0,u)+w(e_i)+w(u',v')+w(e_j)+w(v,v_0),
\]
which measure the weights of THE tours: $(v_0,u',u,v,v',v_0)$, $(v_0,u',u,v',v,v_0)$, $(v_0,u,u',v,v',v_0)$, $(v_0,u,u',v',v,v_0)$, respectively.
Then, a perfect matching $\M$ in $G'$ corresponds to a solution of unit-demand 4-CVRP with an augmented weight of $\widetilde{w}(\M)$. The minimum augmented weight perfect matching in $G'$, denoted by $\M^{**}$, has the following property.

\begin{lemma}\label{4cvrp+}
$\widetilde{w}(\M^{**})\leq \frac{1}{2}\Delta+\frac{3}{4}w(\C^*_4)$.
\end{lemma}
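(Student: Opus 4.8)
The plan is to upper-bound $\widetilde{w}(\M^{**})$ by exhibiting one explicit perfect matching $\M_0$ of $G'$ with $\widetilde{w}(\M_0)\le \frac12\Delta+\frac34 w(\C^*_4)$; since $\M^{**}$ minimizes $\widetilde{w}$ over all perfect matchings of $G'$, this is enough. To build $\M_0$, I would first invoke Lemma~\ref{lb-mod4} on $\C^*_4$ and $\M^*$ to obtain the red/blue coloring, so that $\M_b$ and $\M_r$ are perfect matchings of $G[V]$ with $\C^*_4=\M_b\cup\M_r$ and $\M_b\cup\M^*$ a mod-$4$-cycle packing. Since $\M_b$ and $\M^*$ are edge-disjoint perfect matchings, $\M_b\cup\M^*$ is a disjoint union of cycles, each alternating between $\M^*$-edges and $\M_b$-edges; by the mod-$4$ property, each such cycle $Z$ has length $4t$ for some $t\ge 1$, hence exactly $2t$ edges of $\M^*$ and $2t$ edges of $\M_b$. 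Contracting the $\M^*$-edges on $Z$ turns $Z$ into a cycle on $2t$ super-vertices of $G'$ whose edges are precisely the $2t$ $\M_b$-edges of $Z$; since $2t$ is even, this contracted cycle has two ``alternating'' perfect matchings of its vertices, $M^{\mathrm{odd}}_Z$ and $M^{\mathrm{even}}_Z$, each obtained by using every second $\M_b$-edge of $Z$ as the cross (middle) edge of a $4$-customer tour, with the two endpoints of those two $\M^*$-edges not touched by the cross edge serving as the home-endpoints. Taking, for each cycle $Z$, whichever of $M^{\mathrm{odd}}_Z,M^{\mathrm{even}}_Z$ has smaller augmented weight, and putting them together, yields the desired perfect matching $\M_0$ of $G'$.

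The heart of the argument is the per-cycle accounting. Fix a cycle $Z$ of $\M_b\cup\M^*$ with edges $f_1,g_1,f_2,g_2,\dots,f_{2t},g_{2t}$ in cyclic order, $f_i\in\M^*$ and $g_i\in\M_b$; the tour in $M^{\mathrm{odd}}_Z$ (resp.\ $M^{\mathrm{even}}_Z$) associated with $g_i$, $i$ odd (resp.\ even), traverses $f_i$, then $g_i$, then $f_{i+1}$. Adding up $\widetilde{w}(M^{\mathrm{odd}}_Z)+\widetilde{w}(M^{\mathrm{even}}_Z)$, I would check that each $\M^*$-edge $f_j$ of $Z$ is traversed exactly twice (once as the ``first'' and once as the ``second'' $\M^*$-edge of a tour), each $\M_b$-edge $g_j$ of $Z$ is used as a cross edge exactly once, and each vertex of $Z$ appears as a home-endpoint exactly once. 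This gives the identity
\[
\widetilde{w}(M^{\mathrm{odd}}_Z)+\widetilde{w}(M^{\mathrm{even}}_Z)=\Delta_Z+2\,w(\M^*\cap Z)+w(\M_b\cap Z),
\]
where $\Delta_Z=\sum_{v\in Z}w(v_0,v)$. Summing over all cycles $Z$ of $\M_b\cup\M^*$ — they partition $V$, and their $\M^*$-edges and $\M_b$-edges partition $\M^*$ and $\M_b$ respectively — and dividing by $2$ yields
\[
\widetilde{w}(\M^{**})\ \le\ \widetilde{w}(\M_0)\ \le\ \tfrac12\Delta+w(\M^*)+\tfrac12\,w(\M_b).
\]

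Finally I would use the minimality of $\M^*$: since $\M_b$ and $\M_r$ are both perfect matchings of $G[V]$, we have $w(\M^*)\le w(\M_b)$ and $w(\M^*)\le w(\M_r)$, hence $w(\M^*)=\frac14 w(\M^*)+\frac34 w(\M^*)\le \frac14 w(\M_b)+\frac34 w(\M_r)$. Substituting into the previous bound gives
\[
\widetilde{w}(\M^{**})\ \le\ \tfrac12\Delta+\tfrac34\,w(\M_b)+\tfrac34\,w(\M_r)\ =\ \tfrac12\Delta+\tfrac34\,w(\C^*_4),
\]
as claimed. The step I expect to be fiddliest is verifying the displayed per-cycle identity cleanly — in particular, confirming that after averaging the two alternating matchings each customer's home-edge carries coefficient exactly $\frac12$ (so the total home contribution is exactly $\frac12\Delta$), each $\M^*$-edge carries coefficient $1$, and each $\M_b$-edge coefficient $\frac12$ — together with the observation that it is precisely the mod-$4$ (not merely mod-$2$) guarantee of Lemma~\ref{lb-mod4} that makes each contracted cycle have an even number of super-vertices, which is exactly what allows the two alternating matchings to exist.
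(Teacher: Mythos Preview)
Your proposal is correct and is essentially the paper's own argument: invoke Lemma~\ref{lb-mod4} so that each cycle of $\M_b\cup\M^*$ has an even number of $\M_b$-edges, split $\M_b$ into two alternating matchings $\M_1,\M_2$ of $G'$, use the identity $\widetilde w(\M_1)+\widetilde w(\M_2)=\Delta+2w(\M^*)+w(\M_b)$, and finish with $w(\M^*)\le\min\{w(\M_b),w(\M_r)\}$. The only cosmetic differences are that you take the minimum per cycle rather than globally, and your convex-combination step $w(\M^*)\le\frac14 w(\M_b)+\frac34 w(\M_r)$ is a slight rephrasing of the paper's $w(\M^*)+\frac12 w(\M_b)\le\frac12 w(\M^*)+\frac12 w(\C^*_4)\le\frac34 w(\C^*_4)$.
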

\begin{proof}
By Lemma~\ref{lb-mod4}, $\M^*\cup\M_b$ is a mod-4-cycle cover. So, the number of blue edges on each cycle $C\in\M^*\cup\M_b$ is even. All blue edges in $\M_b$ can be decomposed into two perfect matchings $\M_1$ and $\M_2$ in $G[V]/\M^*$. Note that 
$\widetilde{w}(\M_1)+\widetilde{w}(\M_2)= \Delta+2w(\M^*)+w(\M_b)$. We have $\widetilde{w}(\M^{**})\leq \frac{1}{2}(\widetilde{w}(\M_1)+\widetilde{w}(\M_2))= \frac{1}{2}\Delta+w(\M^*)+\frac{1}{2}w(\M_b)$. Recall that $\C^*_4=\M_b\cup\M_r$ and $w(\M^*)\leq \min\{w(\M_b),w(\M_r)\}\leq\frac{1}{2}(w(\M_b)+w(\M_r))=\frac{1}{2}w(\C^*_4)$. We have $w(\M^*)+\frac{1}{2}w(\M_b)\leq \frac{1}{2}w(\M^*)+\frac{1}{2}w(\M_r)+\frac{1}{2}w(\M_b)\leq\frac{3}{4}w(\C^*_4)$.
\end{proof}

Since the minimum augmented weight perfect matching $\M^{**}$ in $G'$ can be found by the minimum weight perfect matching algorithm in $O(n^3)$ time~\cite{gabow1974implementation,lawler1976combinatorial,schrijver2003combinatorial},  we can find a solution of unit-demand 4-CVRP with weight $\widetilde{w}(\M^{**})$ in polynomial time. By Lemma~\ref{4cvrp+}, we have the following lemma.

\begin{lemma}\label{4cvrp++}
For unit-demand 4-CVRP, there is a polynomial-time algorithm that computes a solution with weight at most $\frac{1}{2}\Delta+\frac{3}{4}w(\C_{4}^{*})$.
\end{lemma}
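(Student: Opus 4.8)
The plan is to treat this statement as the algorithmic corollary of Lemma~\ref{4cvrp+}: that lemma already bounds the weight of the \emph{minimum} augmented-weight matching $\M^{**}$ in the contracted graph $G'$, so all that remains is to argue that (i) $\M^{**}$ is computable in polynomial time, and (ii) $\M^{**}$ decodes back into a genuine feasible itinerary of unit-demand $4$-CVRP whose total distance equals $\widetilde{w}(\M^{**})$. Combining these two points with Lemma~\ref{4cvrp+} gives the claim.

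First I would do the bookkeeping needed for the matchings to exist. By Assumption~\ref{ass1} we may assume $n$ is divisible by $k=4$; then $n$ is even, so the minimum perfect matching $\M^*$ of $G[V]$ exists, and the multigraph $G'=G[V]/\M^*$ has $n/2$ super-vertices, which is again even, so $G'$ admits a perfect matching. Since $G'$ has $O(n)$ vertices and $O(n^2)$ (parallel) edges, a minimum-weight perfect matching $\M^{**}$ of $G'$ under the augmented weights $\widetilde{w}$ can be found in $O(n^3)$ time by the classical algorithm~\cite{gabow1974implementation,lawler1976combinatorial}. Next I would verify the decoding: each edge of $\M^{**}$ joins two super-vertices $e_i=(u,u')$ and $e_j=(v,v')$, and by construction its $\widetilde{w}$-weight is the length of one of the four tours $(v_0,u',u,v,v',v_0)$, $(v_0,u',u,v',v,v_0)$, $(v_0,u,u',v,v',v_0)$, $(v_0,u,u',v',v,v_0)$ on the four customers $u,u',v,v'$; selecting the tour indexed by the chosen parallel edge yields one capacity-$4$ tour. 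Ranging over all $n/4$ edges of $\M^{**}$, every customer in $V$ is served exactly once (because $\M^*$ and $\M^{**}$ are both perfect matchings) by a tour carrying exactly $4=k$ units, and the total distance of these tours is precisely $\widetilde{w}(\M^{**})$. Hence this is a feasible solution of the stated cost, computable in polynomial time.

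Finally, Lemma~\ref{4cvrp+} gives $\widetilde{w}(\M^{**})\le \tfrac12\Delta+\tfrac34 w(\C^*_4)$, which completes the argument. I do not expect a genuine obstacle here: the substantive combinatorial work — the odd-cycle-elimination recoloring of Lemma~\ref{lb-mod4} and the averaging of $\widetilde{w}$ over the two sub-matchings $\M_1,\M_2$ of $\M_b$ in Lemma~\ref{4cvrp+} — has already been carried out. The only points demanding care are the divisibility/parity checks that make all the perfect matchings well-defined, and the straightforward but necessary observation that mapping an abstract matching of $G'$ back to a set of tours changes neither the cost (it equals $\widetilde{w}(\M^{**})$ by the definition of the augmented weights) nor feasibility (each tour serves exactly $k$ customers).
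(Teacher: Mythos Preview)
Your proposal is correct and follows essentially the same approach as the paper: compute the minimum augmented-weight perfect matching $\M^{**}$ in $G'$ in $O(n^3)$ time, read off the corresponding tours as a feasible $4$-CVRP solution of cost $\widetilde{w}(\M^{**})$, and invoke Lemma~\ref{4cvrp+}. You have simply spelled out the parity/feasibility bookkeeping that the paper leaves implicit.
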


By Lemmas~\ref{lb-delta} and~\ref{lb-cyclepacking}, we have
\[
\frac{1}{2}\Delta+\frac{3}{4}w(\C_{4}^{*})\leq\frac{1}{2}(\ra+1)w(I^*)+\frac{3}{4}\min\{(2-2\ra),1\}w(I^*)\leq\frac{3}{2}w(I^*),
\]
where we have $\ra=\frac{1}{2}$ in the worst case. 
Thus, we have the following result.
\begin{theorem}
For splittable 4-CVRP and unit-demand 4-CVRP, there is a $\frac{3}{2}$-approximation algorithm.
\end{theorem}

We remark that the idea of the previous $\frac{5}{3}$-approximation algorithm (based on a good mod-2-cycle cover) for unit-demand 4-CVRP can also be applied to unsplittable 4-CVRP (as we will show in Section~\ref{Sec-Unsplit-34}). However, the $\frac{3}{2}$-approximation algorithm (based on a good 4-path cover instead) in this subsection may not be easy to extend because by shortcutting some vertices on a cycle cover we may still obtain a cycle cover in the remaining graph but the structure may change greatly for a path cover.

\section{An Improvement for Splittable $k$-CVRP}\label{sec-split-initial}
In this section, we show how to use ITP and EX-ITP with the new lower bounds in Section~\ref{SEC3} to improve the approximation ratio of splittable $k$-CVRP with larger $k$.
Given an $\alpha$-approximation algorithm for metric TSP, where $1\leq\alpha\leq\frac{3}{2}$, we obtain an approximation ratio of $\alpha+1-\frac{\alpha}{k}-\Theta(\frac{1}{k})$ for splittable $k$-CVRP, which improves the previous approximation ratio of $\alpha+1-\frac{\alpha}{k}-\max\{\Omega(\frac{1}{k^3}),\varepsilon\}$~\cite{BompadreDO06,blauth2022improving} for some small $k$.

To obtain the claimed approximation ratio, we need to make a trade-off among three algorithms. 
Recall that if the upper bound of $\Delta$ in Lemma~\ref{lb-delta} is tight, we have $\ra=1$, and then we have $w(H_{CS})\leq w(I^*)$ and $w(\C_{\bmod k}^{*})=0$ by Lemmas~\ref{lb-chris} and \ref{lb-cyclepacking}. Thus, apart from using an $\alpha$-approximate Hamiltonian cycle $H$, we may also consider using the Hamiltonian cycle $H_{CS}$~\cite{christofides1976worst,serdyukov1978some} and a modification of a 2-approximate mod-$k$-cycle cover~\cite{GoemansW95}. We will apply the mod-$k$-cycle cover to call EX-ITP and the two Hamiltonian cycles to call AG-ITP, and then make a trade-off among them. 

Let $\rho_1(\ra)$, $\rho_2(\ra)$, and $\rho_3(\ra)$ denote the approximation ratio of these three algorithms, respectively. Note that they are functions on the parameter $\ra$. Then, our final approximation ratio is $\max_{0\leq \ra\leq 1}\min\lrc{\rho_1(\ra),\rho_2(\ra),\rho_3(\ra)}$.

Our first algorithm is to call EX-ITP on a mod-$k$-cycle cover $\C_{\bmod k}$ on $V$. Note that the 2-approximate mod-$k$-cycle cover in~\cite{GoemansW95} is obtained by doubling a mod-$k$-tree cover and then shortcutting. We may find a minimum weight perfect matching on the odd-degree vertices in the mod-$k$-tree cover and then shortcutting to obtain mod-$k$-cycle cover like the Christofides-Serdyukov algorithm for metric TSP~\cite{christofides1976worst,serdyukov1978some}. We use three steps to compute $\C_{\bmod k}$.

\medskip
\noindent\textbf{Step~1.} Compute a mod-$k$-tree cover $\T_{\bmod k}$ in $G[V]$ using the primal-dual algorithm in~\cite{GoemansW95}.

\noindent\textbf{Step~2.} Find a minimum weight perfect matching $\M$ on the odd-degree vertices of $\T_{\bmod k}$.

\noindent\textbf{Step~3.} Construct a mod-$k$-cycle cover $\C_{\bmod k}$ in $G[V]$ from $\T_{\bmod k}\cup\M$ by shortcutting.
\medskip

The second algorithm is to call AG-ITP on an $\alpha$-approximate Hamiltonian cycle on $V\cup\{v_0\}$. The third algorithm is to call AG-ITP on the Hamiltonian cycle $H_{CS}$ on $V\cup\{v_0\}$.

We first analyze some properties of the first algorithm. The mod-$k$-tree cover computed in the first step
has the following property.

\begin{lemma}[\cite{GoemansW95}]\label{mod-tree}
There is a polynomial-time algorithm that computes a mod-$k$-tree cover $\T_{\bmod k}$ in $G[V]$ such that $w(\T_{\bmod k})\leq w(\C_{\bmod k}^{*})$.
\end{lemma}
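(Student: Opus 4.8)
The statement I must prove is Lemma~\ref{mod-tree}: there is a polynomial-time algorithm producing a mod-$k$-tree packing $\T_{\bmod k}$ of $G[V]$ with $w(\T_{\bmod k})\le w(\C_{\bmod k}^*)$. The plan is to invoke the primal-dual machinery of Goemans and Williamson~\cite{GoemansW95} for constrained forest problems. First I would recall the abstract framework: for a ``proper'' function $f:2^V\to\{0,1\}$ (symmetric, with $f(S)=1$ whenever $S$ is a union of connected components whose sizes are not all $\equiv 0 \pmod k$, roughly speaking), the primal-dual algorithm returns a forest $F$ satisfying every cut requirement $f$ at cost at most $2$ times the optimal LP value, which in turn lower-bounds the optimal integral constrained forest. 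The key observation is that ``every tree in the packing has a number of vertices divisible by $k$'' can be encoded as exactly such a cut-covering requirement: a forest $F$ has all components of size divisible by $k$ if and only if for every vertex set $S$ that is \emph{not} a union of ``complete mod-$k$ blocks'' there is an edge of $F$ leaving $S$.

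The key steps, in order, would be: (1) Define the requirement function $f$ on subsets of $V$ so that the minimum-weight forest satisfying all $f$-cuts is exactly a minimum mod-$k$-tree packing; verify $f$ is proper in the sense of~\cite{GoemansW95} (this is where one checks $f(V)=0$ since $|V|$ is divisible by $k$ by the saturation reductions, symmetry, and the disjointness/maximality property under taking components). (2) Observe that a minimum mod-$k$-tree packing of cost $w(\C_{\bmod k}^*)$ can be derived from the minimum mod-$k$-cycle packing $\C_{\bmod k}^*$ by deleting one edge from each cycle — this yields a forest whose components have sizes divisible by $k$ and cost at most $w(\C_{\bmod k}^*)$, so the optimal constrained forest has cost at most $w(\C_{\bmod k}^*)$; hence in fact I want the algorithm's output to beat this, but more simply: the minimum mod-$k$-\emph{tree} packing already has cost $\le w(\C_{\bmod k}^*)$, and I only need the primal-dual algorithm to match \emph{that} bound, not $2\times$ it. (3) Here I must be careful: the plain Goemans–Williamson guarantee is a factor $2$, not $1$. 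So the actual argument must be that the primal-dual algorithm's \emph{dual} solution $y$ certifies $w(F)\le \sum_S y_S \cdot(\text{something})$, and then compare against the cycle packing directly: each cycle $C$ in $\C_{\bmod k}^*$ crosses the active dual sets enough times that $w(C)\ge$ the dual contribution charged to the tree spanning $C$'s vertices. This is exactly the standard ``every feasible solution crosses each active moat at least twice, but a tree crosses at least once, and a cycle at least twice'' accounting — removing one edge per cycle loses a factor that precisely cancels the $2$.

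The main obstacle I anticipate is step (3): getting the constant right. The naive citation gives a $2$-approximation for the constrained forest problem relative to the optimal \emph{forest}, but the lemma claims the output is bounded by the optimal \emph{cycle} packing with constant $1$. The resolution has to exploit that cycles are $2$-edge-connected within each component: in the primal-dual analysis the dual constraint $\sum_{S: e \in \delta(S)} y_S \le w(e)$ summed over a cycle's edges, combined with the fact that each active moat $S$ with $S\subsetneq V(C)$ satisfies $|\delta_C(S)|\ge 2$, gives $w(C)\ge 2\sum_{S\subseteq V(C)} y_S$, while the primal-dual forest pays $w(F)\le 2\sum_S y_S$; summing over the cycles of $\C_{\bmod k}^*$ (whose vertex sets partition $V$ and whose moats are nested appropriately) yields $w(\T_{\bmod k})=w(F)\le 2\sum_S y_S \le \sum_C w(C) = w(\C_{\bmod k}^*)$. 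I would lay out this moat-counting argument carefully, double-checking that the dual sets produced by the algorithm are laminar and each is contained in a single $V(C)$ at the time it is active (or handle cross-cycle moats by the standard ``at most one such crossing per merged component'' bookkeeping). If that laminarity-vs-cycles interaction turns out to need more than a paragraph, the fallback is to cite the precise theorem in~\cite{GoemansW95} on the ``$2$-edge-connected'' or ``$T$-join / mod-$p$'' variant, which already states the bound against $2$-edge-connected (hence cycle) solutions in the form needed.
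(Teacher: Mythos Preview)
The paper gives no proof of this lemma at all; it simply cites~\cite{GoemansW95} as a black box. So there is nothing to compare your argument \emph{against} in the paper itself.

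That said, your sketch is on the right track and essentially reconstructs the reason the citation works. The crucial step is your step~(3): the primal-dual algorithm outputs a forest $F$ with $w(F)\le 2\sum_S y_S$, and any mod-$k$-cycle packing $\C$ satisfies $|\delta_{\C}(S)|\ge 2$ for every set $S$ with $y_S>0$ (since such $S$ has $|S|\not\equiv 0\pmod k$, hence cannot be a union of full cycle vertex-sets, hence is properly crossed by at least one cycle, and cycles cross any cut an even number of times). Summing the dual constraints over the edges of $\C$ then gives $w(\C)\ge 2\sum_S y_S\ge w(F)$. This is exactly the ``cycles cross moats twice'' accounting you describe, and it does not require the laminarity/``each moat sits inside a single cycle'' refinement you were worried about: the inequality $|\delta_{\C}(S)|\ge 2$ holds for \emph{every} active $S$, regardless of how $S$ intersects the different cycles. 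Your step~(2), deleting an edge per cycle, is unnecessary once you argue directly against the cycle packing as above; you can drop it.

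One small correction: the function $f$ you want is simply $f(S)=1$ iff $|S|\not\equiv 0\pmod k$, not the more elaborate ``union of components'' formulation you first wrote. This $f$ is symmetric (since $|V|\equiv 0\pmod k$), satisfies $f(\emptyset)=f(V)=0$, and has the required disjoint-union maximality property, so it is proper in the Goemans--Williamson sense.
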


In Steps 2 and 3, we obtain a mod-$k$-cycle cover
based on the mod-$k$-tree cover $\T_{\bmod k}$ and a
minimum weight perfect matching $\M$ on the odd-degree vertices of $\T_{\bmod k}$ using the Christifides-Serdykov method for metric TSP. Although it may not guarantee a better approximation ratio for the minimum weight mod-$k$-cycle cover problem, it is useful for $k$-CVRP.

\begin{lemma}\label{mod2}
There is a polynomial-time algorithm that computes a mod-$k$-cycle cover $\C_{\bmod k}$ in $G[V]$ such that $w(\C_{\bmod k})\leq w(\C_{\bmod k}^{*})+\frac{1}{2}w(H^*)$.
\end{lemma}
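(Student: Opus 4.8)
The plan is to analyze the three-step construction directly, essentially replaying the Christofides--Serdyukov argument on the forest $\T_{\bmod k}$. First I would note that in every tree of $\T_{\bmod k}$ the number of odd-degree vertices is even (handshake lemma applied to that tree), so the set $S$ of all odd-degree vertices of $\T_{\bmod k}$ has even size and the minimum perfect matching $\M$ of $S$ in Step~2 is well defined. Adding $\M$ to $\T_{\bmod k}$ flips the degree parity of exactly the vertices in $S$, hence every vertex of $\T_{\bmod k}\cup\M$ has even degree; each connected component of $\T_{\bmod k}\cup\M$ is thus connected with all even degrees, admits an Euler tour, and shortcutting that tour (Step~3) gives a simple cycle through every vertex of the component exactly once whose weight is at most the total edge weight of the component.

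Next I would verify that the resulting $\C_{\bmod k}$ is a genuine mod-$k$-cycle packing of $G[V]$. Each component of $\T_{\bmod k}\cup\M$ is obtained by gluing together some trees $T_{i_1},\dots,T_{i_m}$ of $\T_{\bmod k}$ with edges of $\M$; since each $\size{T_{i_j}}$ is divisible by $k$, the component has $\sum_j\size{T_{i_j}}$ vertices, still divisible by $k$, and the shortcut cycle on it has exactly that many edges, a multiple of $k$. The components partition $V$, so the cycles partition $V$, and each has length at least $k\ge 3$. Combining the shortcutting bound with Lemma~\ref{mod-tree},
\[
w(\C_{\bmod k})\le w(\T_{\bmod k}\cup\M)=w(\T_{\bmod k})+w(\M)\le w(\C_{\bmod k}^{*})+w(\M).
\]

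It then remains to bound $w(\M)$. I would take the minimum Hamiltonian cycle $H^{*}$ on $V\cup\{v_0\}$ and shortcut it onto $S$; by the triangle inequality this yields a cycle through exactly the vertices of $S$ of weight at most $w(H^{*})$, and since $\size{S}$ is even this cycle splits into two perfect matchings of $S$, the cheaper of which costs at most $\frac{1}{2}w(H^{*})$. As $\M$ is a minimum perfect matching of $S$, $w(\M)\le\frac{1}{2}w(H^{*})$, and substituting into the inequality above proves $w(\C_{\bmod k})\le w(\C_{\bmod k}^{*})+\frac{1}{2}w(H^{*})$; the running time is polynomial by Lemma~\ref{mod-tree}, the $O(n^3)$-time matching algorithm, and linear-time shortcutting. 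I do not expect a genuine obstacle here: the only points that need a little care are that edges of $\M$ may run between different trees of $\T_{\bmod k}$ without breaking either the all-even-degree property or the divisibility of component sizes by $k$, and that the Christofides-type bound $w(\M)\le\frac{1}{2}w(H^{*})$ is obtained by shortcutting a Hamiltonian cycle on the full vertex set onto $S$ rather than using a Hamiltonian cycle on $S$ itself.
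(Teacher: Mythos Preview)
Your proposal is correct and follows essentially the same approach as the paper's proof: find the mod-$k$-tree packing, add a minimum matching on its odd-degree vertices, then shortcut each Eulerian component into a cycle, bounding $w(\M)\le\frac{1}{2}w(H^*)$ via the standard Christofides-type argument and $w(\T_{\bmod k})\le w(\C_{\bmod k}^{*})$ via Lemma~\ref{mod-tree}. You simply supply more detail (handshake lemma, divisibility of component sizes, well-definedness of the matching) than the paper's terse proof, which states these steps in a single paragraph.
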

\begin{proof}
After finding the mod-$k$-tree cover $\T_{\bmod k}$ in $G[V]$, we go to find a minimum weight perfect matching $\M$ on the odd-degree vertices of $\T_{\bmod k}$. Note that $w(\M)\leq\frac{1}{2}w(H^*)$ by the triangle inequality.
Then, we obtain a set of components, where in each component the number of vertices is divisible by $k$ and the degree of each vertex is even. By shortcutting each component, we obtain a mod-$k$-cycle cover $\C_{\bmod k}$ with $w(\C_{\bmod k})\leq w(\T_{\bmod k})+w(\M)\leq w(\C_{\bmod k}^{*})+\frac{1}{2}w(H^*)$.
\end{proof}

Now, we are ready to analyze the three algorithms.

\begin{theorem}\label{res-intitial-split}
Given an $\alpha$-approximate Hamiltonian cycle $H$ on $V\cup\{v_0\}$, there is an approximation algorithm for splittable $k$-CVRP and unit-demand $k$-CVRP such that
\begin{itemize}
\item If $1\leq\alpha\leq \frac{7}{6}$ and $k\geq 3$, the approximation ratio is $\alpha+1-\frac{\alpha}{k}-\frac{\alpha-0.5}{k}$;
\item If $\frac{7}{6}\leq\alpha\leq\frac{3}{2}$ and $3\leq k\leq 5$, the approximation ratio is $\frac{13k-11}{6k}$;
\item If $\frac{7}{6}\leq\alpha\leq\frac{3}{2}$ and $k\geq 6$, the approximation ratio is $\alpha+1-\frac{\alpha}{k}-\frac{4(\alpha-1)}{k}$.
\end{itemize}
\end{theorem}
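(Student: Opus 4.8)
The algorithm runs three subroutines and returns the cheapest of the resulting solutions: (i) EX-ITP on the mod-$k$-cycle packing $\C_{\bmod k}$ of $G[V]$ built in Steps~1--3 (Lemma~\ref{mod2}); (ii) AG-ITP on the given $\alpha$-approximate Hamiltonian cycle $H$ of $V\cup\{v_0\}$; (iii) AG-ITP on $H_{CS}$. By Corollary~\ref{ITPpac}, Lemma~\ref{AGITP}, and Lemma~\ref{mod2}, the three outputs have weights at most $\tfrac2k\Delta+(1-\tfrac1k)\bigl(w(\C^*_{\bmod k})+\tfrac12 w(H^*)\bigr)$, $\tfrac2k\Delta+(1-\tfrac1k)\alpha\,w(H^*)$, and $\tfrac2k\Delta+(1-\tfrac1k)w(H_{CS})$. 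I would then substitute the $\ra$-dependent lower bounds from Section~\ref{SEC3} --- $\Delta\le(\ra+\tfrac{k-2}2)w(I^*)$ (Lemma~\ref{lb-delta}), $w(H^*)\le w(I^*)$ (Lemma~\ref{lb-tsp}), $w(H_{CS})\le\tfrac{3-\ra}2 w(I^*)$ (Lemma~\ref{lb-chris}), $w(\C^*_{\bmod k})\le\min\{2(1-\ra),1\}w(I^*)$ (Lemma~\ref{lb-cyclepacking}) --- so that, writing $t:=1-\ra\in[0,1]$, the three solution weights are at most $\rho_1(\ra)w(I^*)$, $\rho_2(\ra)w(I^*)$, $\rho_3(\ra)w(I^*)$ respectively, where
\[
\rho_1=1-\tfrac{2t}{k}+\tfrac{k-1}{k}\Bigl(\min\{2t,1\}+\tfrac12\Bigr),\qquad \rho_2=\alpha+1-\tfrac{\alpha+2t}{k},\qquad \rho_3=2-\tfrac1k+\tfrac{t(k-5)}{2k}.
\]
Since $\ra$ is fixed by the (unknown) optimum, taking the best of the three solutions yields a ratio of at most $\max_{0\le t\le1}\min\{\rho_1,\rho_2,\rho_3\}$, and it remains to evaluate this one-variable max--min.

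For that I would record the elementary facts: $\rho_2$ is decreasing in $t$; $\rho_1$ increases on $[0,\tfrac12]$ and decreases on $[\tfrac12,1]$; $\rho_3$ is non-increasing in $t$ if $k\le5$ and increasing if $k\ge6$; and the pairwise differences factor, $\rho_1-\rho_3=\tfrac{(k-1)(3t-1)}{2k}$ on $[0,\tfrac12]$ (so $\rho_1=\rho_3$ at $t=\tfrac13$), $\rho_1-\rho_2$ vanishes at $t=\tfrac{2\alpha-1}{4}$ (with $\rho_1$ increasing, $\rho_2$ decreasing there, and $\rho_1-\rho_2=\tfrac{(k-1)(3/2-\alpha)}{k}\ge0$ on $[\tfrac12,1]$), and $\rho_3-\rho_2=\tfrac{(k-1)(2-2\alpha+t)}{2k}$ (so $\rho_2=\rho_3$ at $t=2(\alpha-1)$). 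Chasing these monotonicities shows $\min\{\rho_1,\rho_2,\rho_3\}$ first rises (following whichever of $\rho_1,\rho_3$ is smaller, both increasing on the relevant subinterval) and then falls (following $\rho_2$), so the maximum sits at the unique turning point $t^\star$. The three regimes are: \emph{(a)} if $\alpha\le 7/6$ then $t^\star=\tfrac{2\alpha-1}{4}$, where $\rho_1=\rho_2$ and one checks $\rho_3(t^\star)\ge\rho_2(t^\star)$ since this reduces to $(7-6\alpha)(k-1)\ge0$; the common value is $\alpha+1-\alpha/k-(\alpha-\tfrac12)/k$. \emph{(b)} if $\alpha\ge 7/6$ and $3\le k\le 5$ (so $\rho_3$ is non-increasing) then $t^\star=\tfrac13$, where $\rho_1=\rho_3$ and one checks $\rho_2(t^\star)\ge\rho_3(t^\star)$ since this reduces to $(\alpha-\tfrac76)(1-\tfrac1k)\ge0$; the common value is $(13k-11)/(6k)$. \emph{(c)} if $\alpha\ge 7/6$ and $k\ge6$ (so $\rho_3$ is increasing) then $t^\star=2(\alpha-1)$, where $\rho_2=\rho_3$ and $\rho_1(t^\star)\ge\rho_2(t^\star)$ because $t^\star\ge\tfrac{2\alpha-1}{4}$ and $\rho_1\ge\rho_2$ on $[\tfrac{2\alpha-1}{4},1]$; the common value is $\alpha+1-\alpha/k-4(\alpha-1)/k$. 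These are exactly the three bullets, and since all subroutines run in polynomial time (dominated by $H_{CS}$, the primal--dual mod-$k$-tree packing, an $O(n^3)$ matching, and $O(n^2)$ for EX-ITP) the proof is complete.

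The algebra producing $\rho_1,\rho_2,\rho_3$ and the factorizations above is routine. \emph{The main obstacle} is the clean bookkeeping of the last step: one must handle the sign flip of $\partial\rho_3/\partial t$ at $k=5$, carry both analytic branches of $\rho_1$ (the branch $t\ge\tfrac12$ becoming relevant only when $\alpha$ is close to $3/2$), correctly identify which two of the $\rho_i$ are tight at $t^\star$ in each regime, and verify in every case that the third function lies above the value at $t^\star$ --- each such verification being a single inequality linear in $k$ that holds precisely on the stated $(\alpha,k)$ range, with equality exactly at the boundaries $\alpha=7/6$ and $(\alpha,k)=(3/2,\,k\ge6)$, where all three $\rho_i$ coincide at $t^\star$ and which thus double as consistency checks.
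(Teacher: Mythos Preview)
Your proposal is correct and follows essentially the same approach as the paper: run the three subroutines, bound each by a linear (or piecewise-linear) function of the home-edge parameter via Lemmas~\ref{lb-delta}, \ref{lb-tsp}, \ref{lb-chris}, \ref{lb-cyclepacking}, and \ref{mod2}, and evaluate $\max\min\{\rho_1,\rho_2,\rho_3\}$ by identifying which pair is tight in each $(\alpha,k)$ regime. The only cosmetic differences are that you parameterize by $t=1-\ra$ rather than $\ra$, you keep the sharper bound $\min\{2t,1\}$ in $\rho_1$ (the paper uses just $2(1-\ra)$; your refinement does not change the outcome since $\rho_1$ is never the binding function on $t>\tfrac12$), and you spell out the pairwise factorizations and third-function checks that the paper only summarizes.
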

\begin{proof}
Recall that we will make a trade-off between three algorithms. The first algorithm uses the mod-$k$-cycle cover $\C_{\bmod k}$ in $G[V]$ in Lemma~\ref{mod2}, the second uses an $\alpha$-approximate Hamiltonian cycle $H$ on $V\cup\{v_0\}$, and the third uses the Hamiltonian cycle $H_{CS}$ on $V\cup\{v_0\}$.

Using the mod-$k$-cycle cover $\C_{\bmod k}$, by Corollary~\ref{ITPpac} and Lemmas~\ref{lb-tsp}, \ref{lb-delta}, \ref{lb-cyclepacking}, and \ref{mod2}, EX-ITP can compute a solution with weight at most
\begin{align*}
\frac{2}{k}\Delta+\lrA{1-\frac{1}{k}}w(\C_{\bmod k})\leq\ &\frac{2}{k}\Delta+\frac{k-1}{k}\lrA{w(\C_{\bmod k}^{*})+\frac{1}{2}w(H^*)}\\
\leq\ &\frac{2}{k}\lrA{\ra+\frac{k-2}{2}}w(I^*)+\frac{k-1}{k}\lrA{2(1-\ra)+\frac{1}{2}}w(I^*)\\
=\ &\lrA{-\frac{2k-4}{k}\ra+\frac{7k-9}{2k}}w(I^*).
\end{align*}
The approximation ratio, denoted by $\rho_1(\ra)$, satisfies $\rho_1(\ra)=-\frac{2k-4}{k}\ra+\frac{7k-9}{2k}$.

Using the Hamiltonian cycle $H$, by Lemmas~\ref{lb-tsp}, \ref{lb-delta}, and \ref{AGITP}, AG-ITP can compute a solution with weight at most
\begin{align*}
\frac{2}{k}\Delta+\lrA{1-\frac{1}{k}}w(H)\leq\ &\frac{2}{k}\Delta+\frac{k-1}{k}\alpha w(H^*)\\
\leq\ &\frac{2}{k}\lrA{\ra+\frac{k-2}{2}}w(I^*)+\frac{k-1}{k}\alpha w(I^*)\\
=\ &\lrA{\frac{2}{k}\ra+\frac{(k-1)\alpha+k-2}{k}}w(I^*).
\end{align*}
The approximation ratio, denoted by $\rho_2(\ra)$, satisfies $\rho_2(\ra)=\frac{2}{k}\ra+\frac{(k-1)\alpha+k-2}{k}$.

Using the Hamiltonian cycle $H_{CS}$, by Lemmas~\ref{lb-delta}, \ref{lb-chris}, and \ref{AGITP}, AG-ITP can compute a solution with weight at most
\begin{align*}
\frac{2}{k}\Delta+\lrA{1-\frac{1}{k}} w(H_{CS})\leq\ &\frac{2}{k}\lrA{\ra+\frac{k-2}{2}}w(I^*)+\frac{k-1}{k}\cdot\frac{3-\ra}{2}w(I^*)\\
=\ &\lrA{-\frac{k-5}{2k}\ra+\frac{5k-7}{2k}}w(I^*).
\end{align*}
The approximation ratio, denoted by $\rho_3(\ra)$, satisfies $\rho_3(\ra)=-\frac{k-5}{2k}\ra+\frac{5k-7}{2k}$.

By making a trade-off, we can obtain an approximation ratio of
\[
\max_{0\leq \ra\leq 1}\min\lrc{\rho_1(\ra),\rho_2(\ra),\rho_3(\ra)}.
\]
The results can be briefly summarized as follows.

\textbf{Case~1: $1\leq\alpha\leq \frac{7}{6}$ and $k\geq 3$.} The best approximation ratio is
\[
\max_{0\leq \ra\leq1}\min\{\rho_1(\ra),\rho_2(\ra)\}=\alpha+1-\frac{\alpha}{k}-\frac{\alpha-0.5}{k}.
\]

\textbf{Case~2: $\frac{7}{6}\leq\alpha\leq\frac{3}{2}$ and $3\leq k\leq 5$.} The best approximation ratio is
\[
\max_{0\leq \ra\leq1}\min\{\rho_1(\ra),\rho_3(\ra)\}=\frac{13k-11}{6k}.
\]

\textbf{Case~3: $\frac{7}{6}\leq\alpha\leq\frac{3}{2}$ and $k\geq 6$.} The best approximation ratio is
\[
\max_{0\leq \ra\leq1}\min\{\rho_2(\ra),\rho_3(\ra)\}=\alpha+1-\frac{\alpha}{k}-\frac{4(\alpha-1)}{k}.
\]
\end{proof}

By Theorem~\ref{res-intitial-split}, we improve the approximation ratio $\alpha+1-\frac{\alpha}{k}$ by a term of at least $\frac{0.5}{k}$ (when $\alpha=1$, the improvement is $\frac{0.5}{k}$), which is larger than the improvement $\frac{1}{3k^3}$ in~\cite{BompadreDO06}. When $\alpha=\frac{3}{2}$, our approximation ratio is $\frac{5}{2}-\frac{3.5}{k}$, and the improvement is $\frac{2}{k}$, which is also larger than the improvement $\frac{1}{4k^2}$ in~\cite{BompadreDO06}.
For some detailed comparisons, readers can refer to Table~\ref{res-1}. According to our results, better approximation algorithms for the minimum weight mod-$k$-cycle cover problem may directly lead to further improvements for $k$-CVRP.

\section{A Further Improvement for Splittable $k$-CVRP}\label{sec-split-final}
The current approximation ratio for metric TSP is about $\frac{3}{2}$. Under $\alpha=\frac{3}{2}$, the approximation ratio of splittable $k$-CVRP and unit-demand $k$-CVRP is at most $\frac{5}{2}-\frac{3.5}{k}=\frac{5}{2}-\Theta(\frac{1}{k})$ in the previous section.
In this section, we further improve the approximation ratio to $\frac{5}{2}-\Theta(\frac{1}{\sqrt{k}})$ by carefully analyzing one of the three algorithms in the previous section.

Our motivation is that the approximation ratio $\frac{5}{2}-\frac{3.5}{k}$ in the previous section is obtained with $\ra=0$ in the worst case. That is, in the worst case, the total weight of all home-edges in the optimal solution $I^*$ is zero. Recall that each tour of $I^*$ contains two home-edges and the upper bound on $\MST$ in Lemma~\ref{lb-tree} is obtained by deleting the more weighted home-edge in each tour of $I^*$. Since $\ra=0$ in the worst case, we have $w(I^*)\geq \MST$ by Lemma~\ref{lb-tree}. So, by deleting a zero-weighted home-edge, we gain no revenues. We may obtain a refined upper bound on $\MST$ by deleting the highest weighted edge in each tour, which is clearly better than just deleting a zero-weighted home-edge. Based on this, we give two refined lower bounds based on $\MST$ as well as $\Delta$ to combine them and prove an approximation ratio of $\frac{5}{2}-\Theta(\frac{1}{\sqrt{k}})$.

Our algorithm is simply to call AG-ITP on the Hamiltonian cycle $H_{CS}$ on $V\cup\{v_0\}$. Using the Hamiltonian cycle $H_{CS}$, by Lemmas~\ref{lb-tsp}, \ref{chris} and \ref{AGITP}, AG-ITP can compute an itinerary $I$ with
\begin{equation}\label{Choris+ITP}
\begin{split}
w(I)&\leq \frac{2}{k}\Delta+\lrA{1-\frac{1}{k}}w(H_{CS})\\
&\leq \frac{2}{k}\Delta+\lrA{1-\frac{1}{k}}\mbox{MST}+\frac{1}{2}\lrA{1-\frac{1}{k}}w(I^{*}).
\end{split}
\end{equation}
Then, to prove that AG-ITP achieves an approximation ratio of $\frac{5}{2}-\Theta(\frac{1}{\sqrt{k}})$, we show that $\frac{2}{k}\Delta+(1-\frac{1}{k})\MST\leq (2-\Theta(\frac{1}{\sqrt{k}}))w(I^*)$.

Now, we analyze the structure in more detail. By Assumption~\ref{ass1}, $I^*$ consists of a set of $(k+1)$-cycles.
Consider an arbitrary $(k+1)$-cycle $C=(v_0,v_1,\dots, v_{k},v_0)$ in $I^*$.
First, we use $\Delta_C$ to denote the sum of the weights of the edges between vertices $v_{0}$ and $v_i$ for $1\leq i\leq k$, i.e., $\Delta_C=\sum_{i=1}^{k}w(v_0, v_i)$. Note that $\Delta=\sum_{C\in I^{*}}\Delta_C$.
Then, we use $T_C$ to denote the spanning tree in $G[C]$ obtained by deleting the highest weighted edge in $C$, i.e., $w(T_C)=w(C)-\max_{0\leq i\leq k}w(v_i,v_{(i+1)\bmod (k+1)})$.
After we delete the highest weighted edge for each cycle $C\in I^{*}$, the remaining graph forms a spanning tree in $G$. Thus, we have $\mbox{MST}\leq\sum_{C\in I^{*}}w(T_C)$. Moreover, since $w(I^*)=\sum_{C\in I^*}w(C)$, to prove $\frac{2}{k}\Delta+(1-\frac{1}{k})\MST\leq (2-\Theta(\frac{1}{\sqrt{k}}))w(I^*)$, it suffices to prove 
\[
\frac{2}{k}\Delta_C+\lrA{1-\frac{1}{k}}w(T_C)\leq \lrA{2-\Theta\lrA{\frac{1}{\sqrt{k}}}}w(C). 
\]

\begin{lemma}\label{core}
When $k\geq3$, for any cycle $C=(v_0,v_1,...,v_k,v_0)\in I^*$, we have
\[
w(T_C)\leq\lrA{1-\max_{1\leq i\leq m}\frac{1}{2}x_{i}}w(C)\quad\text{and}\quad\Delta_C\leq\lrA{\frac{k+2}{2}-\sum_{i=1}^{m}ix_{i}}w(C),
\]
where it holds that $0\leq x_i\leq 1$ for each $1\leq i\leq m$, $\sum_{i=1}^{m}x_i=1$, and $m=\Ceil{\frac{k+1}{2}}$.
Moreover, we have
\[
\frac{2}{k}\Delta_C+\lrA{1-\frac{1}{k}}w(T_C)\leq\lrA{2-\frac{2l^2+k-1}{2kl}}w(C),
\]
where $l=\ceil{\frac{\sqrt{2k-1}-1}{2}}$.
\end{lemma}
\begin{proof}
Our proof is based on generalizing the concept of home-edges. The details are put in Section~\ref{proof-1}.    
\end{proof}


Next, we are ready to analyze AG-ITP.

\begin{theorem}\label{res-split}
For splittable $k$-CVRP and unit-demand $k$-CVRP, AG-ITP admits an approximation ratio of $\frac{5}{2}-\frac{2l^2+k+l-1}{2kl}<\frac{5}{2}-\sqrt{\frac{2}{k}}$, where $l=\ceil{\frac{\sqrt{2k-1}-1}{2}}$.
\end{theorem}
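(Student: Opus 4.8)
The plan is to combine the guaranteed output bound of AG-ITP on $H_{CS}$ from~\eqref{Choris+ITP} with the per-cycle estimate of Lemma~\ref{core}, summed over all $(k+1)$-cycles of $I^*$. Recall that~\eqref{Choris+ITP} gives $w(I)\leq \frac{2}{k}\Delta+(1-\frac1k)\MST+\frac12(1-\frac1k)w(H^*)$. By Lemma~\ref{lb-tsp} we have $w(H^*)\leq w(I^*)$, so it remains to control the first two terms. Writing $\Delta=\sum_{C\in I^*}\Delta_C$ and $\MST\leq\sum_{C\in I^*}w(T_C)$ (deleting the heaviest edge of each $(k+1)$-cycle leaves a spanning tree of $G$), and using $w(I^*)=\sum_{C\in I^*}w(C)$, Lemma~\ref{core} yields termwise
\[
\frac{2}{k}\Delta+\lrA{1-\frac1k}\MST\ \leq\ \sum_{C\in I^*}\lrA{\frac{2}{k}\Delta_C+\lrA{1-\frac1k}w(T_C)}\ \leq\ \lrA{2-\frac{2l^2+k-1}{2kl}}\sum_{C\in I^*}w(C)\ =\ \lrA{2-\frac{2l^2+k-1}{2kl}}w(I^*).
\]

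Plugging this into~\eqref{Choris+ITP} and adding $\frac12(1-\frac1k)w(H^*)\leq \frac12(1-\frac1k)w(I^*)$ gives
\[
w(I)\ \leq\ \lrA{2-\frac{2l^2+k-1}{2kl}}w(I^*)+\frac12\lrA{1-\frac1k}w(I^*)\ =\ \lrA{\frac52-\frac{2l^2+k-1}{2kl}-\frac{1}{2k}}w(I^*).
\]
A short simplification combines the two subtracted fractions: $\frac{2l^2+k-1}{2kl}+\frac{1}{2k}=\frac{2l^2+k-1+l}{2kl}=\frac{2l^2+k+l-1}{2kl}$, which produces exactly the claimed ratio $\frac52-\frac{2l^2+k+l-1}{2kl}$.

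The final task is the clean numeric inequality $\frac52-\frac{2l^2+k+l-1}{2kl}<\frac52-\sqrt{2/k}$, i.e. $\frac{2l^2+k+l-1}{2kl}>\sqrt{2/k}$, equivalently $2l^2+k+l-1>2l\sqrt{2k}$. Here $l=\ceil{\frac{\sqrt{2k-1}-1}{2}}$ is chosen precisely so that $2l^2+2l$ is close to $2k-1$; more concretely $l\geq\frac{\sqrt{2k-1}-1}{2}$ gives $2l^2+2l\geq \frac{(2k-1)-1}{2}\cdot\!$ roughly, so $2l^2+l\gtrsim k-1$ and the left side is $\gtrsim 2k-\text{(lower order)}$, while $2l\sqrt{2k}\approx\sqrt{2k-1}\cdot\sqrt{2k}\approx 2k$; a careful comparison of lower-order terms closes the gap. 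I expect this elementary but slightly delicate estimate — verifying $2l^2+k+l-1>2l\sqrt{2k}$ for all integers $k\geq 3$ with the stated ceiling value of $l$ — to be the only real obstacle; the structural part of the argument is a direct assembly of~\eqref{Choris+ITP}, Lemma~\ref{lb-tsp}, and Lemma~\ref{core}.
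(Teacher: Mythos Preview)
Your structural argument is correct and matches the paper's proof exactly: assemble~\eqref{Choris+ITP}, bound $\MST\leq\sum_{C\in I^*}w(T_C)$ and $w(H^*)\leq w(I^*)$, apply Lemma~\ref{core} termwise, and simplify.

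The only place you leave a genuine gap is the final numeric inequality $2l^2+k+l-1>2l\sqrt{2k}$. Your sketch with ``$\gtrsim$'' and ``$\approx$'' is not a proof, and in fact the two sides are asymptotically equal (both $\sim 2k$), so lower-order care really is needed. The paper dispatches this cleanly: rewrite the inequality as
\[
2l^2-(2\sqrt{2k}-1)l+(k-1)>0,
\]
view it as a quadratic in $l$ with positive leading coefficient, and compute the discriminant
\[
(2\sqrt{2k}-1)^2-8(k-1)=9-4\sqrt{2k},
\]
which is negative for every $k\geq 3$. Hence the quadratic is strictly positive for all real $l$, in particular for the stated ceiling value, and the strict inequality follows with no case analysis on $l$.
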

\begin{proof}
Using the Hamiltonian cycle $H_{CS}$, by (\ref{Choris+ITP}), AG-ITP outputs a solution with weight at most $\frac{2}{k}\Delta+(1-\frac{1}{k})\mbox{MST}+\frac{1}{2}(1-\frac{1}{k})w(I^*)$.
Then, by Lemma~\ref{core}, we have
\begin{align*}
&\frac{2}{k}\Delta+ \lrA{1- \frac{1}{k}}\mbox{MST}+ \frac{1}{2} \lrA{1- \frac{1}{k}}w(I^*)\\
&\leq \sum_{C\in I^*} \lrA{ \frac{2}{k}\Delta_C+ \frac{k-1}{k}w(T_C)}+\frac{k-1}{2k}w(I^*)\\
&\leq \sum_{C\in I^*} \lrA{2- \frac{2l^2+k-1}{2kl}}w(C)+ \frac{k-1}{2k}w(I^*)\\
&= \lrA{2- \frac{2l^2+k-1}{2kl}}\sum_{C\in I^*}w(C)+ \frac{k-1}{2k}w(I^*)\\
&=  \lrA{2- \frac{2l^{2}+k-1}{2kl}+ \frac{k-1}{2k}}w(I^*)\\
&= \lrA{ \frac{5}{2}- \frac{2l^{2}+k+l-1}{2kl}}w(I^*).
\end{align*}
To show that $\frac{5}{2}-\frac{2l^2+k+l-1}{2kl}<\frac{5}{2}-\sqrt{\frac{2}{k}}$ holds for any $k\geq 3$, it suffices to prove
\[
\frac{2l^2+k+l-1}{2kl}>\sqrt{\frac{2}{k}}\quad\Longleftrightarrow\quad2l^2-(2\sqrt{2k}-1)l+k-1>0.
\]
The latter holds because the discriminant of the quadratic equation satisfies $(2\sqrt{2k}-1)^2-8(k-1)=9-4\sqrt{2k}<0$ for any $k\geq 3$.
\end{proof}

In the previous section, when $\alpha=\frac{3}{2}$ and $k>5$, the approximation ratio of our algorithm is given by $\frac{5}{2}-\frac{3.5}{k}$. 
It can be verified that the new result $\frac{5}{2}-\Theta(\frac{1}{\sqrt{k}})$ in Theorem~\ref{res-split} is strictly better for any $k>5$. It is also better than $\frac{5}{2}-\frac{1.005}{3000}-\frac{1.5}{k}$ for any $k\leq 1.7\times10^7$ (see Table~\ref{res-1}).
When $3\leq k\leq 5$, the approximation ratio equals $2-\frac{1}{k}$, which is tight for this algorithm, since it has been shown~\cite{tight} that the approximation ratio of AG-ITP is at least $2-\frac{1}{k}$ in general metrics, even using an optimal Hamiltonian cycle.

In this section, we mainly consider $\alpha=\frac{3}{2}$. 
For other values of $\alpha$, similar improved results, as in Theorem~\ref{res-intitial-split}, could potentially be obtained by applying the refined lower bounds based on $\MST$ and $\Delta$.
However, the analysis is sophisticated and achieving significant improvements for metric TSP appears challenging. 
We leave the exploration of this direction for future research.

\section{ITP for Unsplittable CVRP}\label{Sec-REUITP}
From this section, we consider unsplittable CVRP.
The ITP-based algorithms play an important role in solving splittable $k$-CVRP. A natural idea is to extend  ITP-based algorithms for unsplittable $k$-CVRP.
In order to compute a feasible solution without splitting the demand of  one customer into two tours, we need to modify the initial solution obtained by ITP.
One of the most famous algorithms for unsplittable CVRP is the algorithm (AG-UITP) introduced by Altinkemer and Gavish~\cite{altinkemer1987heuristics}.
The main idea of the algorithm is as follows.
Given a Hamiltonian cycle $H$ on $V\cup\{v_0\}$ as part of the input, the algorithm first uses AG-ITP to compute a solution for unit-demand $k/2$-CVRP with weight at most $\frac{4}{k}\Delta+(1-\frac{2}{k})w(H)$. Then, in the initial solution, if the demand of a customer is delivered by two tours, then set all the demand to one tour to obtain a feasible solution for unsplittable $k$-CVRP without increasing the weight of the solution. By Lemmas~\ref{lb-tsp} and \ref{lb-delta}, we have
\[
\frac{4}{k}\Delta+\lrA{1-\frac{2}{k}}\alpha w(H^*)\leq 2w(I^*)+\lrA{1-\frac{2}{k}}\alpha w(I^*)=\lrA{\alpha+2-\frac{2\alpha}{k}}w(I^*).
\]
Thus, the approximation ratio of AG-UITP is $\alpha+2-\frac{2\alpha}{k}$. Note that $k$ should be even  in AG-UITP. For odd $k$, there are two possible ways to solve it.
The first is to double the vehicle capacity and all customers' demand to obtain a new instance, and then call AG-UITP. For this case, we may obtain an approximation ratio of  $\alpha+2-\frac{\alpha}{k}$; the second is to solve unit-demand $\lfloor k/2\rfloor $-CVRP in the first step of AG-UITP. However, for the second method, we may obtain an even worse approximation ratio.


Next, we propose a refined AG-UITP algorithm for unsplittable CVRP. Our algorithm has two advances: it has an uniform form for both odd and even $k$, and the approximation ratio will be improved.

A customer is called \emph{big} (resp., \emph{small}) if its demand is greater than (resp., at most) $\floor{k/2}$. Let $V_{big}$ (resp., $V_{mall}$) denote the set of big (resp., small) customers. Define $\Delta_{big}=\sum_{v_i\in V_{big}}d_iw(v_0,v_i)$, and
$\Delta_{small}=\sum_{v_i\in V_{small}}d_iw(v_0,v_i)$.

The refined AG-UITP works as follows. First, we assign a single trivial tour for each big customer. Second, we compute a Hamiltonian cycle $H_{small}$ on $V_{small}\cup\{v_0\}$ by shortcutting all big customers from a Hamiltonian cycle $H$ on $V\cup\{v_0\}$. Third, we use $H_{small}$ to call AG-ITP for unit-demand $(\ceil{k/2}+1)$-CVRP. Last, if a customer's demand is served by two tours, then adjust all its demand to only one tour: if one customer $v_i$ is served by two consecutive tours $T_1=(v_0,v_{h},\dots,v_{i},v_0)$ and $T_2=(v_0,v_i,\dots, v_j,v_0)$ in the clockwise direction of the Hamiltonian cycle, we let $T_2$ deliver all the demand for $v_i$ without exceeding the capacity and then delete $v_i$ from $T_1$; if one customer $v_i$ is served by three tours, say $T_1=(v_0,\dots,v_i,v_0)$, $T_2=(v_0,v_i,v_0)$, and $T_3=(v_0,v_i,\dots,v_0)$, we take $T_2$ as a trivial tour delivering all the demand of $v_i$ and then remove $v_i$ from both $T_1$ and $T_2$.

\begin{lemma}\label{REUITP}
Given a Hamiltonian cycle $H$ on $V\cup\{v_0\}$ as part of the input, for unsplittable $k$-CVRP with any $k\geq 3$, the refined AG-UITP in $O(nk)$ time computes a feasible solution of weight at most 
$
\frac{2}{\floor{ k/2}+1}\Delta+\lra{1-\frac{1}{\floor{ k/2}+1}}w(H).
$
\end{lemma}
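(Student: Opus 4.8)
The plan is to reduce the refined AG-UITP analysis to the already-established AG-ITP guarantee (Lemma~\ref{AGITP}) applied to a smaller-capacity unit-demand instance, and then to argue that the final demand-merging step does not increase the weight. First I would set $k'=\floor{k/2}+1$ and observe that, after assigning a trivial tour to each big customer and shortcutting all big customers out of $H$, the cycle $H_{small}$ is a Hamiltonian cycle on $V_{small}\cup\{v_0\}$ with $w(H_{small})\le w(H)$ by the triangle inequality. Running AG-ITP for unit-demand $k'$-CVRP on $H_{small}$ yields, by Lemma~\ref{AGITP}, a solution of weight at most $(2/k')\Delta_{small}+(1-1/k')w(H_{small})$, where I would note that in this intermediate instance each small customer is treated as having unit demand (one copy), and the relevant $\Delta$ term is $\Delta_{small}$ restricted to single copies; more care is needed here, see below.

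The second step is to handle the demand-merging postprocessing. Each small customer has demand at most $\floor{k/2}$, so after AG-ITP splits $H_{small}$ into pieces of at most $k'=\floor{k/2}+1$ unit-demand customers, any single small customer $v_1$ can appear in at most how many tours? Since in AG-ITP the customers of $H_{small}$ are laid out consecutively and the copies of a fixed customer are contiguous, $v_1$'s copies span at most two consecutive pieces when its demand is $\le\floor{k/2}$, or possibly a middle full piece plus two boundary pieces in the worst arithmetic — this is exactly the two-tour and three-tour cases spelled out in the algorithm description. In either case, rerouting all of $v_1$'s demand onto one of the tours and deleting $v_1$ from the others only removes edges (after shortcutting), so the weight does not increase; I would invoke the triangle inequality once more for the shortcutting. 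The big customers contribute $2\,\Delta_{big}$ in trivial tours, which I would fold back in.

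The third step is the bookkeeping that recombines the pieces into the claimed bound. The trivial tours for big customers cost $2\Delta_{big}$; the small part costs at most $(2/k')\Delta_{small}'+(1-1/k')w(H)$ where $\Delta_{small}'=\sum_{v_i\in V_{small}}w(v_0,v_i)$ counts each small customer once. The point is that $2\Delta_{big}=\frac{2}{k'}\cdot k'\Delta_{big}\ge \frac{2}{k'}\sum_{v_i\in V_{big}}d_i w(v_0,v_i)=\frac{2}{k'}\Delta_{big}'$ would go the wrong way, so instead I would bound $2\Delta_{big}\le \frac{2}{k'}\Delta_{big}$ only if $d_i\ge k'$... which is false since big means $d_i>\floor{k/2}$, i.e.\ $d_i\ge\floor{k/2}+1=k'$. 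Good: a big customer has $d_i\ge k'$, hence $2w(v_0,v_i)\le \frac{2}{k'}d_i w(v_0,v_i)$, so $2\Delta_{big}'\le \frac{2}{k'}\Delta_{big}$. Combined with $\frac{2}{k'}\Delta_{small}'\le\frac{2}{k'}\Delta_{small}\le\frac{2}{k'}\Delta$ and $w(H_{small})\le w(H)$, summing gives total weight at most $\frac{2}{k'}\Delta+(1-\frac1{k'})w(H)$, which is the statement.

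The main obstacle I anticipate is the precise accounting in the merging step: verifying that a small customer's copies really do occupy at most the configurations listed (two adjacent tours, or one full middle tour plus two boundary tours), and checking that deleting the customer from the ``extra'' tours and reconnecting by shortcutting never increases cost — in particular when a deletion would leave a tour with a single remaining customer or empty, one must argue the degenerate tour is simply dropped. This case analysis, together with confirming the demand bound $d_i\le\floor{k/2}$ for small customers forces split-multiplicity at most three, is where the real work lies; the arithmetic recombination at the end is routine.
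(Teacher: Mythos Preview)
Your proposal has a genuine inconsistency that creates a gap. You alternate between two incompatible readings of the AG-ITP call on $H_{small}$: in step~1 and step~3 you treat each small customer as a single unit-demand node (``one copy'', giving the term $\Delta_{small}'=\sum_{v_i\in V_{small}}w(v_0,v_i)$), while in step~2 you speak of a customer's ``copies'' spanning consecutive pieces. Neither reading works as you have it.

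If each small customer is a single node, then AG-ITP with capacity $k'=\lfloor k/2\rfloor+1$ produces tours containing up to $k'$ distinct customers, each of demand up to $\lfloor k/2\rfloor$; a single tour can then carry demand $(\lfloor k/2\rfloor+1)\lfloor k/2\rfloor$, which already for $k=4$ is $6>4$. Your merging step is vacuous in this reading (every customer sits in exactly one tour), so the output is simply infeasible.

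If instead each small customer is expanded into $d_i$ unit-demand copies (this is what the algorithm actually does), then the $\Delta$ in Lemma~\ref{AGITP} is $\Delta_{small}=\sum_{v_i\in V_{small}}d_i\,w(v_0,v_i)$, not $\Delta_{small}'$. Your step-3 chain $\frac{2}{k'}\Delta_{small}'\leq\frac{2}{k'}\Delta_{small}$ is then pointing the wrong way: the AG-ITP cost is already $\frac{2}{k'}\Delta_{small}$, and you simply add it to $\frac{2}{k'}\Delta_{big}$ (your big-customer bound $2w(v_0,v_i)\le\frac{2}{k'}d_iw(v_0,v_i)$ is correct) to get $\frac{2}{k'}\Delta$ with no intermediate $\Delta'$ terms needed. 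You also never verify that the \emph{merged} tour is feasible; the key arithmetic is that moving a split small customer entirely into one tour raises that tour's load by at most $\lfloor k/2\rfloor-1$, so its new load is at most $(\lceil k/2\rceil+1)+(\lfloor k/2\rfloor-1)=k$.

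Once you commit to the $d_i$-copies reading and supply that feasibility check, your argument is essentially the paper's. One remaining discrepancy: the algorithm as defined runs AG-ITP with capacity $\lceil k/2\rceil+1$, not $\lfloor k/2\rfloor+1$; the paper therefore inserts the extra step $w(H_{small})\leq 2\Delta_{small}$ to pass from the $\lceil k/2\rceil+1$ denominator to the $\lfloor k/2\rfloor+1$ denominator in the stated bound.
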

\begin{proof}
First, we analyze the quality of the solution before modifying the demand of customers. For each big customer, we assign a trivial tour. The total weight of these trivial tours is at most
\[
\sum_{v_i\in V_{big}}2w(v_0,v_i)\leq\frac{2}{\floor{ k/2}+1}\sum_{v_i\in V_{big}}d_iw(v_0,v_i)=\frac{2}{\floor{k/2}+1}\Delta_{big},
\]
where the inequality follows from the fact that each big customer's demand is at least $\floor{k/2}+1$.

For all small customers, we obtain a Hamiltonian cycle $H_{small}$ from $H$ by shortcutting all big customers, and then use AG-ITP with a capacity of $\ceil{k/2}+1$ to obtain an itinerary. By Lemma~\ref{AGITP}, its weight is at most
\[
\frac{2}{\ceil{k/2}+1}\Delta_{small}+\lrA{1-\frac{1}{\ceil{k/2}+1}}w(H_{small}).
\]

Note that $w(H_{small})\leq 2\Delta_{small}$ and $w(H_{small})\leq w(H)$ by the triangle inequality. Then, we have
\begin{align*}
&\frac{2}{\ceil{k/2}+1}\Delta_{small}+\lrA{1-\frac{1}{\ceil{k/2}+1}}w(H_{small})\\
&\leq \frac{2}{\floor{ k/2}+1}\Delta_{small}+\lrA{1-\frac{1}{\floor{ k/2}+1}}w(H_{small})\\
&\leq \frac{2}{\floor{ k/2}+1}\Delta_{small}+\lrA{1-\frac{1}{\floor{ k/2}+1}}w(H).
\end{align*}

Thus, the total weight is at most
$
\frac{2}{\floor{k/2}+1}(\Delta_{big}+\Delta_{small})+\lra{1-\frac{1}{\floor{k/2}+1}}w(H)= \frac{2}{\floor{ k/2}+1}\Delta+\lra{1-\frac{1}{\floor{ k/2}+1}}w(H).
$

Next, we analyze the modification in the last step of our algorithm. 

If a customer is served by three tours,
the modification clearly erases an infeasible case without increasing the weight. If a customer $v_i$ is split into two consecutive tours $T_1=(v_0,v_{h},\dots,v_{i},v_0)$ and $T_2=(v_0,v_i,\dots, v_j,v_0)$ in the clockwise direction of the Hamiltonian cycle, we modify them by letting $T_{1}^{*}=(v_0,v_{h},\dots,v_{i-1},v_0)$ and $T_{2}^{*}=(v_0,v_{i},\dots, v_j,v_0)$.

By the triangle inequality, it holds
$
w(T_{1}^{*})+w(T_{2}^{*})\leq w(T_1)+w(T_2).
$
So, the weight is non-increasing. To prove the feasibility, we only need to prove that the total demand in $T_{1}^{*}$ and $T_{2}^{*}$ is at most $k$, respectively.
For $T_{1}^{*}$, the total demand becomes less and hence is feasible. For $T_{2}^{*}$, the total demand is bigger than the total demand of $T_2$. Note that the tour $T_2$ has a total demand of at most $\ceil{k/2}+1$. We analyze the newly added demand. Since any small customer has a demand of at most $\floor{k/2}$, the newly added demand is at most $\floor{k/2}-1$. So, for $T_{2}^{*}$, the total demand is at most $(\ceil{k/2}+1)+(\floor{ k/2}-1)=k$. Thus, after modifying all two consecutive conflict tours, we obtain a feasible itinerary with a non-increasing weight.
\end{proof}

Using an $\alpha$-approximate Hamiltonian cycle on $V\cup\{v_0\}$, by Lemmas~\ref{lb-tsp} and \ref{REUITP}, the refined AG-UITP satisfies that
$
\frac{2}{\floor{ k/2}+1}\Delta+\lra{1-\frac{1}{\floor{ k/2}+1}}\alpha w(H^*)\leq \lra{\alpha+\frac{k}{\floor{ k/2}+1}-\frac{\alpha}{\floor{ k/2}+1}}w(I^*).
$
Hence, the approximation ratio is $\alpha+\frac{k}{\floor{ k/2}+1}-\frac{\alpha}{\floor{ k/2}+1}$ for any $k\geq 3$, which is even better than the result $\alpha+2-\frac{2\alpha}{k}-\Omega(\frac{1}{k^3})$ in~\cite{BompadreDO06}.

In the refined AG-UITP, we call AG-ITP with a capacity of $\ceil{k/2}+1$ on the Hamiltonian cycle of small customers and then we can modify them by adjusting the demand. This idea can be used on a cycle cover of small customers. We call EX-ITP with a capacity of $\ceil{k/2}+1$ on the cycle cover of small customers and then modify the conflicting tours in the same way. The algorithm is called the EX-UITP algorithm. By Lemma~\ref{EXITP}, we have the following result.
\begin{lemma}~\label{EXUITP}
Given a cycle cover $\C$ in $G[V]$ or $G[V\cup\{v_0\}]$ as part of the input, assuming all customers are small customers, for unsplittable $k$-CVRP with any $k\geq 3$, EX-UITP in $O(n^2)$ time computes a feasible solution with weight at most $2g\Delta+(1-g)w(\C)$, where $g=\max_{C\in\C}\frac{\ceil{\size{C}/(\ceil{k/2}+1)}}{\size{C}}$.
\end{lemma}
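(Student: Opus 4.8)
The plan is to reduce the claim directly to Lemma~\ref{EXITP} by treating the EX-UITP algorithm as two phases: the EX-ITP call with the modified capacity, followed by the conflict-resolution step that fixes up split customers. First I would recall the definition of the EX-UITP algorithm: given the cycle packing $\C$, we invoke the EX-ITP algorithm with capacity $\ceil{k/2}+1$ on $\C$, and then for every customer whose demand is delivered across two or three tours, we reassign all of its demand to one tour and shortcut it out of the others, exactly as in the refined AG-UITP algorithm (Lemma~\ref{REUITP}).

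For the first phase, I would apply Lemma~\ref{EXITP} with capacity parameter $\ceil{k/2}+1$ in place of $k$. That lemma gives, in $O(n^2)$ time, a feasible solution for unit-demand $(\ceil{k/2}+1)$-CVRP on $\C$ of weight at most $2g\Delta + (1-g)w(\C)$, where $g = \max_{C\in\C}\ceil{\size{C}/(\ceil{k/2}+1)}/\size{C}$, which is precisely the quantity in the statement. So the only remaining work is to argue that the conflict-resolution phase (a) produces a \emph{feasible} solution to unsplittable $k$-CVRP, and (b) does not increase the weight, and (c) runs within the claimed time bound.

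The feasibility and non-increasing-weight arguments I would copy almost verbatim from the proof of Lemma~\ref{REUITP}, since the cycle packing structure is irrelevant once we focus on a single (split) customer: after the EX-ITP call, each customer of positive demand lies on tours that together deliver its whole demand, and since every customer is assumed small (demand at most $\floor{k/2}$), a customer is split across at most three tours. If it is split into three, one of them is a trivial tour at that customer; we keep that trivial tour carrying the full demand and delete the customer from the other two, which removes an infeasible overlap without increasing weight by the triangle inequality. If it is split into two consecutive tours $T_1 = (v_0,\dots,v_i,v_0)$ and $T_2 = (v_0,v_i,\dots,v_0)$, we move all of $v_i$'s demand to $T_2$ and shortcut $v_i$ out of $T_1$; the triangle inequality gives $w(T_1^*)+w(T_2^*)\le w(T_1)+w(T_2)$, and $T_2^*$ now carries at most $(\ceil{k/2}+1) + (\floor{k/2}-1) = k$ units, so feasibility for capacity $k$ holds. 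Summing these local modifications over all split customers (there are $O(n)$ of them, each touched $O(1)$ times) shows the total weight does not increase and the running time stays $O(n^2)$.

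The only point requiring a little care — and the main obstacle I anticipate — is making sure the conflict-resolution step is well-defined when the input is a general cycle packing rather than a single Hamiltonian cycle: a customer could in principle be split across tours coming from different cycles of $\C$, or across non-consecutive pieces. I would handle this by noting that the AG-ITP/HR-ITP subroutine invoked by EX-ITP on each individual cycle $C\in\C$ partitions that cycle's vertices into \emph{consecutive} blocks, so any customer appearing in two tours has those two tours adjacent along a single cycle of $\C$; hence the "two consecutive tours" case of Lemma~\ref{REUITP} applies locally on that cycle, and the "three tours" case only arises through the trivial-tour mechanism, which is again local. With this observation the reduction is complete and the bound $2g\Delta+(1-g)w(\C)$ follows immediately from Lemma~\ref{EXITP}.
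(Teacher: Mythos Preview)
Your proposal is correct and follows exactly the approach the paper intends: the paper's own justification is the single line ``By Lemma~\ref{EXITP}, we can get'' together with the sentence that EX-UITP calls EX-ITP with capacity $\ceil{k/2}+1$ and then ``modif[ies] the conflicting tours in the same way'' as in Lemma~\ref{REUITP}. You have simply spelled out those two steps, including the feasibility bound $(\ceil{k/2}+1)+(\floor{k/2}-1)=k$ and the observation that splits are local to a single cycle of $\C$, which is precisely what the paper leaves implicit.
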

The lemma can be used to analyze the algorithms based on a cycle cover $\C$, where we may assign a trivial tour to each big customer and then call EX-UITP on the cycle cover $\C'$ of small customers obtained by shortcutting all big customers in $\C$.
However, we remark that if we simply assign a trivial tour to each big customer, the structure of the obtained cycle cover $\C'$ may become extremely bad, e.g., it may become a cycle that contains only a unit-demand customer and in this case $g=1$. Thus, for unsplittable $k$-CVRP, the algorithms based on a cycle cover are harder to design and analyze.

\section{Some Structural Properties}\label{Sec-Unsplit-Property}
EX-UITP provides the framework of our algorithm. In this section, we prove some properties that may be used for some local structures.

For unsplittable $k$-CVRP, given a cycle $C$, we let $\Delta_C=\sum_{v_i\in C}d_iw(v_0,v_i)$ and $\size{C}=\sum_{v_i\in C}d_i$.
Recall that the demand of a small (resp., big) customer is at most $\floor{k/2}$ (resp., at least $\floor{k/2}+1$). 
Next, we propose three lemmas for three special cases, where the results are better than the result in Lemma~\ref{EXUITP}.

\begin{lemma}\label{rule1}
For unsplittable $k$-CVRP, if there is a cycle $C$ with $1\leq\size{C}\leq k$, we can assign a single tour on $C$ with weight at most $2g\Delta_C+(1-g)w(C)$, where $g=\frac{1}{\size{C}}$.
\end{lemma}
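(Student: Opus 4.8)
The plan is to build the single tour on $C$ directly as a cycle through $v_0$ and all the customers of $C$, and then bound its weight by the two natural quantities $\Delta_C$ and $w(C)$. Since $1\le\size{C}\le k$, the total demand of all customers on $C$ is at most $k$, so a single tour visiting all of them is automatically feasible for unsplittable $k$-CVRP; this is the only place the hypothesis $\size{C}\le k$ is used, and it is why no ITP-style splitting is needed. So the real content is the weight estimate $w(\text{tour})\le 2g\Delta_C+(1-g)w(C)$ with $g=1/\size{C}$.

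First I would write $C=(v_1,v_2,\dots,v_m,v_1)$ for the cycle on $C$'s customers (here $m$ is the number of customers on $C$, and $m\le\size{C}$ since each demand is at least $1$; note $g=1/\size{C}\le 1/m$ except that we must be careful — I would actually treat the case $m=1$, i.e.\ a single customer, separately, where the tour is just $(v_0,v_1,v_0)$ of weight $2w(v_0,v_1)=2\Delta_C$, matching $2g\Delta_C+(1-g)w(C)$ with $g=1$). For $m\ge 2$, the idea, exactly as in the HR-ITP/EX-ITP analysis (Lemma~\ref{HRITP}, Lemma~\ref{EXITP}), is to consider the $m$ candidate tours obtained by inserting $v_0$ into each of the $m$ edges of $C$: the $i$-th tour replaces edge $(v_i,v_{i+1})$ by the two edges $(v_0,v_i)$ and $(v_0,v_{i+1})$. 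The weight of the $i$-th tour is $w(C)-w(v_i,v_{i+1})+w(v_0,v_i)+w(v_0,v_{i+1})$. Picking the best of the $m$ tours, or equivalently averaging with appropriate weights, one gets a tour whose weight is at most a convex combination of $w(C)$ and $2\Delta_C$; more precisely, averaging the $m$ tours gives weight $w(C)+\frac{2}{m}\Delta_C-\frac{1}{m}w(C)=(1-\frac1m)w(C)+\frac2m\Delta_C$, and since $g=1/\size{C}\le 1/m$ and $w(C)\le 2\Delta_C$ by the triangle inequality, this is at most $2g\Delta_C+(1-g)w(C)$ (the direction $\frac2m\Delta_C+(1-\frac1m)w(C)\le 2g\Delta_C+(1-g)w(C)$ follows because decreasing the coefficient of $w(C)$ from $1-\frac1m$ to $1-g$ and correspondingly adjusting the $\Delta_C$ coefficient only helps when $w(C)\le 2\Delta_C$, exactly the monotonicity already used in the proof of Lemma~\ref{EXITP}).

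I expect the main obstacle to be purely bookkeeping: reconciling the two slightly different denominators, namely the number $m$ of distinct customers on $C$ versus the total demand $\size{C}$ appearing in $g$. The cleanest route is probably to invoke Lemma~\ref{EXITP} (EX-ITP) as a black box with the trivial one-cycle packing $\C=\{C\}$ and capacity $k$: since $\size{C}\le k$, EX-ITP applied to $C$ produces a single tour (no real partitioning happens, because $\ceil{m/k}=1$), and Lemma~\ref{EXITP} gives weight at most $2g'\Delta_C+(1-g')w(C)$ with $g'=\ceil{m/k}/m=1/m$; then one upgrades $g'=1/m$ to the claimed $g=1/\size{C}\le 1/m$ using again $w(C)\le 2\Delta_C$ and the monotonicity of $t\mapsto 2t\Delta_C+(1-t)w(C)$ in $t$ on $[0,1]$, which is nondecreasing precisely because $2\Delta_C\ge w(C)$. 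This reduces the whole lemma to one already-proved statement plus one elementary monotonicity observation, and also transparently handles the degenerate $m=1$ case. I would present it this way to keep the argument short.
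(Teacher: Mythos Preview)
Your monotonicity step is backwards, and this is a genuine gap. You correctly observe that $h(t)=2t\Delta_C+(1-t)w(C)$ is nondecreasing in $t$ (since $2\Delta_C\ge w(C)$), but you then need $h(1/m)\le h(g)$ with $g=1/\size{C}\le 1/m$. Nondecreasing gives only $h(g)\le h(1/m)$, the opposite direction. In Lemma~\ref{EXITP} the monotonicity is used to pass from each cycle's own $g_C$ to the larger global $g=\max_C g_C$; here you are trying to pass to a \emph{smaller} value, which the inequality does not support. A concrete instance where your bound $h(1/m)$ strictly exceeds the target $h(1/\size{C})$: take $m=3$ customers with demands $(D,1,1)$, $v_1$ at the depot, $v_2,v_3$ at distance $1$ from the depot and distance $0$ from each other. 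Then $w(C)=2$, $\Delta_C=2$, so $h(1/3)=8/3$ while $h(1/(D+2))=2+2/(D+2)$, which is $<8/3$ for all $D\ge 2$. Your averaging argument over the $m$ real edges only yields $8/3$ and does not certify the claimed bound (even though the best actual tour, of weight $2$, does satisfy it).

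The paper avoids this by applying EX-ITP to the \emph{unit-demand blow-up} of $C$: replace each $v_i$ by $d_i$ co-located unit-demand copies, so the cycle has exactly $\size{C}$ vertices and $\size{C}$ edges (the new edges between copies have weight $0$). Now HR-ITP/EX-ITP averages over $\size{C}$ insertion points, and one reads off directly $g=\ceil{\size{C}/k}/\size{C}=1/\size{C}$, with the unit-demand $\Delta$ equal to $\sum_i d_i\,w(v_0,v_i)=\Delta_C$ and the cycle weight unchanged. Since $\size{C}\le k$, the resulting single tour is feasible for the unsplittable instance. In short: do the averaging over $\size{C}$ (demand-weighted) insertion positions, not over the $m$ physical edges, and no monotonicity correction is needed. (Your $m=1$ remark also slips: with $d_1>1$ one has $\Delta_C=d_1\,w(v_0,v_1)$, so the tour weight $2w(v_0,v_1)$ equals $2g\Delta_C$ with $g=1/d_1$, not $g=1$.)
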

\begin{proof}
We can obtain a tour by using EX-ITP with a capacity of $k$. By Lemma~\ref{EXITP}, the weight is at most $2g\Delta_C+(1-g)w(C)$, where $g=\frac{\ceil{\size{C}/k}}{\size{C}}=\frac{1}{\size{C}}$ since $\size{C}\leq k$. Moreover, this tour is feasible since $\size{C}\leq k$.
\end{proof}

Note that Lemma~\ref{EXUITP} does not allow big customers on $C$, whereas Lemma~\ref{rule1} does.

\begin{lemma}\label{rule2}
For unsplittable $k$-CVRP, if there is a cycle $C$ with $\size{C}>k\geq4$ and there are no big customers in $C$, we can assign $m=\ceil{\frac{\size{C}-1}{\ceil{k/2}+1}}$ tours on $C$ with weight at most $2g\Delta_C+(1-g)w(C)$, where $g=\frac{m}{\size{C}}$.
Moreover, the first tour has a capacity of $\ceil{k/2}+2$, each of the middle $m-2$ tours has a capacity of $\ceil{k/2}+1$, and the last tour has a capacity of $\size{C}-(\ceil{k/2}+2)-(m-2)(\ceil{k/2}+1)$.
\end{lemma}
\begin{proof}
Note that $m\geq2$ since $\size{C}>k\geq4$.
We will show that we can assign $m$ tours on $\size{C}$ with weight at most $2g\Delta_C+(1-g)w(C)$.

The first tour has a capacity of $\ceil{k/2}+2$, each of the middle $m-2$ tours has a capacity of $\ceil{k/2}+1$, and the last tour has a capacity of $\size{C}-(\ceil{k/2}+2)-(m-2)(\ceil{k/2}+1)$. Note that $0\leq\size{C}-(\ceil{k/2}+2)-(m-2)(\ceil{k/2}+1)\leq\ceil{k/2}+1$ since $\frac{\size{C}-1}{\ceil{k/2}+1}\leq m=\ceil{\frac{\size{C}-1}{\ceil{k/2}+1}}<\frac{\size{C}-1}{\ceil{k/2}+1}+1$. Hence, each tour has a non-negative capacity, and the total capacity of these $m$ tours is exactly $\size{C}$. Then, we can obtain $m$ such tours on $C$. Similar to HR-ITP, by considering $\size{C}$ solutions and selecting the best one, we can obtain a solution on $\size{C}$ with weight at most $\frac{2m}{\size{C}}\Delta_C+(1-\frac{m}{\size{C}})w(C)=2g\Delta_C+(1-g)w(C)$.

Then, we show these tours can be modified into a feasible solution with a non-increasing weight.

The idea is similar to the analysis in Lemma~\ref{REUITP}. Note that the tour has a capacity of at most $\ceil{k/2}+1$ in Lemma~\ref{REUITP}.
However, in the $m$ tours, the first tour, denoted by $T_1$, has a capacity of $\ceil{k/2}+2$, while the other tours, denoted by $T_2,\dots,T_m$ in the clockwise direction of the cycle $C$, have a capacity of at most $\ceil{k/2}+1$. The main difference is due to the tour $T_1$. We will use a similar modification in Lemma~\ref{REUITP} while making sure that the tour $T_1$ is feasible. 

Similar to AG-UITP, if a customer $v_1$ is served by three tours, say $T_1=(v_0,\dots,v_1,v_0)$, $T_2=(v_0,v_1,v_0)$, and $T_3=(v_0,v_1,\dots,v_0)$, we take $T_2$ as a trivial tour delivering all the demand of $v_1$, and then remove $v_1$ from both $T_1$ and $T_2$.
The modification is safe. Then, we assume that each customer is served by at most two tours.

Then, we consider the following three cases.

\textbf{Case~1: there are no tours having a conflict with $T_1$.} We can modify every two consecutive conflict tours $T_l=(v_0,v_h,\dots,v_i,v_0)$ and $T_{l+1}=(v_0,v_i,\dots,v_j,v_0)$ by letting $T^*_l=(v_0,v_h,\dots,v_{i-1},v_0)$ and $T^*_{l+1}=(v_0,v_{i},\dots,v_j,v_0)$. The modification is in the clockwise direction, which is the same as that in Lemma~\ref{REUITP}. In this case, the tour $T_1$ does not affect the modification. For other tours, by the proof of Lemma~\ref{REUITP}, the modification is feasible.

\textbf{Case~2: there exists one tour having a conflict with $T_1$.} In this case, the modification in the clockwise direction in Lemma~\ref{REUITP} may cause the tour $T_1$ infeasible. For example, we let $T_m=(v_0,\dots,v_i,v_0)$ and $T_1=(v_0,v_i,\dots,v_0)$, where the delivered demand of $v_i$ in $T_m$ and $T_1$ are $\floor{k/2}-1$ and 1, respectively. 
Then, using the modification in Lemma~\ref{REUITP}, we have $T^*_m=(v_0,\dots,v_{i-1},v_0)$ and $T^*_1=(v_0,v_{i},\dots,v_0)$. The original demand of $T_1$ is $\ceil{k/2}+2$ and the newly added demand is $\floor{k/2}-1$. The total demand is $k+1$ and hence infeasible. Actually, in this case, we can modify the tours $T_m$ and $T_1$ in the counterclockwise direction by letting $T^*_m=(v_0,\dots,v_{i},v_0)$ and $T^*_1=(v_0,v_{i+1},\dots,v_0)$. Next, we consider the following two cases.

\textbf{Case~2.1: there is only one tour having a conflict with $T_1$.} If the tour is $T_2$, we modify every two consecutive conflict tours $T_l=(v_0,v_h,\dots,v_i,v_0)$ and $T_{l+1}=(v_0,v_i,\dots,v_j,v_0)$ in the clockwise direction by letting $T^*_l=(v_0,v_h,\dots,v_{i-1},v_0)$ and $T_{l+1}=(v_0,v_{i},\dots,v_j,v_0)$; if the tour is $T_m$, we modify every two consecutive conflict tours $T_l=(v_0,v_h,\dots,v_i,v_0)$ and $T_{l+1}=(v_0,v_i,\dots,v_j,v_0)$ in the counterclockwise direction by letting $T^*_l=(v_0,v_h,\dots,v_{i},v_0)$ and $T_{l+1}=(v_0,v_{i+1},\dots,v_j,v_0)$. The directions of these two modifications are opposite. The demand of the modified tour $T_1$ will always become less, and hence the modification will always be feasible. For other tours, by the proof of Lemma~\ref{REUITP}, the modification is safe.

\textbf{Case~2.2: there are two tours having conflicts with $T_1$.} We can modify every two consecutive conflict tours $T_l=(v_0,v_h,\dots,v_i,v_0)$ and $T_{l+1}=(v_0,v_i,\dots,v_j,v_0)$ in the clockwise direction by letting $T^*_l=(v_0,v_h,\dots,v_{i-1},v_0)$ and $T_{l+1}=(v_0,v_{i},\dots,v_j,v_0)$. The modification is the same as that in Lemma~\ref{REUITP}. We only need to prove $T^*_1$ is feasible. The original demand of $T_1$ is at most $\ceil{k/2}+2$. Since $T_1$ has a conflict with the tour $T_m$ and each small customer has a demand of at most $\floor{k/2}$, the newly added demand of $T^*_1$ in the modification is at most $\floor{k/2}-1$. Moreover, since $T_1$ also has a conflict with the tour $T_2$, it reduces at least 1 demand according to the modification. Therefore, the demand of $T^*_1$ is at most $(\ceil{k/2}+2)+(\floor{k/2}-1)-1=k$. 
Hence, the tour $T_1$ will always be feasible. For other tours, by the proof of Lemma~\ref{REUITP}, the modification is safe.
\end{proof}

\begin{lemma}\label{rule3}
For unsplittable $k$-CVRP, if there exists a cycle $C$ with $\size{C}> k\geq 4$ and there is only one big customer in $C$ and its demand is $\floor{k/2}+1$, we can assign $m=\ceil{\frac{\size{C}}{\ceil{k/2}+1}}$ tours on $C$ with weight at most $2g\Delta_C+(1-g)w(C)$, where $g=\frac{m}{\size{C}}$.
\end{lemma}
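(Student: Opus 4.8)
The plan is to mimic the proof of Lemma~\ref{rule2}: build a cheap candidate family of $m:=\ceil{\size{C}/(\ceil{k/2}+1)}$ tours by an ITP-style partition of $C$, and then repair it into a feasible solution without increasing the weight. The new ingredient is the single big customer $b$, whose demand $\floor{k/2}+1$ is one more than $\floor{k/2}$, so the bookkeeping that kept every tour feasible in Lemma~\ref{rule2} needs an extra twist around $b$. Throughout write $D:=\size{C}$ and $g:=m/D$; note $D>k\geq 4$ forces $m\geq 2$.

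First I would construct the candidate. Replace each customer $v_i$ of $C$ by $d_i$ unit-demand copies at the same point, so that $b$ becomes a block $B$ of $\floor{k/2}+1$ consecutive copies; this yields a cycle on $D$ vertices with unchanged weight $w(C)$ and unchanged $\Delta_C$. Apply the HR-ITP machinery (i.e., Lemma~\ref{EXITP} on this $D$-vertex cycle) with capacity $\ceil{k/2}+1$: among the $D$ cyclic starting points the best one gives at most $m=\ceil{D/(\ceil{k/2}+1)}$ arcs, each closed into a tour through $v_0$, of total weight at most $2g\Delta_C+(1-g)w(C)$. Because the copies are unit, these arcs are consecutive blocks, all of length $\ceil{k/2}+1$ except one of length $r=D-(m-1)(\ceil{k/2}+1)$.

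Next I would repair the candidate. A real customer split between two consecutive arcs is handled exactly as in Lemma~\ref{REUITP} by moving the boundary: a small customer contributes at most $\floor{k/2}-1$ newly merged demand to an arc whose virtual count was at most $\ceil{k/2}+1$, so its demand stays at most $k$. For $b$, note that since $\floor{k/2}+1\leq\ceil{k/2}+1$, the block $B$ meets at most two arcs, except when the short arc of length $r$ lies strictly inside $B$, in which case $B$ meets exactly three arcs and the middle one consists only of copies of $b$. If $B$ lies inside a single arc, that arc (after handing its at most one partial small neighbour to the adjacent arc) becomes $\{b\}$ with demand $\floor{k/2}+1\leq k$. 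If $B$ meets two arcs with $p$ and $q$ copies in them, then $p+q=\floor{k/2}+1\geq 3$ since $k\geq 4$, hence $\max\{p,q\}\geq 2$; I merge all of $B$ into the arc holding the larger portion, whose demand is then at most $(\floor{k/2}+1)+(\ceil{k/2}+1-\max\{p,q\})\leq k$, and if that arc is the counterclockwise one I also reverse the direction of the cascading boundary shifts on its far side so that it does not additionally absorb a small customer --- this is the analogue of Case~2 in Lemma~\ref{rule2}. If $B$ meets three arcs, I instead give $b$ a dedicated trivial tour of weight $2w(v_0,b)$ and delete the emptied middle arc; the two outer arcs only lose demand (so remain feasible), the number of tours stays $m$, and by the triangle inequality the total weight does not increase, since the deleted pure-$b$ arc and the new trivial tour weigh the same and trimming the $b$-tail off an outer arc and reconnecting to $v_0$ cannot lengthen it.

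Putting the pieces together --- the candidate weight bound, the non-increasing repairs, and the invariant that the tour count never exceeds $m$ --- yields the stated bound $2g\Delta_C+(1-g)w(C)$ with $g=\ceil{\size{C}/(\ceil{k/2}+1)}/\size{C}$. The hard part will be the coordination in the third step: verifying that pivoting the shift direction at $b$ (and, in the three-arc case, excising $b$) keeps every tour within capacity $k$ while leaving the tour count at $m$; this is the counterpart of the delicate $T_1$-bookkeeping in Cases~2.1 and~2.2 of Lemma~\ref{rule2}, whereas the small-customer shifts themselves are routine once the direction near $b$ is fixed.
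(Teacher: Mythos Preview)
Your approach is essentially the paper's. Both apply EX-ITP (equivalently HR-ITP on the unit-expansion) with capacity $\lceil k/2\rceil+1$ to obtain the weight bound $2g\Delta_C+(1-g)w(C)$, and then repair. In the two-arc case your rule ``merge $b$ into the arc holding $\max\{p,q\}$ and orient all cascading shifts so that this arc does not additionally absorb a small customer'' is exactly the paper's rule ``if $x_1\le x_2$ shift clockwise, otherwise counterclockwise''; your capacity check $(\lfloor k/2\rfloor+1)+(\lceil k/2\rceil+1-\max\{p,q\})\le k$ via $\max\{p,q\}\ge 2$ is the same computation as the paper's bound $(\lceil k/2\rceil+1)+x_1\le k$ via $x_1=\min\{p,q\}\le\lfloor k/2\rfloor-1$. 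One point worth noting: make sure your ``reverse the direction on its far side'' really means a \emph{global} direction reversal (as in the paper), not a local one --- a partial reversal would create a meeting point where some arc absorbs from both neighbours.

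Your three-arc sub-case (the short arc lying strictly inside $B$, handled by giving $b$ a trivial tour and deleting the all-$b$ middle arc) is in fact a case the paper's written proof does not list --- it names only ``not split'' and ``split in two tours''. So your treatment is more complete here, and the fix you describe (the same device REUITP already uses when a small customer is split three ways) is correct: the deleted middle arc and the new trivial tour both weigh $2w(v_0,b)$, trimming $b$'s tail off each outer arc cannot lengthen it by the triangle inequality, and each outer arc then has at most $(\lceil k/2\rceil+1)-1$ units before any small-customer shift, hence at most $k-1$ after.
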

\begin{proof}
First, we call EX-ITP with a capacity of $\ceil{k/2}+1$ on $C$. By Lemma~\ref{EXITP}, we know the weight is at most $2g\Delta_C+(1-g)w(C)$.

Next, we modify the conflict tours. 

Similar to AG-UITP, if a customer $v_1$ is served by three tours, say $T_1=(v_0,\dots,v_1,v_0)$, $T_2=(v_0,v_1,v_0)$, and $T_3=(v_0,v_1,\dots,v_0)$, we take $T_2$ as a trivial tour delivering all the demand of $v_1$, and then remove $v_1$ from both $T_1$ and $T_2$.
The modification is safe. Then, we assume that each customer is served by at most two tours.

Compared to Lemma~\ref{EXUITP}, the only difference is that there is one big customer with a demand of $\floor{k/2}+1$. Then, we only need to prove that the tours involving the big customer can be modified into feasible tours. 

We consider the following two cases.

\textbf{Case~1: the big customer is served by only one tour.} We can modify every two consecutive conflict tours $T_l=(v_0,v_h,\dots,v_i,v_0)$ and $T_{l+1}=(v_0,v_i,\dots,v_j,v_0)$ in the clockwise direction by letting $T^*_l=(v_0,v_h,\dots,v_{i-1},v_0)$ and $T_{l+1}=(v_0,v_{i},\dots,v_j,v_0)$. The modification is the same as that in Lemma~\ref{REUITP}. In this case, the big customer does not affect the modification. Hence, the modification is feasible.

\textbf{Case~2: the big customer is served by two tours.}  Assume that the big customer $v_1$ is served by the tours $T_1=(v_0,\dots,v_1,v_0)$ and $T_2=(v_0,v_1,\dots,v_0)$. The delivered demand of $v_1$ in $T_1$ (resp., $T_2$) is denoted by $x_1$ (resp., $x_2$). If $x_1\leq x_2$, we modify every two consecutive conflict tours $T_l=(v_0,v_h,\dots,v_i,v_0)$ and $T_{l+1}=(v_0,v_i,\dots,v_j,v_0)$ in the clockwise direction by letting $T^*_l=(v_0,v_h,\dots,v_{i-1},v_0)$ and $T_{l+1}=(v_0,v_{i},\dots,v_j,v_0)$. 
Otherwise, we have $x_1\geq x_2$, and then we modify every two consecutive conflict tours $T_l=(v_0,v_h,\dots,v_i,v_0)$ and $T_{l+1}=(v_0,v_i,\dots,v_j,v_0)$ in the counterclockwise direction by letting $T^*_l=(v_0,v_h,\dots,v_{i},v_0)$ and $T_{l+1}=(v_0,v_{i+1},\dots,v_j,v_0)$. The directions of these two modifications are opposite. Due to symmetry, we only analyze the case $x_1\leq x_2$. 
The big customer is entirely contained in the tour $T^*_2$ after modification. We only need to prove that $T^*_2$ is feasible. The demand of the original tour $T_2$ has a demand of at most $\ceil{k/2}+1$. The newly added demand is $x_1\leq \floor{k/2}-1$ since $x_1+x_2\leq \floor{k/2}+1$, $x_1\leq x_2$, and $k\geq 4$. Therefore, the demand of $T^*_2$ is at most $(\ceil{k/2}+1)+(\floor{k/2}-1)=k$ and hence feasible.
\end{proof}

\section{Improvements on Unsplittable 3-CVRP and 4-CVRP}\label{Sec-Unsplit-34}
In this section, we show that the techniques on cycle covers in the above two sections can be used to improve unsplittable 3-CVRP and 4-CVRP. The cycle covers that we use are based on the ideas in Section~\ref{SEC5}.

\subsection{Unsplittable 3-CVRP}\label{unsplit-3-cvrp}
For unsplittable 3-CVRP, we will improve the approximation ratio from $1.792$~\cite{gupta2023local} to $\frac{3}{2}=1.500$. Note that this is already the approximation ratio for splittable 3-CVRP. 
Thus, any further improvement on unsplittable 3-CVRP would also apply to splittable 3-CVRP.

By Assumption~\ref{ass3}, no customer has a demand greater than 2. Therefore, each customer's demand is either 1 or 2.
Our algorithm has three main steps. 

First, we find a cycle cover $\C^{**}$ on all customers $V$, where $\left|C_i\right|=\sum_{v_j\in C_i}d_j\geq 3$ for each $C_i\in \C^{**}$. The cycle cover $\C^{**}$ is computed in the following way: we construct a unit-demand instance $G'$ by replacing each customer $v_i$ with $d_i$ unit-demand customers as the same place. Then, we use the polynomial time algorithm in~\cite{hartvigsen1984extensions,schrijver2003combinatorial} to find a minimum weight cycle cover $\C^*$ in $G'[V']$, where $V'$ denotes the set of all customers in $G'$. Note that $\C^*$ may not be a
cycle cover in the original graph $G[V]$. For a vertex $v_i$, there will be $2d_i$ edges incident on it in $\C^*$. So, by shortcutting the cycles in $\C^*$, we can obtain a cycle cover $\C^{**}$ in $G[V]$ such that $w(\C^{**})\leq w(\C^*)$.

Second, we deal with all 2-demand customers: if a cycle $C_i\in \C^{**}$ contains no 1-demand customers or at least two 1-demand customers, assign a trivial tour for each 2-demand customer in $C_i$ and then obtain a cycle $C'_i$ from $C_i$ by shortcutting all 2-demand customers (having been delivered); if a cycle $C\in \C^{**}$ contains exactly one 1-demand customer $u$, then there is at least one 2-demand customer $u'$ in $C$, and we assign a tour $T$ visiting only the two customers $u$ and $u'$ and a trivial tour for each 2-demand customer in $C$ except $u'$. Now, we obtain a cycle cover $\C'=\{C'_i\}$ on all customers that have not been delivered yet. These customers are all 1-demand customers.

Third, we apply EX-ITP on $\C'$.

We analyze the quality of the solution. We partition the tours constructed in the algorithm into three parts.

The first part contains
all trivial tours visiting one 2-demand customer. Let $V_1$ denote this set of 2-demand customers and $\Delta_1= \sum_{v_i\in V_1} d_i w(v_0, v_i)=2\sum_{v_i\in V_1}w(v_0, v_i)$.
Thus, the total weight of the tours in the first part is $\Delta_1$.

Let $\C^{**}_2\subseteq \C^{**}$ be the set of cycles containing exact one 1-demand customer.
For each cycle $C\in \C^{**}_2$, the algorithm constructs a tour $T=(v_0,u,u',v_0)$ visiting one 1-demand customer $u$ and one 2-demand customer $u'$ in $C$ in the second step of the algorithm. The second part consists of all these tours, and we let $V_2$ denote the set of customers visited by these tours. Let $\Delta_2= \sum_{v_i\in V_2} d_i w(v_0, v_i)$.
The weight of $T_0$ is $w(v_0,u)+w(v_0,u')+w(u,u')$, where $w(u,u')$ is at most the half
of $w(C)$. Thus, the total weight of the tours in the second part is at most $\Delta_2+ \frac{1}{2}\sum _{C\in \C^{**}_2}w(C)$.

The third part is the tours computed by EX-ITP in the last step. Let $V_3$ denote the set of 1-demand customers visiting in the tours in the third part and $\Delta_3= \sum_{v_i\in V_3} d_i w(v_0, v_i)=\sum_{v_i\in V_3}w(v_0, v_i)$. Note that for each cycle $C'$ in $\C'$ there are at least two 1-demand customers and hence we have $\size{C'}\geq2$.
By Lemma~\ref{EXITP}, we know that the weight of the tours in the third part is at most $\Delta_3 + \frac{1}{2}w(\C')$ ($g\leq \frac{1}{2}$ in Lemma~\ref{EXITP}),
where $w(\C')\leq w(\C^{**}\setminus \C^{**}_2)$.

Recall that $w(\C^{**})\leq w(\C^*)$. By adding the three parts together, the weight of the solution is at most
\begin{align*}
\Delta_1 + \lrA{\Delta_2+ \frac{1}{2}w(\C^{**}_2)} + \lrA{\Delta_3 + \frac{1}{2}w(\C')}&\leq \Delta +\frac{1}{2}w(\C^{**}_2)+\frac{1}{2} w(\C^{**}\setminus \C^{**}_2)\\
&=\Delta +\frac{1}{2}w(\C^{**})\leq\Delta +\frac{1}{2}w(\C^{*}).    
\end{align*}
Then, by Lemmas~\ref{lb-delta} and~\ref{lb-cyclepacking}, we have
\[
\Delta+\frac{1}{2}w(\C^*)\leq\lrA{\ra+\frac{1}{2}}w(I^*)+\frac{1}{2}\min\{2(1-\ra),1\}w(I^*)\leq\frac{3}{2}w(I^*),
\]
where we have $\ra\geq\frac{1}{2}$ in the worst case. 

Therefore, we have the following result.
\begin{theorem}
For unsplittable 3-CVRP, there is a $\frac{3}{2}$-approximation algorithm.
\end{theorem}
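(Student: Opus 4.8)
The plan is to build a three-stage combinatorial algorithm and then compare its output against the two lower bounds $\Delta\le\lrA{\ra+\tfrac12}w(I^*)$ (Lemma~\ref{lb-delta} with $k=3$) and $w(\C^*)\le\min\{2(1-\ra),1\}w(I^*)$ (Lemma~\ref{lb-cyclepacking}). By Assumption~\ref{ass3} every demand is $1$ or $2$. \emph{Stage one:} regard the instance as unit-demand by splitting each $2$-demand customer into two co-located unit customers, compute a minimum cycle packing $\C^*$ of the resulting graph $G'[V']$ in polynomial time via~\cite{hartvigsen1984extensions}, and shortcut it back to a cycle packing $\C^{**}$ of $G[V]$ with $w(\C^{**})\le w(\C^*)$ and $\size{C_i}\ge 3$ for each $C_i\in\C^{**}$. \emph{Stage two:} eliminate all $2$-demand customers: in a cycle of $\C^{**}$ with zero or at least two $1$-demand customers, give each $2$-demand customer a trivial tour and shortcut it out; in a cycle with exactly one $1$-demand customer $u$, choose a $2$-demand customer $u'$ on it, form the tour $(v_0,u,u',v_0)$, and give the remaining $2$-demand customers trivial tours. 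What is left is a cycle packing $\C'$ on the still-undelivered (now all unit-demand) customers, and every cycle in it has at least two vertices. \emph{Stage three:} run EX-ITP on $\C'$.

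For the analysis I would split the produced tours into three groups, bounding each by the corresponding restricted depot-distance sum $\Delta_1,\Delta_2,\Delta_3$ with $\Delta_1+\Delta_2+\Delta_3=\Delta$. The trivial tours on $2$-demand customers cost exactly $\Delta_1$. Each paired tour $(v_0,u,u',v_0)$, one per cycle $C$ in the subfamily $\C^{**}_2\subseteq\C^{**}$ of cycles with exactly one $1$-demand customer, costs $w(v_0,u)+w(v_0,u')+w(u,u')$ with $w(u,u')\le\tfrac12 w(C)$, so this group costs at most $\Delta_2+\tfrac12 w(\C^{**}_2)$. For the EX-ITP group the key observation is that $\size{C'}\ge 2$ forces $g=\max_{C'\in\C'}\ceil{\size{C'}/3}/\size{C'}\le\tfrac12$, so by Lemma~\ref{EXITP} this group costs at most $\Delta_3+\tfrac12 w(\C')$; moreover $w(\C')\le w(\C^{**}\setminus\C^{**}_2)$ since $\C'$ is obtained from those cycles by shortcutting. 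Summing the three groups yields a total of at most $\Delta+\tfrac12 w(\C^{**}_2)+\tfrac12 w(\C^{**}\setminus\C^{**}_2)=\Delta+\tfrac12 w(\C^{**})\le\Delta+\tfrac12 w(\C^*)$.

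To finish, substitute the lower bounds: $\Delta+\tfrac12 w(\C^*)\le\lrA{\ra+\tfrac12}w(I^*)+\tfrac12\min\{2(1-\ra),1\}w(I^*)$, and a one-line case split on $\ra$ (the two summands are affine in $\ra$ with opposite slopes, and both evaluations equal $\tfrac32$ at $\ra=\tfrac12$, which is the worst case) shows this is $\le\tfrac32 w(I^*)$, giving the claimed ratio. I expect the main obstacle to be Stage two: I must guarantee that deleting the $2$-demand customers leaves \emph{genuine} cycles of $G[V]$ (hence the shortcutting) and, more subtly, that no leftover cycle shrinks below two vertices, since the bound $g\le\tfrac12$ used in the EX-ITP step depends on it — this is exactly why a cycle carrying a single $1$-demand customer cannot simply have its $2$-demand customers stripped away but must instead be treated by the dedicated pairing tour. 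Checking feasibility of the auxiliary tours is then routine: every trivial tour carries demand $\le 2<3=k$, and each pairing tour carries exactly $1+2=3=k$.
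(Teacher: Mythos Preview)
Your proposal is correct and follows essentially the same approach as the paper's own proof: the three-stage algorithm (lift to unit-demand and compute $\C^*$, then shortcut to $\C^{**}$; handle $2$-demand customers by trivial tours plus a pairing tour on each cycle with a lone $1$-demand customer; run EX-ITP on the residual packing $\C'$), the three-way cost split $\Delta_1+\Delta_2+\Delta_3=\Delta$, the bound $w(u,u')\le\tfrac12 w(C)$ for the pairing tours, the observation $\size{C'}\ge 2\Rightarrow g\le\tfrac12$ for EX-ITP, and the final comparison with Lemmas~\ref{lb-delta} and~\ref{lb-cyclepacking} all match the paper exactly. Your explicit remark about why the single-$1$-demand cycles need the dedicated pairing (to keep $\size{C'}\ge 2$) is precisely the subtlety the paper also builds the case split around.
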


We can see that the algorithm for for unsplittable 3-CVRP is more complicated than the algorithm for splittable 3-CVRP.

\subsection{Unsplittable 4-CVRP}\label{unsplit-4-cvrp}
For unsplittable 4-CVRP, we will improve the approximation ratio from $2.051$~\cite{gupta2023local} to $\frac{7}{4}=1.750$.
Assumption~\ref{ass3} ensures that each customer's demand is either 1, 2 or 3.

Our algorithm first computes a cycle cover $\C_{\bmod2}$, where ``$\bmod~2$'' means that for each cycle in the cycle cover, the total demand of customers in the cycle is even and at least 4.
We call this cycle cover as a mod-$2$-cycle cover. It is computed in the following way: we construct a unit-demand instance $G'$ by replacing each customer $v_i$ with $d_i$ unit-demand customers as the same place. First, we compute a minimum weight perfect matching $\M^*$ in $G'[V']$, where $V'$ denotes the set of all customers in $G'$, and then compute a minimum weight perfect matching $\M^{**}$ in $G'[V']\setminus \M^*$. Let $\C'_{\bmod2}=\M^*\cup\M^{**}$. Note that $\C'_{\bmod2}$ may not be a cycle cover in the original graph $G[V]$. 
Similar to the analysis in the previous subsection, by shortcutting all cycles in $\C'_{\bmod2}$, we can obtain a cycle cover $\C_{\bmod2}$ in $G[V]$. Note that we have $w(\C_{\bmod2})\leq w(\C'_{\bmod2})\leq w(\C_{4}^{*})$ by the triangle inequality and Lemma~\ref{lb-mod2}, and it holds that $\left|C\right|=\sum_{v_j\in C}d_j\geq 4$ and $\left|C\right|\bmod 2=0$ for each cycle $C\in \C_{\bmod2}$.

Then, our algorithm uses this mod-$2$-cycle cover $\C_{\bmod2}$ to obtain a feasible solution for unsplittable 4-CVRP.
For different types of cycles in $\C_{\bmod2}$, we have different operations. Consider a cycle $C\in\C_{\bmod2}$. Let $x$ be the sum of the demand of all 1-demand and 2-demand customers in $C$.
We classify the cycles in $\C_{\bmod2}$ into several types $\C_{<i>}$ according to the value of $x$. We use $V_i$ to denote the set of customers contained in a cycle in $\C_{<i>}$ and define $\Delta_i= \sum_{v_j\in V_i} d_j w(v_0, v_j)$. Note that we have $w(\C_{<i>})\leq2\Delta_i$ by the triangle inequality. Moreover, by the proof of Theorem~\ref{EXITP}, for any $g_1$ and $g_2$ with $0<g_1\leq g_2\leq1$, we have
\begin{equation}\label{freeq}
2g_1\Delta_i+(1-g_1)w(\C_{<i>})\leq2g_2\Delta_i+(1-g_2)w(\C_{<i>}).
\end{equation}
We may use the above inequality frequently. 

Now, we are ready to introduce our algorithm.

\textbf{Type 0:} The set of the cycles with $x=0$, denoted by $\C_{<0>}$. For each cycle $C\in\C_{<0>}$, there are no 1-demand or 2-demand customers, and we assign a trivial tour for each 3-demand customer in $C$. The total weight of these trivial tours is $\frac{2}{3}\Delta_0\leq \frac{3}{4}\Delta_0+\frac{5}{8}w(\C_{<0>})$.

\textbf{Type 1:} The set of the cycles with $x\bmod 3=0$ and $x\geq 3$, denoted by $\C_{<1>}$.
For each cycle $C\in\C_{<1>}$, we assign a trivial tour for each 3-demand customer in $C$ and then obtain a cycle $C'$ from $C$ by shortcutting all 3-demand customers.
The set of cycles $C'$ is denoted by $\C'_{<1>}$. Then, we apply EX-UITP on $\C'_{<1>}$.

We analyze the weight of the tours assigned in this step. Let $V'_1$ be the set of 1-demand and 2-demand customers in $V_1$ and $V''_1=V_1\setminus V_1'$ be the set of 3-demand customers in $V_1$. Let $\Delta'_1= \sum_{v_j\in V'_1} d_j w(v_0, v_j)$ and $\Delta''_1= \sum_{v_j\in V''_1}d_jw(v_0, v_j)$. Note that for each cycle $C'$ in $\C'_{<1>}$ we have $\size{C'}=x\geq3$.
For the tours computed by EX-UITP on $\C'_{<1>}$, by Lemma~\ref{EXUITP}, the total weight is at most $\frac{2}{3}\Delta'_1+ \frac{2}{3}w(\C'_{<1>})\leq\frac{3}{4}\Delta'_1+ \frac{5}{8}w(\C'_{<1>})\leq \frac{3}{4}\Delta'_1+ \frac{5}{8}w(\C_{<1>})$, where the first inequality follows from (\ref{freeq}) and the second from the triangle inequality.
For the trivial tours visiting the 3-demand customers, the total weight is at most $\frac{2}{3}\Delta''_1\leq\frac{3}{4}\Delta''_1$.
In total, the weight is at most $\frac{3}{4}\Delta'_1+\frac{3}{4}\Delta''_1+ \frac{5}{8}w(\C_{<1>})=\frac{3}{4}\Delta_1+\frac{5}{8}w(\C_{<1>})$.

\textbf{Type 2:} The set of the cycles with $x\bmod 3=1$ and $x\geq 16$, denoted by $\C_{<2>}$.
For each cycle $C\in\C_{<2>}$, we assign a trivial tour for each 3-demand customer in $C$ and then obtain a cycle $C'$ from $C$ by shortcutting all 3-demand customers.
The set of cycles $C'$ is denoted by $\C'_{<2>}$. Then, we apply EX-UITP on $\C'_{<2>}$.

By using an analysis similar to the one for the type 1 case, we can obtain that the weight of the tours in this step is at most $\frac{3}{4}\Delta_2+\frac{5}{8}w(\C_{<2>})$.

\textbf{Type 3:} The set of the cycles with $x\bmod 3=2$ and $x\geq 8$, denoted by $\C_{<3>}$.
For each cycle $C\in\C_{<3>}$, we assign a trivial tour for each 3-demand customer in $C$ and then obtain a cycle $C'$ from $C$ by shortcutting all 3-demand customers.
The set of cycles $C'$ is denoted by $\C'_{<3>}$. Then, we apply EX-UITP on $\C'_{<3>}$.

By using an analysis similar to the one for the type 1 case, we can obtain that the weight of the tours in this step is at most $\frac{3}{4}\Delta_3+ \frac{5}{8}w(\C_{<3>})$.

We still have several cases, where $x$ is a small integer.

\textbf{Type 4:} The set of the cycles with $x=1$, denoted by $\C_{<4>}$. Each cycle $C\in\C_{<4>}$  contains exactly one 1-demand customer $u$ and at least one 3-demand customer $u'$ since the total demand in the cycle is even.
We assign a tour $T$ visiting only the two customers $u$ and $u'$ and a trivial tour for each 3-demand customer in $C$ except $u'$. Then, we can obtain a cycle $C'$ from $C$ by shortcutting the 3-demand customers except $u'$. Let $V'_4$ be the set of customers contained in $T$ and let $\Delta'_4= \sum_{v_j\in V'_4} d_j w(v_0, v_j)$.
Note that $\size{C'}=d_u+d_{u'}=1+3=4$. By Lemma~\ref{rule1}, we can assign a tour $T$ with weight at most $\frac{1}{2}\Delta'_4+\frac{3}{4}w(C')\leq\frac{3}{4}\Delta'_4+\frac{5}{8}w(C')\leq\frac{3}{4}\Delta'_4+\frac{5}{8}w(C)$, where the first inequality follows from (\ref{freeq}) and the second from the triangle inequality.
Let $V''_4=C\setminus V'_4$ and $\Delta''_4=\sum_{v_j\in V''_4} d_j w(v_0, v_j)$.
For the trivial tours visiting the other 3-demand customers in $V''_4$, the total weight is at most $\frac{2}{3}\Delta''_4\leq\frac{3}{4}\Delta''_4$.
Thus, the weight of the tours in this step is at most $\frac{3}{4}\Delta_4+\frac{5}{8}w(\C_{<4>})$.

\textbf{Type 5:} The set of the cycles with $x=4$, denoted by $\C_{<5>}$. For each cycle $C\in\C_{<5>}$, we assign a trivial tour for each 3-demand customer in $C$ and then obtain a cycle $C'$ from $C$ by shortcutting all 3-demand customers. Then, we assign the last tour $T$ according to $C'$.
Let $V'_5$ be the set of customers contained in $T$ and let $\Delta'_5= \sum_{v_j\in V'_5} d_j w(v_0, v_j)$. By Lemma~\ref{rule1}, we can assign the tour $T$ with weight at most $\frac{1}{2}\Delta'_5+\frac{3}{4}w(C')\leq\frac{3}{4}\Delta'_5+\frac{5}{8}w(C')\leq\frac{3}{4}\Delta'_5+\frac{5}{8}w(C)$, where the first inequality follows from (\ref{freeq}) and the second from the triangle inequality. 
Let $V''_5=C\setminus V'_5$ and $\Delta''_5=\sum_{v_j\in V''_5}d_jw(v_0, v_j)$.
For the trivial tours visiting the 3-demand customers in $V''_5$, the total weight is at most $\frac{2}{3}\Delta''_5\leq\frac{3}{4}\Delta''_5$.
Thus, the weight of the tours in this step is at most $\frac{3}{4}\Delta_5+\frac{5}{8}w(\C_{<5>})$.

\textbf{Type 6:} The set of the cycles with $x=7$, denoted by $\C_{<6>}$. For each cycle $C\in\C_{<6>}$, we assign a trivial tour for each 3-demand customer in $C$ and then obtain a cycle $C'$ from $C$ by shortcutting all 3-demand customers. For $C'$, we assign two tours $T_1$ and $T_2$ with capacities 4 and 3 respectively.
Define $V'_6$, $\Delta'_6$, $V''_6$ and $\Delta''_6$ in the similar way as above.
By Lemma~\ref{rule2}, we can assign the tours $T_1$ and $T_2$ with total weight at most $\frac{4}{7}\Delta'_6+\frac{5}{7}w(C')\leq\frac{3}{4}\Delta'_6+\frac{5}{8}w(C')\leq\frac{3}{4}\Delta'_6+\frac{5}{8}w(C)$, where the first inequality follows from (\ref{freeq}) and the second from the triangle inequality.
For the trivial tours visiting the 3-demand customers in $V''_6$, the total weight is at most $\frac{2}{3}\Delta''_6\leq\frac{3}{4}\Delta''_6$.
Thus, the weight of the tours in this step is at most $\frac{3}{4}\Delta_6+\frac{5}{8}w(\C_{<6>})$.

\textbf{Type 7:} The set of the cycles with $x=10$, denoted by $\C_{<7>}$. For each cycle $C\in\C_{<7>}$, we assign a trivial tour for each 3-demand customer in $C$ and then obtain a cycle $C'$ from $C$ by shortcutting all 3-demand customers. For $C'$, we assign three tours $T_1, T_2$ and $T_3$ with capacities 4, 3 and 3, respectively. Define $V'_7$, $\Delta'_7$, $V''_7$ and $\Delta''_7$ in the similar way as above.
By Lemma~\ref{rule2}, we can assign the tours $T_1$, $T_2$ and $T_3$ with total weight at most $\frac{3}{5}\Delta'_7+\frac{7}{10}w(C')\leq \frac{3}{4}\Delta'_7+\frac{5}{8}w(C')\leq\frac{3}{4}\Delta'_7+\frac{5}{8}w(C)$, where the first inequality follows from (\ref{freeq}) and the second from the triangle inequality. For the trivial tours visiting the 3-demand customers, the total weight is at most $\frac{2}{3}\Delta''_7\leq\frac{3}{4}\Delta''_7$.
Thus, the weight of the tours in this step is at most $\frac{3}{4}\Delta_7+\frac{5}{8}w(\C_{<7>})$.

\textbf{Type 8:} The set of the cycles with $x=5$, denoted by $\C_{<8>}$.
Each cycle $C\in\C_{<8>}$  contains at least one 3-demand customer $u'$ because the total demand in the cycle $C$ is even. We assign a trivial tour for each 3-demand customer in $C$ except $u'$ and then obtain a cycle $C'$ by shortcutting the 3-demand customers having been delivered. Define $V'_8$, $\Delta'_8$, $V''_8$ and $\Delta''_8$ in the similar way as above. For $C'$, the total demand of which is 5+3=8, by Lemma~\ref{rule3}, we can obtain a solution on $C'$ with weight at most $\frac{3}{4}\Delta'_{8}+\frac{5}{8}w(C')\leq\frac{3}{4}\Delta'_{8}+\frac{5}{8}w(C')\leq\frac{3}{4}\Delta'_{8}+\frac{5}{8}w(C)$, where the first inequality follows from (\ref{freeq}) and the second from the triangle inequality. For the trivial tours visiting the 3-demand customers, the total weight is at most $\frac{2}{3}\Delta''_{8}\leq\frac{3}{4}\Delta''_{8}$.
Thus, the weight of the tours in this step is at most $\frac{3}{4}\Delta_{8}+\frac{5}{8}w(\C_{<8>})$.

\textbf{Type 9:} The set of the cycles with $x=13$, denoted by $\C_{<9>}$.
Each cycle $C\in\C_{<9>}$  contains at least one 3-demand customer $u'$ because the total demand in the cycle $C$ is even. We assign a trivial tour for each 3-demand customer in $C$ except $u'$ and then obtain a cycle $C'$ containing only all 1-demand and 2-demand customers and $u'$ by shortcutting the 3-demand customers having been delivered. Define $V'_9$, $\Delta'_9$, $V''_9$ and $\Delta''_9$ in the similar way as above. For $C'$, the total demand of which is 13+3=16, by Lemma~\ref{rule3}, we can obtain a solution on $C'$ with weight at most $\frac{3}{4}\Delta'_9+\frac{5}{8}w(C')\leq\frac{3}{4}\Delta'_9+\frac{5}{8}w(C)$ by the triangle inequality. For the trivial tours visiting the 3-demand customers, the total weight is at most $\frac{2}{3}\Delta''_9\leq\frac{3}{4}\Delta''_9$.
Thus, the weight of the tours in this step is at most $\frac{3}{4}\Delta_9+\frac{5}{8}w(\C_{<9>})$.

\textbf{Type 10:} The set of the cycles with $x=2$, denoted by $\C_{<10>}$.
Each cycle $C\in\C_{<10>}$  contains one 2-demand customer $u$ or two 1-demand customers $u_1$ and $u_2$ and at least two 3-demand customers $u'$ and $u''$ because the total demand in the cycle $C$ is even and at least 4. We assign a trivial tour for each 3-demand customer in $C$ except $u'$ and $u''$ and then obtain a cycle $C'$ containing only $\{u, u', u''\}$ (or $\{u_1, u_2, u', u''\}$) by shortcutting the 3-demand customers having been delivered. For $C'$, we assign three tours $T_1, T_2$ and $T_3$ with capacities 3, 3 and 2, respectively.
Define $V'_{10}$, $\Delta'_{10}$, $V''_{10}$ and $\Delta''_{10}$ in the similar way as above.
Note that $\size{C'}=8$. By considering all cases of the positions of these three customers on $C'$, it can be verified that we can assign the tours $T_1$, $T_2$ and $T_3$ with total weight at most $\frac{3}{4}\Delta'_{10}+\frac{5}{8}w(C')\leq\frac{3}{4}\Delta'_{10}+\frac{5}{8}w(C)$ by the triangle inequality.
For the trivial tours visiting the other 3-demand and 4-demand customers in $V''_{10}$, the total weight is at most $\frac{2}{3}\Delta''_{10}\leq\frac{3}{4}\Delta''_{10}$.
Thus, by summing all cycles in $\C_{<10>}$, the weight of the tours in this step is at most $\frac{3}{4}\Delta_{10}+\frac{5}{8}w(\C_{<10>})$.

Recall that $w(\C_{\bmod2})\leq w(\C^*_4)$. We have the following lemma.
\begin{lemma}
Given the mod-$2$-cycle cover $\C_{\bmod2}$, for unsplittable 4-CVRP, there is a polynomial-time algorithm that computes an itinerary $I$ with $w(I)\leq \frac{3}{4}\Delta+\frac{5}{8}w(\C_{\bmod2})\leq\frac{3}{4}\Delta+\frac{5}{8}w(\C^*_4)$.
\end{lemma}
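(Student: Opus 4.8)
The proof assembles the per-type bounds already established above for the ten cases of the algorithm, after checking that the cases are exhaustive and that every invocation of a structural lemma is legitimate. First I would argue that Types 0--10 partition $\C_{\bmod 2}$. Fix a cycle $C\in\C_{\bmod 2}$; its total demand $\size{C}$ is even and at least $4$, and if $x$ denotes the total demand of the $1$- and $2$-demand customers of $C$, then $\size{C}-x$ is a nonnegative multiple of $3$ (contributed by the $3$-demand customers). Splitting on $x\bmod 3$: the residue $0$ gives $x\in\{0,3,6,\dots\}$, covered by Type 0 ($x=0$) and Type 1 ($x\ge 3$); the residue $2$ gives $x\in\{2,5,8,11,\dots\}$, covered by Type 10 ($x=2$), Type 8 ($x=5$) and Type 3 ($x\ge 8$); the residue $1$ gives $x\in\{1,4,7,10,13,16,\dots\}$, covered by Type 4, Type 5, Type 6, Type 7, Type 9 (for $x=1,4,7,10,13$) and Type 2 ($x\ge 16$). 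Hence every cycle is assigned to exactly one type, and the algorithm runs in polynomial time, since each step either computes minimum (perfect) matchings or cycle packings in the unit-demand blow-up graph, performs shortcutting, or invokes one of the polynomial-time routines of Lemmas~\ref{rule1}--\ref{rule3}, \ref{EXUITP}, or the EX-ITP algorithm.

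Second, I would verify that in each type the cycle handed to a structural lemma meets that lemma's hypotheses. For Types 1--3 the small cycle $C'$ obtained by deleting all $3$-demand customers has $\size{C'}=x\ge 3$ and contains no big customer, so Lemma~\ref{EXUITP} applies. For the small-constant Types 4--10 one must use the parity of $\size{C}$: when $x$ is odd the number of $3$-demand customers is odd and hence positive, and when $x=2$ it is even and positive (here $\size{C}\ge 4$ rules out the value $0$), which is exactly what guarantees the existence of the $3$-demand customer(s) the construction retains. This yields $\size{C'}=4$ in Types 4--5 (Lemma~\ref{rule1}), $\size{C'}=7,10$ with no big customer in Types 6--7 (Lemma~\ref{rule2}), $\size{C'}=8,16$ with exactly one big customer of demand $3=\floor{k/2}+1$ in Types 8--9 (Lemma~\ref{rule3}), and $\size{C'}=8$ in Type 10, whose few configurations $\{u,u',u''\}$ or $\{u_1,u_2,u',u''\}$ are checked directly. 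In every type the text above has already bounded the weight contributed by that type by $\frac34\Delta_i+\frac58 w(\C_{<i>})$, using $w(\C_{<i>})\le 2\Delta_i$ and the monotonicity of $2g\Delta_i+(1-g)w(\C_{<i>})$ in $g$ to raise each coefficient pair to the common $(\frac34,\frac58)$.

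Finally, since the sets $\{V_i\}$ (hence $\{\Delta_i\}$) and $\{\C_{<i>}\}$ (hence $\{w(\C_{<i>})\}$) partition the customers of $\C_{\bmod 2}$ and its total edge weight, summing the per-type bounds gives $w(I)\le\sum_i\bigl(\frac34\Delta_i+\frac58 w(\C_{<i>})\bigr)=\frac34\Delta+\frac58 w(\C_{\bmod 2})$, and $w(\C_{\bmod 2})\le w(\C^*_4)$ holds by Lemma~\ref{lb-mod2}. The main obstacle is precisely this combination of the exhaustiveness of the type list with the feasibility and structure checks for Types 4--10, since a careless shortcutting there could leave a degenerate small cycle; the evenness of $\size{C}$ together with $\size{C}\ge 4$ is the single fact that rules out the bad configurations uniformly.
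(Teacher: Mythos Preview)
Your proposal is correct and follows the same approach as the paper: the lemma is a summary of the preceding per-type analysis, and you supply exactly the missing glue by verifying that Types~0--10 exhaust all residues of $x\bmod 3$ (together with the finitely many small values of $x$), that each invocation of Lemmas~\ref{rule1}--\ref{rule3} and~\ref{EXUITP} is legitimate via the parity of $\size{C}$, and that summing the per-type bounds $\tfrac34\Delta_i+\tfrac58 w(\C_{<i>})$ gives the claim. This matches the paper's intent, which also leaves Type~10 to a direct check and relies on the monotonicity of $2g\Delta+(1-g)w$ in $g$ to unify all coefficients at $g=\tfrac38$.
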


By Lemmas~\ref{lb-delta} and~\ref{lb-cyclepacking}, we have
\[
\frac{3}{4}\Delta+\frac{5}{8}w(\C^*_4)\leq\frac{3}{4}\lrA{\ra+\frac{1}{2}}w(I^*)+\frac{5}{8}\min\{2(1-\ra),1\}w(I^*)\leq\frac{7}{4}w(I^*),
\]
where we have $\ra=\frac{1}{2}$ in the worst case. 

Therefore, we have the following result.
\begin{theorem}
For unsplittable 4-CVRP, there is a $\frac{7}{4}$-approximation algorithm.
\end{theorem}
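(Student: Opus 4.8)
The plan is to take the polynomial-time algorithm built above and feed its output guarantee into the $\ra$-parametrized lower bounds of Section~\ref{SEC3}. First I would run the procedure exactly as described: compute the mod-$2$-cycle packing $\C_{\bmod2}$ via two successive minimum perfect matchings on the unit-ized instance followed by shortcutting; classify each cycle $C\in\C_{\bmod2}$ into one of Types~0--10 according to the combined demand $x$ of its demand-$1$ and demand-$2$ customers; for each cycle peel off a trivial tour for (almost all of) its demand-$3$ customers, shortcut them out to get a small residual cycle $C'$, and route $C'$ by EX-UITP (Lemma~\ref{EXUITP}) when $x$ is large, or by the specialized routines of Lemmas~\ref{rule1}--\ref{rule3} when $x$ is one of the few small constants. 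The type-by-type accounting above shows that the weight charged to each $C$ together with its peeled customers is at most $\tfrac{3}{4}\Delta_C+\tfrac{5}{8}w(C)$, so summing over $\C_{\bmod2}$ yields a feasible itinerary $I$ with $w(I)\le\tfrac{3}{4}\Delta+\tfrac{5}{8}w(\C_{\bmod2})$; since $w(\C_{\bmod2})\le w(\C_4^*)$ by Lemma~\ref{lb-mod2}, also $w(I)\le\tfrac{3}{4}\Delta+\tfrac{5}{8}w(\C_4^*)$.

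Next I would plug in the two lower bounds. By Lemma~\ref{lb-delta} with $k=4$ we have $\Delta\le(\ra+1)\,w(I^*)$, and by Lemma~\ref{lb-cyclepacking} we have $w(\C_4^*)\le\min\{2(1-\ra),1\}\,w(I^*)$. Hence $w(I)\le\bigl(\tfrac{3}{4}(\ra+1)+\tfrac{5}{8}\min\{2(1-\ra),1\}\bigr)w(I^*)$. The bracketed coefficient is piecewise linear in $\ra\in[0,1]$ with its only breakpoint at $\ra=\tfrac{1}{2}$: it equals $\tfrac{3}{4}\ra+\tfrac{11}{8}$ (increasing) on $[0,\tfrac{1}{2}]$ and $2-\tfrac{1}{2}\ra$ (decreasing) on $[\tfrac{1}{2},1]$, so its maximum is attained at $\ra=\tfrac{1}{2}$ and equals $\tfrac{7}{4}$. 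Therefore $w(I)\le\tfrac{7}{4}w(I^*)$, and since every step runs in polynomial time the claimed $\tfrac{7}{4}$-approximation follows.

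The hard part is not this final combination but the structural claim it rests on, namely that the single uniform budget $\tfrac{3}{4}\Delta_C+\tfrac{5}{8}w(C)$ is achievable for \emph{every} cycle type. For the ``clean'' types ($x=0$, $x\bmod 3=0$, or $x$ large with $x\bmod 3\in\{1,2\}$) this follows from Lemma~\ref{EXUITP}, the triangle-inequality bounds $w(C')\le w(C)\le 2\Delta_C$, and the monotonicity fact that $2g_1\Delta+(1-g_1)w(\cdot)\le 2g_2\Delta+(1-g_2)w(\cdot)$ whenever $g_1\le g_2\le 1$. The real obstruction is the finitely many exceptional residuals $x\in\{1,2,4,5,7,10,13\}$: there, after removing demand-$3$ customers, the leftover demand $x+3$ (or the forced partition of $C'$ into tours) no longer fits the $\bmod 3$ pattern that EX-UITP exploits, so one must route $C'$ via the single-tour bound (Lemma~\ref{rule1}), the small-cycle partition bound (Lemma~\ref{rule2}), or the one-big-customer bound (Lemma~\ref{rule3}) and verify case by case that the resulting weight still lands within the $\tfrac{3}{4}$/$\tfrac{5}{8}$ target; establishing this handful of cases is where essentially all the work goes, after which the lower-bound substitution and the one-variable maximization above finish the proof.
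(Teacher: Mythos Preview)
Your proposal is correct and follows essentially the same route as the paper: it invokes the lemma $w(I)\le\tfrac{3}{4}\Delta+\tfrac{5}{8}w(\C_4^*)$ established via the Type~0--10 case analysis, then substitutes the bounds $\Delta\le(\ra+1)w(I^*)$ and $w(\C_4^*)\le\min\{2(1-\ra),1\}w(I^*)$ and maximizes over $\ra$ to obtain $\tfrac{7}{4}$ at $\ra=\tfrac{1}{2}$. Your piecewise-linear analysis is in fact more explicit than the paper's one-line computation (which, incidentally, contains a typo: it writes $\ra+\tfrac{1}{2}$ where $\ra+1$ is meant).
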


As mentioned before, the $\frac{7}{4}$-approximation algorithm is based on the idea (a good mod-2-cycle cover) of the previous $\frac{5}{3}$-approximation algorithm for unit-demand 4-CVRP. It remains unknown whether this result can be improved using the idea (a good 4-path cover) of the previous $\frac{3}{2}$-approximation algorithm for unit-demand 4-CVRP.

\section{An Improvement for Unsplittable $k$-CVRP}\label{Sec-Split-Initial}
In this section, we improve unsplittable $k$-CVRP with larger $k$.
Given an $\alpha$-approximation algorithm for metric TSP, where $1\leq\alpha\leq\frac{3}{2}$, we will obtain an approximation ratio of $\alpha+1+\ln2-\frac{2\alpha}{k}-\Theta(\frac{1}{k})$. Our algorithm uses several techniques. The first one is the LP-based technique used in~\cite{uncvrp}.

\subsection{An LP-Based Algorithm}
When $k$ is a fixed integer, there are at most $n^k=n^{O(1)}$ different tours, denoted by $\T$.
The idea of the LP-based algorithm is to define a variable $x_T$ for each tour $T\in\T$, and then solve the following LP in $n^{O(1)}$ time.
\begin{alignat}{3}
\text{minimize} & \quad & \sum_{T\in\T} w(T)\cdot x_T \notag\\
\text{subject to} && \sum_{\substack{T\in \T:\\ v\in T}}x_T \geq\   & 1, \quad && \forall\  v \in V, \notag\\
&& x_T \geq\   & 0, \quad && \forall\  T \in \T. \notag
\end{alignat}

Recall that $k$-CVRP can be reduced to the minimum weight $k$-set cover problem. The above LP follows the same framework as the LP relaxation for the weighted $k$-set cover problem. Each vertex $v\in V$ is regarded as an element. Each tour $T\in\T$ is regarded as a set containing all vertices $v\in T$, with weight $w(T)$. 
The goal is to find a minimum weight collection of sets (tours) to cover all elements (vertices).

Next, we give a simple analysis of the LP-based algorithm. The details can be seen in~\cite{uncvrp}.

Using a constant $\gamma\geq0$, we make a randomized rounding independently for each tour with a probability of $\min\{1,\gamma\cdot x_T\}$.
The chosen tours will form a partial itinerary $I'$ with an expected weight of $\mathbb{E}[w(I')]\leq\gamma\cdot w(I^*)$. The customer in $I'$ is denoted by $V'$. Then, for the undelivered customers in $V''=V\setminus V'$, we use UITP to compute an itinerary $I''$ such that $\mathbb{E}[w(I'')]\leq e^{-\gamma}\cdot\frac{4}{k}\Delta+(1-\frac{2}{k})w(H'')$ where $H''$ is a Hamiltonian cycle on $V''\cup \{v_0\}$. Note that given a Hamiltonian cycle $H$ on $V\cup\{v_0\}$, we can obtain a Hamiltonian cycle $H''$ on $V''\cup \{v_0\}$ by shortcutting such that $w(H'')\leq w(H)$. We have
$
\mathbb{E}[I]=\mathbb{E}[I'\cup I'']\leq\gamma \cdot w(I^*)+e^{-\gamma}\cdot \frac{4}{k}\Delta+(1-\frac{2}{k})w(H).
$

Combining the LP-based method with the refined AG-UITP, we obtain an LP-UITP algorithm.
\begin{lemma}\label{LPUITP}
Given a Hamiltonian cycle $H$ on $V\cup\{v_0\}$, for unsplittable $k$-CVRP with fixed $k\geq 3$ and any constant $\gamma\geq0$, LP-UITP in $n^{O(k)}=n^{O(1)}$ time outputs a solution with weight at most
$
\gamma \cdot w(I^*)+e^{-\gamma}\cdot \frac{2}{\floor{ k/2}+1}\Delta+\lra{1-\frac{1}{\floor{ k/2}+1}}w(H).
$
\end{lemma}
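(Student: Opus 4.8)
The plan is to run the LP-based algorithm of \cite{uncvrp}, but use the refined AG-UITP algorithm (Lemma~\ref{REUITP}) in place of the classical UITP for the completion phase, and then track expectations carefully. First, since $k$ is a constant, the set-cover LP has $\size{\T}=n^{O(k)}=n^{O(1)}$ variables and only $n$ covering constraints, so it has polynomial size and an optimal fractional solution $\{x_T\}_{T\in\T}$ is computable in $n^{O(1)}$ time. I would then observe that the optimal unsplittable itinerary $I^*$ decomposes into feasible tours that together cover every customer, so their indicator vector is LP-feasible; hence $\sum_{T\in\T}w(T)x_T\le w(I^*)$.

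Next I would perform the standard randomized rounding: independently pick each tour $T$ with probability $\min\{1,\gamma x_T\}$, let $I'$ be the union of the picked tours, and (when a customer lies on several picked tours) keep exactly one tour responsible for its delivery, shortcutting it out of the rest — this keeps every tour feasible and does not increase weight. By linearity, $\mathbb{E}[w(I')]\le\gamma\sum_Tw(T)x_T\le\gamma\, w(I^*)$. Letting $V''$ denote the set of customers not yet served, for any customer $v$ the covering constraint $\sum_{T\ni v}x_T\ge1$ together with $1-y\le e^{-y}$ gives
\[
\Pr[v\in V'']=\prod_{T\ni v}\lrA{1-\min\{1,\gamma x_T\}}\le\prod_{T\ni v}e^{-\gamma x_T}\le e^{-\gamma}.
\]

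For the completion phase I would shortcut $H$ to a Hamiltonian cycle $H''$ on $V''\cup\{v_0\}$ (so $w(H'')\le w(H)$ by the triangle inequality) and run the refined AG-UITP algorithm of Lemma~\ref{REUITP} on it to serve the customers of $V''$; the residual instance is again an unsplittable $k$-CVRP instance since each surviving customer's demand is untouched. This yields an itinerary $I''$ with $w(I'')\le\frac{2}{\floor{k/2}+1}\Delta''+\lrA{1-\frac{1}{\floor{k/2}+1}}w(H'')$, where $\Delta''=\sum_{v_i\in V''}d_iw(v_0,v_i)$. Taking expectations and using $\mathbb{E}[\Delta'']=\sum_{v_i\in V}d_iw(v_0,v_i)\Pr[v_i\in V'']\le e^{-\gamma}\Delta$ along with $w(H'')\le w(H)$ (which holds deterministically), and then adding the two phases, I obtain
\[
\mathbb{E}[w(I'\cup I'')]\le\gamma\, w(I^*)+e^{-\gamma}\cdot\frac{2}{\floor{k/2}+1}\Delta+\lrA{1-\frac{1}{\floor{k/2}+1}}w(H).
\]
Finally, since the only randomness is the independent rounding of the $n^{O(1)}$ tours and the right-hand side above (with $\Delta''$ replaced by its conditional expectation, which is a product of simple factors) is an efficiently computable pessimistic estimator, the method of conditional expectations turns this into a deterministic $n^{O(1)}$-time algorithm meeting the stated bound.

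The argument is mostly routine bookkeeping; the one place that needs care is the completion phase — I must verify that $V''$ really induces a legitimate unsplittable $k$-CVRP instance so that Lemma~\ref{REUITP} applies verbatim, and that its guarantee is additive over customers in the $\Delta$-term, since this additivity is exactly what lets the per-customer coverage bound $e^{-\gamma}$ pass through the expectation. Solving the LP, the inequality $1-y\le e^{-y}$, shortcutting $H$, and the derandomization are all standard.
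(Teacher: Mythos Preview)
Your proposal is correct and follows essentially the same approach as the paper: solve the set-cover LP over tours, round each tour independently with probability $\min\{1,\gamma x_T\}$ to obtain $I'$ with $\mathbb{E}[w(I')]\le\gamma\,w(I^*)$, bound each customer's survival probability by $e^{-\gamma}$, shortcut $H$ to $H''$ on the surviving customers, and apply the refined AG-UITP bound of Lemma~\ref{REUITP}. Your write-up is in fact more detailed than the paper's sketch (which merely cites \cite{uncvrp} for the rounding analysis), and your explicit mention of derandomization via conditional expectations is a welcome addition that the paper leaves implicit.
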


Note that LP-UITP reduces to the refined AG-UITP when $\gamma=0$.

\subsection{Trade-off Between Two Results}
Next, we show that by using Lemma~\ref{LPUITP} and two initial Hamiltonian cycles, we can obtain the desired approximation ratio by doing a trade-off.

Our algorithms follows the ideas in Section~\ref{sec-split-initial}. 
We use two initial Hamiltonian cycles $H$ and $H_{CS}$ on $V\cup\{v_0\}$ to run LP-UITP and return the better found solution, where $H$ is an $\alpha$-approximate Hamiltonian cycle and $H_{CS}$ is the Hamiltonian cycle obtained by the Christofides-Serdyukov algorithm. 
Note that we do not consider the mod-$k$-cycle cover used in Section~\ref{sec-split-initial} since for unsplittable $k$-CVRP, the algorithms based on cycle covers become complicated to analyze (as we have shown in Section~\ref{Sec-Unsplit-34}).

The following theorem shows the approximation ratio of our algorithm.

\begin{theorem}\label{res-intitial-unsplit}
Given an $\alpha$-approximate Hamiltonian cycle $H$ on $V\cup\{v_0\}$ with $1\leq\alpha\leq\frac{3}{2}$, for unsplittable $k$-CVRP with fixed $k\geq3$, there is an approximation algorithm such that
\begin{itemize}
\item If $3\leq k\leq 5$, the approximation ratio is $\frac{2\floor{k/2}+1}{\floor{k/2}+1}+\ln\frac{k}{\floor{k/2}+1}$;
\item If $k=6$ and $\frac{7}{6}\leq\alpha\leq\frac{3}{2}$, the approximation ratio is $\frac{15}{8}+\ln\frac{4}{3}<2.163$;
\item If $k=7$ and $\frac{17}{12}\leq\alpha\leq\frac{3}{2}$, the approximation ratio is $\frac{33}{16}+\ln\frac{4}{3}<2.351$;
\item Otherwise, the approximation ratio is $\frac{(\alpha+1)\floor{k/2}+1}{\floor{k/2}+1}+\ln\frac{k-4(\alpha-1)}{\floor{k/2}+1}$.
\end{itemize}
\end{theorem}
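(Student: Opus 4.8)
The plan is to run the LP-UITP algorithm of Lemma~\ref{LPUITP} twice, with a single parameter $\gamma\ge 0$ to be optimized at the end: once on an $\alpha$-approximate Hamiltonian cycle $H$ of $V\cup\{v_0\}$ and once on the Christofides--Serdyukov cycle $H_{CS}$, and to return the cheaper output. Writing $b:=\floor{k/2}+1$, Lemma~\ref{LPUITP} bounds each output by $\gamma\, w(I^*)+e^{-\gamma}\tfrac2b\Delta+\tfrac{b-1}{b}w(\cdot)$ with $w(\cdot)\in\{w(H),w(H_{CS})\}$. Substituting $w(H)\le\alpha w(H^*)\le\alpha w(I^*)$ (Lemma~\ref{lb-tsp}), $w(H_{CS})\le\tfrac{3-\ra}{2}w(I^*)$ (Lemma~\ref{lb-chris}), and $\Delta\le\lrA{\ra+\tfrac{k-2}{2}}w(I^*)$ (Lemma~\ref{lb-delta}) gives, as functions of $\ra\in[0,1]$,
\[
\rho_H(\ra)=\gamma+e^{-\gamma}\tfrac{2\ra+k-2}{b}+\tfrac{(b-1)\alpha}{b},\qquad
\rho_{CS}(\ra)=\gamma+e^{-\gamma}\tfrac{2\ra+k-2}{b}+\tfrac{(b-1)(3-\ra)}{2b}.
\]
Since an adversary controls $\ra$ but we keep the better run, the ratio proved by the algorithm is $\min_{\gamma\ge0}\ \max_{0\le\ra\le1}\ \min\{\rho_H(\ra),\rho_{CS}(\ra)\}$, and the remaining work is to evaluate this quantity.

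First I analyze the inner $\max$--$\min$ for fixed $\gamma$. The function $\rho_H$ is increasing in $\ra$, while $\rho_{CS}$ is non-decreasing in $\ra$ iff $e^{-\gamma}\ge\tfrac{b-1}{4}$, i.e.\ iff $\gamma\le\gamma_0:=\ln\tfrac{4}{b-1}$. Moreover $\rho_H(\ra)-\rho_{CS}(\ra)=\tfrac{b-1}{2b}(2\alpha-3+\ra)$, which changes sign at $\ra=3-2\alpha\in[0,1]$ (using $1\le\alpha\le\tfrac32$). Consequently, if $\gamma\le\gamma_0$ both $\rho_H,\rho_{CS}$ are increasing, so the inner optimum is at $\ra=1$ and equals $\min\{\rho_H(1),\rho_{CS}(1)\}=\rho_{CS}(1)=:f_A(\gamma)=\gamma+e^{-\gamma}\tfrac kb+\tfrac{b-1}{b}$; if $\gamma>\gamma_0$ the inner min is $\rho_H$ on $[0,3-2\alpha]$ and $\rho_{CS}$ on $[3-2\alpha,1]$, so its maximum is at $\ra=3-2\alpha$ and equals $\rho_H(3-2\alpha)=:f_B(\gamma)=\gamma+e^{-\gamma}\tfrac{k+4-4\alpha}{b}+\tfrac{(b-1)\alpha}{b}$. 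Hence the target equals $\min\{\min_{0\le\gamma\le\gamma_0}f_A(\gamma),\ \inf_{\gamma>\gamma_0}f_B(\gamma)\}$. Both $f_A,f_B$ are convex with unconstrained minimizers $\gamma_A^\star=\ln\tfrac kb$ and $\gamma_B^\star=\ln\tfrac{k+4-4\alpha}{b}$, and minimum values $f_A(\gamma_A^\star)=\tfrac{2b-1}{b}+\ln\tfrac kb=\tfrac{2\floor{k/2}+1}{\floor{k/2}+1}+\ln\tfrac k{\floor{k/2}+1}$ and $f_B(\gamma_B^\star)=\tfrac{(\alpha+1)b-\alpha}{b}+\ln\tfrac{k+4-4\alpha}{b}=\tfrac{(\alpha+1)\floor{k/2}+1}{\floor{k/2}+1}+\ln\tfrac{k-4(\alpha-1)}{\floor{k/2}+1}$, the latter being exactly the ``Otherwise'' bound; note also $f_A(\gamma_0)=f_B(\gamma_0)$ by the definition of the crossing point, so the two branches agree on the boundary.

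It then remains to locate $\gamma_A^\star,\gamma_B^\star$ relative to $\gamma_0$ and to $0$. For $3\le k\le5$ one checks $k(b-1)<4b$, so $\gamma_A^\star<\gamma_0$ and $f_A$ attains its unconstrained minimum, while $\gamma_B^\star\le\gamma_0$ for every $\alpha\ge1$, so $f_B$ is increasing past $\gamma_0$ and contributes nothing smaller; this yields the first bullet. For $k\in\{6,7\}$ one has $k(b-1)>4b$, so $\gamma_A^\star>\gamma_0$ and $f_A$ is decreasing on $[0,\gamma_0]$ with minimum $f_A(\gamma_0)=\ln\tfrac43+\tfrac34\cdot\tfrac kb+\tfrac{b-1}{b}$; and $\gamma_B^\star\le\gamma_0$ precisely when $\alpha\ge\tfrac76$ (for $k=6$) or $\alpha\ge\tfrac{17}{12}$ (for $k=7$), in which range $f_B$ is also minimized at $\gamma_0$, giving $f_A(\gamma_0)$, which evaluates to $\tfrac{15}{8}+\ln\tfrac43$ and $\tfrac{33}{16}+\ln\tfrac43$, and $\ln\tfrac43<0.2877$ gives the numeric bounds $2.163$ and $2.351$; these are the second and third bullets. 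In every remaining case ($k\ge8$, or $k\in\{6,7\}$ with $\alpha$ below the threshold) the minimum is $f_B(\gamma_B^\star)$ with $\gamma_B^\star\in(\gamma_0,\infty)$ and $\gamma_B^\star\ge0$, the latter because $\ceil{k/2}+3\ge4\alpha$ when $\alpha\le\tfrac32$; this is the fourth bullet. The substantive inputs are only Lemmas~\ref{lb-tsp}, \ref{lb-delta}, \ref{lb-chris}, and \ref{LPUITP}; the part I expect to demand the most care is this last bookkeeping — deciding in which sub-interval of $\gamma$ each of $f_A,f_B$ is optimized, which pins down which of the four bulleted expressions is active, and in particular verifying the thresholds $\tfrac76$ and $\tfrac{17}{12}$ and that the chosen $\gamma$ is never forced below $0$.
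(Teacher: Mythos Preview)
Your proof is correct and uses the same ingredients as the paper—LP-UITP on the two Hamiltonian cycles $H$ and $H_{CS}$, combined via Lemmas~\ref{lb-tsp}, \ref{lb-delta}, \ref{lb-chris} and~\ref{LPUITP}—and arrives at the same four bulleted ratios through the same case split. The one organizational difference is the order of optimization: the paper first minimizes each bound over $\gamma$ (the common $e^{-\gamma}$ coefficient makes the optimizer identical for both cycles), producing two explicit functions of $\ra$ whose pointwise minimum is then maximized over $\ra\in[0,1]$; you instead fix a single $\gamma$, compute $\max_{\ra}\min\{\rho_H,\rho_{CS}\}$ via your $f_A/f_B$ split, and only then minimize over $\gamma$. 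Your ordering $\min_\gamma\max_{\ra}$ is formally the weaker direction of the minimax inequality, but a saddle point exists here so the values agree. A pleasant side effect of your ordering is that you never need the paper's separate verification that the regime $\tfrac{2}{\floor{k/2}+1}\bigl(\ra+\tfrac{k-2}{2}\bigr)<1$ (where the paper's optimal $\gamma$ would be negative and must be clamped to $0$) is not the bottleneck.
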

\begin{proof}
Given two Hamiltonian cycles $H$ and $H_{CS}$, we can use LP-UITP to compute two solutions. We will select the better one.
In the worst case, we assume $\frac{2}{\floor{k/2}+1}\lra{\ra+\frac{k-2}{2}}\geq1$.
We will show that the approximation ratio will be better if $\frac{2}{\floor{k/2}+1}\lra{\ra+\frac{k-2}{2}}<1$.

Using the Hamiltonian cycle $H$, by Lemma~\ref{LPUITP}, LP-UITP computes an itinerary $I_1$ with weight at most
\begin{align*}
&\min_{\gamma\geq0}\lrC{\gamma \cdot w(I^*)+e^{-\gamma}\cdot \frac{2}{\floor{k/2}+1}\Delta+\lrA{1-\frac{1}{\floor{ k/2}+1}}w(H)}\\
&\leq\min_{\gamma\geq0}\lrC{\gamma+e^{-\gamma}\cdot \frac{2}{\floor{ k/2}+1}\lrA{\ra+\frac{k-2}{2}}+\lrA{1-\frac{1}{\floor{ k/2}+1}}\alpha }w(I^*)\\
&\leq\lrA{1+\ln\frac{2}{\floor{ k/2}+1}+\ln\lrA{\ra+\frac{k-2}{2}}+\lrA{1-\frac{1}{\floor{ k/2}+1}}\alpha}w(I^*),
\end{align*}
where the first inequality follows from Lemmas~\ref{lb-tsp} and \ref{lb-delta}, and the second is obtained by setting $\gamma=\ln\lra{\frac{2}{\floor{k/2}+1}\lra{\ra+\frac{k-2}{2}}}$ since $\frac{2}{\floor{k/2}+1}\lra{\ra+\frac{k-2}{2}}\geq 1$.
Thus, the approximation ratio is $1+\ln\frac{2}{\floor{ k/2}+1}+\ln\lra{\ra+\frac{k-2}{2}}+\lra{1-\frac{1}{\floor{ k/2}+1}}\alpha$.

Using the Hamiltonian cycle $H_{CS}$, by Lemma~\ref{LPUITP}, LP-UITP computes an itinerary $I_2$ with weight at most
\begin{align*}
&\min_{\gamma\geq0}\lrC{\gamma \cdot w(I^*)+e^{-\gamma}\cdot \frac{2}{\floor{ k/2}+1}\Delta+\lrA{1-\frac{1}{\floor{ k/2}+1}}w(H_{CS})}\\
&\leq\min_{\gamma\geq0}\lrC{\gamma+e^{-\gamma}\cdot \frac{2}{\floor{ k/2}+1}\lrA{\ra+\frac{k-2}{2}}+\lrA{1-\frac{1}{\floor{ k/2}+1}}\frac{3-\ra}{2} }w(I^*)\\
&\leq\lrA{1+\ln\frac{2}{\floor{ k/2}+1}+\ln\lrA{\ra+\frac{k-2}{2}}+\lrA{1-\frac{1}{\floor{ k/2}+1}}\frac{3-\ra}{2}}w(I^*).
\end{align*}
where the first inequality follows from Lemmas~\ref{lb-tsp}, \ref{lb-delta}, and \ref{lb-chris}, and the second is obtained by setting $\gamma=\ln\lra{\frac{2}{\floor{k/2}+1}\lra{\ra+\frac{k-2}{2}}}$ since $\frac{2}{\floor{k/2}+1}\lra{\ra+\frac{k-2}{2}}\geq 1$. Thus, the approximation ratio is $1+\ln\frac{2}{\floor{k/2}+1}+\ln\lra{\ra+\frac{k-2}{2}}+\lra{1-\frac{1}{\floor{ k/2}+1}}\frac{3-\ra}{2}$.

By returning the better found solution, the results can be briefly summarized as follows.

\textbf{Case~1: $3\leq k\leq 5$.} We have $\ra=1$ in the worst case, and then by setting $\gamma=\ln\lra{\frac{2}{\floor{k/2}+1}\lra{\ra+\frac{k-2}{2}}}=\ln\frac{k}{\floor{k/2}+1}$, the approximation ratio is
\[
1+\ln\frac{k}{\floor{ k/2}+1}+\lrA{1-\frac{1}{\floor{ k/2}+1}}=\frac{2\floor{k/2}+1}{\floor{k/2}+1}+\ln\frac{k}{\floor{k/2}+1}.
\]

\textbf{Case~2: $k=6$ and $\frac{7}{6}\leq\alpha\leq\frac{3}{2}$.} We have $\ra=\frac{2}{3}$, and then by setting $\gamma=\ln\lra{\frac{2}{\floor{k/2}+1}\lra{\ra+\frac{k-2}{2}}}=\ln\frac{4}{3}$, the approximation ratio is $\frac{15}{8}+\ln\frac{4}{3}<2.163$.

\textbf{Case~3: $k=7$ and $\frac{17}{12}\leq\alpha\leq\frac{3}{2}$.} We have $\chi=\frac{1}{6}$, and then by setting $\gamma=\ln\lra{\frac{2}{\floor{k/2}+1}\lra{\ra+\frac{k-2}{2}}}=\ln\frac{4}{3}$, the approximation ratio is $\frac{33}{16}+\ln\frac{4}{3}<2.351$.

\textbf{Case~4: $k=6$ and $1\leq\alpha\leq\frac{7}{6}$, $k=7$ and $1\leq\alpha\leq\frac{17}{12}$, or $k\geq 8$.} We have $\ra=3-2\alpha$ in the worst case, and then by setting $\gamma=\ln\lra{\frac{2}{\floor{k/2}+1}\lra{\ra+\frac{k-2}{2}}}=\ln\frac{k-4(\alpha-1)}{\floor{k/2}+1}$, the approximation ratio is
\[
1+\ln\frac{k-4(\alpha-1)}{\floor{ k/2}+1}+\lrA{1-\frac{1}{\floor{k/2}+1}}\alpha=\frac{(\alpha+1)\floor{k/2}+1}{\floor{k/2}+1}
+\ln\frac{k-4(\alpha-1)}{\floor{k/2}+1}.
\]

Now, we consider the case $\frac{2}{\floor{ k/2}+1}\lrA{\ra+\frac{k-2}{2}}<1$. It is sufficient to show that we can obtain a better approximation ratio in this case.

On one hand, setting $\gamma=0$, the solution $I_1$ has weight at most
\begin{align*}
&\frac{2}{\floor{ k/2}+1}\Delta+\lrA{1-\frac{1}{\floor{ k/2}+1}}w(H)\\
&\leq\lrA{\frac{2}{\floor{k/2}+1}\lrA{\ra+\frac{k-2}{2}}+\lrA{1-\frac{1}{\floor{k/2}+1}}\alpha} w(I^*)\\
&\leq\lrA{1+\lrA{1-\frac{1}{\floor{k/2}+1}}\alpha}w(I^*),
\end{align*}
where the first inequality follows from Lemmas~\ref{lb-tsp} and~\ref{lb-delta}, and the second from $\frac{2}{\floor{k/2}+1}\lrA{\ra+\frac{k-2}{2}}<1$ by definition. Thus, the current approximation ratio is better than the previous approximation ratio in Case 4. 
Moreover, when $k=6$ or $k=7$, the current approximation ratio is at most $1+(1-\frac{1}{4})\cdot\frac{3}{2}<2.125$, which is also better than the previous approximation ratios in Cases 2 and 3.

On the other hand, since $\frac{2}{\floor{k/2}+1}\lrA{\ra+\frac{k-2}{2}}<1$ by definition, we can obtain $0\leq \ra\leq \frac{\floor{k/2}-k+3}{2}\leq\frac{1}{2}$ when $3\leq k\leq5$. Then, setting $\gamma=0$, the solution $I_2$ has weight at most
\begin{align*}
&\frac{2}{\floor{ k/2}+1}\Delta+\lrA{1-\frac{1}{\floor{ k/2}+1}}w(H_{CS})\\
&\leq\lrA{\frac{2}{\floor{k/2}+1}\lrA{\ra+\frac{k-2}{2}}+\lrA{1-\frac{1}{\floor{k/2}+1}}\frac{3-\ra}{2}}w(I^*)\\
&\leq\lrA{\frac{5}{4}+\frac{k-9/4}{\floor{k/2}+1}}w(I^*),
\end{align*}
where the first inequality follows from Lemmas~\ref{lb-delta} and \ref{lb-chris}, and the second from $\ra\leq\frac{1}{2}$, $3\leq k\leq 5$, and $(\frac{2}{\floor{k/2}+1}-\frac{1}{2}(1-\frac{1}{\floor{k/2}+1}))\ra\leq(\frac{2}{\floor{k/2}+1}-\frac{1}{2}(1-\frac{1}{\floor{k/2}+1}))\cdot\frac{1}{2}$.
Therefore, for $k = 3$, $k = 4$, and $k = 5$, the current approximation ratios are 1.625, 1.834, and 2.167, respectively, which are also better than the previous approximation ratios of 1.906, 1.955, and 2.178 in Case 1.
\end{proof}

Note that the trade-off does not apply when $\alpha = 1$; in this case, the result can already be obtained using an optimal Hamiltonian cycle, and incorporating the Hamiltonian cycle $H_{CS}$ yields no further improvement.
However, for unsplittable $k$-CVRP with $k\geq6$, by Theorem~\ref{res-intitial-unsplit}, the approximation ratio is $\frac{2\floor{k/2}+1}{\floor{k/2}+1}+\ln\frac{k}{\floor{k/2}+1}$ which already achieves $\alpha+1+\ln2-\frac{2\alpha}{k}-\Theta(\frac{1}{k})$. 

We also remark that for $3\leq k\leq 5$, we have $\ra=1$ in the worst case, and then by Lemma~\ref{lb-cyclepacking}, we may use a good mod-$k$-cycle cover to obtain some further improvements. We will use unsplittable $5$-CVRP as an example to illustrate this in Section~\ref{Sec-Unsplit-5}.

\section{A Further Improvement for Unsplittable $k$-CVRP}\label{Sec-Unsplit-Final}
In the previous section, the approximation ratio of unsplittable $k$-CVRP is $\frac{5\floor{k/2}+2}{2\floor{k/2}+2}+\ln\frac{k-2}{\floor{k/2}+1}=\frac{5}{2}+\ln2-\Theta(\frac{1}{k})$ for $k\geq 8$ under $\alpha=\frac{3}{2}$ for metric TSP.
In this section, we further improve the approximation ratio to less than $\frac{5}{2}+\ln2-\sqrt{\frac{2}{k}}$. The main idea is to use the refined lower bounds based on $\MST$ and $\Delta$, provided in Section~\ref{sec-split-final}. However, the analysis is slightly different.

Our algorithm is simply to call LP-UITP on the Hamiltonian cycle $H_{CS}$ on $V\cup\{v_0\}$.
By Lemma~\ref{LPUITP}, for unsplittable $k$-CVRP with any constant $\gamma\geq0$ and $k\geq3$, LP-UITP computes an itinerary $I$ with
$
w(I)\leq\gamma \cdot w(I^*)+e^{-\gamma}\cdot \frac{2}{\floor{ k/2}+1}\Delta+\lra{1-\frac{1}{\floor{ k/2}+1}}w(H_{CS}).
$

We first show a simple analysis.

By setting $\gamma=\ln2$, we can obtain that $w(I)\leq\ln2\cdot w(I^*)+\frac{1}{\floor{k/2}+1}\Delta+(1-\frac{1}{\floor{k/2}+1})w(H_{CS})\leq\ln2\cdot w(I^*)+\frac{1}{\floor{k/2}+1}\Delta+(1-\frac{1}{2}\cdot\frac{1}{\floor{k/2}+1})w(H_{CS})\leq \ln2\cdot w(I^*)+\frac{2}{k}\Delta+(1-\frac{1}{k})w(H_{CS})$, where the last inequality follows from $w(H_{CS})\leq 2\Delta$ by the triangle inequality.
Note that the lower bounds of unit-demand $k$-CVRP are also the lower bounds of unsplittable $k$-CVRP.
Thus, by (\ref{Choris+ITP}) and the proof of Theorem~\ref{res-split}, we have $\frac{2}{k}\Delta+(1-\frac{1}{k})w(H_{CS})<(\frac{5}{2}-\sqrt{\frac{2}{k}})w(I^*)$. Then, we have $w(I)\leq\ln2\cdot w(I^*)+\frac{2}{k}\Delta+(1-\frac{1}{k})w(H_{CS})<(\frac{5}{2}+\ln2-\sqrt{\frac{2}{k}})w(I^*)$.
Therefore, a simple analysis already leads to an approximation ratio of less than $\frac{5}{2}+\ln2-\sqrt{\frac{2}{k}}$.

Next, by carefully calling LP-UITP, as shown in Algorithm~\ref{alg}, we use a more refined analysis to obtain a better result.

\begin{algorithm}[t!]
\caption{The algorithm for unsplittable $k$-CVRP}
\label{alg}
\small
\vspace*{2mm}
\textbf{Input:} An instance of unsplittable $k$-CVRP with any $k\geq 3$. \\
\textbf{Output:} A solution for unsplittable $k$-CVRP.

\begin{algorithmic}[1]
\State Call LP-UITP with $\gamma=0$ to obtain a solution $I_1$.
\State Let $C=\frac{\floor{k/2}}{\floor{ k/2}+1}\cdot\frac{1}{2}$.
\If{$\left\lfloor\frac{-(C+1)+\sqrt{(C+1)^2+4kC}}{2}\right\rfloor \neq \left\lfloor\frac{-(C+1)+\sqrt{(C+1)^2+4(k+1)C}}{2}\right\rfloor$}
\State Let $\gamma^*=\ln\lrA{\frac{k-2z^*+(z^*)^2x^*+z^*x^*}{\floor{k/2}+1}}$, where $z^*=\Floor{\frac{-(C+1)+\sqrt{(C+1)^2+4(k+1)C}}{2}}$ and $x^*=\frac{1}{C}-\frac{k-2z^*}{(z^*)^2+z^*}$.
\Else
\State Let $z_0=\frac{\sqrt{\floor{k/2}^2+8\floor{k/2}(\floor{k/2}+1)(k+1)}-\floor{k/2}}{4(\floor{k/2}+1)}$.
\State Let $z^*=\ceil{z_0}$ if $g(\ceil{z_0})\geq g(\floor{z_0})$, and $z^*=\floor{z_0}$ otherwise, where $g(z)=1+\ln\left(\frac{k-z+1}{\floor{k/2}+1}\right)+\frac{\floor{k/2}}{\floor{ k/2}+1}\cdot\frac{3-\frac{1}{z}}{2}$.  
\State Let $\gamma^*=\ln\lrA{\frac{k-z^*+1}{\floor{k/2}+1}}$.
\EndIf
\State Call LP-UITP with $\gamma=\gamma^*$ to obtain a solution $I_2$.
\State \Return the better solution from $\{I_1,I_2\}$, i.e., $\arg\min\{w(I_1),w(I_2)\}$.
\end{algorithmic}
\end{algorithm}

\begin{theorem}\label{res-unsplit}
Given fixed $k\geq 3$, for unsplittable $k$-CVRP, the approximation ratio of LP-UITP is shown as follows.
\begin{itemize}
\item If $\floor{\frac{-(C+1)+\sqrt{(C+1)^2+4kC}}{2}}\neq \floor{\frac{-(C+1)+\sqrt{(C+1)^2+4(k+1)C}}{2}}$, where $C=\frac{\floor{k/2}}{\floor{ k/2}+1}\cdot\frac{1}{2}$, the approximation ratio is at most $h(x^*)$, where $h(x)=1+\ln\lrA{\frac{k-2\floor{\frac{1}{x}}+\floor{\frac{1}{x}}^2x+\floor{\frac{1}{x}}x}{\floor{k/2}+1}}+\frac{\floor{k/2}}{\floor{ k/2}+1}\cdot\frac{3-x}{2}$, $x^*=\frac{1}{C}-\frac{k-2z^*}{(z^*)^2+z^*}$, and $z^*=\floor{\frac{1}{x^*}}=\Floor{\frac{-(C+1)+\sqrt{(C+1)^2+4(k+1)C}}{2}}$;
\item Otherwise, the approximation ratio is at most $\max\{g(\ceil{z_0}), g(\floor{z_0})\}$, where $g(z)=1+\ln\left(\frac{k-z+1}{\floor{k/2}+1}\right)+\frac{\floor{k/2}}{\floor{ k/2}+1}\cdot\frac{3-\frac{1}{z}}{2}$, and $z_0=\frac{\sqrt{\floor{k/2}^2+8\floor{k/2}(\floor{k/2}+1)(k+1)}-\floor{k/2}}{4(\floor{k/2}+1)}$.
\end{itemize}
\end{theorem}
\begin{proof}
The details are put in Section~\ref{proof-2}.
\end{proof}

We remark that when $3 \leq k \leq 6$ (resp., $k>6$), the approximation ratio in Theorem~\ref{res-unsplit} matches (resp., is strictly better than) the result under $\alpha=\frac{3}{2}$ in Theorem~\ref{res-intitial-unsplit}. The details can be found in Table~\ref{res-2}

\section{A Further Improvement for Unsplittable 5-CVRP}\label{Sec-Unsplit-5}
In this section, we further improve the approximation ratio from $\frac{5}{3}+\ln\frac{5}{3}< 2.178$ (in the previous section) to $2.157$ for unsplittable 5-CVRP. Different from our algorithms for unsplittable 3-CVRP and 4-CVRP, we will use LP-UITP for unsplittable 5-CVRP.
This is also the reason why we do not put our algorithms for unsplittable 3-CVRP, 4-CVRP and 5-CVRP together.

The previous approximation ratio is obtained by using the Hamiltonian cycle $H_{CS}$ under the case $\ra=1$. In this case, an $O(1)$-approximate mod-$k$-cycle cover has a better performance than $H_{CS}$ by Lemmas~\ref{lb-chris} and \ref{lb-cyclepacking}. Hence, for unsplittable 5-CVRP, we use two algorithms. In the first algorithm, we construct a mod-5-cycle cover via the 2-approximation algorithm in~\cite{GoemansW95} and then obtain a feasible solution from it.
The second algorithm calls LP-UITP on the Hamiltonian cycle $H_{CS}$.

Assumption~\ref{ass3} guarantees that each customer's demand is either 1, 2, 3 or 4.
Our first algorithm will first compute a cycle cover $\C_{\bmod 5}$, where ``$\bmod~5$'' means that for each cycle in the cycle cover, the total demand of customers in the cycle is divisible by 5.
We call this cycle cover a mod-$5$-cycle cover. 
It is computed in the following way: we construct a unit-demand instance $G'$ by replacing each customer $v_i$ with $d_i$ unit-demand customers as the same place. Then, we call the 2-approximation algorithm in~\cite{GoemansW95} to compute a mod-5-cycle cover $\C'_{\bmod5}$ in $G[V']$, where $V'$ denotes the set of all customers in $G'$, such that $w(\C'_{\bmod5})\leq 2w(\C^*_{\bmod5})$. Note that $\C'_{\bmod5}$ may not be a mod-5-cycle cover for the original graph $G[V]$. By shortcutting cycles in $\C'_{\bmod5}$, we can obtain a cycle cover $\C_{\bmod5}$ on $V$. Moreover, the total demand of all customers in each cycle in $\C_{\bmod5}$ is divisible by 5. We have that $w(\C_{\bmod5})\leq 2w(\C_{5}^{*})$ by the triangle inequality, and it holds that $\left|C\right|\bmod 5=\sum_{v_j\in C}d_j\bmod 5=0$ for each $C\in \C_{\bmod5}$.

Then, our algorithm uses this mod-$5$-cycle cover $\C_{\bmod5}$ to obtain a feasible solution for unsplittable 5-CVRP.
For different types of cycles in $\C_{\bmod5}$, we have different operations.
Consider a cycle $C$ in $\C_{\bmod5}$. Let $x$ be the sum of the demand of all 1-demand, 2-demand and 3-demand customers in $C$.
We classify the cycles in $\C_{\bmod5}$ into several types $\C_{<i>}$ according to the value of $x$. We use $V_i$ to denote the set of customers contained in a cycle in $\C_{<i>}$ and define $\Delta_i= \sum_{v_j\in V_i} d_j w(v_0, v_j)$.

\textbf{Type 0:} The set of the cycles with $x=0$, denoted by $\C_{<0>}$. For each cycle $C\in\C_{<0>}$, there are no 1-demand or 2-demand customers, and we assign a trivial tour for each 3-demand or 4-demand customer in $C$. The total weight of these trivial tours is at most $\frac{2}{3}\Delta_0\leq \frac{2}{3}\Delta_0+\frac{2}{3}w(\C_{<0>})$.

\textbf{Type 1:} The set of the cycles with $x\geq 6$, denoted by $\C_{<1>}$.
For each cycle $C\in\C_{<1>}$, we assign a trivial tour for each 3-demand or 4-demand customer in $C$ and then obtain a cycle $C'$ from $C$ by shortcutting all 3-demand and 4-demand customers.
The set of cycles $C'$ is denoted by $\C'_{<1>}$. Then, we apply EX-UITP on $\C'_{<1>}$.

We analyze the weight of the tours assigned in this step. Let $V'_1$ be the set of 1-demand and 2-demand customers in $V_1$ and $V''_1=V_1\setminus V_1'$ be the set of 3-demand and 4-demand customers in $V_1$. Let $\Delta'_1= \sum_{v_j\in V'_1}d_j w(v_0, v_j)$ and $\Delta''_1= \sum_{v_j\in V''_1} d_j w(v_0, v_j)$. 
For each cycle $C'$ in $\C'_{<1>}$ we have $\size{C'}=x\geq6$.
For the tours computed by EX-UITP on $\C'_{<1>}$, by Lemma~\ref{EXUITP}, the total weight is at most $\frac{2}{3}\Delta'_1+ \frac{2}{3}w(\C'_{<1>})\leq \frac{2}{3}\Delta'_1+ \frac{2}{3}w(\C_{<1>})$, where the inequality follows from the triangle inequality.
For the trivial tours visiting a 3-demand or 4-demand customer, the total weight is at most $\frac{2}{3}\Delta''_1$.
In total, the weight is at most $\frac{2}{3}\Delta'_1+\frac{2}{3}\Delta''_1+ \frac{2}{3}w(\C_{<1>})=\frac{2}{3}\Delta_1+ \frac{2}{3}w(\C_{<1>})$.

We still have several cases, where $x$ is a small integer.

\textbf{Type 2:} The set of the cycles with $x=1$, denoted by $\C_{<2>}$. Each cycle $C\in\C_{<2>}$ contains exactly one 1-demand customer $u$ and at least one 3-demand (or 4-demand) customer $u'$ since the total demand in the cycle is divisible by 5.
We assign a tour $T$ visiting only the two customers $u$ and $u'$ and a trivial tour for each 3-demand and 4-demand customer in $C$ except $u'$, and then obtain a cycle $C'$ from $C$ by shortcutting the 3-demand and 4-demand customers except $u'$. Let $V'_2$ be the set of customers contained in $T$ and let $\Delta'_2= \sum_{v_j\in V'_2} d_j w(v_0, v_j)$.
Note that $4\leq\size{C'}=d_u+d_{u'}\leq5$. By Lemma~\ref{rule1}, we can assign the tour $T$ with weight at most $\frac{2}{\size{C'}}\Delta'_2+(1-\frac{1}{\size{C'}})w(C')\leq\frac{2}{3}\Delta'_2+\frac{2}{3}w(C')\leq\frac{2}{3}\Delta'_2+\frac{2}{3}w(C)$, where the first inequality follows from $3<\size{C'}\leq 5$ and (\ref{freeq}) and the second from the triangle inequality.
Let $V''_2=C\setminus V'_2$ and $\Delta''_2=\sum_{v_j\in V''_2} d_j w(v_0, v_j)$.
For the trivial tours visiting the other 3-demand and 4-demand customers in $V''_2$, the total weight is at most $\frac{2}{3}\Delta''_2$.
Thus, the weight of the tours in this step is at most $\frac{2}{3}\Delta_2+\frac{2}{3}w(\C_{<2>})$.

\textbf{Type 3:} The set of the cycles with $x=2$, denoted by $\C_{<3>}$. Each cycle $C\in\C_{<3>}$ contains exactly two 1-demand customers $u_1$ and $u_2$ or exactly one 2-demand customer $u$, and at least one 3-demand (or 4-demand) customer $u'$ since the total demand in the cycle is divisible by 5.

If the demand of $u'$ is 3, we assign a single tour $T$ visiting $\{u_1,u_2,u'\}$ (or $\{u,u'\}$) and a trivial tour for each 3-demand and 4-demand customer in $C$ except $u'$, and then obtain a cycle $C'$ from $C$ by shortcutting the 3-demand and 4-demand customers except $u'$. 
Let $V'_3$ be the set of customers contained in $T$ and let $\Delta'_3= \sum_{v_j\in V'_3} d_j w(v_0, v_j)$.
Note that $\size{C'}=5$. By Lemma~\ref{rule1}, we can assign the tour $T$ with weight at most $\frac{2}{5}\Delta'_3+\frac{3}{5}w(C')\leq\frac{2}{3}\Delta'_3+\frac{2}{3}w(C')\leq\frac{2}{3}\Delta'_2+\frac{2}{3}w(C)$, where the first inequality follows from (\ref{freeq}) and the second from the triangle inequality.
Let $V''_3=C\setminus V'_3$ and $\Delta''_3=\sum_{v_j\in V''_3} d_j w(v_0, v_j)$.
For the trivial tours visiting the other 3-demand and 4-demand customers in $V''_3$, the total weight is at most $\frac{2}{3}\Delta''_3$.
Thus, the weight of the tours in this step is at most $\frac{2}{3}\Delta_3+\frac{2}{3}w(\C_{<3>})$.

If the demand of $u'$ is 4, we assign two tours $T_1$ and $T_2$ with capacities $2$ and $4$ visiting $\{u_1,u_2,u'\}$ (or $\{u,u'\}$) and a trivial tour for each 3-demand and 4-demand customer in $C$ except $u'$, and then obtain a cycle $C'$ from $C$ by shortcutting the 3-demand and 4-demand customers except $u'$.
Define $V'_3$, $\Delta'_3$, $V''_3$ and $\Delta''_3$ in the similar way as above.
Note that $\size{C'}=6$. By considering all cases of the positions of these three (or two) customers on $C'$, it can be verified that we can assign the tours $T_1$ and $T_2$ with total weight at most $\frac{2}{3}\Delta'_3+\frac{2}{3}w(C')\leq \frac{2}{3}\Delta'_3+\frac{2}{3}w(C)$ by the triangle inequality.
For the trivial tours visiting the other 3-demand and 4-demand customers in $V''_3$, the total weight is at most $\frac{2}{3}\Delta''_3$.
Thus, the weight of the tours in this step is still at most $\frac{2}{3}\Delta_3+\frac{2}{3}w(\C_{<3>})$.

\textbf{Type 4:} The set of the cycles with $3\leq x\leq 5$, denoted by $\C_{<4>}$. For each cycle $C\in\C_{<4>}$, we assign a trivial tour for each 3-demand or 4-demand customer in $C$ and then obtain a cycle $C'$ from $C$ by shortcutting all 3-demand and 4-demand customers. Then, we assign the last tour $T$ according to $C'$.
Let $V'_4$ be the set of customers contained in $T$ and let $\Delta'_4= \sum_{v_j\in V'_4} d_j w(v_0, v_j)$. By Lemma~\ref{rule1}, we can assign the tour $T$ with weight at most $\frac{2}{x}\Delta'_4+(1-\frac{1}{x})w(C')\leq\frac{2}{3}\Delta'_4+\frac{2}{3}w(C')\leq\frac{2}{3}\Delta'_2+\frac{2}{3}w(C)$, where the first inequality follows from $3\leq x\leq 5$ and \ref{freeq}) and the second from the triangle inequality. Let $V''_4=C\setminus V'_4$ and $\Delta''_4=\sum_{v_j\in V''_4}d_jw(v_0, v_j)$.
For the trivial tours visiting the 3-demand and 4-demand customers in $V''_4$, the total weight is at most $\frac{2}{3}\Delta''_4$.
Thus, the weight of the tours in this step is at most $\frac{2}{3}\Delta_4+\frac{2}{3}w(\C_{<4>})$.

Recall that $w(\C_{\bmod5})\leq 2w(\C^*_{\bmod5})$. We have the following result.
\begin{lemma}\label{alg-mod5}
Given the mod-$5$-cycle cover $\C_{\bmod5}$, for unsplittable 5-CVRP, there is a polynomial-time algorithm that computes an itinerary $I$ with $w(I)\leq \frac{2}{3}\Delta+\frac{2}{3}w(\C_{\bmod5})\leq \frac{2}{3}\Delta+\frac{4}{3}w(\C^*_{\bmod5})$.
\end{lemma}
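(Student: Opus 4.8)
The plan is to obtain the stated bound by simply summing the per-type estimates established in the discussion preceding the lemma. First I would record that every cycle $C\in\C_{\bmod5}$ falls into exactly one of Types~0--4: the quantity $x$ (the total demand of the $1$-, $2$-, and $3$-demand customers on $C$) is a non-negative integer, and Types~0,2,3,4,1 cover respectively the disjoint cases $x=0$, $x=1$, $x=2$, $3\le x\le5$, and $x\ge6$, which together exhaust $\mathbb{Z}_{\ge0}$. Consequently the families $\C_{<0>},\dots,\C_{<4>}$ partition $\C_{\bmod5}$, the customer sets $V_0,\dots,V_4$ partition $V$, and therefore $\sum_{i=0}^{4}\Delta_i=\Delta$ and $\sum_{i=0}^{4}w(\C_{<i>})=w(\C_{\bmod5})$. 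Moreover, within each type the structural sub-cases forced by ``$\size{C}$ divisible by $5$'' (for instance the demand-$3$ versus demand-$4$ split of $u'$ in Type~3) were already dealt with, and in every case the tours assigned to the cycles of that type — the trivial tours for the removed $3$- and $4$-demand customers together with the tours produced by the EX-UITP algorithm or by Lemmas~\ref{rule1}--\ref{rule3} on the shortcut cycle $C'$ — have total weight at most $\frac{2}{3}\Delta_i+\frac{2}{3}w(\C_{<i>})$.

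Next I would add these bounds over $i\in\{0,1,2,3,4\}$ to get
\[
w(I)\ \le\ \sum_{i=0}^{4}\lrA{\frac{2}{3}\Delta_i+\frac{2}{3}w(\C_{<i>})}\ =\ \frac{2}{3}\sum_{i=0}^{4}\Delta_i+\frac{2}{3}\sum_{i=0}^{4}w(\C_{<i>})\ =\ \frac{2}{3}\Delta+\frac{2}{3}w(\C_{\bmod5}).
\]
The second inequality of the lemma then follows from $w(\C_{\bmod5})\le 2\,w(\C^*_{\bmod5})$, which holds because $\C_{\bmod5}$ is obtained from the $2$-approximate mod-$5$-cycle packing $\C'_{\bmod5}$ of the unit-demand expansion (guaranteed by the primal--dual algorithm of~\cite{GoemansW95}) by shortcutting, and shortcutting does not increase weight. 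I would close by noting polynomial-time solvability and feasibility: computing $\C_{\bmod5}$ uses the polynomial algorithm of~\cite{GoemansW95} plus shortcutting, and each remaining step (assigning trivial tours, running EX-UITP, invoking Lemmas~\ref{rule1}--\ref{rule3}) runs in time polynomial in $n$ for the fixed capacity $k=5$; every trivial tour delivers at most $5$ units, and the EX-UITP subroutine and Lemmas~\ref{rule1}--\ref{rule3} output feasible unsplittable tours, so $I$ is a feasible itinerary.

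The only part needing genuine care is the exhaustiveness-and-uniformity check in the first paragraph — that no value of $x$ is missed, and that inside each type the divisibility restriction always yields a bound of the common shape $\frac{2}{3}\Delta_i+\frac{2}{3}w(\C_{<i>})$ — but this is precisely what the type-by-type analysis preceding the lemma accomplishes, so here it is enough to invoke it; the remaining argument is the one-line summation above.
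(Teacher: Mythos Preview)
Your proposal is correct and is exactly the approach the paper takes: the lemma is stated immediately after the type-by-type analysis, and its proof is simply the observation that Types~0--4 partition $\C_{\bmod5}$ (and hence $V$), so summing the per-type bounds $\frac{2}{3}\Delta_i+\frac{2}{3}w(\C_{<i>})$ over $i$ yields $\frac{2}{3}\Delta+\frac{2}{3}w(\C_{\bmod5})$, with the second inequality following from $w(\C_{\bmod5})\le 2w(\C^*_{\bmod5})$. One small inaccuracy: the 5-CVRP case analysis actually only invokes Lemma~\ref{rule1} and the EX-UITP bound (Lemma~\ref{EXUITP}), not Lemmas~\ref{rule2} or~\ref{rule3}, so you can drop those references.
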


\begin{theorem}
For unsplittable 5-CVRP, there is a $2.157$-approximation algorithm.
\end{theorem}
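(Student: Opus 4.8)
The plan is to run two algorithms, return the cheaper itinerary, and optimize over the home-edge proportion $\ra$ in the same manner as the earlier trade-off arguments. The first algorithm is the mod-$5$-cycle-packing procedure of Lemma~\ref{alg-mod5}, which yields an itinerary $I_1$ with $w(I_1)\le \tfrac{2}{3}\Delta+\tfrac{4}{3}w(\C^{*}_{\bmod 5})$, where $\C^{*}_{\bmod 5}$ is the minimum mod-$5$-cycle packing of the unit-demand instance obtained by replacing each customer $v_i$ with $d_i$ co-located unit-demand copies. To turn this into a guarantee against $w(I^{*})$, I would note that shortcutting $v_0$ out of each (saturated, by Assumption~\ref{ass1}) tour of $I^{*}$ gives a $5$-cycle packing of that unit-demand instance of weight at most $\min\{2(1-\ra),1\}\,w(I^{*})$, which is exactly the content of Lemma~\ref{lb-cyclepacking}; since a $5$-cycle packing is in particular a mod-$5$-cycle packing, $w(\C^{*}_{\bmod 5})\le\min\{2(1-\ra),1\}\,w(I^{*})$. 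Combining with Lemma~\ref{lb-delta} (which for $k=5$ reads $\Delta\le(\ra+\tfrac{3}{2})w(I^{*})$) gives $w(I_1)\le\rho_1(\ra)\,w(I^{*})$, where $\rho_1(\ra)=\tfrac{2}{3}\ra+\tfrac{7}{3}$ for $\ra\le\tfrac12$ and $\rho_1(\ra)=\tfrac{11}{3}-2\ra$ for $\ra\ge\tfrac12$.

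The second algorithm is LP-UITP applied to the Christofides--Serdyukov cycle $H_{CS}$. Because $\floor{5/2}+1=3$, Lemma~\ref{LPUITP} guarantees, for a fixed constant $\gamma\ge0$, an itinerary $I_2$ with $w(I_2)\le\gamma\,w(I^{*})+e^{-\gamma}\cdot\tfrac{2}{3}\Delta+\tfrac{2}{3}w(H_{CS})$. Plugging in Lemma~\ref{lb-delta} and Lemma~\ref{lb-chris} (so $w(H_{CS})\le\tfrac{3-\ra}{2}w(I^{*})$) and taking $\gamma=\ln\tfrac{2\ra+3}{3}$, the minimizer of $\gamma+e^{-\gamma}\tfrac{2\ra+3}{3}$, would give $w(I_2)\le\rho_2(\ra)\,w(I^{*})$ with $\rho_2(\ra)=\ln\tfrac{2\ra+3}{3}+2-\tfrac{\ra}{3}$. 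Since $\ra$ is not known to the algorithm I would either run LP-UITP over a fine grid of values of $\gamma$, or simply fix $\gamma$ to the value optimal at the worst-case $\ra$ found below and check that with this fixed $\gamma$ the bound $\rho_2(\cdot)$ is monotone in $\ra$ (so the worst case does not move). Either way the combined algorithm achieves ratio $\max_{0\le\ra\le1}\min\{\rho_1(\ra),\rho_2(\ra)\}$.

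The remaining step is to estimate this min--max. Here $\rho_2$ is increasing on $[0,1]$, while $\rho_1$ increases on $[0,\tfrac12]$ and decreases on $[\tfrac12,1]$; moreover $\rho_2(0)=2<\rho_1(0)=\tfrac{7}{3}$ and $\rho_2(1)>\rho_1(1)=\tfrac{5}{3}$. Therefore the pointwise minimum coincides with $\rho_2$ for small $\ra$ and with $\rho_1$ near $\ra=1$, and its maximum is attained at the unique crossing point $\ra^{\star}\in(\tfrac12,1)$ where $\tfrac{11}{3}-2\ra=\ln\tfrac{2\ra+3}{3}+2-\tfrac{\ra}{3}$, i.e. $\tfrac{5}{3}(1-\ra)=\ln\tfrac{2\ra+3}{3}$. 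Solving this numerically yields $\ra^{\star}\approx0.755$ and hence $\max_{\ra}\min\{\rho_1,\rho_2\}=\rho_1(\ra^{\star})<2.157$, the claimed ratio. (For comparison, the previous bound $5/3+\ln(5/3)<2.178$ is the value of $\rho_2$ at $\ra=1$.)

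I expect the real work to lie not in this trade-off, which is routine, but in Lemma~\ref{alg-mod5}: establishing the uniform per-cycle bound $\tfrac{2}{3}\Delta_C+\tfrac{2}{3}w(C)$ requires the long type-by-type case analysis on the cycles of $\C_{\bmod 5}$ carried out above — serving the high-demand ($\ge 3$) customers by trivial tours, invoking the structural Lemmas~\ref{rule1}--\ref{rule3} and the EX-UITP routine of Section~\ref{Sec-REUITP} on the small residual cycles, and verifying feasibility after merging conflicting tours. Once that bound is in place the theorem follows immediately from the computation above.
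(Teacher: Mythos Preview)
Your proposal is correct and matches the paper's proof essentially step for step: both trade off the mod-$5$-cycle-packing bound from Lemma~\ref{alg-mod5} against LP-UITP on $H_{CS}$, bound the two algorithms via Lemmas~\ref{lb-delta}, \ref{lb-chris} and~\ref{lb-constrained-cyclepacking}, set $\gamma=\ln\frac{2\ra+3}{3}$, and locate the worst case at the root $\ra^{\star}\approx0.755$ of $\frac{5}{3}(1-\ra)=\ln\frac{2\ra+3}{3}$, giving ratio $\frac{11-6\ra^{\star}}{3}<2.157$. Your treatment is slightly more careful in two places---you keep the $\min\{2(1-\ra),1\}$ branch for $\rho_1$ (the paper uses only $2(1-\ra)$, which suffices since the crossing lies in $(\tfrac12,1)$), and you explicitly address how to handle the dependence of $\gamma$ on the unknown $\ra$---but neither changes the argument.
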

\begin{proof}
Using the mod-$5$-cycle cover algorithm, by Lemma~\ref{alg-mod5}, we can obtain an itinerary $I_1$ with
$
w(I_1)\leq \frac{2}{3}\Delta+\frac{4}{3}w(\C_{\bmod 5}^{*}).
$
By Lemmas~\ref{lb-delta} and \ref{lb-constrained-cyclepacking}, we have
\[
\frac{2}{3}\Delta+\frac{4}{3}w(\C_{\bmod 5}^{*})
\leq \frac{2}{3}\lrA{\ra+\frac{3}{2}}w(I^*)+\frac{4}{3}(2-2\ra)w(I^*)
=\frac{11-6\ra}{3}w(I^*).
\]

Using LP-UITP and the Hamiltonian cycle $H_{CS}$ on $V\cup\{v_0\}$, by Lemma~\ref{LPUITP}, we can obtain an itinerary $I_2$ with
$
w(I_2)\leq\min_{\gamma\geq0}\lrc{\gamma\cdot w(I^*)+e^{-\gamma}\cdot \frac{2}{3}\Delta+\frac{2}{3}w(H_{CS})}.
$
We will set $\gamma=\ln\frac{3+2\ra}{3}$ (to be defined later).
Since $\Delta\leq(\ra+\frac{3}{2})w(I^*)$ by Lemma~\ref{lb-delta} and $w(H_{CS})\leq \frac{3-\ra}{2}$ by Lemma~\ref{lb-chris}, we have
\[
\gamma\cdot w(I^*)+e^{-\gamma}\cdot \frac{2}{3}\Delta+\frac{2}{3}w(H_{CS})\leq \lrA{\frac{6-\ra}{3}+\ln\frac{3+2\ra}{3}}w(I^*).
\]

We will select the better one between these two itineraries. This yields an approximation ratio of
\[
\max_{\substack{0\leq \ra\leq 1}}\min\lrC{\frac{11-6\ra}{3},\frac{6-\ra}{3}+\ln\frac{3+2\ra}{3}}=\frac{11-6x_0}{3}<2.157,
\]
where $\ra=x_0>0.755$ in the worst case, the unique root of the equitation $\frac{5}{3}(x-1)+\ln\frac{3+2x}{3}=0$.
Note that it suffices to set $\gamma=\ln\frac{3+2x_0}{3}$.
\end{proof}

\section{Proof of Lemma~\ref{core}}\label{proof-1}
\begingroup
\def\thelemma{\ref{core}}
\begin{lemma}
When $k\geq3$, for any cycle $C=(v_0,v_1,...,v_k,v_0)\in I^*$, we have
\[
w(T_C)\leq\lrA{1-\max_{1\leq i\leq m}\frac{1}{2}x_{i}}w(C)\quad\text{and}\quad\Delta_C\leq\lrA{\frac{k+2}{2}-\sum_{i=1}^{m}ix_{i}}w(C),
\]
where it holds that $0\leq x_i\leq 1$ for each $1\leq i\leq m$, $\sum_{i=1}^{m}x_i=1$, and $m=\Ceil{\frac{k+1}{2}}$.
Moreover, we have
\[
\frac{2}{k}\Delta_C+\lrA{1-\frac{1}{k}}w(T_C)\leq\lrA{2-\frac{2l^2+k-1}{2kl}}w(C),
\]
where $l=\ceil{\frac{\sqrt{2k-1}-1}{2}}$.
\end{lemma}
\addtocounter{lemma}{-1}
\endgroup
\begin{proof}
When $k$ is odd (resp., even), the number of edges in the cycle $C$ is even (resp., odd). Due to different structural properties, these two cases have to be handled separately. We first consider the case that $k$ is odd. Another case will be considered in a similar way. For convince, we let $m=\ceil{\frac{k+1}{2}}$.

\textbf{Case~1: $k$ is odd.} Note that $m=\ceil{\frac{k+1}{2}}=\frac{k+1}{2}$.
We define parameters
\[
a_{i}=\frac{w(v_{i-1},v_{i})+w(v_{k+1-i},v_{(k+2-i)\bmod (k+1)})}{w(C)},
\]
where $1\leq i\leq m$. Then, We have $\sum_{i=1}^{m}a_{i}=1$ and $a_i\geq0$ for each $1\leq i\leq m$.
We may assume $w(C)\neq0$; otherwise, Lemma~\ref{core} holds trivially. See Figure~\ref{fig01} for an illustration of the definition on $a_i$. Note that $a_1$ measures the proportion of the weights of the home-edges in $C$, so it can be regarded as a generalization of $\ra$, but it works on a single cycle.

\begin{figure}[ht]
\centering
\begin{tikzpicture}
\tikzstyle{vertex}=[black,circle,fill,minimum size=5,inner sep=0pt]
\begin{scope}[every node/.style={vertex}]
 \node (l) at (0,0) {};
 \node (x1) at (0.6,0.5) {};
 \node (y1) at (0.6,-0.5) {};
 \node (x2) at (0.6+2,0.5) {};
 \node (y2) at (0.6+2,-0.5) {};
 \node (x3) at (0.6+4,0.5) {};
 \node (y3) at (0.6+4,-0.5) {};
 \node (x4) at (0.6+6,0.5) {};
 \node (y4) at (0.6+6,-0.5) {};
 \node (r) at (1.2+6,0) {};
\end{scope}

\node at (0.6+3,0) {$a_i$};

\node[left] at (0,0) {$v_0$};
\node[above] at (0.6,0.5) {$v_1$};
\node[below] at (0.6,-0.5) {$v_k$};
\node[above] at (0.6+2,0.5) {$v_{i-1}$};
\node[below] at (0.6+2,-0.5) {$v_{k+2-i}$};
\node[above] at (0.6+4,0.5) {$v_i$};
\node[below] at (0.6+4,-0.5) {$v_{k+1-i}$};
\node[above] at (0.6+6,0.5) {$v_{m-1}$};
\node[below] at (0.6+6,-0.5) {$v_{m+1}$};
\node[right] at (1.2+6,0) {$v_m~( m=\frac{k+1}{2})$};
\draw[very thick,-] (l.east) to (x1.west);
\draw[very thick,dotted,-] (x1.east) to (x2.west);
\draw[very thick,-,blue] (x2.east) to (x3.west);
\draw[very thick,dotted,-] (x3.east) to (x4.west);
\draw[very thick,-] (x4.east) to (r.west);
\draw[very thick,-] (l.east) to (y1.west);
\draw[very thick,dotted,-] (y1.east) to (y2.west);
\draw[very thick,-,blue] (y2.east) to (y3.west);
\draw[very thick,dotted,-] (y3.east) to (y4.west);
\draw[very thick,-] (y4.east) to (r.west);
\end{tikzpicture}
\caption{An illustration of the cycle $C=(v_0,v_1,\dots, v_{k},v_0)$ for the case of odd $k$, where $a_i$ ($1\leq i\leq m$) measures the proportion of the weights of the blue edges in $C$.}
\label{fig01}
\end{figure}

Next, we are ready to give refined lower bounds based on $\MST$ and $\Delta$.

\begin{lemma}~\label{odd-tree}
For any cycle $C=(v_0,v_1,...,v_k,v_0)\in I^*$ with odd $k$, we have $w(T_C)\leq(1-\max_{\substack{1\leq i\leq m}}\frac{1}{2}a_{i})w(C)$.
\end{lemma}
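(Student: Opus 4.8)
The plan is to reduce the statement to a one-line observation about the heaviest edge of the cycle $C$. Recall from the construction of $T_C$ that $w(T_C)=w(C)-\max_{0\le j\le k}w(v_j,v_{(j+1)\bmod (k+1)})$, i.e.\ $T_C$ is obtained from $C$ by deleting a single most-weighted edge. So it suffices to prove $\max_{0\le j\le k}w(v_j,v_{(j+1)\bmod (k+1)})\ge \max_{1\le i\le m}\tfrac12 a_i\, w(C)$, after which the claimed bound on $w(T_C)$ follows by substitution.

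First I would unwind the definition of $a_i$: for each $i\in\{1,\dots,m\}$ we have $a_i\,w(C)=w(v_{i-1},v_i)+w(v_{k+1-i},v_{(k+2-i)\bmod (k+1)})$, which is exactly the sum of the weights of two edges of the cycle $C$ — the $i$-th edge counted from the depot $v_0$ going one way around $C$, and the $i$-th edge counted from $v_0$ going the other way. Since $k$ is odd, $m=(k+1)/2$, and as $i$ ranges over $\{1,\dots,m\}$ these pairs partition the $k+1$ edges of $C$; this also re-verifies $\sum_{i=1}^m a_i=1$ and $a_i\ge 0$, consistent with the text preceding the lemma. (The degenerate case $w(C)=0$ is trivial and already excluded.)

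The key step is then immediate: for each fixed $i$, at least one of the two edges whose weights sum to $a_i\,w(C)$ has weight at least $\tfrac12 a_i\,w(C)$, so $\max_{0\le j\le k}w(v_j,v_{(j+1)\bmod (k+1)})\ge \tfrac12 a_i\,w(C)$. Taking the maximum over $i\in\{1,\dots,m\}$ gives $\max_{0\le j\le k}w(v_j,v_{(j+1)\bmod (k+1)})\ge \max_{1\le i\le m}\tfrac12 a_i\,w(C)$, and hence $w(T_C)=w(C)-\max_{0\le j\le k}w(v_j,v_{(j+1)\bmod (k+1)})\le \lrA{1-\max_{1\le i\le m}\tfrac12 a_i}w(C)$, as desired.

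There is essentially no real obstacle here: the statement is just the fact that the heaviest edge of $C$ cannot be lighter than the average of any two of its edges. The only points requiring care are the index bookkeeping — confirming that the two edges appearing in $a_i$ are genuinely distinct edges of $C$ and that all $k+1$ edges are covered exactly once as $i$ runs over $\{1,\dots,m\}$, which is where the oddness of $k$ (so that $2m=k+1$) is used — and noting that the analogous even-$k$ argument must be handled separately because then $C$ has an odd number of edges and cannot be split into such symmetric pairs.
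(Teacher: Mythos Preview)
Your proposal is correct and follows essentially the same argument as the paper's proof: both start from $w(T_C)=w(C)-\max_j w(v_j,v_{(j+1)\bmod(k+1)})$ and use that the heaviest edge of $C$ is at least the average $\tfrac12 a_i\,w(C)$ of each symmetric pair. Your additional remarks about the pairs partitioning the edge set when $k$ is odd are accurate but not strictly needed for the inequality itself.
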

\begin{proof}
By the definition of $T_C$, we have that $w(T_C)=w(C)-\max_{0\leq i\leq k}w(v_{i},v_{(i+1)\bmod (k+1)})$. Therefore, we have
\[
\begin{split}
w(T_C)=&\ w(C)-\max_{0\leq i\leq k}w(v_{i},v_{(i+1)\bmod (k+1)})\\
\leq&\ w(C)-\max_{\substack{1\leq i\leq m}}\frac{1}{2}(w(v_{i-1},v_{i})+w(v_{k+1-i},v_{(k+2-i)\bmod (k+1)}))\\
=&\ \lrA{1-\max_{\substack{1\leq i\leq m}}\frac{1}{2}a_{i}}w(C),
\end{split}
\]
where the inequality can be obtained directly.
\end{proof}

\begin{lemma}~\label{odd-delta}
For any cycle $C=(v_0,v_1,...,v_k,v_0)\in I^*$ with odd $k$, we have $\Delta_C\leq (\frac{k+2}{2}-\sum_{i=1}^{m}ia_{i})w(C)$.
\end{lemma}
\begin{proof}
First, by the definition of $\Delta_C$, we have
\[
\Delta_C=\sum_{i=1}^{k}w(v_0, v_i)=\sum_{i=1}^{m-1}\lra{w(v_0,v_{i})+w(v_0,v_{k+1-i})}+w(v_0,v_{m}).
\]

Then, by the triangle inequality, we have
\[
w(v_0,v_{i})\leq\sum_{j=1}^{i}w(v_{j-1},v_{j}) \quad \mbox{and}\quad
w(v_0,v_{k+1-i})\leq \sum_{j=1}^{i}w(v_{k+1-j},v_{(k+2-j)\bmod(k+1)}).
\]
See Figure~\ref{fig02} for an illustration of these two inequalities. 
Moreover, we also have $w(v_0,v_{m})\leq\frac{1}{2}w(C)$.

\begin{figure}[ht]
\centering
\begin{tikzpicture}
\tikzstyle{vertex}=[black,circle,fill,minimum size=5,inner sep=0pt]
\begin{scope}[every node/.style={vertex}]
 \node (l) at (0,0) {};
 \node (x1) at (0.6,0.5) {};
 \node (y1) at (0.6,-0.5) {};
 \node (x2) at (0.6+2,0.5) {};
 \node (y2) at (0.6+2,-0.5) {};
 \node (x3) at (0.6+4,0.5) {};
 \node (y3) at (0.6+4,-0.5) {};
 \node (x4) at (0.6+6,0.5) {};
 \node (y4) at (0.6+6,-0.5) {};
 \node (r) at (1.2+6,0) {};
\end{scope}

\node[left] at (0,0) {$v_0$};
\node[above] at (0.6,0.5) {$v_1$};
\node[below] at (0.6,-0.5) {$v_k$};
\node[above] at (0.6+2,0.5) {$v_{i-1}$};
\node[below] at (0.6+2,-0.5) {$v_{k+2-i}$};
\node[above] at (0.6+4,0.5) {$v_i$};
\node[below] at (0.6+4,-0.5) {$v_{k+1-i}$};
\node[above] at (0.6+6,0.5) {$v_{m-1}$};
\node[below] at (0.6+6,-0.5) {$v_{m+1}$};
\node[right] at (1.2+6,0) {$v_m~(m=\frac{k+1}{2})$};
\draw[very thick,->,blue] (l.east) to (x1.west);
\draw[very thick,dotted,->,blue] (x1.east) to (x2.west);
\draw[very thick,->,blue] (x2.east) to (x3.west);
\draw[very thick,dotted,-] (x3.east) to (x4.west);
\draw[very thick,-] (x4.east) to (r.west);

\draw[very thick,<-,blue] (l.east) to (y1.west);
\draw[very thick,dotted,<-,blue] (y1.east) to (y2.west);
\draw[very thick,<-,blue] (y2.east) to (y3.west);
\draw[very thick,dotted,-] (y3.east) to (y4.west);
\draw[very thick,-] (y4.east) to (r.west);

\draw[very thick,<-,red] (l.east) to (x3.west);
\draw[very thick,->,red] (l.east) to (y3.west);
\end{tikzpicture}
\caption{An illustration of the two inequalities: there are two cycles (denoted by the directed edges): $(v_0,v_1,\dots,v_i,v_0)$ and $(v_0,v_{k+1-i},\dots,v_k,v_0)$; for each cycle, the weight of the red edge is at most the sum of the weights of the blue edges by the triangle inequality.}
\label{fig02}
\end{figure}

Recall that $w(v_{j-1},v_{j})+w(v_{k+1-j},v_{(k+2-j)\bmod(k+1)})=a_jw(C)$. Hence, we have
\begin{align*}
\Delta_C&\leq\sum_{i=1}^{m-1}\sum_{j=1}^{i}a_jw(C)+\frac{1}{2}w(C)\\
&=\sum_{i=1}^{m}\lra{m-i}a_{i}w(C)+\frac{1}{2}w(C)=\lrA{\frac{k+2}{2}-\sum_{i=1}^{m}ia_{i}}w(C),
\end{align*}
where the last equality follows from $\sum_{i=1}^{m}a_i=1$ and $m=\frac{k+1}{2}$.
\end{proof}

\begin{lemma}\label{evencycle}
For any cycle $C=(v_0,v_1,...,v_k,v_0)\in I^*$ with odd $k$, we have
\[
\frac{2}{k}\Delta_C+\lrA{1-\frac{1}{k}}w(T_C)\leq \max_{\substack{a_{1}\geq a_{2}\geq\dots\geq a_{m}\geq0\\a_{1}+a_{2}+\dots+a_{m}=1}}
\lrC{\frac{2k+1}{k}-\sum_{i=1}^{m}c^o_{i}a_{i}}w(C),
\]
where $c^o_1=\frac{k+3}{2k}$ and $c^o_{i}=\frac{2i}{k}$ for each $2\leq i\leq m$.
\end{lemma}
\begin{proof}
By Lemmas~\ref{odd-tree} and \ref{odd-delta}, we have
\[
\begin{split}
&\frac{2}{k}\Delta_C+\lrA{1-\frac{1}{k}}w(T_C)\\
&\leq\max_{\substack{a_{1}, a_{2},\dots, a_{m}\geq0\\a_{1}+a_{2}+\dots+a_{m}=1}}
\lrC{\frac{2}{k}\lrA{\frac{k+2}{2}-\sum_{i=1}^{m}ia_{i}}+\frac{k-1}{k}\lrA{1-\max_{\substack{1\leq i\leq m}}\frac{1}{2}a_{i}}}w(C)\\
&=\max_{\substack{a_{1}, a_{2},\dots, a_{m}\geq0\\a_{1}+a_{2}+\dots+a_{m}=1}}
\lrC{\frac{2k+1}{k}-\frac{2}{k}\sum_{i=1}^{m}ia_{i}-\max_{\substack{1\leq i\leq m}}\frac{k-1}{2k}a_{i}}w(C).\\
\end{split}
\]

Then, it suffices to prove that $a_{1}\geq a_{2}\geq\dots\geq a_{m}$, as this also implies $\max_{\substack{1\leq i\leq m}}a_{i}=a_1$.

Assume that there exists $a_{p}<a_{q}$ for some $p<q$. If we exchange their values, the value $\max_{\substack{1\leq i\leq m}}a_{i}$ does not change. However, since the coefficients of $a_p$ and $a_q$ satisfy $0>\frac{-2p}{k}>\frac{-2q}{k}$, we obtain a bigger solution which causes a contradiction.
\end{proof}

Next, we need to consider the case that $k$ is even. The analysis is almost the same.

\textbf{Case~2: $k$ is even.} Note that $m=\ceil{\frac{k+1}{2}}=\frac{k+2}{2}$.
We define parameters
\[
b_{i}=\frac{w(v_{i-1},v_{i})+w(v_{k+1-i},v_{(k+2-i)\bmod (k+1)})}{w(C)},
\]
where $1\leq i\leq m-1$. We let $b_{m}=\frac{w(v_{k/2},v_{k/2+1})}{w(C)}$. Then, we have $\sum_{i=1}^{m}b_{i}=1$. See Figure~\ref{fig03} for an illustration.

\begin{figure}[ht]
\centering
\begin{tikzpicture}
\tikzstyle{vertex}=[black,circle,fill,minimum size=5,inner sep=0pt]
\begin{scope}[every node/.style={vertex}]
 \node (l) at (0,0) {};
 \node (x1) at (0.6,0.5) {};
 \node (y1) at (0.6,-0.5) {};
 \node (x2) at (0.6+2,0.5) {};
 \node (y2) at (0.6+2,-0.5) {};
 \node (x3) at (0.6+4,0.5) {};
 \node (y3) at (0.6+4,-0.5) {};
 \node (x4) at (0.6+6,0.5) {};
 \node (y4) at (0.6+6,-0.5) {};
\end{scope}

\node at (0.6+3,0) {$b_i$};

\node[left] at (0,0) {$v_0$};
\node[above] at (0.6,0.5) {$v_1$};
\node[below] at (0.6,-0.5) {$v_k$};
\node[above] at (0.6+2,0.5) {$v_{i-1}$};
\node[below] at (0.6+2,-0.5) {$v_{k+2-i}$};
\node[above] at (0.6+4,0.5) {$v_i$};
\node[below] at (0.6+4,-0.5) {$v_{k+1-i}$};
\node[above] at (0.6+6,0.5) {$v_{m-1}$};
\node[below] at (0.6+6,-0.5) {$v_{m}$};

\node[right] at (0.8+6,0) {$b_m~(m=\frac{k+2}{2})$};

\draw[very thick,-,red] (x4.south) to (y4.north);

\draw[very thick,-] (l.east) to (x1.west);
\draw[very thick,dotted,-] (x1.east) to (x2.west);
\draw[very thick,-,blue] (x2.east) to (x3.west);
\draw[very thick,dotted,-] (x3.east) to (x4.west);

\draw[very thick,-] (l.east) to (y1.west);
\draw[very thick,dotted,-] (y1.east) to (y2.west);
\draw[very thick,-,blue] (y2.east) to (y3.west);
\draw[very thick,dotted,-] (y3.east) to (y4.west);

\end{tikzpicture}
\caption{An illustration of the cycle $C=(v_0,v_1,\dots, v_{k},v_0)$ for the case of even $k$, where $b_i$ ($1\leq i<m$) measures the proportion of the weights of the blue edges in $C$, and $b_m$ measures the proportion of the weight of the red edge in $C$.}
\label{fig03}
\end{figure}

\begin{lemma}~\label{even-tree}
For any cycle $C=(v_0,v_1,...,v_k,v_0)\in I^*$ with even $k$, we have $w(T_C)\leq(1-\max_{1\leq i\leq m}\frac{1}{2}b_{i})w(C)$.
\end{lemma}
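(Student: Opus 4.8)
The plan is to mirror the argument of Lemma~\ref{odd-tree}. Recall that by the definition of $T_C$ we have $w(T_C)=w(C)-\max_{0\le i\le k}w(v_i,v_{(i+1)\bmod(k+1)})$, so it suffices to prove that the heaviest edge of $C$ has weight at least $\frac{1}{2}\left(\max_{1\le i\le m}b_i\right)w(C)$.

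First I would recall how the $k+1$ edges of $C$ are grouped in the even case: for each $i$ with $1\le i\le m-1$ the pair $\{(v_{i-1},v_i),\,(v_{k+1-i},v_{(k+2-i)\bmod(k+1)})\}$ has total weight $b_iw(C)$, and the remaining single ``central'' edge $(v_{k/2},v_{k/2+1})$ has weight $b_mw(C)$; this exhausts all $2(m-1)+1=k+1$ edges. For a pair indexed by some $i<m$, one of its two edges has weight at least $\frac{1}{2}b_iw(C)$, hence $\max_{0\le i\le k}w(v_i,v_{(i+1)\bmod(k+1)})\ge\frac{1}{2}b_iw(C)$. For the central index $i=m$, the single edge already has weight $b_mw(C)\ge\frac{1}{2}b_mw(C)$, so again $\max_{0\le i\le k}w(v_i,v_{(i+1)\bmod(k+1)})\ge\frac{1}{2}b_mw(C)$.

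Taking the maximum over all $i\in\{1,\dots,m\}$ gives $\max_{0\le i\le k}w(v_i,v_{(i+1)\bmod(k+1)})\ge\frac{1}{2}\left(\max_{1\le i\le m}b_i\right)w(C)$, and plugging this into the formula for $w(T_C)$ yields $w(T_C)\le\left(1-\frac{1}{2}\max_{1\le i\le m}b_i\right)w(C)$, which is the claim.

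I do not expect any genuine obstacle here; the only point that needs a moment of care is that $b_m$ is the weight of a single edge rather than of a pair, so the factor $\frac{1}{2}$ is wasteful for that term. Keeping it uniform across all $i$ is nonetheless what is wanted, since it lets the bound plug cleanly into the subsequent even-case analogue of Lemmas~\ref{odd-delta} and~\ref{evencycle}, where the symmetric treatment of the coefficients $b_i$ is what drives the optimization.
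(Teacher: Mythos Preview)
Your proposal is correct and follows essentially the same argument as the paper: both start from $w(T_C)=w(C)-\max_{0\le i\le k}w(v_i,v_{(i+1)\bmod(k+1)})$, observe that the maximum edge weight is at least $\frac{1}{2}b_iw(C)$ for each pair index $i<m$ and at least $b_mw(C)\ge\frac{1}{2}b_mw(C)$ for the central edge, and then relax to the uniform bound $\frac{1}{2}\max_{1\le i\le m}b_i$. Your explicit remark that the factor $\frac{1}{2}$ is wasteful at $i=m$ but kept for uniformity is exactly the content of the paper's final inequality.
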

\begin{proof}
Recall that $w(T_C)=w(C)-\max_{0\leq i\leq k}w(v_{i},v_{(i+1)\bmod (k+1)})$. Then, we have
\[
\begin{split}
w(T_C) =&\ w(C)-\max_{0\leq i\leq k}w(v_{i},v_{(i+1)\bmod (k+1)})\\
\leq&\ w(C)-\max_{\substack{1\leq i<m}}\lrC{\frac{1}{2}(w(v_{i-1},v_{i})+w(v_{k+1-i},v_{(k+2-i)\bmod (k+1)})),\ w(v_{k/2},v_{k/2+1})}\\
=&\ \lrA{1-\max_{\substack{1\leq i< m}}\lrC{\frac{1}{2}b_{i}, b_{m}}}w(C)\leq\lrA{1-\max_{\substack{1\leq i\leq m}}\frac{1}{2}b_{i}}w(C),
\end{split}
\]
where the inequalities can be obtained directly.
\end{proof}

\begin{lemma}~\label{even-delta}
For any cycle $C=(v_0,v_1,...,v_k,v_0)\in I^*$ with even $k$, we have $\Delta_C\leq(\frac{k+2}{2}-\sum_{i=1}^{m}ib_{i})w(C)$.
\end{lemma}
\begin{proof}
First, by the definition of $\Delta_C$, we have
\[
\Delta_C=\sum_{i=1}^{k}w(v_0, v_i)=\sum_{i=1}^{m-1}\lra{w(v_0,v_{i})+w(v_0,v_{k+1-i})}.
\]

Then, by the proof of Lemma~\ref{odd-delta}, we have
\[
w(v_0,v_{i})\leq\sum_{j=1}^{i}w(v_{j-1},v_{j}) \quad \mbox{and}\quad
w(v_0,v_{k+1-i})\leq \sum_{j=1}^{i}w(v_{k+1-j},v_{(k+2-j)\bmod(k+1)}).
\]
See Figure~\ref{fig04} for an illustration of these two inequalities. 

\begin{figure}[ht]
\centering
\begin{tikzpicture}
\tikzstyle{vertex}=[black,circle,fill,minimum size=5,inner sep=0pt]
\begin{scope}[every node/.style={vertex}]
 \node (l) at (0,0) {};
 \node (x1) at (0.6,0.5) {};
 \node (y1) at (0.6,-0.5) {};
 \node (x2) at (0.6+2,0.5) {};
 \node (y2) at (0.6+2,-0.5) {};
 \node (x3) at (0.6+4,0.5) {};
 \node (y3) at (0.6+4,-0.5) {};
 \node (x4) at (0.6+6,0.5) {};
 \node (y4) at (0.6+6,-0.5) {};
\end{scope}

\node[left] at (0,0) {$v_0$};
\node[above] at (0.6,0.5) {$v_1$};
\node[below] at (0.6,-0.5) {$v_k$};
\node[above] at (0.6+2,0.5) {$v_{i-1}$};
\node[below] at (0.6+2,-0.5) {$v_{k+2-i}$};
\node[above] at (0.6+4,0.5) {$v_i$};
\node[below] at (0.6+4,-0.5) {$v_{k+1-i}$};
\node[above] at (0.6+6,0.5) {$v_{m-1}$};
\node[below] at (0.6+6,-0.5) {$v_{m}$};

\node[right] at (0.8+6,0) {$(m=\frac{k+2}{2})$};

\draw[very thick,-] (x4.south) to (y4.north);

\draw[very thick,->,blue] (l.east) to (x1.west);
\draw[very thick,dotted,->,blue] (x1.east) to (x2.west);
\draw[very thick,->,blue] (x2.east) to (x3.west);
\draw[very thick,dotted,-] (x3.east) to (x4.west);

\draw[very thick,<-,blue] (l.east) to (y1.west);
\draw[very thick,dotted,<-,blue] (y1.east) to (y2.west);
\draw[very thick,<-,blue] (y2.east) to (y3.west);
\draw[very thick,dotted,-] (y3.east) to (y4.west);

\draw[very thick,<-,red] (l.east) to (x3.west);
\draw[very thick,->,red] (l.east) to (y3.west);
\end{tikzpicture}
\caption{An illustration of the two inequalities: there are two cycles (denoted by directed edges): $(v_0,v_1,\dots,v_i,v_0)$ and $(v_0,v_{k+1-i},\dots,v_k,v_0)$; for each cycle, the weight of the red edge is at most the sum of the weights of the blue edges by the triangle inequality.}
\label{fig04}
\end{figure}

Recall that $w(v_{j-1},v_{j})+w(v_{k+1-j},v_{(k+2-j)\bmod(k+1)})=b_jw(C)$. Hence, we have
\[
\Delta_C\leq\sum_{i=1}^{m-1}\sum_{j=1}^{i}b_jw(C)=\sum_{i=1}^{m}\lra{m-i}b_{i}w(C)=\lrA{\frac{k+2}{2}-\sum_{i=1}^{m}ib_{i}}w(C),
\]
where the last equality follows from $\sum_{i=1}^{m}b_i=1$ and $m=\frac{k+2}{2}$.
\end{proof}

Both Lemmas~\ref{even-tree} and \ref{even-delta} have the same form as Lemmas~\ref{odd-tree} and \ref{odd-delta}. Hence, by the proof of Lemma~\ref{evencycle}, we can derive the following lemma.

\begin{lemma}\label{oddcycle}
For any cycle $C=(v_0,v_1,...,v_k,v_0)\in I^*$ with even $k$, we have
\[
\frac{2}{k}\Delta_C+\lrA{1-\frac{1}{k}}w(T_C)\leq \max_{\substack{b_{1}\geq b_{2}\geq\dots\geq b_{m}\geq0\\b_{1}+b_{2}+\dots+b_{m}=1}}
\lrA{\frac{2k+1}{k}-\sum_{i=1}^{m}c^e_{i}b_{i}}w(C),
\]
where $c^e_1=\frac{k+3}{2k}$ and $c^e_{i}=\frac{2i}{k}$ for each $2\leq i\leq m$.
\end{lemma}

Lemmas~\ref{evencycle} and \ref{oddcycle} can be combined as follows.
\begin{lemma}\label{cycle}
When $k\geq3$, for any cycle $C=(v_0,v_1,...,v_k,v_0)\in I^*$, we have
\[
\frac{2}{k}\Delta_C+\lrA{1-\frac{1}{k}}w(T_C)\leq \max_{\substack{x_{1}\geq x_{2}\geq\dots\geq x_{m}\geq0\\x_{1}+x_{2}+\dots+x_{m}=1}}
\lrA{\frac{2k+1}{k}-\sum_{i=1}^{m}c_{i}x_{i}}w(C),
\]
where $c_1=\frac{k+3}{2k}$ and $c_{i}=\frac{2i}{k}$ for each $2\leq i\leq m$.
\end{lemma}

For the sake of analysis, we generalize some definitions and define variables $\{x_{i}\mid i\in\mathbb{Z}_{\geq 1}\}$ and coefficients $\{c_{i}\mid i\in\mathbb{Z}_{\geq 1}\}$ where $c_1=\frac{k+3}{2k}$ and $c_{i}=\frac{2i}{k}$ for each $ i\geq 2$, then we have
\[
\min_{\substack{x_{1}\geq x_{2}\geq\dots\geq0\\x_{1}+x_{2}+\dots=1}}
\sum_{i=1}^{\infty}c_{i}x_{i}
\leq\min_{\substack{x_{1}\geq x_{2}\geq\dots\geq x_{m}\geq0\\x_{1}+x_{2}+\dots+x_{m}=1}}
\sum_{i=1}^{m}c_{i}x_{i},
\]
since the latter is a special case of the former. Then, we focus on analyzing the former LP.

\begin{lemma}\label{lp}
For the following LP,
\[
\min_{\substack{x_{1}\geq x_{2}\geq\dots\geq0\\x_{1}+x_{2}+\dots=1}}\sum_{i=1}^{\infty}c_{i}x_{i},
\]
where $c_1=\frac{k+3}{2k}$ and $c_{i}=\frac{2i}{k}$ for each $i\geq 2$. There exists an optimal solution, denoted by $\mbox{SOL}(x^*_1, x^*_2,\dots)$, satisfying that $x_{1}^{*}=x_{2}^{*}=\dots=x_{l}^{*}=\frac{1}{l}$ and $x_{i}^{*}=0$ for all $i>l$, where $l=\ceil{\frac{\sqrt{2k-1}-1}{2}}$. Moreover, we have $\mbox{SOL}(x^*_1, x^*_2,\dots)=\frac{2l^2+2l+k-1}{2kl}$.
\end{lemma}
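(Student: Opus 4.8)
The plan is to reduce this infinite-dimensional linear program to a one-parameter discrete minimization by exploiting the extreme-point structure of the feasible set, and then to solve that discrete problem by a convexity argument.

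First I would show that every feasible vector is a convex combination of the ``uniform'' vectors
\[
u^{(j)}=(\underbrace{1/j,\dots,1/j}_{j\text{ times}},0,0,\dots),\qquad j\in\mathbb{Z}_{\geq1}.
\]
Given a feasible $(x_1,x_2,\dots)$ with $x_1\geq x_2\geq\cdots\geq0$ and $\sum_i x_i=1$, put $\lambda_j=j(x_j-x_{j+1})\geq0$. A telescoping (Abel) summation gives $\sum_{j=1}^{N}\lambda_j=\sum_{j=1}^{N}x_j-Nx_{N+1}$; since $(x_j)$ is nonincreasing and summable we have $Nx_{N+1}\to0$, hence $\sum_{j\geq1}\lambda_j=1$. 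Moreover $\sum_{j\geq i}\lambda_j/j=\sum_{j\geq i}(x_j-x_{j+1})=x_i$, i.e.\ $x=\sum_{j}\lambda_j u^{(j)}$. Interchanging the two nonnegative sums (Tonelli) yields $\sum_i c_i x_i=\sum_j \lambda_j f(j)$, where $f(j):=\tfrac1j\sum_{i=1}^{j}c_i$. Consequently the value of the LP equals $\min_{j\geq1}f(j)$, and this minimum is attained (at the corresponding $u^{(j)}$), since $f(j)\to\infty$.

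Next I would compute $f$ explicitly. Using $\sum_{i=2}^{j}\frac{2i}{k}=\frac{j^2+j-2}{k}$,
\[
f(j)=\frac1j\Bigl(\frac{k+3}{2k}+\frac{j^2+j-2}{k}\Bigr)=\frac{2j^2+2j+k-1}{2kj}=\frac1k\Bigl(j+1+\frac{k-1}{2j}\Bigr).
\]
This is convex in $j>0$, so the integer minimizer is the smallest $j$ with $f(j)\leq f(j+1)$; that inequality simplifies to $j(j+1)\geq\frac{k-1}{2}$, equivalently $j\geq\frac{\sqrt{2k-1}-1}{2}$. Therefore $l=\bigl\lceil\frac{\sqrt{2k-1}-1}{2}\bigr\rceil$ (and $l\geq1$ since $k\geq3$), the optimal solution is $x^{*}=u^{(l)}$, i.e.\ $x_1^{*}=\cdots=x_l^{*}=1/l$ and $x_i^{*}=0$ for $i>l$, with value $f(l)=\frac{2l^2+2l+k-1}{2kl}$, as claimed.

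The main obstacle is the first step: justifying the convex-combination decomposition rigorously in the infinite setting, in particular the fact $Nx_{N+1}\to0$ and the interchange of summation. Once these are secured, the rest is a short calculation. One could avoid infinite sums by instead running an exchange argument directly on a finitely supported optimum — shifting mass from a coordinate with a larger coefficient onto one with a smaller coefficient while preserving the monotonicity constraint — but the decomposition argument is cleaner and concentrates all the analytic care in one place.
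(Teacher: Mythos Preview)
Your proof is correct, and it takes a genuinely different route from the paper. The paper argues by exchange: it fixes an optimal solution with the fewest nonzero coordinates, and through three separate claims (each proved by shifting mass and deriving a contradiction) shows that the nonzero coordinates must all be equal and that their number $l$ is the least index with $c_{l+1}\geq\overline{c_l}$. You instead identify the extreme points of the feasible simplex $\{x_1\geq x_2\geq\cdots\geq0,\ \sum x_i=1\}$ as the uniform vectors $u^{(j)}$, decompose any feasible point as a convex combination via $\lambda_j=j(x_j-x_{j+1})$, and reduce the whole problem to minimizing the single function $f(j)=\overline{c_j}$ over $j\in\mathbb{Z}_{\geq1}$; the convexity of $j\mapsto j+1+\tfrac{k-1}{2j}$ then pins down $l$ in one line. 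Your approach is shorter and more conceptual, at the price of the two analytic facts you flagged ($Nx_{N+1}\to0$ for a nonincreasing summable sequence, and Tonelli for the interchange), both of which are standard. The paper's approach is more hands-on and avoids any infinite-sum care, but needs several claims and a minimality hypothesis on the optimum. Either way, the final identification of $l$ via $j(j+1)\geq\frac{k-1}{2}$ and the value $\frac{2l^2+2l+k-1}{2kl}$ agree.
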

\begin{proof}
Since $\lim_{\substack{i\rightarrow +\infty}}c_i=+\infty$, there exists an optimal solution where the number of nonzero variables is limited.
Hence, we may consider the optimal solution $\mbox{SOL}(x^*_1, x^*_2,\dots)$ with the number of nonzero variables minimized. 
Let $x_{l}^{*}$ be the last nonzero variable.
We define $\overline{c_{i}}=\frac{1}{i}\sum_{i=1}^{i}c_{i}$ for $i>0$.
We will show that $x_{1}^{*}=x_{2}^{*}=\dots=x_{l}^{*}=\frac{1}{l}$ and $l=\arg\min_{l'\in \mathbb{Z}_{\geq 1}}\{c_{l'+1}\geq\overline{c_{l'}}\}$.

Since the number of nonzero variables is minimized, we have $c_2\geq \overline{c_1}=c_1$ if and only if $l=1$. Hence, we know that $l=\arg\min_{l'\in \mathbb{Z}_{\geq 1}}\{c_{l'+1}\geq\overline{c_{l'}}\}$ holds when $c_2\geq \overline{c_1}$. Next, we consider $c_2<\overline{c_1}$ and in this case we have $l>1$.

\begin{claim}\label{cla1}
For all $1<i\leq l$, we have $c_i<\overline{c_{i-1}}$.
\end{claim}
\begin{proof}[Claim Proof]
Suppose there exists $n_0\in\mathbb{Z}_{>1}$ such that $1<n_0\leq l$ and $c_{n_0}\geq \overline{c_{n_0-1}}$, then we let $x_{i}^{**}=x_{i}^{*}+\frac{1}{n_0-1}\sum_{i'=n_0}^{l}x_{i'}^{*}$ for $1\leq i<n_0$ and $x_{i}^{**}=0$ for $n_0\leq i\leq l$. Then, we have
\begin{align*}
\sum_{i=1}^{n_0-1}c_{i}x_{i}^{**}=\ &\sum_{i=1}^{n_0-1}c_{i}\lrA{x_{i}^{*}+\frac{1}{n_0-1}\sum_{i'=n_0}^{l}x_{i'}^{*}}\\
=\ &\sum_{i=1}^{n_0-1}c_{i}x_{i}^{*}+\sum_{i'=n_0}^{l}\sum_{i=1}^{n_0-1}\frac{c_i}{n_0-1}x_{i'}^{*}\\
=\ &\sum_{i=1}^{n_0-1}c_{i}x_{i}^{*}+\sum_{i'=n_0}^{l}\overline{c_{n_0-1}}x_{i'}^{*}\\
\leq\ &\sum_{i=1}^{n_0-1}c_{i}x_{i}^{*}+\sum_{i'=n_0}^{l}c_{n_0}x_{i'}^{*}\\
\leq\ &\sum_{i=1}^{n_0-1}c_{i}x_{i}^{*}+\sum_{i'=n_0}^{l}c_{i'}x_{i'}^{*}=\sum_{i=1}^{l}c_{i}x_{i}^{*},
\end{align*}
where the first inequality follows from $\overline{c_{n_0-1}}\leq c_{n_0}$ and the second from $c_{n_0}\leq c_{i'}$ for $i'\geq n_0>1$ since $c_{i}=\frac{2i}{k}$ when $i>1$.
Note that the number of nonzero variables in the new solution is $n_0-1<l$, which causes a contradiction.
\end{proof}

\begin{claim}\label{cla2}
For all $1<i\leq l$, we have $x_{i}^{*}=x_{i-1}^{*}$.
\end{claim}
\begin{proof}[Claim Proof]
Suppose there exists (minimized) $n_0\in\mathbb{Z}_{>1}$ such that $1<n_0\leq l$ and $x_{1}^{*}=\dots=x_{n_0-1}^{*}> x_{n_0}^{*}$, we simply let $x_{1}^{**}=\dots=x_{n_0}^{**}=\frac{1}{n_0}\sum_{i=1}^{n_0}x_{i}^{*}$. Note that the new solution still forms a feasible solution since $x_{n_0}^{**}>x_{n_0}^{*}\geq x_{n_0+1}^{*}\geq\cdots$.
We have
\begin{align*}
\sum_{i=1}^{n_0}c_{i}x_{i}^{**}=\ &\sum_{i=1}^{n_0-1}c_{i}x_{i}^{**}+c_{n_0}x_{n_0}^{**}\\
=\ &(n_0-1)\overline{c_{n_0-1}}x_{n_0}^{**}+c_{n_0}x_{n_0}^{**}\\
=\ &(n_0-1)\overline{c_{n_0-1}}x_{n_0}^{**}+c_{n_0}(x_{n_0}^{**}-x_{n_0}^{*})+c_{n_0}x_{n_0}^{*}\\
<\ &(n_0-1)\overline{c_{n_0-1}}x_{n_0}^{**}+\overline{c_{n_0-1}}(x_{n_0}^{**}-x_{n_0}^{*})+c_{n_0}x_{n_0}^{*}\\
=\ &\overline{c_{n_0-1}}(n_0x_{n_0-1}^{**}-x_{n_0}^{*})+c_{n_0}x_{n_0}^{*}\\
=\ &\overline{c_{n_0-1}}\sum_{i=1}^{n_0-1}x_{i}^{*}+c_{n_0}x_{n_0}^{*}=\sum_{i=1}^{n_0}c_{i}x_{i}^{*},
\end{align*}
where the inequality follows from $c_{n_0}<\overline{c_{n_0-1}}$ by Claim~\ref{cla1}.
The new solution is better than the optimal solution, a contradiction.
\end{proof}

\begin{claim}\label{cla3}
$c_{l+1}\geq\overline{c_{l}}$.
\end{claim}
\begin{proof}[Claim Proof]
Otherwise, if $c_{l+1}<\overline{c_{l}}$, we let $x_{1}^{**}=\dots=x_{l+1}^{**}=\frac{1}{l+1}\sum_{i=1}^{l}x_{i}^{*}=x^{**}$. Note that $x_{1}^{*}=\dots=x_{l}^{*}$ by Claim~\ref{cla2}, and then we have
\[
\sum_{i=1}^{l+1}c_{i}x^{**}_i=\sum_{i=1}^{l+1}c_{i}x^{**}=l\overline{c_l}x^{**}+c_{l+1}x^{**}<\overline{c_l}(l+1)x^{**}=\overline{c_l}\sum_{i=1}^{l}x_{i}^{*}=\sum_{i=1}^{l}c_ix_{i}^{*}.
\]
The new solution is better than the optimal solution, a contradiction.
\end{proof}

By Claims~\ref{cla1} and~\ref{cla3}, we have
\[
l=\arg\min_{l'\in \mathbb{Z}_{\geq 1}}\lrC{c_{l'+1}\geq\overline{c_{l'}}}=\arg\min_{l'\in\mathbb{Z}_{\geq 1}}
\lrC{l'^2+l'-\frac{k-1}{2}\geq0}=\Ceil{\frac{\sqrt{2k-1}-1}{2}}.
\]
Therefore, we have
\[
\mbox{SOL}(x^*_1, x^*_2,\dots)=\sum_{i=1}^{l}c_ix^*_i=\frac{1}{l}\sum_{i=1}^{l}c_i=\overline{c_l}=\frac{2l^2+2l+k-1}{2kl},
\]
where $l=\ceil{\frac{\sqrt{2k-1}-1}{2}}$.
\end{proof}

Finally, Lemma~\ref{core} follows directly from Lemmas~\ref{cycle} and \ref{lp}.
\end{proof}

\section{Proof of Theorem~\ref{res-unsplit}}\label{proof-2}
\begingroup
\def\thetheorem{\ref{res-unsplit}}
\begin{theorem}
Given fixed $k\geq 3$, for unsplittable $k$-CVRP, the approximation ratio of LP-UITP is shown as follows.
\begin{itemize}
\item If $\floor{\frac{-(C+1)+\sqrt{(C+1)^2+4kC}}{2}}\neq \floor{\frac{-(C+1)+\sqrt{(C+1)^2+4(k+1)C}}{2}}$, where $C=\frac{\floor{k/2}}{\floor{ k/2}+1}\cdot\frac{1}{2}$, the approximation ratio is at most $h(x^*)$, where $h(x)=1+\ln\lrA{\frac{k-2\floor{\frac{1}{x}}+\floor{\frac{1}{x}}^2x+\floor{\frac{1}{x}}x}{\floor{k/2}+1}}+\frac{\floor{k/2}}{\floor{ k/2}+1}\cdot\frac{3-x}{2}$, $x^*=\frac{1}{C}-\frac{k-2z^*}{(z^*)^2+z^*}$, and $z^*=\floor{\frac{1}{x^*}}=\Floor{\frac{-(C+1)+\sqrt{(C+1)^2+4(k+1)C}}{2}}$;
\item Otherwise, the approximation ratio is at most $\max\{g(\ceil{z_0}), g(\floor{z_0})\}$, where $g(z)=1+\ln\left(\frac{k-z+1}{\floor{k/2}+1}\right)+\frac{\floor{k/2}}{\floor{ k/2}+1}\cdot\frac{3-\frac{1}{z}}{2}$, and $z_0=\frac{\sqrt{\floor{k/2}^2+8\floor{k/2}(\floor{k/2}+1)(k+1)}-\floor{k/2}}{4(\floor{k/2}+1)}$.
\end{itemize}
\end{theorem}
\addtocounter{theorem}{-1}
\endgroup
\begin{proof}
Recall that the solution obtained by LP-UITP has a weight of at most
\[
\gamma \cdot w(I^*)+e^{-\gamma}\cdot \frac{2}{\floor{ k/2}+1}\Delta+\lrA{1-\frac{1}{\floor{ k/2}+1}}w(H_{CS}),
\]
where $w(H_{CS})\leq \frac{1}{2}w(I^*)+\MST$ by Lemmas~\ref{lb-tsp} and \ref{chris}.

Recall that $I^*$ is an optimal itinerary where every tour delivers exactly $k$ of the demand.
By generalizing the results in Lemmas~\ref{odd-delta}, \ref{even-delta}, \ref{odd-tree}, and \ref{even-tree}, we can obtain the following result.

\begin{claim}\label{unsplittable-bounds}
For unsplittable $k$-CVRP, it holds that
\[
\Delta\leq\lrA{\frac{k+2}{2}-\sum_{i=1}^{m}ix_i}w(I^*)\quad \mbox{and}\quad \MST\leq\lrA{1-\max_{1\leq i\leq m}\frac{1}{2}x_i}w(I^*),
\]
where $\sum_{i=1}^{m}x_i=1$ and $x_i\geq0$ for each $1\leq i\leq m$.
\end{claim}
\begin{proof}[Claim Proof]
We may assume w.l.o.g.\ that $w(I^*)\neq 0$.
Given the instance $G$ of unsplittable $k$-CVRP, we obtain an instance $G'$ of unit-demand $k$-CVRP by replacing each customer $v_i$ with $d_i$ unit-demand customers as the same place. 
Then, $I^*$ corresponds to a feasible solution in $G'$, denoted as $I'$, where $I'$ is a set of $(k+1)$-cycles intersecting only at $v_0$. We can obtain $w(I^*)=w(I')=\sum_{C\in I'}w(C)$.  Next, we focus on the graph $G'$.

Consider an arbitrary cycle $C\in I'$ in $G'$.
If $w(C)\neq 0$, we define $\lambda_C\coloneqq \frac{w(C)}{w(I')}$; otherwise, we define $\lambda_C=0$. Then, we have $\sum_{C\in I'}\lambda_C=1$.
Moreover, we define $\Delta_C=\sum_{v\in C}w(v_0,v)$ and $w(T_C)=w(C)-\max_{0\leq i\leq k}w(v_i,v_{(i+1)\bmod (k+1)})$.
By Lemmas~\ref{odd-delta}, \ref{even-delta}, \ref{odd-tree}, and \ref{even-tree}, we have
\[
\Delta_C\leq\lrA{\frac{k+2}{2}-\sum_{i=1}^{m}ix^C_i}w(C)\quad \mbox{and}\quad w(T_C)\leq\lrA{1-\max_{1\leq i\leq m}\frac{1}{2}x^C_i}w(C),
\]
where $\sum_{i=1}^{m}x^C_i=1$ and $x^C_i\geq0$ for each $1\leq i\leq m$.

Let $x_i=\sum_{C\in I'}\lambda_C\cdot x^C_i$. We have 
\begin{align*}
\sum_{C\in I'}\Delta_C&\leq\sum_{C\in I'}\lrA{\frac{k+2}{2}-\sum_{i=1}^{m}ix^C_i}w(C)\\
&=\sum_{C\in I'}\lrA{\frac{k+2}{2}-\sum_{i=1}^{m}ix^C_i}\lambda_C\cdot w(I')=\lrA{\frac{k+2}{2}-\sum_{i=1}^{m}ix_i} w(I'),
\end{align*}
where the first equality follows from $w(C)=\lambda_C\cdot w(I')$ and the second equality follows from $\sum_{C\in I'}\lambda_C=1$ and $x_i=\sum_{C\in I'}\lambda_C\cdot x^C_i$. Similarly, we have
\begin{align*}
\sum_{C\in I'}w(T_C)&=\sum_{C\in I'}\lrA{1-\max_{1\leq i\leq m}\frac{1}{2}x^C_i}w(C)\\
&=\sum_{C\in I'}\lrA{1-\max_{1\leq i\leq m}\frac{1}{2}x^C_i}\lambda_C\cdot w(I')=\lrA{1-\max_{1\leq i\leq m}\frac{1}{2}x_i}w(I'),
\end{align*}
where $\sum_{i=1}^{m}x_i=1$ and $x_i\geq0$ for each $1\leq i\leq m$.

Note that the value $\Delta$ in $G$ equals to $\sum_{C\in I'}\Delta_C$, and the weight of the minimum weight spanning tree in $G$ is the same as that in $G'$, i.e., $\MST\leq \sum_{C\in I'}w(T_C)$. Moreover, since $w(I')=w(I^*)$, we have 
\[
\Delta\leq\lrA{\frac{k+2}{2}-\sum_{i=1}^{m}ix_i}w(I^*)\quad \mbox{and}\quad \MST\leq\lrA{1-\max_{1\leq i\leq m}\frac{1}{2}x_i}w(I^*),
\]
where $\sum_{i=1}^{m}x_i=1$ and $x_i\geq0$ for each $1\leq i\leq m$.
\end{proof}

By Claim~\ref{unsplittable-bounds}, the approximation ratio of LP-UITP satisfies
\[
\max_{\substack{x_1, x_2,\dots, x_{m}\geq 0\\ x_1+x_2+\dots+x_{m}=1}}
\min_{\substack{\gamma\geq 0}}
\lrC{\gamma+e^{-\gamma}\cdot\frac{2}{\floor{k/2}+1}\lrA{\frac{k+2}{2}-\sum_{i=1}^{m}ix_i}+\frac{\floor{k/2}}{\floor{ k/2}+1}\lrA{\frac{3}{2}-\max_{1\leq i\leq m}\frac{1}{2}x_i}}.
\]

By the proof in Lemma~\ref{evencycle}, it holds $x_1\geq x_2\geq\dots\geq x_m$ in the worst case. Hence, we have $\max_{1\leq i\leq m}x_i=x_1$, and the approximation ratio satisfies
\[
\max_{\substack{x_1\geq x_2\geq\dots\geq x_{m}\geq 0\\ x_1+x_2+\dots+x_{m}=1}}
\min_{\substack{\gamma\geq 0}}
\lrC{\gamma+e^{-\gamma}\cdot\frac{2}{\floor{k/2}+1}\lrA{\frac{k+2}{2}-\sum_{i=1}^{m}ix_i}+\frac{\floor{k/2}}{\floor{ k/2}+1}\cdot\frac{3-x_1}{2}}.
\]

Taking that $x_1$ is a constant, and by a similar argument to the proof of Lemma~\ref{lp}, we can show that when $\frac{k+2}{2}-\sum_{i=1}^{m}ix_i$ is maximized, the solution satisfies $x_1=x_2=\dots=x_{l}\geq x_{l+1}=1-lx_1$ and $x_{l+2}=\cdots=x_{m}=0$, where $l=\floor{\frac{1}{x_1}}\leq m$. 
Note that we cannot obtain $x_{l}=x_{l+1}$ directly.
So, we have 
\begin{align*}
\sum_{i=1}^{m}ix_i=\sum_{i=1}^{l}ix_1+(l+1)x_{l+1}&=\frac{l(l+1)}{2}x_1+(l+1)(1-lx_1)\\
&=\frac{-l^2x_1-lx_1+2l+2}{2}.
\end{align*}

Then, let $l=\floor{\frac{1}{x}}$. The approximation ratio is at most
\[
\min_{\gamma\geq 0}\max_{0<x\leq 1}f(\gamma,x),
\]
where
\begin{equation}\label{eqfx}
f(\gamma,x)=\gamma+e^{-\gamma}\cdot\frac{k+l^2x+lx-2l}{\floor{k/2}+1}+\frac{\floor{k/2}}{\floor{ k/2}+1}\cdot\frac{3-x}{2}.
\end{equation}

Then, we explore some properties of the function $f(\gamma, x)$.

For any $z\in\mathbb{Z}_{\geq1}$, it can be verified that 
\[
\lim_{x\rightarrow\lrA{\frac{1}{z}}^-}f(\gamma, x)=\lim_{x\rightarrow\lrA{\frac{1}{z}}^+}f(\gamma, x).
\]
Moreover, under $x\in[\frac{1}{z+1},\frac{1}{z}]$, we have
\[
k-z\leq\frac{k-2l+l^2x+lx}{\floor{k/2}+1}\leq k-z+1.
\]
Since $k\in\mathbb{Z}_{\geq1}$, we known that
\begin{equation}\label{eqeq1}
\frac{k-2l+l^2x+lx}{\floor{k/2}+1}\leq1\quad\Longleftrightarrow\quad x\leq \frac{1}{k-\floor{k/2}}.
\end{equation}

Note that
\[
\min_{\gamma\geq 0}\max_{0<x\leq 1}f(\gamma,x)\leq\max\lrC{\min_{\substack{\gamma\geq 0}}\max_{0<x\leq \frac{1}{k-\floor{k/2}}}f(\gamma, x),\quad\min_{\substack{\gamma\geq 0}}\max_{\frac{1}{k-\floor{k/2}}<x\leq 1}f(\gamma, x)}
\]

Next, we consider two cases: $0<x\leq\frac{1}{k-\floor{k/2}}$ and $\frac{1}{k-\floor{k/2}}<x\leq 1$.

\textbf{Case~1: $\min_{\substack{\gamma\geq 0}}\max_{0<x\leq \frac{1}{k-\floor{k/2}}}f(\gamma, x)\geq\min_{\substack{\gamma\geq 0}}\max_{\frac{1}{k-\floor{k/2}}<x\leq 1}f(\gamma, x)$.} By setting $\gamma=0$, we have
\[
\max_{0<x\leq \frac{1}{k-\floor{k/2}}}f(0, x)=\frac{k-2l+l^2x+lx}{\floor{k/2}+1}+\frac{\floor{k/2}}{\floor{ k/2}+1}\cdot\frac{3-x}{2}.
\]
Since $0<x\leq \frac{1}{k-\floor{k/2}}$, we have $l=\floor{\frac{1}{x}}\geq k-\floor{k/2}$. For any $x\in(\frac{1}{z+1}, \frac{1}{z}]$ with $z\in\mathbb{Z}_{\geq k-\floor{k/2}}$, since $k\in\mathbb{Z}_{\geq 3}$, it can be verified that
\[
\frac{\partial f}{\partial x}=\frac{l^2+l}{\floor{k/2}+1}-\frac{\floor{k/2}}{\floor{ k/2}+1}\cdot\frac{1}{2}\geq 0.
\]
Hence, we have
\begin{align*}
\max_{0<x\leq \frac{1}{k-\floor{k/2}}}f(0, x)&\leq f\lrA{0,\frac{1}{k-\floor{k/2}}}\\
&=\frac{k-(k-\floor{k/2})+1}{\floor{k/2}+1}+\frac{\floor{k/2}}{\floor{ k/2}+1}\cdot\frac{3-\frac{1}{k-\floor{k/2}}}{2}.\\
&=\frac{5}{2}-\frac{3+\frac{\floor{k/2}}{k-\floor{k/2}}}{2(\floor{k/2}+1)}.
\end{align*}
Therefore, in this case, LP-UITP with $\gamma=0$ achieves an approximation ratio of at most $\frac{5}{2}-\frac{3+\frac{\floor{k/2}}{k-\floor{k/2}}}{2(\floor{k/2}+1)}$.

\textbf{Case~2: $\min_{\substack{\gamma\geq 0}}\max_{0<x\leq \frac{1}{k-\floor{k/2}}}f(\gamma, x)\leq \min_{\substack{\gamma\geq 0}}\max_{\frac{1}{k-\floor{k/2}}<x\leq 1}f(\gamma, x)$.} 
When $\frac{1}{k-\floor{k/2}}<x\leq 1$, we have $\frac{k-2l+l^2x+lx}{\floor{k/2}+1}>1$ by (\ref{eqeq1}). Moreover, it can be verified that
\[
\left. \frac{\partial f}{\partial \gamma} \right|_{\gamma=\ln\lrA{\frac{k-2l+l^2x+lx}{\floor{k/2}+1}}}=0.
\]
Let
\begin{equation}\label{eqhx}
\begin{split}
h(x)&\coloneqq f\lrA{\ln\lrA{\frac{k-2l+l^2x+lx}{\floor{k/2}+1}}, x}\\
&=1+\ln\lrA{\frac{k-2l+l^2x+lx}{\floor{k/2}+1}}+\frac{\floor{k/2}}{\floor{ k/2}+1}\cdot\frac{3-x}{2}.
\end{split}
\end{equation}

Hence, we can obtain 
\begin{align*}
\min_{\gamma\geq0}\max_{\frac{1}{k-\floor{k/2}}\leq x\leq 1}f(\gamma, x)&\leq \max_{x\leq 1}h(x).
\end{align*}

For any $x\in(\frac{1}{z+1}, \frac{1}{z})$ with $z\in\mathbb{Z}_{\geq 1}$, we have $l=\floor{\frac{1}{x}}=z$. By (\ref{eqhx}), we have
\begin{equation}\label{efdh}
h'(x)=\frac{z^2+z}{k-2z+z^2x+zx}-\frac{\floor{k/2}}{\floor{ k/2}+1}\cdot\frac{1}{2}.
\end{equation}
Then, it can be verified that $h(x)$ is a concave function and
\[
\lim_{x\rightarrow\lrA{\frac{1}{z}}^-}h(x)=\lim_{x\rightarrow\lrA{\frac{1}{z}}^+}h(x).
\]
Therefore, $h(x)$ attains its maximum value only if $h'(x)=0$ with $x\in(\frac{1}{z+1},\frac{1}{z})$ for some $z\in\mathbb{Z}_{\geq1}$ or $\frac{1}{x}$ is an integer. See Figure~\ref{example} for an illustration.

\begin{figure}[!t]
    \centering
    \begin{subfigure}[t]{0.66\textwidth}  
        \centering
        \includegraphics[width=\textwidth]{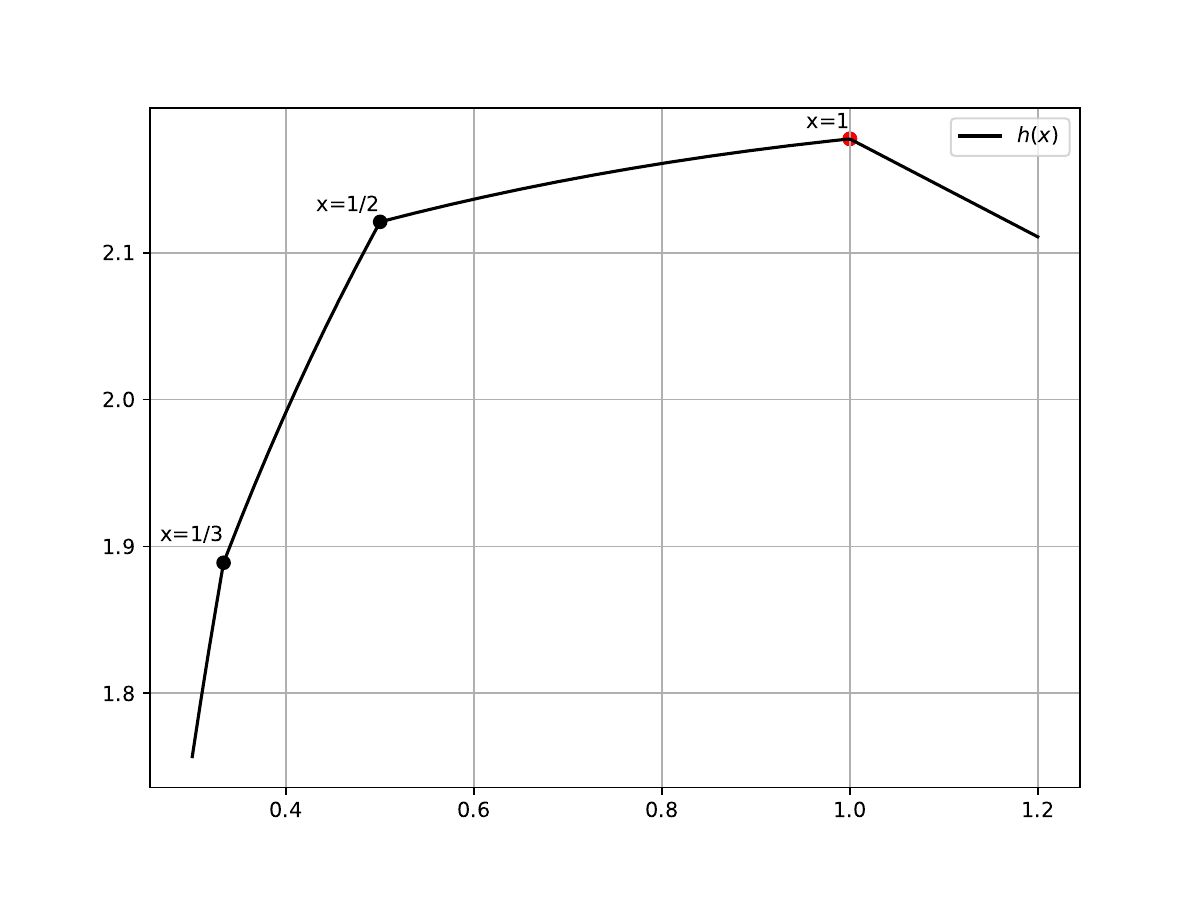}
        \caption{The function $h(x)$ attains its maximum value when $\frac{1}{x}$ is an integer.}
        \label{fig:left}
    \end{subfigure}
    
    \vspace{0.2cm}  
    
    \begin{subfigure}[t]{0.66\textwidth}  
        \centering
        \includegraphics[width=\textwidth]{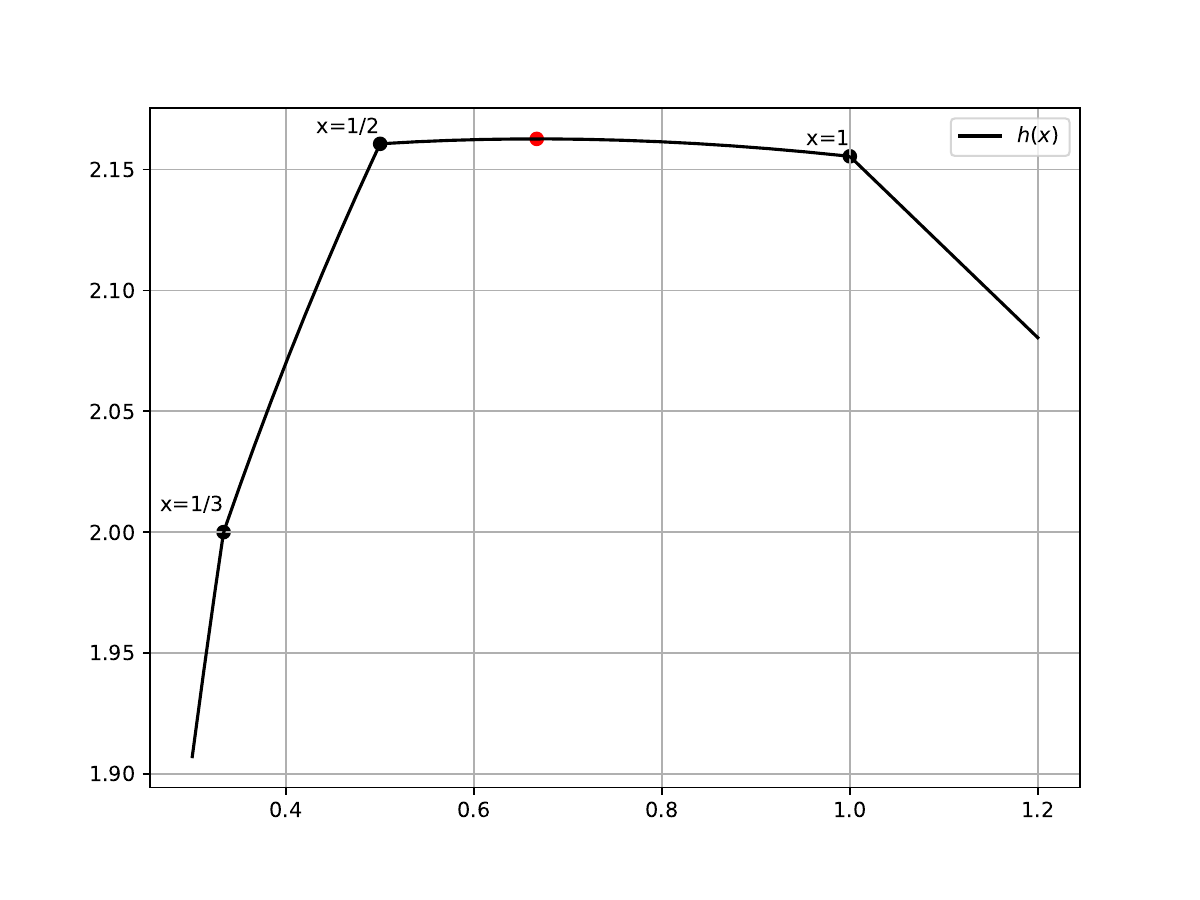}
        \caption{The function $h(x)$ attains its maximum value when $h'(x)=0$.}
        \label{fig:right}
    \end{subfigure}
    \caption{An illustration of the two cases when $h(x)$ attains its maximum value, where the point with respect to the maximum value is marked in red.}
    \label{example}
\end{figure}

We first consider the former case.

\textbf{Case~2.1: $h(x)$ attains its maximum value when $h'(x)=0$ with $x\in(\frac{1}{z+1},\frac{1}{z})$ for some $z\in\mathbb{Z}_{\geq1}$.} In this case, we have $h'(\lrA{\frac{1}{z+1}}^+)>0$ and $h'(\lrA{\frac{1}{z}}^-)<0$. Let $C=\frac{\floor{k/2}}{\floor{ k/2}+1}\cdot\frac{1}{2}$. Then, by (\ref{efdh}), we have $\frac{z^2+z}{k-z}>C$ and $\frac{z^2+z}{k-z+1}<C$, which implies that the interval $(\frac{-(C+1)+\sqrt{(C+1)^2+4kC}}{2}, \frac{-(C+1)+\sqrt{(C+1)^2+4(k+1)C}}{2})$ contains an integer. That is 
\[
\Floor{\frac{-(C+1)+\sqrt{(C+1)^2+4kC}}{2}}\neq \Floor{\frac{-(C+1)+\sqrt{(C+1)^2+4(k+1)C}}{2}}.
\]

In this case, the interval contains only one integer since it can be verified that for any $k\geq 3$, the length of the interval is at most 1, i.e., 
\[
\frac{\sqrt{(C+1)^2+4(k+1)C}-\sqrt{(C+1)^2+4kC}}{2}\leq 1.
\]

Therefore, let $z^*=\Floor{\frac{-(C+1)+\sqrt{(C+1)^2+4(k+1)C}}{2}}$, and then we have $h'(x)=0$ for some $x\in(\frac{1}{z^*+1},\frac{1}{z^*})$.
Let $h'(x^*)=0$, and then we can obtain 
\[
x^*=\frac{1}{C}-\frac{k-2z^*}{(z^*)^2+z^*}.
\]

Note that $\floor{\frac{1}{x^*}}=z^*$.
We can obtain 
\begin{align*}
\min_{\gamma\geq0}\max_{\frac{1}{k-\floor{k/2}}\leq x\leq 1}f(\gamma, x)&\leq \max_{x\leq 1}h(x)\\
&=h(x^*)=f\lrA{\ln\lrA{\frac{k-2z^*+(z^*)^2x^*+z^*x^*}{\floor{k/2}+1}}, x^*}.
\end{align*}
For any $k\in\mathbb{Z}_{\geq3}$, since $x^*>\frac{1}{z^*+1}$, we have
\begin{align*}
\frac{k-2z^*+(z^*)^2x^*+z^*x^*}{\floor{k/2}+1}&\geq\frac{k-z^*}{\floor{k/2}+1}=\frac{k-\Floor{\frac{-(C+1)+\sqrt{(C+1)^2+4(k+1)C}}{2}}}{\floor{k/2}+1}\geq 1.
\end{align*}

Thus, it is feasible to set $\gamma^*=\ln\lrA{\frac{k-2z^*+(z^*)^2x^*+z^*x^*}{\floor{k/2}+1}}$. 
Then, LP-UITP with $\gamma=\gamma^*$ achieves an approximation ratio of at most
\[
h(x^*)=1+\ln\lrA{\frac{k-2z^*+(z^*)^2x^*+z^*x^*}{\floor{k/2}+1}}+\frac{\floor{k/2}}{\floor{k/2}+1}\cdot\frac{3-x^*}{2}.
\]

\textbf{Case~2.2: $h(x)$ attains its maximum value when $\frac{1}{x}$ is an integer.} In this case, we have
\[
\Floor{\frac{-(C+1)+\sqrt{(C+1)^2+4kC}}{2}}=\Floor{\frac{-(C+1)+\sqrt{(C+1)^2+4(k+1)C}}{2}}.
\]

Let
\begin{equation}\label{eqgx}
\begin{split}
g(z)&\coloneqq 1+\ln\left(\frac{k-2z+z^2\cdot\frac{1}{z}+z\cdot\frac{1}{z}}{\floor{k/2}+1}\right)-\frac{\floor{k/2}}{\floor{ k/2}+1}\cdot\frac{3-\frac{1}{z}}{2}\\
&=1+\ln\left(\frac{k-z+1}{\floor{k/2}+1}\right)+\frac{\floor{k/2}}{\floor{ k/2}+1}\cdot\frac{3-\frac{1}{z}}{2}.
\end{split}
\end{equation}

For any $z\in\mathbb{Z}_{\geq 1}$, we have $g(z)=h(\frac{1}{z})$. Hence, the approximation ratio is given by
\[
\max_{z\in\mathbb{Z}_{\geq 1}}g(z).
\]

Note that 
\[
g'(z)=\frac{-1}{k+1-z}+\frac{\floor{k/2}}{2(\floor{k/2}+1)}\cdot \frac{1}{z^2}.
\]
When $z=z_0=\frac{\sqrt{\floor{k/2}^2+8\floor{k/2}(\floor{k/2}+1)(k+1)}-\floor{k/2}}{4(\floor{k/2}+1)}$, we have $g'(z)=0$. Hence, we have 
\[
\max_{z\in\mathbb{Z}_{\geq1}} g(z)=\max\{g(\ceil{z_o}), g(\floor{z_o})\}.
\]

Next, we show how to set $\gamma$ to call LP-UITP.

Let $z^*=\arg\max_{z\in\mathbb{Z}_{\geq1}} g(z)$, and $x^*=\frac{1}{z^*}$. Note that we have $z^*=\ceil{z_0}$ if $g(\ceil{z_0})\geq g(\floor{z_0})$, and $z^*=\floor{z_0}$ otherwise. Since $\floor{\frac{1}{x^*}}=z^*$, we know that
\begin{align*}
\min_{\gamma\geq0}\max_{\frac{1}{k-\floor{k/2}}\leq x\leq 1}f(\gamma, x)&\leq \max_{x\leq 1}h(x)\\
&=\max_{z\in\mathbb{Z}_{\geq 1}}g(z)\\
&=g(z^*)\\
&=h(x^*)\\
&=f\lrA{\ln\lrA{\frac{k-2z^*+(z^*)^2x^*+z^*x^*}{\floor{k/2}+1}}, x^*}\\
&=f\lrA{\ln\lrA{\frac{k-z^*+1}{\floor{k/2}+1}}, x^*}.
\end{align*}
For any $k\in\mathbb{Z}_{\geq3}$, we can verify that
\begin{align*}
\frac{k-z^*+1}{\floor{k/2}+1}&\geq\frac{k-\ceil{z_0}+1}{\floor{k/2}+1}=\frac{k-\Ceil{\frac{\sqrt{\floor{k/2}^2+8\floor{k/2}(\floor{k/2}+1)(k+1)}-\floor{k/2}}{4(\floor{k/2}+1)}}+1}{\floor{k/2}+1}\geq 1.
\end{align*}

Thus, it is feasible to set $\gamma^*=\ln\lrA{\frac{k-z^*+1}{\floor{k/2}+1}}$. 
Note that $\gamma^*=\ln\left(\frac{k-\ceil{z_0}+1}{\floor{k/2}+1}\right)$ if $g(\ceil{z_0})\geq g(\floor{z_0})$, and $\gamma^*=\ln\left(\frac{k-\floor{z_0}+1}{\floor{k/2}+1}\right)$ otherwise. Then, LP-UITP with $\gamma=\gamma^*$ achieves an approximation ratio of at most $\max\{g(\ceil{z_0}), g(\floor{z_0})\}$.

It can be verified that the approximation ratio obtained in Case~2 is worse than that obtained in Case~1.
Therefore, Algorithm~\ref{alg} can achieve the result from this theorem.
\end{proof}

\section{Concluding Remarks}
In this paper, we consider $k$-CVRP in general metrics and improve the approximation ratio for $k$ less than a sufficiently large value, approximately $k\leq 1.7\times 10^7$.
Although most of our algorithms, such as EX-ITP, are simple or extensions of the classic methods, the analysis is technically involved. We need to carefully analyze the structure of the solutions and most of our analysis is based on pure combinatorial analysis. We think that analyzing better theoretical bounds for simple and classic algorithms is an interesting and important task in algorithm research.
We also believe that our methods have potential to be applied to more routing problems.

\section*{Acknowledgments}
A preliminary version of this paper~\cite{DBLP:conf/mfcs/zhao} was presented at the 50th International Symposium on Mathematical Foundations of Computer Science (MFCS 2025).

\bibliographystyle{plain}
\bibliography{main}

\appendix

\section{The Best-Known Approximation Ratio}
\subsection{The Approximation Ratio of Bompadre \emph{et al.}'s Algorithm}\label{bomoadre}
\begin{lemma}[Bompadre \emph{et al.}~\cite{BompadreDO06}, Section~3.2.3]
For splittable $k$-CVRP and unit-demand $k$-CVRP with $k\geq3$, given an $\alpha$-approximation algorithm for metric TSP with $\alpha\geq1$, there is an approximation algorithm with an approximation ratio of $1-\frac{\beta}{\beta+1}+(1-\frac{1}{k})\alpha=\alpha+1-\frac{\alpha}{k}-\Omega(\frac{1}{k^3})$, where $\beta$ is the positive root of the following quadratic equation
\[
0=(2k^2-2k)\beta^2+\lrA{2-\alpha+\frac{\alpha}{k}-\frac{2}{k+1}+2k^2-2k}\beta+\lrA{1-\alpha+\frac{\alpha}{k}-\frac{2}{k+1}}.
\]
\end{lemma}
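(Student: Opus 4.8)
The plan is to separate the statement into its two halves: the \emph{algorithmic} half (an approximation algorithm with ratio $1-\tfrac{\beta}{\beta+1}+(1-\tfrac1k)\alpha$, where $\beta$ solves the displayed quadratic) and the \emph{asymptotic} half (that this ratio equals $\alpha+1-\alpha/k-\Omega(1/k^3)$). For the algorithmic half I would invoke the refined iterated tour partitioning scheme of Bompadre, Dror and Orlin~\cite{BompadreDO06} directly, rather than reproduce it. Briefly, starting from an $\alpha$-approximate Hamiltonian cycle $H$ on $V$, their scheme considers the usual ITP solutions obtained by cutting $H$ into consecutive pieces of at most $k$ customers (whose best rotation costs $\tfrac2k\Delta+(1-\tfrac1k)w(H)$) together with a second family of candidate solutions parametrized by a quantity $\beta\ge 0$ that outperforms the ITP bound precisely when the latter is close to tight; balancing the two guarantees over $\beta$, using the lower bounds $w(I^*)\ge\tfrac2k\Delta$, $w(I^*)\ge w(H^*)$ and a finer $(k{+}1)$-cycle accounting of $w(I^*)$ (the source of the $\tfrac{2}{k+1}$ terms), produces exactly the stated quadratic and the ratio $1-\tfrac{\beta}{\beta+1}+(1-\tfrac1k)\alpha$.

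The substance to verify here is the asymptotic identity, and I would do this by a direct analysis of the quadratic. Write $A:=2k^2-2k$ and $c:=1-\alpha+\tfrac{\alpha}{k}-\tfrac{2}{k+1}$, and set $\delta:=-c=\alpha(1-\tfrac1k)+\tfrac{2}{k+1}-1$; one checks at once that $\alpha+1-\alpha/k=(2-\tfrac{2}{k+1})+\delta$. With this notation the quadratic is $A\beta^2+(A+1-\delta)\beta-\delta=0$. Expanding shows the identity $(\beta+1)(A\beta+1-\delta)=1$, hence $A\beta+1-\delta=\tfrac{1}{\beta+1}$, and substituting back into the quadratic gives $\delta-A\beta=\beta(A\beta+1-\delta)=\tfrac{\beta}{\beta+1}$. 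Since moreover $1+c+(1-\tfrac1k)\alpha=2-\tfrac{2}{k+1}$, the algorithm's ratio is $\tfrac{1}{\beta+1}+(1-\tfrac1k)\alpha=A\beta+1+c+(1-\tfrac1k)\alpha=A\beta+2-\tfrac{2}{k+1}=\alpha+1-\alpha/k-\tfrac{\beta}{\beta+1}$, which already matches the claimed form; it remains only to show $\tfrac{\beta}{\beta+1}=\Omega(1/k^3)$.

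For that, first note the quadratic has product of roots $-\delta/A<0$, so the positive root $\beta$ is well defined. Dropping the nonnegative term $A\beta^2$ gives the upper bound $\beta\le\tfrac{\delta}{A+1-\delta}$, and re-inserting this estimate into $A\beta^2+(A+1-\delta)\beta=\delta$ yields the matching lower bound $\beta\ge\tfrac{\delta(A+1-\delta)}{(A+1-\delta)^2+A\delta}$. Now $\delta$ is increasing in $\alpha$ for $k\ge2$, so $\delta\ge\delta|_{\alpha=1}=\tfrac{k-1}{k(k+1)}\ge\tfrac{1}{4k}$ for $k\ge3$, while $A+1-\delta=\Theta(k^2)$ and $(A+1-\delta)^2$ dominates $A\delta$; plugging in gives $\beta=\Omega(1/k^3)$, and since $\beta\le\tfrac{\delta}{A+1-\delta}=O(1/k^2)\to0$ we get $\tfrac{\beta}{\beta+1}\ge\tfrac{\beta}{2}=\Omega(1/k^3)$, as required. (For $\alpha$ in a bounded range, e.g.\ $1\le\alpha\le 3/2$, the hidden constant is uniform and $1-\delta$ is bounded below by an absolute constant, which is convenient though not strictly needed.)

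The hard part will not be the calculation above, which is short once the quadratic is in hand, but faithfully importing the algorithmic half: the precise form of the second candidate family in~\cite{BompadreDO06} and the exact lower bound on $w(I^*)$ against which it is measured are what actually create the $\Omega(1/k^3)$-shaped improvement, and one must make sure the quadratic in the lemma is transcribed under the same normalization. I would therefore either cite~\cite{BompadreDO06} carefully for that half or, if a self-contained appendix is wanted, re-derive their trade-off in the language of Section~\ref{SEC3} (home-edges and the $(k{+}1)$-cycle structure of an optimal itinerary), and present the quadratic analysis as the closing computation.
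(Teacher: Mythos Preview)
The paper does not prove this lemma: it is stated in Appendix~\ref{bomoadre} purely as a citation of Bompadre, Dror and Orlin~\cite{BompadreDO06}, with no accompanying argument. So there is no ``paper's own proof'' to compare against; your proposal is in fact strictly more than what the paper supplies.

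That said, your analysis is correct and worth keeping. Two minor remarks. First, the identity
\[
1-\frac{\beta}{\beta+1}+\Bigl(1-\frac{1}{k}\Bigr)\alpha \;=\; \alpha+1-\frac{\alpha}{k}-\frac{\beta}{\beta+1}
\]
is immediate rearrangement (both sides equal $1+(1-\tfrac1k)\alpha-\tfrac{\beta}{\beta+1}$); you do not need the quadratic or the factorization $(\beta+1)(A\beta+1-\delta)=1$ to obtain it, though your derivation is a pleasant sanity check. Second, your bounds on the positive root are right: with $A=2k^2-2k$ and $\delta=\alpha(1-\tfrac1k)+\tfrac{2}{k+1}-1\ge\tfrac{k-1}{k(k+1)}$, the estimate $\beta\ge \tfrac{\delta(A+1-\delta)}{(A+1-\delta)^2+A\delta}=\Theta(1/k^3)$ together with $\beta\le\tfrac{\delta}{A+1-\delta}=O(1/k^2)$ indeed gives $\tfrac{\beta}{\beta+1}=\Omega(1/k^3)$ for $\alpha$ in any bounded range. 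This is exactly the content the paper leaves implicit when it writes ``$=\alpha+1-\alpha/k-\Omega(1/k^3)$''. Your plan to cite~\cite{BompadreDO06} for the algorithmic construction and supply only this short asymptotic computation is the right level of detail.
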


\subsection{Calculating the Approximation Ratio for the Splittable Case}\label{best-splittable}
In this section, we show that by using the main result in~\cite{blauth2022improving}, the approximation ratio of splittable $k$-CVRP is at least $\frac{5}{2}-\frac{1.005}{3000}-\frac{1.5}{k}$.

\subsubsection{The Main Result of the Algorithm}
Fix a constant $\varepsilon>0$. Blauth \emph{et al.}~\cite{blauth2022improving} distinguishes two kinds of instances.
\begin{itemize}
    \item If $(1-\varepsilon) w(I^*)\geq \frac{2}{k}\Delta$, the instance is \emph{simple}.
    \item Otherwise, the instance is \emph{difficult}.
\end{itemize}

\begin{lemma}[\cite{blauth2022improving}]\label{good}
For hard instances, there is a function $f: \mathbb{R}_{>0}\rightarrow\mathbb{R}_{>0}$ with $\lim_{\substack{\varepsilon\rightarrow 0}}f(\varepsilon)=0$ and a polynomial-time algorithm to obtain a Hamiltonian cycle $H$ on $V\cup\{v_0\}$ such that $w(H)\leq (1+f(\varepsilon)) w(I^*)$.
\end{lemma}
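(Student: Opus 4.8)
The plan is to convert the near-tightness of the flow lower bound $\tfrac2k\Delta$ in a difficult instance into geometric structure of an optimal solution $I^*$, and then turn that structure into a cheap, efficiently computable Hamiltonian cycle. First I would record the consequence of difficulty for the home-edge ratio $\ra$. By definition a difficult instance satisfies $\tfrac2k\Delta > (1-\epsilon)w(I^*)$, so combining with Lemma~\ref{lb-delta}, which gives $\Delta \le \bigl(\ra+\tfrac{k-2}{2}\bigr)w(I^*)$, yields $\ra > 1-\tfrac{k}{2}\epsilon$; that is, in a difficult instance the home edges of $I^*$ carry all but an $O(k\epsilon)$-fraction of $w(I^*)$. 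Feeding this into Lemma~\ref{lb-chris} already gives $w(H_{CS}) \le \tfrac{3-\ra}{2}w(I^*) < \bigl(1+\tfrac{k}{4}\epsilon\bigr)w(I^*)$, so one may simply take $H:=H_{CS}$ and $f(\epsilon):=\tfrac{k}{4}\epsilon$. Since $k$ is a fixed constant this already proves the lemma, and this $k$-dependent form is what the concrete small-$k$ calculations below rely on.

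For a version with $f$ independent of $k$ --- which is what underlies the universal improvement --- I would use the structure of $I^*$ more carefully. For a single tour $C=(v_0,v_1,\dots,v_t,v_0)$ one has $\Delta_C=\sum_i w(v_0,v_i)\le\tfrac12\,t\,w(C)\le\tfrac k2 w(C)$, using $w(v_0,v_i)\le\tfrac12 w(C)$; global near-tightness of $\Delta$ therefore forces, on all but an $O(k\epsilon)$-fraction of $w(I^*)$, that $t$ is close to $k$, every customer lies within $o(w(C))$ of the point of $C$ antipodal to $v_0$, and hence the non-home part of $C$ (a path through its customers) has weight $o(w(C))$. In other words a difficult instance is, up to a small fraction of $w(I^*)$, a disjoint union of radial trips from $v_0$ to tiny customer clusters $A_1,\dots,A_m$. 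I would then build $H$ by thresholding the customer metric at a suitable scale to form clusters, taking a minimum spanning tree of the contracted metric on $\{v_0\}$ together with one representative per cluster, adding a spanning tree inside each cluster, and finally doubling and shortcutting. The contracted tree costs at most $\sum_j w(v_0,A_j)\approx\tfrac12 w(I^*)$, doubling loses a factor two only on this radial part, and the intra-cluster trees are charged against the $o(w(C))$ non-home weight.

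The hard part will be the error accounting. The threshold scale must be chosen so that \emph{simultaneously} every cluster has tiny diameter, distinct clusters are far apart (so the shortcutting is cheap), and the number of clusters is $O(n/k)$; this needs an averaging argument over geometrically spaced scales. More delicate is showing that the accumulated slack is an absolute function of $\epsilon$ rather than $k\epsilon$ --- this is exactly the subtle point, and the reason the final universal improvement is only about $1/3000$. For completeness, in the complementary simple case one runs ITP (Lemma~\ref{AGITP}) on a $3/2$-approximate TSP tour and uses $\tfrac2k\Delta\le(1-\epsilon)w(I^*)$ directly, obtaining weight at most $\bigl((1-\epsilon)+\tfrac32(1-\tfrac1k)\bigr)w(I^*)=\bigl(\tfrac52-\epsilon-\tfrac{3}{2k}\bigr)w(I^*)$; optimizing $\epsilon$ against the difficult-case bound $2+f(\epsilon)-\tfrac{1+f(\epsilon)}{k}$ (which ITP with $H$ gives) then produces the stated trade-off.
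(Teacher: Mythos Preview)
The paper does not prove this lemma; it is quoted from Blauth, Traub, and Vygen~\cite{blauth2022improving}, with only the resulting function $f$ recorded in Appendix~\ref{The function}. There is thus no ``paper's own proof'' to compare against here.

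Your $k$-dependent argument via $H_{CS}$ is correct: combining $\tfrac{2}{k}\Delta>(1-\epsilon)w(I^*)$ with Lemma~\ref{lb-delta} gives $\ra>1-\tfrac{k}{2}\epsilon$, and Lemma~\ref{lb-chris} then yields $w(H_{CS})<(1+\tfrac{k}{4}\epsilon)w(I^*)$. However, this does not establish the lemma in the sense it is used. The whole point of~\cite{blauth2022improving} is that $f$ is a single function \emph{independent of $k$}, giving a constant improvement $\alpha+1-\epsilon^*$ valid for all $k$; your $f(\epsilon)=\tfrac{k}{4}\epsilon$ instead recovers, in disguise, the paper's own Section~\ref{sec-split-initial} analysis (the third algorithm there is precisely AG-ITP on $H_{CS}$ analyzed through $\ra$), yielding a $\Theta(1/k)$ rather than a $\Theta(1)$ gain. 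Your claim that ``this $k$-dependent form is what the concrete small-$k$ calculations below rely on'' is incorrect: Appendix~\ref{best-splittable} uses the specific $k$-independent $f$ of Appendix~\ref{The function}, and the threshold $\epsilon^*\approx 1.005/3000$ arises from solving $f(\epsilon)+\epsilon=\alpha-1$ for that $f$, not yours.

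For the $k$-independent version you correctly identify the obstacle---eliminating the factor $k$ in front of $\epsilon$---but do not overcome it; you say as much yourself. The actual construction in~\cite{blauth2022improving} is substantially more technical than your clustering sketch (witness the three auxiliary parameters $\theta,\tau,\rho$ and the nested minimization in Appendix~\ref{The function}), and completing your outline into an honest $k$-free bound would require essentially reproducing that work.
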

The function $f$ is put in Appendix~\ref{thefunction}.
The above lemma shows that we can find a good Hamiltonian cycle in the difficult instances when $\varepsilon$ is small.

Given a Hamiltonian cycle $H$ on $V\cup\{v_0\}$, by Lemma~\ref{AGITP}, AG-ITP computes a solution with weight at most $\frac{2}{k}\Delta+\frac{k-1}{k}w(H)\leq \frac{2}{k}\Delta+w(H)$.

\subsubsection{A Simple Analysis}
By calculation, we show that using a weaker result $\frac{2}{k}\Delta+w(H)$ of AG-ITP, the approximation ratio in~\cite{blauth2022improving} is at least $\frac{5}{2}-\frac{1.005}{3000}$.

\textbf{Case~1: easy instances.} We have $(1-\varepsilon) w(I^*)\geq \frac{2}{k}\Delta$ by definition. Using an $\alpha$-approximate Hamiltonian cycle, by Lemma~\ref{lb-tsp}, we have
\[
\frac{2}{k}\Delta+\alpha w(H^*)\leq (1-\varepsilon) w(I^*)+\alpha w(I^*)=(\alpha+1-\varepsilon) w(I^*).
\]

\textbf{Case~2: hard instances.} We have $w(I^*)\geq \frac{2}{k}\Delta$ by Lemma~\ref{lb-delta}.
Using the Hamiltonian cycle $H$ in Lemma~\ref{good}, we have
\[
\frac{2}{k}\Delta+w(H)\leq  w(I^*)+(1+f(\varepsilon)) w(I^*)=(2+f(\varepsilon)) w(I^*).
\]
Therefore, the approximation ratio of splittable $k$-CVRP is
\[
\min_{\varepsilon>0}\max\{\alpha+1-\varepsilon,2+f(\varepsilon)\}=\alpha+1-\varepsilon^*,
\]
where $\varepsilon^*$ is the maximum value satisfying $f(\varepsilon)+\varepsilon\leq \alpha-1$.

When $\alpha=\frac{3}{2}$, by calculation\footnote{The code is available at \url{https://github.com/JingyangZhao/CVRP}.}, we have
\[
\frac{1.005}{3000}-6\times10^{-9}<\varepsilon^*<\frac{1.005}{3000}-5\times10^{-9}.
\]
Thus, the approximation ratio is at least $\frac{5}{2}-\frac{1.005}{3000}$.

\subsubsection{A Tighter Analysis}
By calculation, we show that using the stronger result $\frac{2}{k}\Delta+\frac{k-1}{k}w(H)$ of AG-ITP, the approximation ratio in~\cite{blauth2022improving} is at least $\frac{5}{2}-\frac{1.005}{3000}-\frac{1.5}{k}$.

\textbf{Case~1: easy instances.} We have $(1-\varepsilon) w(I^*)\geq \frac{2}{k}\Delta$ by definition. Using an $\alpha$-approximate Hamiltonian cycle, by Lemma~\ref{lb-tsp}, we have
\[
\frac{2}{k}\Delta+\frac{k-1}{k}\alpha w(H^*)\leq (1-\varepsilon) w(I^*)+\frac{k-1}{k}\alpha w(I^*)=\lrA{\alpha+1-\varepsilon-\frac{\alpha}{k}}w(I^*).
\]

\textbf{Case~2: hard instances.} We have $w(I^*)\geq \frac{2}{k}\Delta$ by Lemma~\ref{lb-delta}.
Using the Hamiltonian cycle $H$ in Lemma~\ref{good}, we have
\[
\frac{2}{k}\Delta+\frac{k-1}{k}w(H)\leq  w(I^*)+\frac{k-1}{k}(1+f(\varepsilon)) w(I^*)=\lrA{1+\frac{k-1}{k}(1+f(\varepsilon)}w(I^*).
\]
Therefore, the approximation ratio of splittable $k$-CVRP is
\[
\min_{\varepsilon>0}\max\lrC{\alpha+1-\varepsilon-\frac{\alpha}{k},1+\frac{k-1}{k}(1+f(\varepsilon))}=\alpha+1-\varepsilon^{**}-\frac{\alpha}{k},
\]
where $\varepsilon^{**}$ is the maximum value satisfying $\frac{k-1}{k}f(\varepsilon)+\varepsilon\leq \frac{k-1}{k}(\alpha-1)$.

Recall that $\varepsilon^*$ is the maximum value satisfying $f(\varepsilon)+\varepsilon\leq \alpha-1$. Then, we have $f(\varepsilon^*)+\varepsilon^*= \alpha-1$ and $\frac{k-1}{k}f(\varepsilon^*)+\varepsilon^*= \frac{k-1}{k}(\alpha-1-\varepsilon^*)+\varepsilon^*> \frac{k-1}{k}(\alpha-1)$. 

Thus, we have $\varepsilon^{**}<\varepsilon^*<\frac{1.005}{3000}$ when $\alpha=\frac{3}{2}$. Then, we have
\[
\alpha+1-\varepsilon^{**}-\frac{\alpha}{k}>\alpha+1-\varepsilon^*-\frac{\alpha}{k}>\frac{5}{2}-\frac{1.005}{3000}-\frac{1.5}{k}.
\]
Therefore, the approximation ratio in~\cite{blauth2022improving} is at least $\frac{5}{2}-\frac{1.005}{3000}-\frac{1.5}{k}$.

\subsection{Calculating the Approximation Ratio for the Unsplittable Case}\label{best-unsplittable}
In this section, we show that by using the main results in~\cite{blauth2022improving,uncvrp}, the approximation ratio of unsplittable $k$-CVRP is at least $\frac{5}{2}+\ln2+\ln(1-\frac{1.005}{3000})-\frac{3}{k}$ for even $k$. Note that, for odd $k$, the approximation ratio, obtained by doubling the vehicle capacity and all customers' demand, is then at least $\frac{5}{2}+\ln2+\ln(1-\frac{1.005}{3000})-\frac{1.5}{k}$. 

Since Lemma~\ref{good} applies to the unsplittable case~\cite{blauth2022improving}, we also consider easy and difficult instances defined in the previous section.

Given a Hamiltonian cycle $H$ on $V\cup\{v_0\}$, AG-UITP~\cite{altinkemer1987heuristics} computes a solution with weight at most $\frac{4}{k}\Delta+\frac{k-2}{k}w(H)$ for even $k$.

\subsubsection{A Simple Analysis}
Using a weaker result $\frac{4}{k}\Delta+w(H)$ of AG-UITP, the LP-based algorithm in~\cite{uncvrp} has weight of at most
\[
\min_{\gamma\geq0}\lrC{\gamma\cdot  w(I^*)+e^{-\gamma}\cdot \frac{4}{k}\Delta+w(H)}.
\]

We show that its approximation ratio is at least $\frac{5}{2}+\ln2+\ln(1-\frac{1.005}{3000})$.

\textbf{Case~1: easy instances.} We have $(1-\varepsilon) w(I^*)\geq \frac{2}{k}\Delta$ by definition. Using an $\alpha$-approximate Hamiltonian cycle, by Lemma~\ref{lb-tsp}, we have
\begin{align*}
\min_{\gamma\geq0}\lrC{\gamma\cdot  w(I^*)+e^{-\gamma}\cdot \frac{4}{k}\Delta+\alpha w(H^*)}&\leq \min_{\gamma\geq0}\lrC{\gamma+e^{-\gamma}\cdot2(1-\varepsilon)+\alpha} w(I^*)\\
&\leq (\alpha+1+\ln2+\ln(1-\varepsilon))w(I^*).
\end{align*}

\textbf{Case~2: hard instances.} We have $w(I^*)\geq \frac{2}{k}\Delta$ by Lemma~\ref{lb-delta}.
Using the Hamiltonian cycle $H$ in Lemma~\ref{good}, we have
\begin{align*}
\min_{\gamma\geq0}\lrC{\gamma\cdot   w(I^*)+e^{-\gamma}\cdot \frac{4}{k}\Delta+w(H)}&\leq \min_{\gamma\geq0}\lrC{\gamma+e^{-\gamma}\cdot2+(1+f(\varepsilon))}w(I^*)\\
&\leq (2+\ln2+f(\varepsilon))w(I^*).
\end{align*}

Therefore, the approximation ratio is
\[
\min_{\varepsilon>0}\max\{\alpha+1+\ln2+\ln(1-\varepsilon),2+\ln2+f(\varepsilon)\}=\alpha+1+\ln2+\ln(1-\varepsilon^*),
\]
where $\varepsilon^*$ is the maximum value satisfying $f(\varepsilon)-\ln(1-\varepsilon)\leq \alpha-1$.

When $\alpha=\frac{3}{2}$, by calculation\footnote{The code is available at \url{https://github.com/JingyangZhao/CVRP}.}, we also have
\[
\frac{1.005}{3000}-6\times10^{-9}<\varepsilon^*<\frac{1.005}{3000}-5\times10^{-9}.
\]
Thus, the approximation ratio is at least $\frac{5}{2}+\ln2+\ln(1-\frac{1.005}{3000})$.

\subsubsection{A Tighter Analysis}
We show that using the stronger result $\frac{4}{k}\Delta+\frac{k-2}{k}w(H)$ of AG-ITP, the approximation ratio is at least $\frac{5}{2}+\ln2+\ln(1-\frac{1.005}{3000})-\frac{3}{k}$ for even $k$.

\textbf{Case~1: easy instances.} We have $(1-\varepsilon) w(I^*)\geq \frac{2}{k}\Delta$ by definition. Using an $\alpha$-approximate Hamiltonian cycle, by Lemma~\ref{lb-tsp}, we have
\begin{align*}
\min_{\gamma\geq0}\lrC{\gamma\cdot  w(I^*)+e^{-\gamma}\cdot \frac{4}{k}\Delta+\frac{k-2}{k}\alpha w(H^*)}&\leq\min_{\gamma\geq0}\lrC{\gamma+e^{-\gamma}\cdot2(1-\varepsilon)+\frac{k-2}{k}\alpha}w(I^*)\\
&\leq \lrA{\alpha+1+\ln2+\ln(1-\varepsilon)-\frac{2\alpha}{k}}w(I^*).
\end{align*}

\textbf{Case~2: hard instances.} We have $w(I^*)\geq \frac{2}{k}\Delta$ by Lemma~\ref{lb-delta}.
Using the Hamiltonian cycle $H$ in Lemma~\ref{good}, we have
\begin{align*}
\min_{\gamma\geq0}\lrC{\gamma\cdot  w(I^*)+e^{-\gamma}\cdot \frac{4}{k}\Delta+\frac{k-2}{k}w(H)}&\leq\min_{\gamma\geq0}\lrC{\gamma+e^{-\gamma}\cdot2+\frac{k-2}{k}(1+f(\varepsilon))}w(I^*)\\
&\leq \lrA{1+\ln2+\frac{k-2}{k}(1+f(\varepsilon))}w(I^*).
\end{align*}

Therefore, the approximation ratio is
\begin{align*}
&\min_{\varepsilon>0}\max\lrC{\alpha+1+\ln2+\ln(1-\varepsilon)-\frac{2\alpha}{k},1+\ln2+\frac{k-2}{k}(1+f(\varepsilon))}\\
&=\alpha+1+\ln2+\ln(1-\varepsilon^{**})-\frac{2\alpha}{k},
\end{align*}
where $\varepsilon^{**}$ is the maximum value satisfying $\frac{k-2}{k}f(\varepsilon)-\ln(1-\varepsilon)\leq \frac{k-2}{k}(\alpha-1)$.

Recall that $\varepsilon^*$ is the maximum value satisfying $f(\varepsilon)-\ln(1-\varepsilon)\leq \alpha-1$. Then, we have $f(\varepsilon^*)-\ln(1-\varepsilon^*)=\alpha-1$ and $\frac{k-2}{k}f(\varepsilon^*)-\ln(1-\varepsilon^*)=\frac{k-2}{k}(\alpha-1+\ln(1-\varepsilon^*))-\ln(1-\varepsilon^*) > \frac{k-2}{k}(\alpha-1)$. 

Thus, we have $\varepsilon^{**}<\varepsilon^*<\frac{1.005}{3000}$ when $\alpha=\frac{3}{2}$. Then, we have
\[
\alpha+1+\ln2+\ln(1-\varepsilon^{**})-\frac{2\alpha}{k}>\frac{5}{2}+\ln2+\ln(1-\varepsilon^*)-\frac{3}{k}> \frac{5}{2}+\ln2+\ln\lrA{1-\frac{1.005}{3000}}-\frac{3}{k}.
\]
Therefore, the approximation ratio in~\cite{uncvrp} is at least $\frac{5}{2}+\ln2+\ln(1-\frac{1.005}{3000})-\frac{3}{k}$ for even $k$.

\subsection{The Function}\label{thefunction}
Let
\[
\zeta=\frac{3\rho+\tau-4\tau\cdot\rho}{1-\rho}+\frac{\varepsilon}{\tau\cdot\rho}\cdot\lrA{1-\tau\cdot\rho-\frac{3\rho+\tau-4\tau\cdot\rho}{1-\rho}}.
\]

There are two functions, $f$ and $f'$, defined in~\cite{blauth2022improving}. The function $f$, obtained via an LP-based algorithm, yields better results. Thus, we use only $f$, which is defined as follows:
\[
f(\varepsilon)=\min_{\substack{0<\theta\leq1-\tau \\ 0<\tau,\rho\leq \frac{1}{6}}} \lrC{\frac{1+\zeta}{\theta}+\frac{1-\tau-\theta}{\theta\cdot(1-\tau)}+\frac{3\varepsilon}{1-\theta}+\frac{3\rho}{(1-\rho)\cdot(1-\tau)}}-1.
\]
Note that $\theta$ was set to $1-\tau$ in~\cite{blauth2022improving}, but it is slightly better to set it as
\[
\theta=\min\lrC{\frac{1}{\sqrt{\frac{3\varepsilon}{2+\zeta}}+1}, 1-\tau}.
\]
\end{document}